    \let\Cref\crtCref
    \let\cref\crtcref
\definecolor{Desire}{HTML}{eb3b5a} %
\definecolor{Boyzone}{HTML}{2d98da} %
\definecolor{Royal Blue}{HTML}{3867d6} %
\definecolor{NYC Taxi}{HTML}{f7b731} %
\definecolor{Beniukon Orange}{HTML}{fa8231}
\definecolor{Algal Fuel}{HTML}{20bf6b} %
\definecolor{Innuendo}{HTML}{a5b1c2} %
\definecolor{Twinkle Blue}{HTML}{d1d8e0} %
\definecolor{Blue Horizon}{HTML}{4b6584} %
\definecolor{Gloomy Purple}{HTML}{8854d0} %
\colorlet{cBlue}{Royal Blue}
\colorlet{cYellow}{NYC Taxi}
\colorlet{cOrange}{Beniukon Orange}
\colorlet{cGreen}{Algal Fuel}
\colorlet{cRed}{Desire}
\colorlet{cGrey}{Innuendo}
\colorlet{cDarkGrey}{Blue Horizon}
\colorlet{cLightGrey}{Twinkle Blue}
\colorlet{cPurple}{Gloomy Purple} %
\definecolor{Dark Ruby Red}{HTML}{580507}
\definecolor{Dark Blue Sapphire}{HTML}{053641}
\definecolor{Dark Gamboge}{HTML}{be7c00}
\crefname{section}{\S}{\S}
\crefname{appendix}{\S}{\S}
\crefname{paragraph}{\S}{\S}
\newtheorem*{theorem*}{Theorem}
	\crefname{apxproposition}{proposition}{propositions} 
	\crefname{apxlemma}{claim}{claims} 
	\crefname{apxclaim}{claim}{claims} 
	\crefname{apxcorollary}{corollary}{corollaries} 
\theoremstyle{definition}
	\crefname{apxdefinition}{definition}{definitions} 
	\crefname{apxnotation}{notation}{notation} 
\theoremstyle{remark}
	\crefname{apxexample}{example}{examples} 
	\crefname{apxremark}{remark}{remarks} 
\renewcommand{\epsilon}{\varepsilon}
\renewcommand{\phi}{\varphi}
\definecolor{green}{RGB}{0,120,0}
\definecolor{hlyellow}{RGB}{250, 250, 190}
\definecolor{diegoeditcolor}{RGB}{210,210,255}
\definecolor{santieditcolor}{RGB}{210,255,210}
\definecolor{yoshikieditcolor}{RGB}{200,230,200}
\renewcommand{\chdeleted}[2][]{}
\newcommand{\proofcase}[1]{\noindent\colorbox{cLightGrey}{#1}~~}
\newcommand{\proofcasethin}[1]{\hypersetup{
	linkcolor={cGreen!50!black!100}, %
	citecolor={cGreen!50!black!100}, %
	filecolor={cGreen!50!black!100}, %
	urlcolor={cGreen!50!black!100},
}\noindent\setlength{\fboxsep}{1.pt}\colorbox{cLightGrey}{\normalcolor \textsf{#1}\hspace{.2em}}~~} %
\newcommand{\subproofcasethin}[1]{\hypersetup{
	linkcolor={cGreen!50!black!100}, %
	citecolor={cGreen!50!black!100}, %
	filecolor={cGreen!50!black!100}, %
	urlcolor={cGreen!50!black!100},
}\noindent\quad\setlength{\fboxsep}{1.pt}\colorbox{cLightGrey}{\textsf{#1}\hspace{.2em}}~~}
\knowledgenewrobustcmd{\freemu}[1][\mu]{\cmdkl{\textit{free}(#1)}}
\knowledgenewrobustcmd{\PhiTup}[2]{\cmdkl{\Phi^{#1}_{#2}}}
\tikzset{
    every edge/.append style = {
            line width = .3pt,
        },
    plab/.style={line width = 0.1pt, fill=#1, inner sep = .025cm, anchor=center, font = \fontsize{6pt}{0}},
    plab/.default= white,
    elab/.style={draw, rectangle, line width = 0.1pt, fill=#1, inner sep = .035cm, anchor=center, font = \footnotesize},
    elab/.default= white,
    tlab/.style={line width = 0.1pt, fill=#1, inner sep = .025cm, anchor=center, font = \fontsize{6pt}{0}\selectfont},
    tlab/.default= white
}
\tikzset{
    png export/.style={
            external/system call/.add={}{; convert -density 300 -transparent white "\image.pdf" "\image.png"},
            /pgf/images/external info,
            /pgf/images/include external/.code={
                    \includegraphics[width=\pgfexternalwidth,height=\pgfexternalheight]{##1.png}
                },
        }
}
\tikzset{apply style/.code={\tikzset{#1}}}
\tikzstyle{mynode} = [inner sep = 1.5pt, fill= gray!20, font=\footnotesize]
\tikzstyle{mysmallnode} = [inner sep = 1.pt, fill= gray!20]
\tikzstyle{vert} = [draw, circle, mynode]
\tikzset{earrow/.style={>={{[flex] Latex[length=.1cm, width=2.5pt]}}}}
\tikzstyle{elabel} = [inner sep = 1.pt, font = \scriptsize, opacity = 1]
\tikzstyle{dvert} = [vert, color = gray!20]
\tikzstyle{dearrow} = [earrow, color = gray!20]
\newtcolorbox{rulesbox}[1][]{ams align*, standard jigsaw, opacityback = 0.8, colframe=black!80,
boxrule=.3mm,left=.2mm, right=.2mm, #1}
\DeclareRobustCommand{\lipicsEnd}{%
}
\newcommand{\subalign}[2][c]{%
  \if#1c\vcenter\else\vtop\fi{%
    \Let@ \restore@math@cr \default@tag
    \baselineskip\fontdimen10 \scriptfont\tw@
    \advance\baselineskip\fontdimen12 \scriptfont\tw@
    \lineskip\thr@@\fontdimen8 \scriptfont\thr@@
    \lineskiplimit\lineskip
    \ialign{\hfil$\m@th\scriptstyle##$&$\m@th\scriptstyle{}##$\hfil\crcr
      #2\crcr
    }%
  }%
}
\newcounter{mylabelcounter}
\newcommand{\labeltext}[3]{%
#2\refstepcounter{mylabelcounter}%
\immediate\write\@auxout{%
  \string\newlabel{#1}{{1}{\thepage}{{\unexpanded{#3}}}{mylabelcounter.\number\value{mylabelcounter}}{}}%
}%
}
\DeclareMathSymbol{\shortminus}{\mathalpha}{operators}{`-}
\DeclarePairedDelimiter\set{\{}{\}} %
\DeclarePairedDelimiter\tup{(}{)} %
\newrobustcmd{\defeq}{\mathrel{\hat{=}}}
\newrobustcmd{\defiff}{\mathrel{\hat{\iff}}}
\newrobustcmd{\defleftrightarrow}{\mathrel{\hat{\leftrightarrow}}}
\newrobustcmd{\Nat}{\mathbb{N}}
\newrobustcmd\pset[1]{\+P(#1)} %
\newrobustcmd\smashxrightarrow[1]{%
  \raisebox{-.04em}{$%
    \xrightarrow{\smash{\raisebox{-.1em}{%
      \tiny{#1}%
    }}}%
  $}%
}%
\knowledgenewrobustcmd{\FV}{\cmdkl{\textup{\textsf{FV}}}} %
\knowledgenewrobustcmd{\toSet}[1]{#1} %
\knowledgenewrobustcmd{\TC}[3][0]{%
  \ifnum#1=2%
    \cmdkl{\Big[}#2\cmdkl{\Big]^*_{#3}}%
  \else\ifnum#1=1%
    \cmdkl{\big[}#2\cmdkl{\big]^*_{#3}}%
  \else%
    \cmdkl{[}#2\cmdkl{]}^{\cmdkl{*}}_{#3}%
  \fi\fi%
}
\knowledgenewrobustcmd{\kthTC}[4][0]{%
  \ifnum#1=2%
    \cmdkl{\Big[}#2\cmdkl{\Big]^{#3}_{#4}}%
  \else\ifnum#1=1%
    \cmdkl{\big[}#2\cmdkl{\big]^{#3}_{#4}}%
  \else%
    \cmdkl{[}#2\cmdkl{]}^{\cmdkl{#3}}_{#4}%
  \fi\fi%
}
\knowledgenewrobustcmd{\SQ}[3][0]{%
  \ifnum#1=2%
    \cmdkl{\Big[}#2\cmdkl{\Big]^{2}_{#3}}%
  \else\ifnum#1=1%
    \cmdkl{\big[}#2\cmdkl{\big]^{2}_{#3}}%
  \else%
    \cmdkl{[}#2\cmdkl{]}^{\cmdkl{2}}_{#3}%
  \fi\fi%
}
\knowledgenewrobustcmd{\SQgen}[4][0]{%
  \ifnum#1=2%
    \cmdkl{\Big[}#2\cmdkl{\Big]^{#3}_{#4}}%
  \else\ifnum#1=1%
    \cmdkl{\big[}#2\cmdkl{\big]^{#3}_{#4}}%
  \else%
    \cmdkl{[}#2\cmdkl{]}^{\cmdkl{#3}}_{#4}%
  \fi\fi%
}
\newrobustcmd{\struc}[1]{\mathbb{#1}} %
\newrobustcmd{\dom}[1]{\univ{\struc #1}} %
\knowledgenewrobustcmd{\arity}{\cmdkl{\textit{ar}}} %
\knowledgenewrobustcmd{\interpatom}[2]{#2^{#1}} %
\knowledgenewrobustcmd{\range}[2]{\cmdkl{[}#1\cmdkl{..}#2\cmdkl{]}} %
\knowledgenewrobustcmd{\ith}[1]{\cmdkl{[}#1\cmdkl{]}}
\knowledgenewrobustcmd{\Vars}{\cmdkl{\+V}} %
\knowledgenewrobustcmd{\Rels}{\cmdkl{\+R}} %
\newrobustcmd{\pto}{\rightharpoonup} %
\knowledgenewrobustcmd{\fdom}{\cmdkl{\operatorname{dom}}} %
\knowledgenewrobustcmd{\dcup}{\mathbin{\cmdkl{\sqcup}}} %
\newrobustcmd{\bigdcup}{\bigsqcup} %
\knowledgenewrobustcmd{\modelsass}[1]{\mathrel{\cmdkl{\models}}_{#1}}
\knowledgenewrobustcmd{\modelsnonass}{\mathrel{\cmdkl{\models}}}
\knowledgenewrobustcmd{\Phisep}{\cmdkl{\Phi_{\textit{sep}}}}
\knowledgenewrobustcmd{\isTup}{\cmdkl{\mathit{tup}}}
\knowledgenewrobustcmd{\isConst}{\cmdkl{\mathit{comp}}}
\knowledgenewrobustcmd{\PhiEQ}{\cmdkl{\Phi_{\EQrel}}}
\knowledgenewrobustcmd{\eqstar}{\mathop{\cmdkl{\overset{\smash{\raisebox{-0.4ex}{$\scriptstyle *$}}}{=}}}}
\knowledgenewrobustcmd{\RelsAr}{\cmdkl{\+R}_{\text{ar}}}
\knowledgenewrobustcmd{\EQrel}{\cmdkl{\textit{EQ}}}
\knowledgenewrobustcmd{\phiAtom}{\cmdkl{\phi[\EQrel/{=}]}}
\knowledgenewrobustcmd{\equiveq}[1]{\cmdkl{[}#1\cmdkl{]_{\eqstar}}}
\DeclarePairedDelimiter\ceil{\lceil}{\rceil}
\newrobustcmd{\const}[1]{\mathsf{#1}}
\newrobustcmd{\bl}{-}
\newcommand{\mul}[1]{\bar{#1}}
\newrobustcmd{\nat}{\Nat} %
\newrobustcmd{\znat}{\mathbb{Z}}
\newrobustcmd{\pnat}{\nat_{+}}
\knowledgenewrobustcmd{\card}{\cmdkl{\#}}
\newrobustcmd{\poly}{\operatorname{poly}}
\knowledgenewrobustcmd{\eps}{\cmdkl{\varepsilon}}
\NewDocumentCommand\word{O{1}}{%
    \ifcase#1 undefined
    \or w
    \else undefined \fi
}
\knowledgenewrobustcmd{\lepref}{\mathrel{\cmdkl{\le_{\mathrm{pref}}}}}
\newcommand{\back}{\shortminus}
\newrobustcmd{\dir}{d}
\knowledgenewrobustcmd{\yndirdom}[2]{\operatorname{\cmdkl{dom}}_{#1,#2}}
\NewDocumentCommand\bag{O{1}}{%
    \ifcase#1 undefined
    \or g
    \or h
    \else undefined \fi
}
\knowledgenewrobustcmd{\series}{\mathbin{\cmdkl{\diamond}}}
\NewDocumentCommand\rsym{O{1}}{%
    \ifcase#1 undefined
    \or R
    \or S
    \or T
    \else undefined \fi
}
\knowledgenewrobustcmd{\Erel}{\cmdkl{E}} %
\NewDocumentCommand\ynstruc{O{1}}{%
    \ifcase#1 undefined
    \or \struc A
    \or \struc B
    \else undefined \fi
}
\newrobustcmd{\vertex}{c}
\knowledgenewrobustcmd{\univ}[1]{\cmdkl{\mathrm{dom}}(#1)}
\newrobustcmd{\inter}{I} %
\knowledgenewrobustcmd{\intersubst}[2]{\cmdkl{[}#2/#1\cmdkl{]}} %
\knowledgenewrobustcmd{\allstruc}{\cmdkl{\mathsf{STR}}}
\knowledgenewrobustcmd{\allcountablestruc}{\cmdkl{\mathsf{STR}}}
\knowledgenewrobustcmd{\allfinstruc}{\cmdkl{\mathsf{STR}_{\mathrm{fin}}}}
\knowledgenewrobustcmd{\allatstruc}{\cmdkl{\mathsf{NSTR}}}
\NewDocumentCommand\ynmulstruc{O{1}}{\mul{\ynstruc[#1]}}
\NewDocumentCommand\afml{O{1}}{%
    \ifcase#1 undefined
    \or \alpha
    \or \beta
    \else undefined \fi
 }
\NewDocumentCommand\fml{O{1}}{%
    \ifcase#1 undefined
    \or \varphi
    \or \psi
    \or \rho
    \else undefined \fi
}
\NewDocumentCommand\fmlset{O{1}}{%
    \ifcase#1 undefined
    \or \Gamma
    \or \Delta
    \or \Lambda
    \else undefined \fi
}
\newcommand{\fmlclass}{\mathcal{L}}
\NewDocumentCommand\fmlsetclass{O{1}}{
    \ifcase#1 undefined
    \or \mathbb{F}
    \else undefined \fi
}
\newcommand{\EQ}{\mathbin{=}}
\knowledgenewrobustcmd{\fmlsubst}[2]{\cmdkl{[}#2/#1\cmdkl{]}} %
\knowledgenewrobustcmd{\renameeq}{\mathrel{\cmdkl{\equiv_{\mathrm{rn}}}}}
\knowledgenewrobustcmd{\seqlen}[1]{\cmdkl{\|}#1\cmdkl{\|}}
\knowledgenewrobustcmd{\fmllen}[1]{\cmdkl{\|}#1\cmdkl{\|}} %
\knowledgenewrobustcmd{\glue}{{\cmdkl{\odot}}} %
\knowledgenewrobustcmd{\strucbigdcup}[1]{\operatorname*{\cmdkl{\bigsqcup}}_{#1}} %
\knowledgenewrobustcmd{\quo}[1]{\cmdkl{[}#1\cmdkl{]}} %
\knowledgenewrobustcmd{\quorel}[1]{\mathrel{\cmdkl{\sim}}_{#1}} %
\knowledgenewrobustcmd{\fixedvertex}{\cmdkl{\const{\vertex}}} %
\newrobustcmd{\elem}{e} %
\newrobustcmd{\elemset}{E} %
\knowledgenewrobustcmd{\AD}{\cmdkl{\mathcal{D}}} %
\knowledgenewrobustcmd{\ADB}[2]{\cmdkl{\mathcal{D}}^{#1}_{#2}} %
\knowledgenewrobustcmd{\absfunc}[2]{\operatorname{\cmdkl{\const{Abs}}}^{#1}_{#2}} %
\knowledgenewrobustcmd{\conc}[2]{\operatorname{\cmdkl{\const{C}}}^{#1}_{#2}} %
\knowledgenewrobustcmd{\ND}[3]{\cmdkl{\mathcal{N}}^{#1}_{#2,#3}} %
\knowledgenewrobustcmd{\MD}[3]{\cmdkl{\mathcal{M}}^{#1}_{#2,#3}} %
\newrobustcmd{\ainter}{\mathscr{I}} %
\knowledgenewrobustcmd{\aintersubst}[2]{\cmdkl{[}#2/#1\cmdkl{]}} %
\NewDocumentCommand{\pri}{O{1}}{
    \ifcase#1 undefined
    \or p
    \or q
    \else undefined \fi    
} %
\NewDocumentCommand{\prizero}{O{1}}{0}
\NewDocumentCommand{\prione}{O{1}}{1}
\knowledgenewrobustcmd{\pathPri}[1]{\cmdkl{\Omega}_{#1}} %
\newrobustcmd{\Pri}{\Omega} %
\newrobustcmd{\mkstate}[5]{#1^{#2,#3}_{#4,#5}} %
\NewDocumentCommand{\mystate}{O{1}}{\dot{\fmlset[#1]}}
\knowledgenewrobustcmd{\absmodels}{\mathrel{\cmdkl{\models}}} %
\knowledgenewrobustcmd{\PBallfml}{\cmdkl{\mathscr{B}_{+}}}
\NewDocumentCommand{\doublefml}{O{1}}{\dot{\fml[#1]}}
\newcommand{\doubletrue}{\const{true}}
\newcommand{\doublefalse}{\const{false}}
\DeclareMathOperator{\bigdoublewedge}{{\stackMath\stackinset{c}{0pt}{c}{+0ex}{\scriptstyle\cdot}{\bigwedge}}}
\DeclareMathOperator{\bigdoublevee}{{\stackMath\stackinset{c}{0pt}{c}{-0.1ex}{\scriptstyle\cdot}{\bigvee}}}
\DeclareMathOperator*{\sbigdoublevee}{{\stackMath\stackinset{c}{0pt}{c}{-0.1ex}{\scriptstyle\cdot}{\bigvee}}}
\newcommand{\doublewedge}{\mathbin{\stackMath\stackinset{c}{0pt}{c}{-0.2ex}{\scriptstyle\cdot}{\wedge}}}
\newcommand{\doublevee}{\mathbin{\stackMath\stackinset{c}{0pt}{c}{-0.ex}{\scriptstyle\cdot}{\vee}}}
\knowledgenewrobustcmd{\lequiv}{\mathrel{\cmdkl{\equiv_{\const{L}}}}}
\knowledgenewrobustcmd{\lleqq}{\mathrel{\cmdkl{\leqq_{\const{L}}}}}
\knowledgenewrobustcmd{\lgeqq}{\mathrel{\cmdkl{\geqq_{\const{L}}}}}
\knowledgenewrobustcmd{\bfmllen}[1]{\cmdkl{\|}#1\cmdkl{\|}}
\knowledgenewrobustcmd{\allstates}{\cmdkl{\mathcal{Q}}}
\NewDocumentCommand\trace{O{1}}{%
    \ifcase#1 undefined
    \or \tau
    \or \sigma
    \or \rho
    \else undefined \fi
}
\knowledgenewrobustcmd{\leadstoUNFO}{\mathrel{{\cmdkl{\leadsto}}}} %
\knowledgenewrobustcmd{\vdashUNFO}{\mathrel{\cmdkl{\vdash}}} %
\knowledgenewrobustcmd{\nvdashUNFO}{\mathrel{\cmdkl{\nvdash}}} %
\knowledgenewrobustcmd{\vdashUNFOsub}{\mathrel{\cmdkl{\vdash}}} %
\knowledgenewrobustcmd{\nvdashUNFOsub}{\mathrel{\cmdkl{\nvdash}}} %
\knowledgenewrobustcmd{\clUNFO}{\operatorname{\cmdkl{\mathrm{cl}}}} %
\newrobustcmd{\allstatesUNFO}{\allstates_{\mathrm{\kl{UNFO}}}} %
\knowledgenewrobustcmd{\leadstoUNTC}{\mathrel{\cmdkl{\leadsto}}} %
\knowledgenewrobustcmd{\vdashUNTC}{\mathrel{\cmdkl{\vdash}}} %
\knowledgenewrobustcmd{\nvdashUNTC}{\mathrel{\cmdkl{\nvdash}}} %
\knowledgenewrobustcmd{\vdashUNTCsub}{\mathrel{\cmdkl{\vdash}}} %
\knowledgenewrobustcmd{\nvdashUNTCsub}{\mathrel{\cmdkl{\nvdash}}} %
\knowledgenewrobustcmd{\preclUNTC}{\operatorname{\cmdkl{\mathrm{ucl}}}} %
\knowledgenewrobustcmd{\prebiclUNTC}{\operatorname{\cmdkl{\mathrm{bicl}}}} %
\knowledgenewrobustcmd{\clUNTC}{\operatorname{\cmdkl{\mathrm{cl}}}} %
\knowledgenewrobustcmd{\psettwo}{\cmdkl{\+P_2}} %
\newrobustcmd{\allstatesUNTC}{\allstates_{\mathrm{\kl{UNTC}}}} %
\knowledgenewrobustcmd{\TBSAT}[1]{\ensuremath{\cmdkl{\mathbf{TB}(\mathrm{SAT})}_{#1 \times \mathrm{M}}}}
\knowledgenewrobustcmd{\PBinfallfml}{\cmdkl{\mathscr{B}^{\omega}_{+}}}
\NewDocumentCommand\automaton{O{1}}{%
    \ifcase#1 undefined
    \or \mathcal{A}
    \or \mathcal{B}
    \or \mathcal{C}
    \else undefined \fi
}
\knowledgenewrobustcmd{\automatonsize}[1]{\cmdkl{\|}#1\cmdkl{\|}}
\knowledgenewrobustcmd{\automatonlang}{\cmdkl{\mathcal{L}}}
\knowledgenewrobustcmd{\treedistance}{\cmdkl{\const{dist}}} %
\NewDocumentCommand\bagset{O{1}}{%
    \ifcase#1 undefined
    \or \mathcal{G}
    \or \mathcal{F}
    \else undefined \fi
}%
\knowledgenewrobustcmd{\addPri}[1]{#1^{\cmdkl{\complement}}}
\knowledgenewrobustcmd{\statePri}{\cmdkl{\Omega}} %
\knowledgenewrobustcmd{\iverson}[1]{\cmdkl{[}#1\cmdkl{]}}
\knowledgenewrobustcmd{\transitionUNFO}{\operatorname{\cmdkl{\delta}}} %
\knowledgenewrobustcmd{\paramconcUNFO}{\operatorname{\cmdkl{\const{C}}}} %
\knowledgenewrobustcmd{\distanceUNFO}{\cmdkl{\const{d}}} %
\knowledgenewrobustcmd{\paramUNFO}{\operatorname{\cmdkl{\natural}}}
\newrobustcmd{\concstate}[3]{#1^{#2}_{#3}}
\knowledgenewrobustcmd{\allsymbols}{\cmdkl{\const{S}}} %
\knowledgenewrobustcmd{\ADBsymbol}{\cmdkl{\underline{\mathcal{D}}}} %
\knowledgenewrobustcmd{\NDsymbol}{\cmdkl{\underline{\mathcal{N}}}} %
\knowledgenewrobustcmd{\MDsymbol}{\cmdkl{\underline{\mathcal{M}}}} %
\knowledgenewrobustcmd{\transitionUNTC}{\operatorname{\cmdkl{\delta}}} %
\knowledgenewrobustcmd{\paramconcUNTC}{\operatorname{\cmdkl{\const{C}}}} %
\knowledgenewrobustcmd{\distanceUNTC}{\cmdkl{\const{d}}} %
\knowledgenewrobustcmd{\paramUNTC}{\operatorname{\cmdkl{\natural}}}
\knowledgenewrobustcmd{\UNF}{\operatorname{\cmdkl{\const{UN}}}}
\knowledgenewrobustcmd{\mkbin}[2]{\cmdkl{(}#2\cmdkl{)}_{\cmdkl{2}}^{#1}}
\knowledgenewrobustcmd{\WHERE}[1]{\ensuremath{\cmdkl{\mathbf{Where}(\mathrm{EFO})_{#1}}}}
\knowledgenewrobustcmd{\WHEREtoFO}[1]{#1^{\cmdkl{\flat}}} %
\knowledgenewrobustcmd{\UNTCtoWHERE}{\operatorname{\cmdkl{Tr}}} %
\knowledgenewrobustcmd{\gdd}{\mathop{\cmdkl{\const{gdd}}}} %
\renewenvironment{toappendix}{
\textcolor{red}{$\downarrow\downarrow\downarrow$
BEGIN appendix (debug mode)
$\downarrow\downarrow\downarrow$}\\}
{
\mbox{}\\\textcolor{red}{$\uparrow\uparrow\uparrow$
END appendix (debug mode)
$\uparrow\uparrow\uparrow$}
} 
\renewenvironment{apxclaimrep}{\begin{claim}}{\end{claim}}
\title{
Guarded Negation Transitive Closure Logic
}
\titlerunning{Guarded Negation Transitive Closure Logic}
\author{Diego Figueira}{Univ. Bordeaux, CNRS, Bordeaux INP, LaBRI, UMR 5800, France}{diego.figueira@cnrs.fr}{https://orcid.org/0000-0001-7953-3755}{%
}
\author{Santiago Figueira}{University of Buenos Aires, Argentina \and CONICET, Buenos Aires, Argentina}{santiago@dc.uba.ar}{https://orcid.org/0000-0002-8055-397X}{}
\author{Yoshiki Nakamura}{Chiba University, Japan}{nakamura.yoshiki.ny@gmail.com}{https://orcid.org/0000-0003-4106-0408}{%
}
\authorrunning{D. Figueira, S. Figueira, and Y. Nakamura}
\keywords{Transitive closure logic, Guarded negation, Unary negation, Satisfiability, Model checking}
\begin{document}

\ifthenelse{\boolean{final}}{}{
\thispagestyle{empty}
\listoftodos[Comment list]
\clearpage
\thispagestyle{empty}
\listofchanges[show=all]
\clearpage
\setcounter{page}{1}
}

\maketitle

\begin{abstract}
We study the guarded negation fragment of transitive closure logic (\emph{GNTC}).
We show that the satisfiability problem for GNTC is 2ExpTime-complete, by establishing the following reductions:
(i) a polynomial-time reduction from
the satisfiability problem for GNTC to
the satisfiability problem for the unary negation fragment \emph{UNTC} of GNTC, and 
(ii) a direct exponential-time reduction from the satisfiability problem for UNTC to the non-emptiness problem for 2-way alternating parity tree automata.
Furthermore, we show that
the model checking problem for GNTC is $\mathsf{P}^{\mathsf{NP}[\mathcal{O}(\log^2 n)]}$-complete in combined complexity.
Our result implies
$\mathsf{P}^{\mathsf{NP}[\mathcal{O}(\log^2 n)]}$-completeness
for both UNTC and $\mathrm{UNFO}^{\mathrm{reg}}$, which were left open in previous works.
 \end{abstract}

\section{Introduction}
\label{sec:introduction}
A successful approach to obtaining decidable fragments of (fixpoint) first-order logics is to add syntactic restrictions on the use of quantification or negation.
The ``guardedness approach'' leads to restricting its use via adding atoms acting as guards, protecting the syntax from any dangerous use of quantification or negation that may lead to undecidability.
The restriction on quantification leads to ``guarded FO'' (or \AP""GFO"") \cite{AndrekaNB98}, where quantification can only be used in the form $\forall \bar x  ~ \alpha(\bar x \bar y \bar z) \rightarrow \phi(\bar x \bar y)$ or $\exists \bar x  ~ \alpha(\bar x \bar y \bar z) \land \phi(\bar x \bar y)$, where $\alpha$ is a (positive) atom. 
The restriction on negation results in the strictly more expressive fragment of ``guarded negation FO'' (or \reintro{GNFO}) \cite{baranyGuardedNegation2015}, where the negation is restricted to be of the form $\alpha(\bar x \bar y) \land \lnot \phi(\bar x)$ and no universal quantifiers are allowed. 
Both sorts of restrictions ("GFO" and "GNFO") enjoy desirable properties, in particular they capture modal logics, and retain some of their features such as decidability of satisfiability and finite model property.

Adding \emph{recursion} to these logics is a natural next step since many important inductively-defined concepts such as reachability cannot be expressed in FO. 
In this regard, the guardedness restrictions can also be extended to least fixpoint operators while preserving decidability.
The extension of "GNFO" named \reintro{guarded negation fixpoint logic} (\AP""GNFP"") \cite{baranyGuardedNegation2015},
is then the fragment of (first-order) least fixpoint logic (\AP""LFP""\phantomintro{LFP formula}) obtained by also requiring that fixpoint variables be guarded (in addition to the guarded negation restriction).
"GNFP" is a highly expressive logic, in particular extending the modal $\mu$-calculus with backward modalities \cite{vardiReasoningTwowayAutomata1998}. "GNFP" still enjoys a decidable, "2ExpTime"-complete, "satisfiability problem"  \cite[Theorem 4.4]{baranyGuardedNegation2015} -- just like "GNFO" -- while it no longer has the finite model property.

However, a limitation of "GNFP" is that it cannot express the transitive closure of formulas \cite[Proposition 2]{benediktStepExpressivenessDecidable2016}.\footnote{The proof of this fact can be found in {\cite[Appendix A.1]{benediktStepExpressivenessDecidable2016full}}.}
The ability to use transitivity to ``navigate'' the structure is a desirable feature 
in many scenarios,
and arguably the most basic form of recursion.
But expressing the transitive closure $\TC{\phi}{x y}(x,y)$ of a binary "GNFO" formula $\phi$ (testing that there is a path from $x$ to $y$ in the graph induced by the interpretation of $\phi$) would require the use of \emph{unguarded parameters} inside fixpoint operators, which is forbidden in "GNFP".

In view of this, \cite{benediktStepExpressivenessDecidable2016} studies "guarded negation fixpoint logic with unguarded parameters" (\AP""GNFP-UP""), which extends "GNFP" while preserving decidability, and in particular captures transitive closure formulas.
Yet, the expressive power that "GNFP-UP" adds to "GNFP" comes at a high computational price, as the complexity of the "satisfiability problem" increases from "2ExpTime" to non-elementary \cite[Theorem 20]{benediktStepExpressivenessDecidable2016}.
The non-elementary procedure has also matching lower bounds; however, such bounds make use of formulas that go well beyond the expressive power of transitive closure formulas.

This begs the question of whether the addition of transitive closure formulas to "GNFO" already shows an inherent non-elementary behaviour, or if it can be rather better behaved. 
To investigate this, we consider \reintro{GNTC}, the extension of "GNFO" with transitive-closure formulas,%
\footnote{%
In "GNTC", transitive closure formulas do not contain \emph{unguarded parameters}.
For the extension of "GNTC" with unguarded parameters in transitive closure formulas (""GNTC-UP""), the "satisfiability problem" remains decidable (since "GNTC-UP" is contained in "GNFP-UP") but it is non-elementary.
This is because the containment problem for (existential) positive first-order logic with unary transitive closure (called \textrm{PFO-TC1}) is already non-elementary \cite[Theorem~4]{bourhisHowBestNest2014} (see also \cite{bourhisReasonableHighlyExpressive2015}).
For the same reason, "UNTC" with unguarded parameters in transitive closure formulas (""UNTC-UP"")
is also decidable but non-elementary.
} 
which is incomparable with "GNFP" in terms of expressive power, but contained in "GNFP-UP".
Concretely, we address the following question: \emph{Is the satisfiability of "GNTC" elementary?} We answer this question affirmatively.

\subparagraph{Contributions.}
We show that the satisfiability problem for "GNTC" is "2ExpTime"-complete, matching "GNFO", whose lower bound holds even on fixed binary signatures.

We first show that the satisfiability problem for "GNTC" is equivalent -- modulo polynomial-time reductions -- to the restriction on its \emph{unary-negation} fragment, or "UNTC" (\Cref{thm:GNTC-UNTC}). Concretely, \reintro{UNTC} is defined 
by restricting negations to be of the form $\lnot \phi(x)$, and transitive closure to be applied only to formulas with two free variables. %
This reduction preserves "satisfiability@satisfiable" and "finite-satisfiability@finitely-satisfiable".

Once this reduction is in place, the "2ExpTime" upper bound for "GNTC" follows from the (unpublished) result of \cite[Theorem 8.5]{figueiraCommonAncestorPDL2025} establishing a "2ExpTime" bound for the satisfiability of "UNTC".
The proof of \cite{figueiraCommonAncestorPDL2025} -- heavily relying on prior work \cite{gollerPDLIntersectionConverse2009} -- consists of a long chain of non-trivial reductions to the non-emptiness of "2-way alternating parity tree automata" ("2APTA") from which it is hard to derive any intuition.
Here we propose an alternative proof for the satisfiability of "UNTC" based on a \emph{direct reduction} to the non-emptiness of "2APTA", from which the tight complexity upper bound follows. That is, we show an exponential-time reduction from the "satisfiability problem" for "UNTC" to the non-emptiness problem for "2APTA", and since the latter is in "ExpTime" \cite[\S4]{vardiReasoningTwowayAutomata1998}, the "2ExpTime" bound follows (\Cref{corollary: 2-EXPTIME GNTC}).
While our proof is not simple, it is arguably simpler than the previous sequence of reductions.
It also has the advantage of being direct, shorter, and self-contained. 
Further, we show how our reduction can be intuitively interpreted via elementary derivation rules of a ``local'' %
(model/satisfiability)
checker running on tree decompositions.

The proof we exhibit also gives an alternative direct proof of known results on the satisfiability for extensions of Propositional Dynamic Logics -- known as "ICPDL" and "UCPDL+" \cite{figueiraPDLSteroidsExpressive2023,gollerPDLIntersectionConverse2009} --, since they admit trivial polynomial translations into "UNTC" \cite[Proposition 8.3]{figueiraCommonAncestorPDL2025}.

Finally, we also study the model checking problem, "ie", the problem of, given a (finite) structure $\struc{A}$ and a "GNTC" sentence $\phi$, whether $\struc{A} \models \phi$. We show that, in combined complexity, this problem sits in the same complexity class as "GNFO", namely "PNPlog2"-complete\footnote{The class of decision problems solvable in polynomial time with $\+O(\log^2(n))$ calls to an "NP" oracle.} (\Cref{theorem: model checking}).
This is in contrast with the complexity of "GNFP", which is hard for "PNP", even when restricted to its unary negation fragment \cite[Theorem 5.5]{segoufinUnaryNegation2013}.
Our results also establish "PNPlog2"-completeness of model checking for "UNFOreg" and "UNTC", which were left open in previous works \cite[Open Problem 2 and Footnote 16 of version v1]{figueiraCommonAncestorPDL2025}.

\subparagraph{Related work.}
"GNTC" is a syntactic fragment of (first-order) \reintro{transitive closure logic} (\reintro{TC}).
The "satisfiability problem" for full "TC" is highly undecidable 
\cite[Corollary~6.7]{gradel1999undecidability} (see also
\cite{gradelGuardedFixedPoint1999,immermanBoundaryDecidabilityUndecidability2004}). Apart from the already mentioned work on the highly expressive "GNFP-UP" \cite{benediktStepExpressivenessDecidable2016} which captures "GNTC" and implies its decidability (with a non-elementary bound), we are not aware of other decidable logics capturing "GNTC".

Recent ISO standards for querying property graphs, SQL/PGQ \cite{iso/iecjtc1/sc32ISOIEC9075162023} and GQL \cite{iso/iecjtc1/sc32ISOIEC390752024},
are at least as expressive as "FO" and can express transitive closure \cite{gheerbrantGQLSQLPGQ2025}.
In particular, a large fragment of GQL (``restrictor-free'') is known to be in "TC" \cite[Lemma 1]{FigueiraLP26}.
Further, SQL/PGQ (with view creation) \cite[Corollary 6.3]{rotschieldExpressivenessLanguagesQuerying2025}, as well as the earlier language GraphLog \cite[Theorem 3.3]{consensGraphLogVisualFormalism1990}, precisely capture the expressive power of "TC" for graph database queries.

Another relevant line of work consists in considering extensions of Propositional Dynamic Logics, or ""PDL"" \cite{fischerPropositionalDynamicLogic1979} (also known as $\mathcal{ALC}_{\textup{reg}}$). The logic \AP""ICPDL"" adds intersection and converse to "PDL" modalities (``programs'') and \cite{gollerPDLIntersectionConverse2009,lange2ExpTimeLowerBounds2005} show its satisfiability to be "2ExpTime"-complete. The work \cite{figueiraPDLSteroidsExpressive2023} pushes this even further by adding also conjunctive queries over "PDL" programs. As shown in \cite{figueiraCommonAncestorPDL2025}, a slight extension of the logic of \cite{figueiraPDLSteroidsExpressive2023}, coined \AP""UCPDL+"", is expressive-equivalent to "UNTC" (while in principle exponentially less succinct).%

"UNTC" (or, equivalently, "UCPDL+")  is also known to capture several other logics
such as \AP""UNFOreg"" \cite{jungQueryingUnaryNegation2018}, which is the result of adding regular expressions over binary relations to "UNFO", CQPDL \cite[p.\ 4]{benediktStepExpressivenessDecidable2016}, the \intro*\kl{positive calculus of relations with transitive closure} \cite{pousPositiveCalculusRelations2018,nakamuraPartialDerivativesGraphs2017,nakamuraDerivativesGraphsPositive2025}, as well as several graph database query languages, including Conjunctive Regular Path Queries (CRPQs) and  Regular Queries \cite{reutterRegularQueriesGraph2017} (see \cite[\S 1.1 \& Figure~1]{figueiraCommonAncestorPDL2025} for more examples).

An important relevant related result is that of \cite[Theorem 8.5]{figueiraCommonAncestorPDL2025}, showing a "2ExpTime" upper bound for "UNTC" satisfiability.
One major disadvantage of the proof of \cite[Theorem 8.5]{figueiraCommonAncestorPDL2025} is that it is based on a long chain of reductions: an exponential translation from "UNTC" to "UCPDL+", which in turn has an exponential-time reduction to the satisfiability of "ICPDL", which is shown in \cite[Theorem 4.8]{gollerPDLIntersectionConverse2009} via an exponential-time reduction to two models of automata (in particular, a word automaton that runs on paths labeled by states of a tree automaton), which can then be reduced to the non-emptiness of "2-way alternating parity tree automata" ("2APTA").
Moreover, it is necessary to carefully analyze the reductions to derive an optimal complexity bound.\footnote{%
    It requires showing: 
        that the exponential translation "UNTC" $\leadsto$ "UCPDL+" \cite[Proposition 8.1]{figueiraCommonAncestorPDL2025} produces formulas of polynomial so-called ``conjunctive width''; 
        that while the reduction "UCPDL+" $\leadsto$ "ICPDL" \cite[Lemma 7.11]{figueiraCommonAncestorPDL2025} is exponential (which would lead to a 3"ExpTime" satisfiability procedure for "UNTC") it nevertheless produces formulas of polynomial so-called `intersection-width';
        and that while the reduction "ICPDL" $\leadsto$ "2APTA" is exponential, it yields automata with a number of states which is only exponential in the `intersection-width' \cite[Lemma 3.7]{gollerPDLIntersectionConverse2009}. 
    Thus, the combined reduction of \cite[Theorem 8.5]{figueiraCommonAncestorPDL2025} gives a doubly-exponential procedure by \cite[Lemma 3.7]{gollerPDLIntersectionConverse2009}.%
}
Combining these non-trivial reductions "UNTC" $\leadsto$ "UCPDL+" $\leadsto$ "ICPDL" $\leadsto$ "2APTA" we obtain a correct but rather long and intricate proof, from which it is difficult to gain intuition. The present work can be seen as a simplification and streamlining of this complex chain of reductions.

Finally, the recently introduced Guarded Fragment with Regular Guards (RGF) \cite{BednarczykKieronski-kr25} extends the expressive power of "GFO" and "ICPDL" by allowing to use "ICPDL"-programs as guards. RGF is incomparable with "GNTC" (or "UNTC") in terms of expressive power.

\subparagraph{Organization.} 
After some preliminary definitions in Section~\ref{sec:prelim}, we show the reduction from "GNTC" to "UNTC" in Section~\ref{sec:gntc-untc}. The following Sections \ref{sec:untc-proof-approach}--\ref{section: UNTC} are dedicated to showing decidability for the satisfiability of "UNTC", starting with a high-level description of the proof in Section~\ref{sec:untc-proof-approach}. Finally, we study the model checking problem in Section~\ref{section: model checking} and conclude with Section~\ref{sec:conclusion}.
\ifthenelse{\boolean{arXiv}}{}{All missing proofs can be found in the "full version".}

\section{Preliminaries}
\label{sec:prelim}

\AP We write $\znat$, $\nat$, and $\pnat$ for the \kl{sets} of integers, non-negative integers and positive integers, respectively.
For $l, r \in \znat$, we write $\intro*\range{l}{r}$ for the \kl{set} $\{i \in \znat \mid l \le i 
\le r\}$.
In particular, for $n \in \nat$, we write $\intro*\1 n$ for $\range{1}{n}$.
For a \kl{set} $X$,
we write $\intro*\card X$ for the \intro*\kl{cardinality} of $X$
and $\pset{X}$ for the \kl{set} of \intro*\kl{subsets} of $X$.
We use $\intro*\dcup$ to denote that the set union $\cup$ is disjoint.

\AP
For a "set" $X$ of \intro*\kl{letters}, we write $X^*$ for the "set" of \intro*\kl{words} over $X$.
We write $\intro*\eps$ for the \intro*\kl{empty word} and
$\word[1] \word[1]'$ for the \intro*\kl{concatenation} of \kl{words} $\word[1]$ and $\word[1]'$.
We write $\word[1] \intro*\lepref \word[1]'$ if $\word[1]$ is a \kl{prefix} of $\word[1]'$.
For a sequence $\mul{a} = a_1 \dots a_n$, the \intro*\kl(word){length} \AP$\intro*\seqlen{\mul{a}}$ is $n$, and we write \AP$\mul{a}\intro*\ith{i}$ for $a_i$.

\AP\nointro{node}\nointro{root}\nointro{child}\nointro{leaf}\nointro{path}\nointro{height}\nointro{prefix}\nointro{prefix-closed}%
For a "set" $X$ of \kl{letters}, an $X$-labeled (rooted) \intro*\kl{tree} is a partial function $T \colon \pnat^* \pto X$ such that
its ""domain@@function"" $\intro*\fdom(T)$ is \kl{prefix-closed} and non-empty.
We say that $T$ is \intro*\kl{binary} if $\fdom(T) \subseteq \set{1, 2}^{*}$.
For $\word[1], \word[1]' \in \fdom(T)$ and $\dir \in \znat$, we say that $\word[1]'$ is in the \intro*\kl{direction} $\dir$ from $\word[1]$ if
$\word[1] = \word[1]'$ when $\dir = 0$, 
$\word[1]\dir \lepref \word[1]'$ when $\dir > 0$, and
$\word[1] \not\lepref \word[1]'$ when $\word[1] = \word[1]''(\back \dir)$ for some $\word[1]'' \in \pnat^*$.

We write $\AP\intro*\yndirdom{\word}{\dir}(T)$ %
for the set of all elements in the \kl{direction} $\dir$ from $\word$.
For $\word \in \fdom(T)$ and $\dir \in \znat$,
\AP
we define $\word \intro*\series \dir \in \fdom(T)$ ("cf", \cite[p.\ 285]{gollerPDLIntersectionConverse2009}) as
the "node" adjacent to $\word$ and in the \kl{direction} $\dir$ from $\word$ if it exists, and as undefined otherwise; see \Cref{figure: direction} for an illustration.\footnote{For economy of space, trees depicted in this manuscript grow from left to right.}
\begin{figure}[ht]
    \centering
\begin{tikzpicture}
        \tikzstyle{vert}=[draw = black, circle, fill = gray!10, inner sep = 2pt, minimum size = 1pt, font = \scriptsize]
        \tikzstyle{local}=[draw = blue, line width = 1.pt]
        \tikzstyle{edge}=[draw = gray!30]
        \tikzstyle{zone}=[rounded corners, fill = blue!30, draw = blue, fill opacity=.3, line width = 1.pt]
        \tikzstyle{bag}=[draw=gray, circle]
        \node[vert, xshift = -2.cm, yshift = 0cm](b-1){};
        \node[above = .3ex of b-1, font = \footnotesize] at (b-1.north) {$\word \series (\back \dir)$};
        \draw[zone] ($(b-1) + (-2.6,.4)$) -- ($(b-1) + (.3, .15)$) -- ($(b-1) + (.3,-.15)$) -- ($(b-1) + (-2.6,-.4)$) -- cycle;
        \node[font= \footnotesize] at ($(b-1) + (-1.7, .1)$){\kl{direction} $\back \dir$};
        \node[vert, draw = blue, line width = 1pt, xshift = -0.5cm, yshift = 0cm](b){};
        \node[above ,align =center] at (b.north) {$\word$};
        \node[below right = 1.7ex and -1.7em of b, font= \footnotesize](bn) {\kl{direction} $0$};
        \draw[draw = gray, densely dashed] ($(bn)+(0,.15)$) -- ($(b)+(0,-.15)$);
        \node[vert, xshift = 1.6cm, yshift = .3cm](b1) {};
        \draw[zone] ($(b1) + (2.6,.5)$) -- ($(b1) + (-.3,.15)$) -- ($(b1) + (-.3,-.15)$) -- ($(b1) + (2.6,-.2)$) -- cycle;
        \node[font= \footnotesize] at ($(b1) + (1.8, .18)$){\kl{direction} $1$};
        \node[above = .3ex of b1, font = \footnotesize] at (b1.north) {$\word \series 1$};
        \node[vert, xshift = 1.6cm, yshift = -.3cm](b2) {};
        \draw[zone] ($(b2) + (2.6,.2)$) -- ($(b2) + (-.3,.15)$) -- ($(b2) + (-.3,-.15)$) -- ($(b2) + (2.6,-.5)$) -- cycle;
        \node[font= \footnotesize] at ($(b2) + (1.8, -.18)$){\kl{direction} $2$};
        \node[below = .3ex of b2, font = \footnotesize] at (b2.south) {$\word \series 2$};
        \graph[use existing nodes, edges={color=black, pos = .5, earrow}, edge quotes={inner sep=1pt,font= \scriptsize}]{
            (b-1) ->["$d$"{auto, font = \scriptsize}, gray, pos = .5] b;
            (b) ->["$1$"{auto, font = \scriptsize}, gray, pos = .5] (b1);
            (b) ->["$2$"{auto, font = \scriptsize, below}, gray, pos = .5] (b2);
        };
\end{tikzpicture}
     \caption{Illustration of \kl{directions} (from $\word$) and the operator $\series$.}
    \label{figure: direction}
\end{figure}

\AP 
Let $\intro*\Rels$ be an infinite set of ""relation names"".
We write $\intro*\arity(\rsym)$ for the ""arity"" of a "relation name" $\rsym \in \Rels$.
\AP A \intro*\kl{relational structure} $\ynstruc$ (henceforth \reintro*\kl{structure}) 
consists of a non-empty ""countable""%
\footnote{Both \kl{GNTC} and \kl{UNTC} are semantic fragments
of \kl{least fixpoint logic} (\kl{LFP}),
which has the \intro*\kl{downward L{\"o}wenheim--Skolem property} \cite{flumInfiniteModelTheory1998,gradelGuardedFixedPoint2002}.
Thus, \kl{countable} \kl{structures} suffice for "satisfiability@satisfiable".} %
""domain@@structure"" $\intro*\univ{\ynstruc}$
of ""elements@domain element"" and,
for each "relation name" $\rsym \in \Rels$, a relation \AP$\intro*\interpatom{\ynstruc}{\rsym} \subseteq \univ{\ynstruc}^{\arity(\rsym)}$.
We write \AP$\intro*\allstruc_{\kappa}$ for the class of all \kl{structures} of \kl{cardinality} at most $\kappa$, and let $\intro*\allcountablestruc \defeq \allstruc_{\aleph_0}$ where $\aleph_0$ is the countably infinite cardinal and let $\intro*\allfinstruc \defeq \bigcup_{n \in \pnat} \allstruc_{n}$.

\AP
We use $\ynmulstruc[1], \ynmulstruc[2], \dotsc$ to denote $\allfinstruc$-labeled "binary" "trees".
An element $\bag \in \fdom(\ynmulstruc[2])$ is called a \intro*\kl{bag}. 
Thus, $\ynmulstruc[2](\bag)$ is the finite "structure" labeled at $g$, for any "bag" $g$ of $\ynmulstruc[2]$.
For a \kl{structure} $\ynstruc[1]$,
a \intro*\kl{tree decomposition} of $\ynstruc[1]$ is an $\allfinstruc$-labeled "binary" \kl{tree}\footnote{\label{footnote: binary tree}%
    $\allfinstruc$-labeled \emph{"binary"} \kl{trees} suffice to enumerate countable \kl{structures} of finite \kl{treewidth}, "cf", "eg", \cite[\S 4]{benediktStepExpressivenessDecidable2016}.
    } $\ynmulstruc[2]$ such that
\begin{itemize}
    \item $\univ{\ynstruc[1]} = \bigcup_{\bag \in \fdom(\ynmulstruc[2])} \univ{\ynmulstruc[2](\bag)}$,
    \item $\interpatom{\ynstruc[1]}{\rsym} = \bigcup_{\bag \in \fdom(\ynmulstruc[2])} \interpatom{\ynmulstruc[2](\bag)}{\rsym}$ for each $\rsym \in \Rels$,
    \item $\univ{\ynmulstruc[2](\bag)} \cap \univ{\ynmulstruc[2](\bag'')} \subseteq \univ{\ynmulstruc[2](\bag')}$ for all 
    $\bag, \bag', \bag'' \in \fdom(\ynmulstruc[2])$ such that $\bag'$ is on the path between $\bag$ and $\bag''$.
\end{itemize}
\AP
The \intro*\kl{width} 
of $\ynmulstruc[2]$ is defined as 
$\sup_{\bag \in \fdom(\ynmulstruc[2])} (\card \univ{\ynmulstruc[2](\bag)} - 1)$.
The \intro*\kl{treewidth} %
\cite{robertsonGraphMinorsII1986,courcelleGraphStructureMonadic2012} of a \kl{structure} $\ynstruc[1]$ is the minimum \kl{width} among \kl{tree decompositions} of $\ynstruc[1]$, or $\infty$ if there are no \kl{tree decompositions}.

\subsection{\kl{GNTC} and \kl{UNTC}}\label{subsection: GNTC and UNTC}
In this paper, we consider decidable syntactic fragments of (first-order) ""transitive closure logic"" (\reintro*\kl{TC}) \cite{immermanLanguagesThatCapture1987}%
\ifthenelse{\boolean{arXiv}}{ (see \Cref{section: TC} for a formal definition of \kl{TC}).}{.}%
\begin{toappendix}
\begin{scope}\knowledgeimport{TC}
\subsection{Definition of \kl{TC}}\label{section: TC}
\AP
The set of \intro*\kl{formulas} in \reintro*\kl{TC} is given by the following grammar:
\phantomintro{\TC}
\begin{align*}
    \fml[1], \fml[2], \fml[3]
    &~\Coloneqq~ \rsym \mul{x} \mid \fml[1] \land \fml[2] \mid \fml[1] \lor \fml[2] \mid \exists x \fml 
    \mid \lnot \fml
    \mid \reintro*\TC{\fml}{\mul{u} \mul{v}}\, \mul{x} \mul{y}.
\end{align*}
Here, $\rsym \mul{x}$ satisfies $\rsym \in \Rels \dcup \set{\EQ}$ and
$\seqlen{\mul{x}} = \arity(\rsym)$ where $\arity(\EQ) \defeq 2$.
We use 
$\TC{\fml}{\mul{u} \mul{v}} \mul{x} \mul{y}$ that satisfies the following:
\begin{itemize}
    \item $\mul{x}, \mul{y}, \mul{u}, \mul{v}$ are sequences of "variables" having the same \kl(word){length} $k \ge 1$ (called, the \reintro*\kl{arity} of the \kl{TC} \kl{formula});
    \item $\mul{u} \mul{v}$ is pairwise distinct.
\end{itemize}
For $\mul{y} = y_1 \dots y_k$, we write $\exists \mul{y} \fml$ for the \kl(TC){formula} $\exists y_1 \dots \exists y_k \fml$ (in particular, $\fml$ when $k = 0$).

\AP
An \intro*\kl{atom} is a \kl{formula} of the form $\rsym \mul{x}$ where $\rsym \in \Rels \dcup \set{\EQ}$.
An \intro*\kl{existentially quantified atom} is a \kl{formula} of the form $\exists \mul{y} \rsym \mul{x}$ where $\rsym \in \Rels \dcup \set{\EQ}$ and $\mul{y}$ is a sequence of pairwise distinct "variables".

\AP \phantomintro{first-order logic}%
We say that a \kl{formula} $\fml$ in \kl{TC} is an \intro*\kl{FO formula} if $\fml$ does not contain any \kl{subformula} of the form $\TC{\fml[2]}{\mul{u} \mul{v}} \mul{x} \mul{y}$.
We say that an \kl{FO formula} $\fml$ is an \intro*\kl{EFO formula} (\reintro*\kl{existential FO formula}) if $\fml$ does not contain any occurrence of $\lnot$ that lies outside the scope of a quantifier.

The \reintro*\kl(formula){size} $\reintro*\fmllen{\fml}$ of a \kl{formula} $\fml$ is defined as the total number of symbols occurring in $\fml$; more precisely, it is defined as follows:
\begin{align*}
    \fmllen{\rsym \mul{x}} &\defeq 1 + \seqlen{\mul{x}},&
    \fmllen{\fml[1] \land \fml[2]} &\defeq 1 + \fmllen{\fml[1]} + \fmllen{\fml[2]},\\
    \fmllen{\fml[1] \lor \fml[2]} &\defeq 1 + \fmllen{\fml[1]} + \fmllen{\fml[2]},&
    \fmllen{\exists x \fml[1]} &\defeq 1 + 1 + \fmllen{\fml[1]},\\
    \fmllen{\lnot \fml[1]} &\defeq 1 + \fmllen{\fml[1]},&
    \fmllen{\TC{\fml[1]}{\mul{u} \mul{v}}\, \mul{x} \mul{y}} &\defeq 1 + \seqlen{\mul{u}\mul{v}\mul{x}\mul{y}} + \fmllen{\fml[1]}.
\end{align*}

Given a \kl{structure} $\ynstruc$ and an \reintro*\kl{interpretation} (a partial map) $\inter \colon \Vars \pto \univ{\ynstruc}$,
\begin{itemize}
    \item for a sequence $\mul{a} = a_1 \dots a_n \in \univ{\ynstruc}^*$ and a pairwise distinct sequence $\mul{x} = x_1 \dots x_n \in \Vars^*$ of the same \kl(word){length} $n \ge 0$,
    we write $\inter\intro*\intersubst{\mul{x}}{\mul{a}}$ for the \kl{interpretation} $\inter$ in which each $\inter(x_i)$ has been replaced with $a_i$ for each $i \in \1{n}$;
    \item for a sequence $\mul{x} = x_1 \dots x_n \in \Vars^*$,
    we write $\inter(\mul{x})$ for the sequence $\inter(x_1) \dots \inter(x_n)$.
\end{itemize}

For a \kl{structure} $\ynstruc$, an \kl{interpretation} $\inter \colon \Vars \pto \univ{\ynstruc}$, and a \kl{TC} \kl{formula} $\fml$ such that $\FV(\fml) \subseteq \fdom(\inter)$,
\AP
the \intro*\kl{semantics} $\ynstruc \reintro*\modelsass{\inter} \fml$ is defined as follows:
\begin{align*}
    \ynstruc \modelsass{\inter} \rsym \mul{x} &\quad\defiff\quad \inter(\mul{x}) \in \interpatom{\ynstruc}{\rsym} \mbox{ where $\rsym \in \Rels$},\\
    \ynstruc \modelsass{\inter} x \EQ y &\quad\defiff\quad \inter(x) \EQ \inter(y),\\
    \ynstruc \modelsass{\inter} \fml[1] \land \fml[2] &\quad\defiff\quad \ynstruc \modelsass{\inter} \fml[1] \mbox{ and } \ynstruc \modelsass{\inter} \fml[2],\\
    \ynstruc \modelsass{\inter} \fml[1] \lor \fml[2] &\quad\defiff\quad \ynstruc \modelsass{\inter} \fml[1] \mbox{ or } \ynstruc \modelsass{\inter} \fml[2],\\
    \ynstruc \modelsass{\inter} \exists x \fml[1] &\quad\defiff\quad \ynstruc \modelsass{\inter\intersubst{x}{a}} \fml[1] \mbox{ for some $a \in \univ{\ynstruc}$}, \\
    \ynstruc \modelsass{\inter} \lnot \fml[1] &\quad\defiff\quad \mbox{not } (\ynstruc \modelsass{\inter} \fml[1]), \\
    \ynstruc \modelsass{\inter} \TC{\fml[2]}{\mul{u} \mul{v}}\, \mul{x} \mul{y} &\quad\defiff\quad \mbox{there are $n \ge 0, \mul{a}_0, \dots, \mul{a}_n$ s.t.\ } \\
    &\phantom{\quad\defiff\quad} \begin{cases}\begin{aligned}
        & \mul{a}_0 = \inter(\mul{x}), \mul{a}_n = \inter(\mul{y}),\\
        & \ynstruc \modelsass{\inter\intersubst{\mul{u} \mul{v}}{\mul{a}_{i-1}\mul{a}_{i}}} \fml[2] \mbox{ for $i \in \1{n}$}.
    \end{aligned}\end{cases}
\end{align*}

The \kl{semantics} of \kl{UNTC} and \kl{GNTC} are presented as syntactic fragments of \kl{TC}, respectively.

We say that two \kl{structures} $\ynstruc[1]$ and $\ynstruc[2]$ are \intro*\kl{isomorphic} if
there is a bijection $f \colon \univ{\ynstruc[1]} \to \univ{\ynstruc[2]}$ such that, for every $\rsym \in \Rels$ and $\mul{a} \in \univ{\ynstruc[1]}^{\arity(\rsym)}$,
\[\mul{a} \in \interpatom{\ynstruc[1]}{\rsym} \quad\mbox{iff}\quad f(\mul{a}) \in \interpatom{\ynstruc[2]}{\rsym}.\]
Here, $f(\mul{a})$ is defined as $f(a_1) \dots f(a_k)$ for $\mul{a} = a_1 \dots a_k$.

For two classes $\fmlclass$ and $\fmlclass'$ of \kl{formulas},
$\fmlclass$ has at least as much \intro*\kl{expressive power} as $\fmlclass'$
if for every \kl{formula} $\fml[2]$ in $\fmlclass'$,
there is a \kl{formula} $\fml[1]$ in $\fmlclass$
such that, for every \kl{structure} $\ynstruc$ and every \kl{interpretation} $\inter$ with $\FV(\fml[1]) \cup \FV(\fml[2]) \subseteq \fdom(\inter)$, we have:
$\ynstruc \modelsass{\inter} \fml[2]$ "iff" $\ynstruc \modelsass{\inter} \fml[1]$.
We say that $\fmlclass$ and $\fmlclass'$ have the same \reintro*\kl{expressive power}
if each of them has at least as much \kl{expressive power} as the other.
\end{scope}
 \end{toappendix}
\AP
Let $\intro*\Vars$ be an infinite set of ""variables"", disjoint from $\Rels$.
We write $\intro*\FV(\fml)$ for the set of \AP""free variables"" occurring in a "formula@@TC" $\fml$.
For a sequence $\mul{x}$ of "variables", we use $\fml(\mul{x})$ to indicate that $\fml$ is a "formula@@TC"  where $\FV(\fml)$ is the set of "variables" occurring in $\mul{x}$.
We shall sometimes abuse notation and write $\intro*\toSet{\mul{x}}$ also to denote the \emph{set} of "variables" occurring in $\mul{x}$ -- it will be clear from the context when we use it as a set.
For a "structure" $\struc A$, "formula@@TC" $\fml$ and \AP""interpretation"" ("ie", "variable assignment") $\inter : \Vars \pto \dom A$ "st" $\FV(\fml) \subseteq \fdom(\inter)$,
we write $\struc A \intro*\modelsass{\inter} \fml$ to denote that $\struc A$ satisfies $\fml$ under $\inter$.
We write $\struc A \intro*\modelsnonass \fml$ when $\fml$ is a \AP""sentence@@TC"" ("ie", $\fml$ has no "free variables").
\AP
The \intro*\kl(formula){size} $\intro*\fmllen{\fml}$ of a "formula@@TC" $\fml$ is defined as the total number of symbols occurring in $\fml$.
For a sequence $\mul{y} = (y_1, \dotsc, y_n) \in \Vars^*$ and a pairwise distinct sequence $\mul{x} = (x_1, \dotsc, x_n) \in \Vars^*$ of the same \kl(word){length} $n \ge 0$,
we write $\fml\intro*\fmlsubst{\mul{x}}{\mul{y}}$ for the \kl(TC){formula} $\fml$ in which each \kl{variable} $x_i$ has been replaced with $y_i$ for each $i \in \1{n}$.

\AP
For a logic $\+L$ (such as "GNFO", "GNTC", etc.), we denote by ""SAT-""$\+L$ its \reintro{satisfiability problem}, that is, the problem of, given a "sentence@@TC" $\phi \in \+L$, whether there exists some "structure" $\struc A$ such that $\struc A \modelsnonass \phi$, in which case we say it is \AP""satisfiable"" and that $\struc A$ is a \AP""model of"" $\phi$.\footnote{$\struc A$ may be infinite since we shall deal with logics which do not enjoy the finite model property.}

\begin{scope}\knowledgeimport{GNTC}
The guarded negation fragment of "TC" (\AP""GNTC"") is defined as follows.
\AP
A ""guard"" is an "atom" $\rsym \mul{x}$ where $\rsym \in \Rels \dcup \set{\EQ}$.
Here, $\rsym \mul{x}$ satisfies $\seqlen{\mul{x}} = \arity(\rsym)$ where $\arity(\EQ) \defeq 2$.
We use $\afml[1], \afml[2]$ to denote "guards".%
\footnote{\label{footnote: existentially quantified atoms vs atoms}%
We can extend "guards" to "existentially quantified atoms" obtaining a "semantically equivalent" logic via polynomial-time translations%
\ifthenelse{\boolean{arXiv}}{ ("cf" \Cref{section: footnote: existentially quantified atoms vs atoms}).}{.}
}%
\AP
\phantomintro{GNTC sentence}
The ""GNTC formulas"" are generated by the following grammar:%
\labeltext{eq:guarded-neg}{\;}{(G-N)}\labeltext{eq:guarded-tc}{\;}{(G-TC)} %
\begin{align*}
    \fml[1], \fml[2] &~\Coloneqq~ \afml \mid \fml[1] \land \fml[2] \mid \fml[1] \lor \fml[2] \mid \exists x ~ \fml
    \mid \underbrace{\afml \land \lnot \fml[1]}_{\text{\nameref*{eq:guarded-neg}}}
    \mid \underbrace{\TC{\afml[1] \land \afml[2] \land \fml[1]}{\mul{u} \mul{v}} \, \mul{x} \mul{y}}_{\text{\nameref*{eq:guarded-tc}}}.
\end{align*}
Here, in the form of the ""guarded negation formula"" \nameref{eq:guarded-neg}, we assume that $\fml[1]$ is ``guarded'' by the "atom" $\afml$, "ie", $\FV(\fml[1]) \subseteq \FV(\afml)$.
Also, for any formula of form \nameref{eq:guarded-tc} -- which we call \AP""unparameterized guarded transitive closure formula"", \reintro*\kl{transitive closure formula} or simply \reintro*\kl{TC-formula} --, we assume the following:
\begin{itemize}
    \item $\mul{x}$, $\mul{y}$, $\mul{u}$, and $\mul{v}$ have the same \kl(word){length} (called, the \emph{"arity"} of the "TC-formula"), and $\mul{u}\mul{v}$ is pairwise distinct,
    \item $\mul{u}$ and $\mul{v}$ are guarded by the "atoms" $\afml[1]$ and $\afml[2]$, respectively:
    $\toSet{\mul{u}} \subseteq \FV(\afml[1])$ and $\toSet{\mul{v}} \subseteq \FV(\afml[2])$,
    \item there are no TC-parameters: $\FV(\afml[1] \land \afml[2] \land \fml[1]) \subseteq \toSet{\mul{u} \mul{v}}$.
\end{itemize}
\AP\phantomintro{GNFO}
A "GNTC formula" $\fml$ is a ""GNFO formula"" if $\fml$ does not contain "TC-formulas" (given by the grammar without the \nameref{eq:guarded-tc} rule).

\begin{remark}\label{remark: GNF(TC)}
"GNTC" is similar to ``"GNF(TC)"'' of \cite{benediktStepExpressivenessDecidable2016}, but
modified in the following two points:
\begin{itemize}
    \item our "TC-formulas" have both a left "guard" $\afml[1]$ and a right "guard" $\afml[2]$; and
    \item we use classical guards from other logics ("eg", "GNFO" and "GFO"), as opposed to the (exponential-sized) formula ``gdd'' from \cite{benediktStepExpressivenessDecidable2016}.
\end{itemize}
These modifications do not affect the "expressive power"%
\ifthenelse{\boolean{arXiv}}{ (see \Cref{section: GNF(TC) and GNTC} for a proof).}{.}
Further, the form of "TC-formulas" in "GNTC" is crucial for the polynomial-time reduction from "GNTC" to "UNTC" ("cf" \Cref{sec:gntc-untc}): if $\bar{v}$ is not guarded (as in "GNF(TC)"), we have to unfold the TC once, and it may make an exponential blowup if the nesting of TC is unbounded.
The second modification seems also necessary for the polynomial-time reduction from "GNTC" to "UNTC" in the "model checking problem" of \Cref{section: model checking} ("cf" \cite[Remark 5.2]{baranyGuardedNegation2015}%
\ifthenelse{\boolean{arXiv}}{ and \Cref{remark: guards in model checking}}{}).
\end{remark}

\begin{toappendix}
\subsection{Expressive power of GNTC variants}
\subsubsection{Supplement of \Cref{footnote: existentially quantified atoms vs atoms}: On \texorpdfstring{"existentially quantified atoms" in "GNTC"}{existentially quantified atoms in GNTC}}\label{section: footnote: existentially quantified atoms vs atoms}
Even if we extend "guards" in "GNTC" to "existentially quantified atoms",
this extension does not increase the "expressive power" under polynomial-time reductions.
\begin{proposition}\label{proposition: atoms vs existentially quantified atoms}
The variant of "GNTC" where "guards" are "existentially quantified atoms" has the same "expressive power" as "GNTC" under polynomial-time reductions.
\end{proposition}
\begin{proof}
We can remove existential quantifiers in "existentially quantified atoms" based on the following equivalences:
\begin{itemize}
    \item The "TC-formula"
    $\TC{(\exists \mul{w} \afml(\mul{u} \mul{v}' \mul{w})) \land (\exists \mul{w}' \afml[2](\mul{u}' \mul{v} \mul{w}')) \land \fml[1]}{\mul{u} \mul{v}}\mul{x} \mul{y}$, where $\toSet{\mul{u}'} \subseteq \toSet{\mul{u}}$ and $\toSet{\mul{v}'} \subseteq \toSet{\mul{v}}$,
    is "semantically equivalent" to%
\[\exists \mul{w}\mul{w}' (\TC{\afml(\mul{u} \mul{v}'\mul{s}) \land \afml[2](\mul{u}'\mul{v}\mul{s}') \land \fml[1]}{\mul{u} \mul{s}\mul{s}' \mul{v} \mul{t}\mul{t}'}\mul{x} \mul{w}\mul{w}' \mul{y} \mul{w}\mul{w}').\]
    (Observe that $\mul{t}\mul{t}'$ does not occur in the "TC-formula".
    The valuation for $\mul{s}\mul{s}'$ is reset in each iteration.)
    \item The "guarded negation formula" $(\exists \mul{z} \afml(\mul{x} \mul{y} \mul{z})) \land \lnot \fml[1](\mul{x})$ is "semantically equivalent" to
    $\exists \mul{z} (\afml(\mul{x} \mul{y} \mul{z}) \land \lnot \fml[1] (\mul{x}))$.
\end{itemize}
\end{proof}
\end{toappendix}

\begin{toappendix}
\subsubsection{Supplement of \Cref{remark: GNF(TC)}: \texorpdfstring{Expressive power equivalence between "GNTC" and "GNF(TC)"}{Expressive power equivalence between GNTC and GNF(TC)}}\label{section: GNF(TC) and GNTC}
\begin{scope}\knowledgeimport{GNF(TC)}
We recall "GNF(TC)" of \cite{benediktStepExpressivenessDecidable2016} (where notations are adapted to our paper).
For a sequence $\mul{u}$ of pairwise distinct "variables",
we write $\intro*\gdd \mul{u}$ for the formula consisting of the disjunction of
all "existentially quantified atoms" that use a "relation name" in $\Rels \dcup \set{=}$ and involve all the "variables" in $\mul{u}$.
The \AP\intro*\kl(GNF(TC)){guards} in "GNF(TC)" are generated by the following grammar:
\[\afml[1],\afml[2] ~\Coloneqq~ \gdd \mul{u}.\]

\AP\phantomintro{GNF(TC)}%
The \intro*\kl{GNF(TC) formulas} are generated by the following grammar:
\begin{align*}
    \fml[1], \fml[2] &~\Coloneqq~
    \afml \mid \fml[1] \land \fml[2] \mid \fml[1] \lor \fml[2] \mid \exists x \fml
    \mid \underbrace{\afml \land \lnot \fml[1]}_{\text{\nameref{eq:guarded-neg}}} 
    \mid \underbrace{\TC{\afml \land \fml[1]}{\mul{u} \mul{v}}\mul{x} \mul{y}}_{\text{(G-TC')}}.
\end{align*}
Here, in the form of the \reintro*\kl{guarded negation formula} \nameref{eq:guarded-neg}, the "free variables" of $\fml[1]$ are guarded by $\afml$: $\FV(\fml[1]) \subseteq \FV(\afml)$.
Also, in the form of the "unparameterized guarded transitive closure formula" (G-TC'), 
\begin{itemize}
    \item $\mul{x}$, $\mul{y}$, $\mul{u}$, and $\mul{v}$ have the same length,
    and $\mul{u} \mul{v}$ is a pairwise distinct sequence of "variables",
    \item $\mul{u}$ is guarded by $\afml[1]$ and the $\afml[1]$ does not contain "free variables" except $\mul{u}$:
    $\toSet{\mul{u}} = \FV(\afml[1])$
    (here, $\mul{v}$ is not guarded in general),
    \item there are no TC-parameters: $\FV(\afml[1] \land \fml[1]) \subseteq \toSet{\mul{u} \mul{v}}$.
\end{itemize}
\end{scope}

The definition of "GNTC" differs from that of "GNF(TC)" in the following two aspects:
\begin{enumerate}
    \item \label{item: GNTC and GNF(TC) guard syntax} the notion of "guards" is different,
    \item \label{item: GNTC and GNF(TC) left right guard} in "TC-formulas" of "GNTC" both left and right "guards" are required (as in \nameref{eq:guarded-tc}), whereas in "GNF(TC)", only a left "guard" is required (as in (G-TC')).
\end{enumerate}

Nevertheless, we show that "GNTC" and "GNF(TC)" have the same "expressive power".
First, we show that the difference in \kl{guards} (\ref{item: GNTC and GNF(TC) guard syntax}) does not change the expressive power.
\begin{proposition}\label{proposition: GNF(TC) to GNF(TC) atom}
"GNF(TC)" has the same "expressive power" as the variant of "GNF(TC)" where "guards" are "atoms".
\end{proposition}
\begin{proof}
First, observe that any "GNF(TC)" formula in which "guards" are "atoms" is equivalent to a "GNF(TC)" formula, since $\afml(\mul{u}) \land \fml$ is "semantically equivalent" to $\gdd \mul{u} \land (\afml(\mul{u}) \land \fml)$ -- note that $\gdd \mul{u}$ contains every "atom" $\afml(\mul{u})$ as a disjunct.

We now give a translation from "GNF(TC)" into the variant of "GNF(TC)" where "guards" are "atoms".

\proofcasethin{\nameref{eq:guarded-neg}} Consider a formula of the form $\gdd \mul{u} \land \lnot \fml[1]$.
Let $\gdd \mul{u} = \bigvee_i \exists \mul{y}_i \afml[2]_i(\mul{u} \mul{y}_i)$, where each $\afml[2]_i$ is an "atom".
"Wlog", we can assume $\toSet{\mul{y}_i} \cap \FV(\fml[1]) = \emptyset$ for each $i$.
Then the \kl(GNF(TC)){formula} $\gdd \mul{u} \land \lnot \fml[1]$ is "semantically equivalent" to the following \kl(TC){formula}:
\[\bigvee_i \exists \mul{y}_i (~ \afml[2]_i(\mul{u} \mul{y}_i) \land \lnot \fml[1] ~).\]

\proofcasethin{(G-TC')}
Consider the "TC-formula" $\TC{\gdd \mul{u} \land \fml[1]}{\mul{u} \mul{v}}\mul{x} \mul{y}$.
Let $\gdd \mul{u} = \bigvee_i \exists \mul{y}_i \afml[2]_i(\mul{u} \mul{y}_i)$, where each $\afml[2]_i$ is an "atom".
We can then eliminate $\lor$ and $\exists$, respectively as follows.

\subproofcasethin{Eliminating $\lor$}
Let us consider the "TC-formula"
$\TC{(\afml[1](\mul{u}) \lor \afml[2](\mul{u})) \land \fml[1]}{\mul{u} \mul{v}}\mul{x} \mul{y}$,
where $\afml[1]$ and $\afml[2]$ are disjunction of "existentially quantified atoms" involving all the "variables" in $\mul{u}$.
This \kl(GNF(TC)){formula} is "semantically equivalent" to the following \kl(GNF(TC)){formula}:%
\footnote{"Cf", $(E+F)^* = E^*(FE^*)^*$ in regular expressions.}
\begin{align*}
    &\exists \mul{z}(~\TC{\afml[1](\mul{u}) \land \fml[1]}{\mul{u} \mul{v}} \mul{x} \mul{z}
    \land \TC{~\afml[2](\mul{u}) \land \exists \mul{w} (\fml[1]\fmlsubst{\mul{v}}{\mul{w}} \land 
    \TC{\afml[1](\mul{u}) \land \fml[1]}{\mul{u} \mul{v}} \mul{w} \mul{v})~
    }{\mul{u} \mul{v}} \mul{z} \mul{y} ~).
\end{align*}

\subproofcasethin{Eliminating $\exists$}
Let us consider the "TC-formula"
$\TC{(\exists \mul{w} \afml(\mul{u} \mul{w})) \land \fml[1]}{\mul{u} \mul{v}}\mul{x} \mul{y}$,
where $\afml$ is an "atom".
This \kl(GNF(TC)){formula} is "semantically equivalent" to the following \kl(GNF(TC)){formula}:%
\[\exists \mul{w} (\TC{\afml(\mul{u} \mul{s}) \land \fml[1]}{\mul{u} \mul{s} \mul{v} \mul{t}}\mul{x} \mul{w} \mul{y} \mul{w}).\]
(Observe that $\mul{t}$ does not occur in the "TC-formula".
The valuation for $\bar{s}$ is reset in each iteration.)

Hence, we can translate "GNF(TC)" into the variant of "GNF(TC)" where "guards" are "atoms".
\end{proof}

We next absorb the difference of the form of \kl{guards} (\ref{item: GNTC and GNF(TC) left right guard}).
We transform "GNF(TC) formulas" into "GNTC formulas", by putting the right "guards" with unfolding "TC-formulas".
\begin{proposition}\label{proposition: GNTC and GNF(TC)}
    "GNTC" has the same "expressive power" as "GNF(TC)".
\end{proposition}
\begin{proof}
    \proofcasethin{"GNF(TC)" $\leadsto$ "GNTC"}
    By \Cref{proposition: GNF(TC) to GNF(TC) atom},
    we can transform
    each "GNF(TC)" "formula@@GNF(TC)" into
    a "semantically equivalent" "GNF(TC)" "formula@@TC" where "guards" are "atoms".
    The "TC-formula"
    $\TC{\afml(\mul{u}) \land \fml[1]}{\mul{u} \mul{v}}\mul{x} \mul{y}$, where $\afml$ is an "atom",
    is "semantically equivalent" to the following "formula@@GNTC":
    \[
        (\bigwedge_{i} \mul{x}\ith{i} = \mul{y}\ith{i}) ~\lor~
        \exists \mul{z}\left(~ \TC{\afml(\mul{u}) \land \afml(\mul{v}) \land \fml[1](\mul{u} \mul{v})}{\mul{u} \mul{v}}\mul{x} \mul{z}
        \land \afml(\mul{z}) \land \fml[1](\mul{z} \mul{y}) ~\right).
    \]
    Hence, we can translate "GNF(TC)" into "GNTC".

    \proofcasethin{"GNTC" $\leadsto$ "GNF(TC)"}
    The converse direction is almost trivial, as
    the "TC-formula" $\TC{\afml[1](\mul{u}\mul{v}') \land \afml[2](\mul{u}'\mul{v}) \land \fml[1]}{\mul{u} \mul{v}}\mul{x} \mul{y}$, where $\toSet{\mul{u}'} \subseteq \toSet{\mul{u}}$
    and $\toSet{\mul{v}'} \subseteq \toSet{\mul{v}}$,
    is "semantically equivalent" to
    the "TC-formula"
    $\TC{\gdd \mul{u} \land (\afml[1](\mul{u}\mul{v}') \land \afml[2](\mul{u}'\mul{v}) \land \fml[1])}{\mul{u} \mul{v}}\mul{x} \mul{y}$.
    Hence, we can translate "GNTC" into "GNF(TC)".
\end{proof}

As a corollary of the translation above,
we also have the following:
\begin{corollary}\label{corollary: GNTC robustness}
The following $8 = 2 \times 4$ variants have the same "expressive power":
"GNTC" and "GNF(TC)" where the "guards" are given by
\begin{itemize}
    \item all "atoms", 
    \item all "existentially quantified atoms",
    \item all disjunctions of 
    "existentially quantified atoms"
    involving all the "variables" in $\mul{u}$, or
    \item $\gdd \mul{u}$.
\end{itemize}
(Here, $\mul{u}$ ranges over sequences of pairwise distinct "variables".)
\end{corollary}

\begin{remark}\label{remark: guards in model checking}
The translations of \Cref{proposition: GNTC and GNF(TC)} make an \emph{exponential} blow-up.
Due to this, we consider "GNTC" as an alternative to "GNF(TC)".
In "GNTC", we adopt "(existentially quantified) atoms@atoms" instead of the special \kl(TC){formula} $\gdd$ ("cf", \cite{benediktStepExpressivenessDecidable2016}) as "guards",
because the \kl(TC){formula} \kl(formula){size} of $\gdd \mul{x}$ is \emph{exponential} in the maximum "arity" of $\Rels$;
this %
possibly causes the "model checking problem" potentially hard, %
"cf" \cite[Remark 5.2]{baranyGuardedNegation2015}.
\end{remark}

\end{toappendix}

\end{scope}

\begin{scope}\knowledgeimport{UNTC}
\AP
\phantomintro{UNTC formula}\phantomintro{UNTC sentence}%
The \reintro{unary negation} fragment of "GNTC" (""UNTC"") is given by the following grammar:\footnote{%
Ignoring minor syntactic differences,
"UNTC" 
can be equivalently defined as the syntactic fragment of "GNTC" where "guards" are restricted to formulas ``$x = x$'' ($x \in \Vars$).
}
\begin{align*}
    \fml[1], \fml[2] &~\Coloneqq~ \afml \mid \fml[1] \land \fml[2] \mid \fml[1] \lor \fml[2] \mid \exists x \, \fml
    \mid \lnot \fml[1] \mid \TC{\fml[1]}{u v} \, x y.
\end{align*}
Here, $\afml$ is an "atom" $\rsym \mul{x}$ where $\rsym \in \Rels \dcup \set{=}$, and in the form of the ""negation formula"" $\lnot \fml[1]$, the number of "free variables" is at most one: $\# \FV(\fml[1]) \le 1$.\footnote{\label{footnote: binary negation}%
Allowing binary negation renders "UNTC" undecidable, as is already the case for "UNFO" with atomic negations or with inequality \cite[\S 7.3]{segoufinUnaryNegation2013}.
Even the query equivalence problem for negation-free "UNTC" (""EPUNTC"")
is undecidable with atomic negations (by \cite[Theorem 50]{nakamuraExistentialCalculiRelations2023}) or with inequality (by \cite[Theorem 4.9]{nakamuraUndecidabilityPositiveCalculus2024}\cite{nakamuraUndecidabilityEmptinessProblem2025}).}
In the form of "TC-formula" $\TC{\fml[1]}{u v}\, x y$, 
there are no TC-parameters: $\FV(\fml[1]) \subseteq \set{u,v}$.
\AP \phantomintro{UNFO}\phantomintro{UNFO sentence}%
We say that a "UNTC formula" $\fml$ is a ""UNFO formula"" if $\fml$ does not contain "TC-formulas".
    In terms of expressive power, "GNTC" strictly contains "UNTC" and is incomparable with "GNFP". For example, the formula $\exists y (\theta(x,y)\land \TC{\theta(u,v)}{u v}(y,x))$ where $\theta(x,y) = R(x,y)\land \lnot R(y,x)$ -- saying that $x$ belongs to a positive-length cycle made only of strictly forward $R$-edges -- is not expressible either in "GNFP" or "UNTC" (see 
        \ifthenelse{\boolean{arXiv}}{\Cref{lem:GNTC-vs-UNTC-GNFP} of the Appendix}{the "full version"}
    for a proof).
 
 \begin{toappendix}
    \subsubsection{Proof that "GNTC" strictly contains "UNTC" and is incomparable with "GNFP"}
    \label{section: GNTC-vs-UNTC-GNFP}

    \begin{lemma}\label{lem:GNTC-vs-UNTC-GNFP}
        There 
        is a "GNTC formula" which is
        not expressible in "UNTC" nor "GNFP".
    \end{lemma}
    \begin{proof}
        A simple example witnessing that "GNTC" adds expressive power over both "UNTC" and "GNFP" is the following. Let $\theta(x,y)$ abbreviate $R(x,y)\land \lnot R(y,x)$, and consider the unary "GNTC" formula $\mathsf{SFCycle}_R(x) := \exists y (\theta(x,y)\land \TC{\theta(u,v)}{u v}(y,x) )$, saying that $x$ belongs to a positive-length cycle made only of strictly forward $R$-edges. 

        This query is not definable in "UNTC": extending the usual ``UN-bisimulation'' invariance argument for "UNFO" \cite{segoufinUnaryNegation2013},  one obtains that unary "UNTC" queries are invariant under full UN-bisimulation; the only additional case is transitive closure, where one follows the finite witnessing path step by step using the back-and-forth clauses and the induction hypothesis for the TC matrix. However, a directed $3$-cycle with self-loops at all nodes and the one-point structure with a self-loop are fully UN-bisimilar, while they are separated by  $\mathsf{SFCycle}_R$.
    
        It is not definable in "GNFP" either by adapting the separation of \cite[Appendix~A.1]{benediktStepExpressivenessDecidable2016full}.
    \end{proof}
 \end{toappendix}

In this paper, for \kl{satisfiability problems}, we will use the following model theoretic property.
A similar property holds even for "GNFP-UP" \cite[Proposition 6]{benediktStepExpressivenessDecidable2016}.
\AP
\begin{proposition}[{\intro*\kl{bounded treewidth model property} \cite[Remark 7.2 and Theorem 8.5]{figueiraCommonAncestorPDL2025}}]\label{proposition: tw}
    Every \kl{satisfiable} \kl{UNTC formula} $\fml$ is 
    \kl{satisfiable} in a \kl{structure} of \kl{treewidth} at most $\fmllen{\fml} - 1$.
\end{proposition}
\end{scope}

\subsection{\kl{2APTA}}\label{subsection: 2APTA}
\AP
For \kl{satisfiability problems},
we will give reductions to \intro*\kl{2-way alternating parity tree automata} (\kl{2APTAs}) \cite{vardiReasoningTwowayAutomata1998}.
Here, we use only \kl{binary} \kl{trees} as inputs ("cf" \Cref{footnote: binary tree})
and we employ labeled backward transitions as in \cite{gollerPDLIntersectionConverse2009}\ifthenelse{\boolean{arXiv}}{ (see \Cref{section: 2APTA} for a formal definition).}{.}
The ""non-emptiness problem"" for "2APTA" is the following problem:  given a "2APTA", is there a "tree" in the "language" of the "2APTA"?
We will rely on the following complexity result for "2APTAs".
\begin{proposition}[{\cite{vardiReasoningTwowayAutomata1998}}]\label{proposition: 2APTA complexity}
    The \kl{non-emptiness problem} for \kl{2APTAs} is in "ExpTime".
\end{proposition}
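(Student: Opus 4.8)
This is the result of \cite{vardiReasoningTwowayAutomata1998}; I sketch the standard route one would take to reconstruct it. Let $\automaton$ be a \kl{2APTA} with $n$ \kl{states} and $k$ \kl{priorities}. The plan is a three-stage pipeline, each stage being either an \emph{exponential} blow-up in the number of \kl{states} (while keeping the number of \kl{priorities} polynomial) or a polynomial reduction, so that the total running time stays within a single exponential: (i) eliminate two-wayness, turning $\automaton$ into an equivalent \emph{one-way} alternating parity tree automaton; (ii) eliminate alternation, turning it into an equivalent nondeterministic parity tree automaton; and (iii) decide \kl{non-emptiness problem} of the latter by solving a parity game.

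The crux is stage (i). The idea is to summarise, at each \kl{node} $\bag$ of an input \kl{tree}, the possible behaviours of two-way \kl{runs} that enter the subtree rooted at $\bag$ from its parent and eventually exit again upwards through $\bag$. Each such behaviour is captured by a \emph{profile}: a map recording, for every ordered pair of \kl{states} ``entered in \kl{state} $q$, left in \kl{state} $q'$'', the least \kl{priority} witnessed along the detour (or that no such detour exists). Since a profile assigns to each of the $n^2$ pairs one of at most $k{+}2$ values, there are at most $(k{+}2)^{n^2} = 2^{O(n^2 \log k)}$ of them, i.e.\ singly exponentially many. The one-way automaton guesses a profile at every \kl{node}, verifies local consistency between a \kl{node} and its two children by composing the guessed child-profiles with the \kl{transition function} of $\automaton$ restricted to the three local \kl{directions}, and propagates \kl{priorities} so that every infinite downward branch of the guessed \kl{run} corresponds to a genuine infinite two-way path carrying the correct limiting \kl{priority}. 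This yields a one-way alternating parity tree automaton with $2^{O(n^2 \log k)}$ \kl{states} and $O(k)$ \kl{priorities} accepting the same \kl{language}.

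Stages (ii) and (iii) are then classical. By the simulation theorem for alternating tree automata (Muller--Schupp / Emerson--Jutla), the one-way alternating automaton is converted into an equivalent nondeterministic parity tree automaton with another single-exponential blow-up in \kl{states} and still polynomially many \kl{priorities}. Finally, \kl{non-emptiness problem} of a nondeterministic parity tree automaton reduces to deciding the winner of a parity game whose arena is polynomial in the automaton's size; this is solvable in time polynomial in the number of positions and exponential only in the number of \kl{priorities}, which remains polynomial in $n$ and $k$, so the whole computation stays within a single exponential. I expect stage (i) to be the main obstacle: one must argue carefully that the profile bookkeeping faithfully reflects the acceptance condition, in particular that infinite paths of a two-way \kl{run} oscillating between \kl{directions} are matched by infinite paths of the constructed one-way \kl{run} with the same extremal \kl{priority} — the point where the parity condition, rather than a mere reachability summary, makes the correctness proof delicate.
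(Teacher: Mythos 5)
The paper itself gives no proof of this proposition---it is quoted directly from Vardi's paper---so your sketch is to be measured against the standard construction you are trying to reconstruct. Your outline has the right ingredients (detour summaries, alternation removal, parity games), but the way you have arranged them contains a genuine complexity error. You convert the 2APTA with $n$ states into a \emph{one-way alternating} parity tree automaton with $2^{O(n^2\log k)}$ states (stage (i)), and then apply the simulation theorem to remove alternation ``with another single-exponential blow-up'' (stage (ii)). Two successive exponential blow-ups do not compose to a single exponential: the nondeterministic automaton produced by stage (ii) has $2^{2^{O(n^2\log k)}}$ states, and solving the parity game in stage (iii) on that arena takes doubly exponential time. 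Your guiding principle---``each stage is at most exponential, hence the total is exponential''---is exactly the step that fails; it is the same trap the present paper must avoid when it insists on a single-exponential closure for its own automaton construction.

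The repair, and what Vardi actually does, is to eliminate two-wayness and alternation \emph{simultaneously}, in one exponential step. One guesses, as part of the input labeling, both a strategy tree (resolving the alternation: for each node and state, which disjunct of the transition condition is taken) and an annotation tree (your ``profiles'': which pairs of states are connected by a finite detour through the node, together with the minimal priority seen on the detour), and one checks consistency and acceptance of this guessed object branch-by-branch using deterministic parity word automata. Because the strategy and annotation are guessed rather than carried in an alternating state space, the result is directly a \emph{nondeterministic} parity tree automaton with singly exponentially many states and polynomially many priorities; your stage (ii) disappears entirely, and emptiness of that automaton---polynomial in the number of states, exponential only in the number of priorities---then lands in \cc{ExpTime} as required. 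Note also that your profile construction cannot stand alone for an \emph{alternating} automaton: a detour of an alternating run is a tree of excursions, not a single path, so the summaries are only well-defined after a strategy has been fixed---which is precisely why the strategy guess and the annotation guess must be made together.
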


\begin{remark}\label{remark: weak}
    In this paper, we use only \emph{weak} \cite{mullerAlternatingAutomataWeak1986} \kl{2APTAs} %
    (a succinct form of B{\"u}chi automata \cite{kupfermanWeakAlternatingAutomata2001})%
    ,
    as we consider only \kl{TC}, and hence \kl{alternation-free} fixpoint; see also, "eg", \cite{mullerWeakAlternatingAutomata1988,carreiroPDL$mu$calculusSyntactic2014}.
\end{remark}

\begin{toappendix}
\begin{scope}\knowledgeimport{2APTA}
\subsection{Definition of \kl{2APTA}}\label{section: 2APTA}
The set $\PBallfml(X)$ of \reintro*\kl{positive Boolean formulas} over a set $X$ of \reintro*\kl{Boolean variables} is generated by the following grammar:
\begin{align*}
    \doublefml[1], \doublefml[2] &\;\Coloneqq\; p \mid \doublefalse \mid \doubletrue \mid \doublefml[1] \doublevee \doublefml[2] \mid \doublefml[1] \doublewedge \doublefml[2], \tag{$p \in X$}
\end{align*}
where the dot notation is used to distinguish from \kl(GNTC){formulas} in "UNTC" / "GNTC".

For a (non-empty) finite set $X$,
a \reintro*\kl{2-way alternating parity tree automaton} (\kl{2APTA}) \cite{vardiReasoningTwowayAutomata1998}\footnote{We introduce labeled backward transitions, based on \cite{gollerPDLIntersectionConverse2009}.
This slight extension does not affect the complexity of the "non-emptiness problem" (\Cref{proposition: 2APTA complexity}).}
over $X$-labeled \kl{binary}\footnote{In this paper, it suffices to consider \kl{binary} \kl{trees} as inputs ("cf", \Cref{footnote: binary tree}).} \kl{trees} is a tuple $\automaton = \tup{Q, \delta, \Pri, q_0}$ where
\begin{itemize}
    \item $Q$ is a finite set of \AP\intro*\kl{states},
    \item $\delta \colon Q \times X \to \PBallfml(Q \times \range{\back 2}{2})$ is a \AP\intro*\kl{transition function},
    \item $\Pri \colon Q \to \nat$ is a \kl{priority} function,
    \item $q_0 \in Q$ is an initial \kl{state}.
\end{itemize}
Given an $X$-labeled \kl{binary} \kl{tree} $T$ and an $S \in Q \times \fdom(T)$,
a \AP\intro*\kl{run} of $\automaton$ on $T$ starting from $S$ is a $(Q \times \fdom(T))$-labeled \kl{tree} $\trace$ 
of $\trace(\varepsilon) = S$ such that,
for every $\word \in \fdom(\trace)$ with $\trace(\word) = \tup{q, \bag}$,
the \kl{positive Boolean formula} obtained from replacing in $\delta(q, T(\bag))$
each element $\tup{q', \dir'} \in Q \times \range{\back 2}{2}$
such that $\tup{q', \bag \series \dir'}$ appears in a child "node" of $\word$  ("ie", $\trace(\word \dir)$ for some $\dir$)
with ``$\doubletrue$'',
is "semantically equivalent" to $\doubletrue$.
\AP
A \kl{run} $\trace$ is \intro*\kl{accepting} if, for every infinite path $a_1 a_2 \dots$ in $\trace$,
the \AP\intro*\kl{priority} $\intro*\pathPri{\trace}(a_1 a_2 \dots)$ given by\footnote{$\pathPri{\trace}(a_1 a_2 \dots)$ is equivalent to $\min\set*{\Pri(q) \;\middle|\;
    q \in Q \mbox{ and } \exists \bag, \trace(a_1 \dots a_n) = \tup{q, \bag} \mbox{ for infinitely many $n$}
    }$,
    by the pigeonhole principle "wrt" $Q$.}
\[\min\set*{p \in \nat \;\middle|\; \exists q, \exists \bag, \trace(a_1 \dots a_n) = \tup{q, \bag} \mbox{ and } \Pri(q) = p \mbox{ holds for infinitely many $n$} }\]
is even.
The \intro*\kl{language} $\intro*\automatonlang(\automaton)$ is a subset of $X$-labeled \kl{binary} \kl{trees}, defined by:
\begin{align*}
    \automatonlang(\automaton) &\defeq \set*{T \;\middle|\; 
    \mbox{there is an \kl{accepting} \kl{run} of $\automaton$ on $T$ starting from $\tup{q_0, \varepsilon}$}
    }.
\end{align*}
The \AP\intro*\kl{size} $\intro*\automatonsize{\delta}$ of its \kl{transition function} $\delta$ is defined as $\automatonsize{\delta} \defeq \sum_{\tup{q, x} \in Q \times X} \bfmllen{\delta(q, x)}$ where $\AP\intro*\bfmllen{\doublefml}$ is the number of symbols in $\doublefml$.
The \reintro*\kl{size} $\reintro*\automatonsize{\automaton}$ of a \kl{2APTA} $\automaton$
is defined as $\automatonsize{\delta}$ (which dominates $\card Q$).\footnote{
This definition is for bounding \kl(formula){sizes} of \kl{positive Boolean formulas} in $\delta$.
}

\end{scope}
 \end{toappendix}

\section{Reduction to UNTC}
\label{sec:gntc-untc}

In this section we show that there is a polynomial-time reduction from "SAT-""GNTC" to "SAT-""UNTC". 
However, the reduction does not imply a reduction from "SAT-""GNFO" to "SAT-""UNFO" since, as we shall see, it requires the expressive power of transitive closure even when restricted to "GNFO" sentences.

\begin{theorem}\label{thm:GNTC-UNTC}
    There is a polynomial-time translation from "GNTC" sentences to "UNTC" sentences 
    preserving both "satisfiability@satisfiable" and "finite-satisfiability@finitely-satisfiable".%
    \footnote{
    \AP A sentence $\fml$ is ""finitely-satisfiable"" if there exists a \emph{finite} structure that models the sentence.
    The ""finite-satisfiability problem"" of $\+L$ is that given a "sentence@@TC" $\phi \in \+L$, whether $\phi$ is "finitely-satisfiable".
    }
    Further, the "UNTC" formulas in the reduction use only binary relations.
\end{theorem}
\begin{proof}
    The idea is to use fresh "domain elements" to encode tuples, and some binary relations of the form $R^{[i]}(t,c)$ to denote that the tuple $t$ contains the element $c$ in its $i$-th component. 
        For example, in
        \Cref{fig:gnfo-unfo-idea}, the tuple $(c_1,c_2,c_3)$ of $\struc A$ is represented in $\struc B$ by the element $t_2$, together with the pairs $(t_2,c_1)$, $(t_2,c_2)$, and $(t_2,c_3)$ in the interpretations of $R^{[1]}$, $R^{[2]}$, and $R^{[3]}$. Hence, some "domain elements" of $\struc B$ represent tuples of $\struc A$ (which we call ``\kl{tuple elements}'') while others represent the elements contained in the components of tuples of $\struc A$ (``\kl{component elements}'').
\begin{figure}
    \centering
    \includegraphics[width=.7\textwidth]{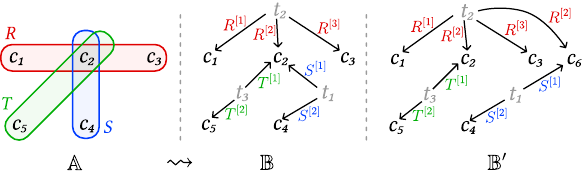}
    \caption{Idea of reduction of \kl{SAT-}\kl{GNTC} to \kl{SAT-}\kl{UNTC}. 
    The structure $\struc A$ is depicted by a set of \kl{domain elements} and some colored hyperedges representing tuples in the ternary relation $R$, the binary relation $T$, and the binary relation $S$. In the structures $\struc B$ and $\struc B'$, all relations are binary, and hence they are depicted as edge-labeled graphs.
    Observe that $\struc B$ and $\struc B'$ both encode $\struc A$, and that $c_2$ and $c_6$ in $\struc B'$ should be regarded as representing the same \kl{element} %
    since they are both the 2nd component in the $R$-tuple represented by $t_2$.%
    }
    \label{fig:gnfo-unfo-idea}
\end{figure}
    However, we cannot enforce these relations to be functional, and thus there could be several $c$ such that $R^{[i]}(t,c)$. 
    In particular, we may have two different but ``semantically equivalent'' encodings.
    Indeed, the same tuple $R(c_1,c_2,c_3)$ of $\struc A$ may also be represented in a structure like $\struc B'$, where $R^{[2]}$ contains both $(t_2,c_2)$ and $(t_2,c_6)$, with $c_6$ coming from another relation $S$. This illustrates that the encoding need not be injective: different 
    "component elements"
    may play the role of the same %
    "domain element"
    of $\struc A$.
    This is not a problem in the presence of transitive closure, since we can encode the ``equality'', consisting of the transitive closure of the relation ``the elements $c$ and $c'$ both appear in the $i$-th position of the same tuple''. We can then do an encoding ensuring that we always work modulo this quotient of equality -- for this we must take advantage of working in the presence of a transitive closure operator.

Now more concretely, let $\phi$ be a \kl{GNTC} sentence, where we assume no repetition in the names of the bound variables.
Let $\AP\intro*\EQrel$ be a fresh binary relation and let $\AP\intro*\phiAtom$ be the formula obtained from $\phi$ by replacing each appearance of $x=y$ with $\EQrel(x,y)$.
Henceforth we shall use $\Rels$ to denote the (finite) set of relation names occurring in $\exists x \ \EQrel(x,x)\land\phiAtom$ (in particular, $\EQrel \in \Rels$),
and we shall use $\AP\intro*\RelsAr$
to denote the (finite) set of pairs $(R,i)$ where $R\in \Rels$ and $i \in \set{1, \dotsc, \arity(R)}$ (in particular, $(\EQrel,1),(\EQrel,2)  \in \RelsAr$).

For each $(R,i) \in \RelsAr$, we use a binary relation $R^{[i]}$. Each pair $(c,c') \in \interpatom{\struc A}{(R^{[i]})}$ denotes that the tuple represented by $c$ contains, in its $i$-th component, the element $c'$. We divide the elements representing tuples from those representing the elements within those tuples:
\AP
\phantomintro{\Phisep}
\phantomintro{\isTup}
\phantomintro{\isConst}
\phantomintro{\PhiEQ}
\begin{align*}
    \reintro*\Phisep &\defeq \lnot \exists x,y,z ~\bigvee_{(R,i),(S,j) \in \RelsAr} R^{[i]}(y,x)
    \land  S^{[j]}(x,z) &
    \reintro*\isTup(t) &\defeq \exists x ~ \bigvee_{(R,i) \in \RelsAr} R^{[i]}(t,x) \\
    \reintro*\PhiEQ &\defeq
            \forall x ~ (\isConst(x) \to  \exists t ~ \bigwedge_{i \in [2]} \EQrel^{[i]}(t, x)) &
            \reintro*\isConst(x) &\defeq \lnot \isTup(x)
\end{align*}
Note that $\PhiEQ$ can be equivalently written in the syntax of "UNFO".
\AP
We shall call "domain elements" satisfying $\isTup$ as ""tuple elements"" and the remaining ones ""component elements"" (denoted by $\isConst$). Note that $\Phisep$ ensures that "component elements" do not have outgoing relations, and hence that relations %
$R^{[i]}$
always link "tuple elements" to "component elements".

We then let the following formula denote that two elements are supposed to represent the same element within a tuple.
\AP
\phantomintro{\eqstar}
\begin{align*}
        (x \reintro*\eqstar y) \defeq \TC[2]{
                \exists t ~ 
                    \bigvee_{i,j \in [2]} (\EQrel^{[i]}(t,u) \land \EQrel^{[j]}(t,v)) 
                    \lor 
                    \bigvee_{(R,i) \in \RelsAr} (R^{[i]}(t,u) \land R^{[i]}(t,v))
                }{u v} x y
\end{align*}

In the reduction we shall use partial functions $\mu : \Vars \pto \Vars \times \Rels \times \Nat$, for which we adopt the notation $\mu(x) = (t,R[i])$ to denote $\mu(x) = (t,R,i)$.
We will inductively define, for subformulas $\psi$ of $\phi$, "UNTC" formulas of the form ``$\Phi_\mu\tup{\psi}$'', where $\fdom(\mu) = \FV(\psi)$, having, as free variables, 
\AP$\intro*\freemu \defeq \set{t \in \Vars : \exists x,i,R \text{ s.t. } \mu(x) = (t,R[i])}$.
The final formula is $\Phi_\emptyset\tup{\phi}$, where $\emptyset$ is the empty partial function.

\smallskip

\proofcase{Definition of $\Phi_\mu\tup{\psi}$}
\emph{Conjunction and disjunction:}
If $*\in\{\land,\lor\}$ then $\Phi_{\mu} \tup{\psi'*\psi''}\defeq\Phi_{\mu'} \tup{\psi'}*\Phi_{\mu''}\tup{\psi''}$, where $\mu', \mu''$ are the restrictions of $\mu$ such that 
$\FV(\psi')=\fdom(\mu')$ and $\FV(\psi'')=\fdom(\mu'')$.
\emph{Atoms:} The case of atoms (or "guards") $R \bar x$ %
is defined by guessing a tuple $t$ which contains $\bar x$ %
when seen as an $R$-tuple.
However, each variable of $\bar x$ is not explicitly available but rather needs to be accessed via some $S^{[j]}(t', \cdot)$ which is stored in $\mu(\bar x \ith{i}) = (t',S[j])$.%

\AP\phantomintro{\PhiTup}
\begin{align*}
    &\Phi_\mu\tup{R(\bar x)} \defeq  {} 
        \exists t ~ \PhiTup{t}{\mu}\tup{R(\bar x)}\text{, where}\\
    &\AP\reintro*\PhiTup{t}{\mu}\tup{ R(x_1,\dotsc,x_n)}\defeq \isTup(t) \quad \land \quad \exists \hat{x}_1,\dotsc,\hat{x}_n ~ \exists \check{x}_1,\dotsc, \check{x}_n \\
    &
            \hspace{3cm}\bigwedge_{i \in \1{n}} R^{[i]}(t,\hat{x}_i) \;\;\land\; 
            \!\!\!\bigwedge_{\substack{i \in \1{n}\\\mu(x_i)=(t_i,S[j_i])}}\!\!\!\! S^{[j_i]}(t_i,\check{x}_i)  \;\;\land\;       \bigwedge_{i\in \1{n}}  \hat{x}_{i} \eqstar \check{x}_i.
\end{align*}

\emph{Quantification:} The existential quantification consists of guessing a "component element" $c$. In our encoding, these elements are endpoints of $R^{[i]}$-relations.
Hence, for doing this, we guess some tuple $t$, a relation $R$ and an index $i$ such that $c$ is the $i$-th component of the $R$-tuple $t$.
\[
        \Phi_\mu\tup{\exists x ~ \phi}  \defeq  \exists t ~ \isTup(t) \land \bigvee_{(R,i) \in \RelsAr} \left((\exists c ~ R^{[i]}(t,c)) \land \Phi_{\mu \dcup \set{x \mapsto (t,R[i])}}\tup{\phi}\right),
    \]
    \emph{Negation:} Guarded negation is transformed into unary negation by guessing a tuple $t$ as before.
    \begin{align*}
        &\Phi_\mu\tup{R(x_1,\dotsc,x_n) \land \lnot \phi} \defeq   %
        \exists t  ~   
        \PhiTup{t}{\mu}\tup{R(x_1,\dotsc,x_n)} \land 
        \lnot \Phi_{\set{x_i \mapsto (t,R[i]) : x_i \in \FV(\phi)} }\tup{\phi}
    \end{align*}
    Observe that $\Phi_{\set{x_i \mapsto (t,R[i]) : x_i \in \FV(\phi)}}\tup{\phi}$ is unary, since it has $t$ as sole free variable, thus fulfilling the requirement of "UNTC".

    \emph{Transitive closure:} Finally, the guarded transitive closure is converted into a unary transitive closure between two tuples $t,t'$:
    \begin{align}
        &\Phi_\mu \tup{\TC{\underbrace{R(\bar u \bar v') \land S(\bar u' \bar v) \land \psi(\bar u \bar v)}_\eta}{\bar u \bar v} \, \bar x \bar y} \defeq \Phi_\mu \tup{\bigwedge_i \EQrel(\bar x\ith{i},\bar y\ith{i})} \label{eq:TC-zerosteps} \quad \lor  \\
        &\hspace{1cm}\Big(\underbrace{\exists t_R, t_S ~ \PhiTup{t_R}{\mu} \tup{R(\bar x \bar y')} \land 
        \PhiTup{t_S}{\mu} \tup{S(\bar x' \bar y)}}_{\text{(a)}} \land 
        \underbrace{\TC{\Phi_{\set{\bar u[i] \mapsto (t,R[i]), \bar v[i] \mapsto (t',S[i])}_i} \tup{\eta}}{t t'} t_R t_S}_{\text{(b)}} \Big)
        \label{eq:TC-recursiveTC-eq:TC-tRtS}
    \end{align}
    where the variables of $\bar u'$ and $\bar v'$ are contained in those of $\bar u$ and $\bar v$, respectively; and $\bar x'$ is such that $\bar x'\ith{i} = \bar x\ith{j}$ "iff" $\bar u'\ith{i} = \bar u\ith{j}$, and similarly for $\bar y'$.
    The formula can be read as follows: ``either the transitive closure is satisfied in 0 steps (line \ref{eq:TC-zerosteps}),
    or we verify $R(\bar x \bar y')$ and $S(\bar x' \bar y)$,
    whence there exist tuples $t_S, t_R$ witnessing these facts (line \ref{eq:TC-recursiveTC-eq:TC-tRtS}-a), and this pair $(t_R,t_S)$ of tuples must be in the transitive closure of $\Phi\tup{\eta}$ under the function mapping  $\bar u[i]$ to the $R[i]$-component of the first tuple $t$, and similarly for $\bar v[i]$ and $S[i]$ with $t'$ (line \ref{eq:TC-recursiveTC-eq:TC-tRtS}-b)''.
    Note that $\Phi_{\set{\bar u[i] \mapsto (t,R[i]), \bar v[i] \mapsto (t',S[i])}_i} \tup{\eta}$ (line \ref{eq:TC-recursiveTC-eq:TC-tRtS}-b) has $t$, $t'$ as sole free variables, meeting the requirements of "UNTC".

\begin{apxclaimrep}\label{claim:gofo-to-unfo}
$\phi$ is ("finitely-@finitely-satisfiable") "satisfiable" "iff" $\Phi_\emptyset(\phiAtom) \land \Phisep  \land \PhiEQ$ is ("finitely-@finitely-satisfiable") "satisfiable".
\end{apxclaimrep}
\begin{claimproof}[Proof sketch]
    \proofcase{$\Rightarrow$}
    For the left-to-right direction, given a "model of" $\phi$, consider adding a new  
    "tuple element"
    for each tuple and the relations $R^{[i]}$ as explained before. It follows that the resulting "structure" verifies $\Phi_\emptyset(\phiAtom) \land \Phisep \land \PhiEQ$.

    \smallskip

    \proofcase{$\Leftarrow$}
    For the right-to-left direction, suppose we have a "model@model of" $\struc B$ of $\Phi_\emptyset(\phiAtom) \land \Phisep \land \PhiEQ$. Consider the structure $\struc A$ resulting from applying the following operations in the following order:

    \begin{enumerate}[(i)]
        \item\label{strucA:quotient} quotienting by $\eqstar$, "ie", replacing each element $c$ with its $\eqstar$-equivalence class $\equiveq c$, where  \AP$\intro*\equiveq c \defeq \set{c' \in \dom B : \struc B \modelsnonass (c \eqstar c')}$;
        \item\label{strucA:rels} for each "arity" $n$ relation $R$ of $\Rels$, adding a tuple $(c_1, \dotsc, c_n)$ to $R$ if for some $c$ we have that $(c,c_i)$ is in $R^{[i]}$ for every $i$; and
        \item\label{strucA:remove} removing the "tuple elements", "ie", removing, for each possible $R,i$, all elements $c$ appearing as first component of a $R^{[i]}$-pair.
    \end{enumerate}
    Observe that applying these operations to either $\struc B$ or $\struc B'$ of \Cref{fig:gnfo-unfo-idea} we obtain a "structure" isomorphic to the original structure $\struc A$.
    
    It remains to show that  $\struc A \modelsnonass \phiAtom$.
    The proof goes via structural induction on $\phiAtom$, using the following inductive hypothesis:
    For every 
         subformula $\psi$ of $\phiAtom$;
        $\mu : \Vars \pto \Vars \times \Rels \times \Nat$ such that $\fdom(\mu) = \FV(\psi)$;
        assignment $\xi : \freemu \to \dom B$ such that the image $Im(\xi)$ ranges over the "tuple elements" of $\struc B$,
        assignment $\xi' : \FV(\psi) \to \dom A$ mapping each variable $x \in \FV(\psi)$ to $\equiveq c$, where $c$ is a 
        $R^{[i]}$-successor of $\xi(t)$ in $\struc B$ such that $\mu(x) = (t, R[i])$,
    we have
        $\struc B\modelsass\xi \Phi_\mu\tup{\psi} $ 
            "iff"  
        $\struc A\modelsass{\xi'} \psi $.
\end{claimproof}
\begin{claimproof}[Proof of Claim \ref{claim:gofo-to-unfo}]
\proofcase{$\Rightarrow$}
    For the left-to-right direction, given a "model of" $\phi$, consider adding a new "domain element" for each tuple and the relations $R^{[i]}$ as explained before. It follows that the resulting "structure" verifies $\Phi_\emptyset(\phiAtom) \land \Phisep \land \PhiEQ$.

    \smallskip

    \proofcase{$\Leftarrow$}
    For the right-to-left direction, suppose we have a "model@model of" $\struc B$ of $\Phi_\emptyset(\phiAtom) \land \Phisep \land \PhiEQ$. Consider the structure $\struc A$ resulting from applying the following operations in the following order:

    \begin{enumerate}[(i)]
        \item\label{strucA:quotient full} quotienting by $\eqstar$, that is, replacing every element $c$ with its $\eqstar$-equivalence class $\equiveq c$, defined as  \AP$\reintro*\equiveq c \defeq \set{c' \in \dom B : \struc B \modelsnonass (c \eqstar c')}$; then
        \item\label{strucA:rels full} for each "arity" $n$ relation $R$ of $\Rels$, adding a tuple $(c_1, \dotsc, c_n)$ to $R$ if for some $c$ we have that $(c,c_i)$ is in $R^{[i]}$ for every $i$; and finally
        \item\label{strucA:remove full} removing the "tuple elements", that is, removing, for each possible $R,i$, all elements $c$ appearing as first component of a $R^{[i]}$-pair.
    \end{enumerate}
    Observe that applying these operations to either $\struc B$ or $\struc B'$ of \Cref{fig:gnfo-unfo-idea} we obtain a "structure" isomorphic to the original structure $\struc A$.
    
    It remains to show that  $\struc A \modelsnonass \phiAtom$.
    The proof goes via structural induction on $\phiAtom$, using the following inductive hypothesis:
    For every 
    \begin{itemize}
        \item subformula $\psi$ of $\phiAtom$ 
        \item $\mu : \Vars \pto \Vars \times \Rels \times \Nat$ such that $\fdom(\mu) = \FV(\psi)$
        \item assignment $\xi : \freemu \to \dom B$ such that the image $Im(\xi)$ ranges over the "tuple elements" of $\struc B$,
        \item assignment $\xi' : \FV(\psi) \to \dom A$ mapping each variable $x \in \FV(\psi)$ to $\equiveq c$, where $c$ is a 
        $R^{[i]}$-successor of $\xi(t)$ in $\struc B$ such that $\mu(x) = (t, R[i])$
    \end{itemize}
    we have
\begin{equation}\label{equisat:iff}
\struc B\modelsass\xi \Phi_\mu\tup{\psi}
            \mbox{\quad"iff"\quad}  
        \struc A\modelsass{\xi'} \psi.
\end{equation}
We proceed by induction on the structure of $\psi$.

\smallskip

\proofcase{$\psi= R(x_1,\dotsc,x_n)$}
Let $\struc B'$ be the model resulting after quotienting $\struc B$ by 
$\eqstar$ (step (\ref{strucA:quotient full}) of the construction). By the definition of $\Phi_\mu\tup{\psi}$ we have that $\struc B\modelsass\xi \Phi_\mu\tup{\psi}$ "iff" 
there are $b_0,\dots,b_n,c_1,\dots, c_n\in\dom {B}$ such that
\begin{enumerate}
\item\label{guard:1} $b_i$ is a "tuple element" of $\struc B$ for each  $i=0,\dots,n$, 
\item\label{guard:2} $\equiveq {c_i}$ is a $R^{[i]}$-successor of $\equiveq {b_0}$ in $\struc B'$ for each $i\in\1{%
n}$, and
\item\label{guard:3} $\equiveq {c_i}$ is also a $S_i^{[j_i]}$-successor of $\equiveq {b_i}$ in $\struc B'$ for each $i\in\1{n}$,
\end{enumerate} 
where for each $i\in\1{n}$ we have
$\mu(x_i) = (t_i, S_i[j_i])$,  
$\xi'(x_i)=\equiveq {c_i}$ and
$\xi(t_i)=b_i$. We obtain $\struc A$ by applying steps (\ref{strucA:rels full}) and (\ref{strucA:remove full}) of the construction to $\struc B'$.
One can see that %
$(\xi'(x_1),\dots,\xi'(x_n)%
)\in \interpatom{\struc A}{R}$ "iff" $\struc B\modelsass\xi \Phi_\mu\tup{\psi}$.

\smallskip

\proofcase{$\psi=\exists x ~ \phi$}
By definition, we have that $\struc B\modelsass\xi \Phi_\mu\tup{\exists x ~ \phi}$ "iff"
\[
\struc B\modelsass\xi  \exists t ~ \isTup(t) \land \bigvee_{(R,i) \in \RelsAr} \left(\exists c ~ R^{[i]}(t,c) \land \Phi_{\mu \dcup \set{x \mapsto (t,R[i])}}\tup{\phi}\right),
\]
"iff" there exists a "tuple element" $b\in\dom B$, a $R^{[i]}$-successor $c$ of $b$, and $(R,i)\in\RelsAr$ such that
$\struc B \modelsass{\tilde\xi} \Phi_{\tilde\mu}\tup{\phi}$, where $\tilde\xi=\xi\dcup\{t\mapsto b\}$ and $\tilde\mu=\mu\dcup\{x\mapsto (t,R[i])\}$. By inductive hypothesis, this happens "iff"
\begin{align}\label{claim:gofo-to-unfo:proof}
\parbox{0.8\linewidth}{
there exists a "tuple element" $b\in\dom B$, a $R^{[i]}$-successor $c$ of $b$, and $(R,i)\in\RelsAr$ such that $\struc A\modelsass{\tilde\xi'} \phi$,
where $\tilde\xi'=\xi'\dcup\{x\mapsto \equiveq c\}$, and $c$ is a $R^{[i]}$-successor of $\tilde\xi(t)(=b)$ in $\struc B$
}
\end{align}

For the left-to-right implication of \eqref{equisat:iff}, suppose that $\struc B\modelsass\xi \Phi_\mu\tup{\exists x ~ \phi}$. Hence (\ref{claim:gofo-to-unfo:proof}) is true and then $\struc A\modelsass{\tilde\xi'} \phi$, where $\tilde\xi'$ extends $\xi'$. This implies that $\struc A\modelsass{\xi'} \exists x ~ \phi$.

For the right-to-left implication, suppose $\struc A\modelsass{\xi'} \exists x ~ \phi$. 
Then there exists $\equiveq c\in\dom A$ such that $\struc A\modelsass{\tilde \xi'} \phi$, where ${\tilde \xi'=\xi'\dcup\{x\mapsto\equiveq c\}}$ for some "component element" $c\in\dom B$. Since $\struc B\models\PhiEQ$, there is a "tuple element" $b\in\dom B$ which has $c$ as a $\EQrel^{[1]}$-successor. Then (\ref{claim:gofo-to-unfo:proof}) holds for $(R,i)=(\EQrel,1)$ and therefore $\struc B\modelsass\xi \Phi_\mu\tup{\exists x ~ \phi}$.

\smallskip

\proofcase{$\psi= R(x_1,\dotsc,x_n) \land \lnot \phi$}

Without loss of generality, we can assume that $\FV(\phi) = \set{x_1,\dots,x_n}$.
As in the case of the guards, we have that $\struc B\modelsass\xi \Phi_\mu\tup{\psi}$ "iff" there are $b_0,\dots,b_n,c_1,\dots, c_n\in\dom {B}$ such that items \ref{guard:1}), \ref{guard:2}) and \ref{guard:3}) hold, and also
\begin{equation}\label{eq:gofo-to-unfo:neg}
\struc B \not\modelsass{\{t\mapsto b_0\}} \Phi_{\set{x_i \mapsto (t,R[i])}_{i\in\1{n}}}\tup{\phi}.
\end{equation}
By inductive hypothesis \eqref{eq:gofo-to-unfo:neg} is equivalent to 
\begin{equation}\label{eq:gofo-to-unfo:neg2}
\struc A\not\modelsass{\xi'} \phi,
\end{equation}
where $\xi'=\{x_i\mapsto\equiveq{c_i}:i\in\1{n}\}$.
By item \ref{guard:2}) and the definition of $\struc A$ in (\ref{strucA:rels full}) we have that
$(\equiveq {c_1},\dots,\equiveq {c_n}) \in \interpatom{\struc A}{R}$. Hence \eqref{eq:gofo-to-unfo:neg2} is equivalent to $\struc A\modelsass{\{x_i\mapsto\equiveq{c_i}\}_{i\in\1{n}}} \psi$, and this concludes the proof of this case.

\smallskip

\proofcase{$
\psi=\TC{
    R(\bar u \bar v') \land 
    S(\bar u' \bar v) \land 
    \phi
    }{\bar u \bar v} \bar x \bar y 
$
}
For simplicity, let us assume that $\bar v'=\bar u'=\emptyset$. The general case is similar. Hence we may assume that $\psi$ is of this form:
$$
\psi=
\TC{
    \underbrace{
    \alpha(\bar u) \land \beta(\bar v) \land    
    \phi}_{\eta}
    }{\bar u \bar v} \bar x \bar y
$$
where
\begin{align*}
\alpha(\bar u) &= R(u_1,\dots,u_n),
\\
\beta(\bar v)  &= S(v_1,\dots,v_n),
\end{align*}
and $\seqlen{\bar u}=\seqlen{\bar v}=\seqlen{\bar x}=\seqlen{\bar y}=n$ (we notate $u_i$ for $u\ith{i}$ and the same for the other tuples of length $n$). As we reasoned in the case of the guards, let $\struc B'$ be the model resulting after quotienting $\struc B$ by 
$\eqstar$ (step (\ref{strucA:quotient full}) of the construction of $\struc A$). 

By the definition of $\Phi_\mu\tup{\psi}$ we have that $\struc B\modelsass\xi \Phi_\mu\tup{\psi}$ "iff" one of the following is true:
\begin{enumerate}
    \item\label{guardTC:a} $\struc B\modelsass\xi \Phi_\mu\tup{\EQrel(x_i,y_i)}$ for each $i\in\1{n}$;
    \item\label{guardTC:b} there are $N\geq 2$ and elements
    $(b^{(j)})_{j\in\1{N}}$, 
    $(c_i^{(j)})_{i\in\1{%
    n},j\in\1{N-1}}$, 
    $(d_i^{(j)})_{i\in\1{%
    n},j\in\1{N-1}}$ of $\dom B$, 
such that 
    \begin{enumerate}
    \item\label{guardTC:1} $b^{(j)}$ is a "tuple element" of $\struc B$ for each  $j\in\1{N}$,

    \item \label{guardTC:1b} $b^{(1)}$ and $b^{(N)}$ satisfy
    $\struc B \modelsass{\xi \sqcup \set{t_R \mapsto b^{(1)}}} \PhiTup{t_R}{\mu} \tup{R(\bar x \bar y')}$
    and
    $\struc B \modelsass{\xi \sqcup \set{t_S \mapsto b^{(N)}}} \PhiTup{t_S}{\mu} \tup{S(\bar x' \bar y)}$,
    
    \item\label{guardTC:2} for all $j\in\1{N-1}$, $\equiveq {c_i^{(j)}}$ is a $R^{[i]}$-successor of $\equiveq {b^{(j)}}$ in $\struc B'$ for each $i\in\1{%
    n}$, and $\equiveq {d_i^{(j)}}$ is a $S^{[i]}$-successor of $\equiveq {b^{(j+1)}}$ in $\struc B'$ for each $i\in\1{%
    n}$,

    \item\label{guardTC:4} 
    for all $j\in\1{N-1}$, we have $\struc B\modelsass{\xi_j}\Phi_{\tilde\mu} \tup{\eta}$, where 
    $\tilde\mu={\set{u_i \mapsto (t,R[i]), v_i \mapsto (t',S[i])}_{i\in\1{n}}}$ and
    $\xi_j=\{t\mapsto b^{(j)}, t'\mapsto b^{(j+1)}\}$.
\end{enumerate} 
The case \ref{guardTC:a} is straightforward, since it is just the case of the guards, so let us focus on case \ref{guardTC:b}. By \ref{guardTC:1} and \ref{guardTC:2}, the side conditions of the inductive hypothesis are satisfied; applying it to \ref{guardTC:4}, we have that \ref{guardTC:4} holds "iff" $\struc A\modelsass{\xi'_j} \eta$ for all $j\in\1{N-1}$
    where 
    $$
    \xi'_j=\{u_i\mapsto\equiveq{c_i^{(j)}}, v_i\mapsto\equiveq{d_i^{(j)}}\}_{i\in\1{n}}.
    $$
    Together with \ref{guardTC:1b} and \ref{guardTC:2}, this is true "iff" $\struc A\modelsass{\xi'} \psi$, where $\xi'={\{x_i\mapsto\equiveq{c_i^{(1)}}, y_i\mapsto\equiveq{d_i^{(N-1)}}\}_{i\in\1{n}}}$.
\end{enumerate}

\smallskip

\proofcase{$\psi=\phi_1*\phi_2$ for $*\in\set{\lor,\land}$} Straightforward.
\end{claimproof}
This concludes the reduction.
\end{proof}
\section{Satisfiability of UNTC: Proof Approach}
\label{sec:untc-proof-approach}
"UNTC" enjoys, as all decidable logics of this family, a ``tree-like'' model property. In other words, a sentence $\phi$ is satisfiable if, and only if, it is satisfiable in a "structure" of "treewidth" $\leq \fmllen{\phi}$. 
It is then sufficient to produce an algorithm which answers whether there exists such a bounded "treewidth" "model of" $\phi$.
We shall define a tree automaton $\automaton[1]^{\phi}$ that decides such a property, based on local properties of the bags, in such a way that $\automaton[1]^{\phi}$ has a non-empty language "iff" $\phi$ is satisfiable.
Intuitively, such an automaton $\automaton[1]^{\phi}$ will guess a "tree decomposition" of a (possibly infinite) "structure" and verify that it is a "model of" $\phi$. 

As is usual in similar approaches, we see "tree decompositions" as binary trees labeled by "structures" (``"bags"'') of size $\leq \fmllen{\phi}$ and, in order to make the alphabet finite, we abstract the information contained in "bags" in a standard way (using a fixed set of elements of size $\leq \fmllen{\phi}$). 
However, the crucial and challenging part of the construction is to define the states and transitions so that the automaton $\automaton[1]^{\phi}$ is only of single exponential size, as otherwise the satisfiability procedure would suffer from an exponential blowup.

The way we approach this issue is by defining some ``local'' rules on abstract semantics that need to be verified on the "tree decomposition" to ensure that the denoted "structure" verifies $\phi$.
Such rules are given via a form of ``"local checker@@UNTC"''. 
In the rest of the section we build some more precise intuition on the nature of the abstract semantics and the  tableau-like rules of the  "local checker@@UNTC".
This serves as a guide to the more technical terminology and definitions presented in \Cref{section: abstract semantics on tree decompositions,section: UNFO,section: UNTC}.

\smallskip

A "tree decomposition" $\ynmulstruc$ is just a binary tree, labeled by finite "structures", often referred to as \reintro{bags}.
The "structure" $\glue\ynmulstruc$ that such a decomposition denotes is the result of making the disjoint union of all the finite "structures", and ``gluing'' any two equal elements from adjacent "structures", as will be shown in \Cref{figure: gluing}.
Such "structure" $\glue\ynmulstruc$ is unique up to isomorphisms.
We say that $\ynmulstruc$ has "width" $k-1$ if the maximum "cardinality" among all the labeled "structures" is $k$ ("eg", the one in \Cref{figure: gluing} has "width" 2). %
It is plain that for every "tree decomposition" of "width" less than $k$
there is another one denoting the same "structure" and using only some $k$ fixed ``named'' elements\footnote{This is a standard construction, by inserting an additional "bag" between each pair of adjacent "bags", "cf", "eg", \cite[p.~27:23]{nakamuraDerivativesGraphsPositive2025}.} $\set{\fixedvertex_1, \dots, \fixedvertex_{k}}$, and for this reason (combined with the bounded "treewidth" property of "UNTC", "cf" \Cref{proposition: tw}) we restrict to such "tree decompositions".

Note that, standing at a "bag" $\bag$ of a "tree decomposition" $\ynmulstruc$, we can distinguish the following possible ``\emph{locations}'' where an element of $\glue\ynmulstruc$ can be found:
\begin{itemize}
    \item ``\textsl{the $\fixedvertex_i$ element of this bag}'', to refer to the element of $\glue\ynmulstruc$ which is ``locally'' represented as $\fixedvertex_i$ in $\bag$, "ie", the element of $\glue\ynmulstruc$ obtained as the result of gluing $\fixedvertex_i$ of the "structure" in $\bag$ with some other elements, 
    \item  ``\textsl{in the "direction" $d=1$}'' ("resp" $d=2$) to refer to any (non-local) element of $\glue\ynmulstruc$ that can be found in the left subtree ("resp" right subtree) of $\bag$, and
    \item ``\textsl{in the "direction" $d=\back 1$}'' ("resp" $d=\back 2$) to denote an element which is elsewhere in the tree ("ie", in a node outside the subtree rooted at $\bag$), assuming $\bag$ is a left-child  in $\ynmulstruc$ ("resp" assuming it is a right-child).
\end{itemize}
Hence, each "bag" $\bag$ has a set (or ``domain'') of available locations
$\ADB{\ynmulstruc}{\bag} \subseteq \set{\fixedvertex_1,\dotsc,\fixedvertex_k} \dcup \set{1,2,\back 1,\back 2}$, which depends on the domain of the labeled "structure", and whether $\bag$ is a left/right child and has left/right children.
We can then \emph{abstract} an element $w$ from $\glue\ynmulstruc$ by its location $\absfunc{\ynmulstruc}{\bag}(\word) \in \ADB{\ynmulstruc}{\bag}$ -- "ie", either by a ``local name'' $\fixedvertex_i$ or by a "direction" $d \in \set{1,2,\back 1,\back 2}$ where it can be found from $\bag$.
Conversely, the \emph{"concretization@concretization function"} of an element $\fixedvertex_i$ in the context of $(\ynmulstruc,\bag)$ is the corresponding element of $\glue\ynmulstruc$, and the "concretization@concretization function" of a "direction" $d$ is the set of all elements of $\glue\ynmulstruc$ that can be found in the "direction" $d$ from $\bag$;
denoted by $\conc{\ynmulstruc}{\bag}(\fixedvertex_i)$ and $\conc{\ynmulstruc}{\bag}(d)$ respectively. In this way each "bag" $\bag$ induces the %
\emph{finite}
partition $\set{\conc{\ynmulstruc}{\bag}(z)}_{z \in \ADB{\ynmulstruc}{\bag}}$  on the
(possibly infinite)
domain of $\glue\ynmulstruc$.

In light of this, for reasoning about the satisfiability of "UNTC", instead of working with classical interpretations ("ie", mappings from variables to elements of $\glue\ynmulstruc$) we shall work with \emph{"abstract interpretations"}, mapping each variable $x$ to a set\footnote{%
For convenience, we map to sets, not elements.
This definition simplifies the rule \eqref{rule: mov}, because the "concretization" of a "direction" $d$ corresponds to a set in an adjacent "bag" ("cf" "moving set" $\MD{\ynmulstruc}{\bag}{\dir}$).} $\ainter(x) \subseteq \ADB{\ynmulstruc}{\bag}$ of locations relative to a "bag" $\bag$. A \emph{"concretization"} of such an "abstract interpretation" $\ainter$ is then just a (classical) interpretation with the same domain as $\ainter$, but mapping each variable $x$ to an element of $\conc{\ynmulstruc}{\bag}(z)$ for some $z \in \ainter(x)$ -- "ie", to an element of the "concretization" of some $\ainter(x)$ member.

By means of such "abstract interpretations" we define a \emph{"local checker@@UNTC"} ("LC@@UNTC") over "tree decompositions".
Concretely, an \emph{"instance@@LMC"} %
of the "LC@@UNTC" (called ``\kl(LMC){state}'' in later sections) consists of a "tree decomposition" $\ynmulstruc$, a node $\bag$ thereof, a set $\Gamma$ of "UNTC" formulas, and an "abstract interpretation" $\ainter$ of the free variables of these formulas.
We denote such "instances@@LMC" by $(\Gamma)_{\ainter,\bag}^{\ynmulstruc}$.
The "LC@@UNTC" has the task of checking that there exists some "concretization" of the "abstract interpretation" $\ainter$ (in the context of ($\ynmulstruc,\bag$)), satisfying all formulas of $\Gamma$ on $\glue\ynmulstruc$.
In particular, for a "sentence@@UNTC" $\phi$, the "LC@@UNTC" on the   "instance@@LMC" 
$(\fml)_{\emptyset,\eps}^{\ynmulstruc}$ checks if $\glue\ynmulstruc$ is a "model of" $\fml$.%
\footnote{
        Although 
        the fact of working with abstractions
        may compromise completeness due to the loss of information outside the bag at the given location,
    the \kl(UNTC){local checker} %
    remains complete under 
    the evaluation strategy we will exhibit.
}%

The "LC@@UNTC" will solve this problem by making calls to a Boolean combination of other "LC@@UNTC" "instances@@LMC" -- in some sense mimicking the Boolean operations of the formula it checks.
For this reason it is useful to have two (dual) ``modes'' of "instances@@LMC": the ``positive mode'' (denoted with a superscript 1) defined just as before, and the ``negative mode'' (with superscript 0) testing the opposite -- namely, that no "concretization" verifies all formulas of $\Gamma$. We will hence use positive Boolean formulas on these signed model checking "instances@@LMC" (we shall use $\doublewedge$ and $\doublevee$ as Boolean connectors to avoid confusion with the "UNFO" syntax).
For example, we will denote by 
$\mkstate{(\fmlset[1])}{1}{\ynmulstruc}{\ainter}{\bag} \doublevee \mkstate{(\fmlset[2])}{0}{\ynmulstruc}{\ainter}{\bag'}$
the truth value $p \lor \lnot q$ where $p$ is the truth value of the model checking problem on $\mkstate{(\fmlset[1])}{1}{\ynmulstruc}{\ainter}{\bag}$ and $q$ is the one on $\mkstate{(\fmlset[2])}{1}{\ynmulstruc}{\ainter}{\bag'}$.

With this intuition, the "LC@@UNTC" for "UNFO" is defined via the intuitive derivation rules of \Cref{figure: rules UNFO} -- somewhat akin to sequent calculus rules or tableau rules -- showing how "instances@@LMC" are derived from other "instances@@LMC".\footnote{In these rules, we use ``$\Gamma, \Delta$'' and ``$\Gamma, \fml$'' to denote $\Gamma \cup \Delta$ and $\Gamma \cup \set\fml$ respectively, and the shorthand $\doublewedge^0 \defeq \doublevee$, $\doublewedge^1 \defeq \doublewedge$, $\doublevee^0 \defeq \doublewedge$, $\doublevee^1 \defeq \doublevee$.}
A rule ``$A \leadstoUNFO B$'' should be read as ``\textsl{for checking $A$, it suffices to check $B$}'' if $A$ is in positive mode ("ie", $p=1$) or as ``\textsl{for checking $A$, it is necessary to check $B$}'' if $A$ is in negative mode ($p=0$).\footnote{For understanding rule \eqref{rule: mov}, $\bag \series \dir$ denotes the "bag" adjacent to $\bag$ in the "direction" $\dir$, and $\ainter_{\dir}$ is the expected way of updating the "abstract interpretation" $\ainter$ with the information that we have moved towards $\dir$.}
Notably, to apply the rule \eqref{rule: UN}, all the free "variables" of $\fml$ should be "concretized"; this condition is required for the soundness of the "LC@@UNTC".%
\footnote{While rule \eqref{rule: UN} is complete for unary negation cases, it would be incomplete for "GNFO", "eg", when two variables indicate the vertices not belonging to the same "bag" at the same time.}

The rules for the transitive closure (TC), given in \Cref{figure: rules UNTC}, are slightly more involved and necessitate further definitions. However, the reader can already see that they correspond to the different ways of shrinking a TC unravelling: either by consuming an element from one side if its endpoint is local to the current "bag" -- \eqref{rule: tc-l} and \eqref{rule: tc-r} --, or by splitting it into two TCs when an endpoint is not local -- \eqref{rule: tc-split-l} and \eqref{rule: tc-split-r}.

Using these rules, a %
\emph{"derivation tree@@LMC"} (called `"run@@LMC"' in later sections) for an "instance@@LMC" $\mkstate{(\fmlset[1])}{p}{\ynmulstruc}{\ainter}{\bag}$ is an "instance@@LMC"-labeled (possibly infinite)  tree such that each positive-"instance@@LMC" ("resp" negative-"instance@@LMC") is ``consistent'' with some ("resp" every) applicable rule.\footnote{More precisely, a rule $A \leadstoUNFO \doublefml[1]$ is ``consistent'' with a node if its left-hand side $A$ is the node's label and by replacing all the children labels with ``true''  makes the "positive Boolean formula" $\doublefml[1]$ true.}
For example, if a node in the tree is labeled $A$ and has %
two children with labels $B$ and $C$, %
and if the applicable rules for $A$ are $A \leadstoUNFO B \doublewedge C$ and $A \leadstoUNFO D \doublevee A$, then it will verify the "derivation tree@@LMC" condition at the node if $A$ is positive (it is consistent with the first rule), but not if $A$ is negative (it is not consistent with the second rule).

While these rules can be easily seen to be sound, they still do not yield an algorithm, since: 
    (i) the input "tree decomposition" is an infinite object and 
    (ii) the "derivation trees@@LMC" may be infinite. 
However, the "LC@@UNTC" rules can be straightforwardly interpreted as the "transition function" of a ("2APTA") tree automaton, where "derivation trees@@LMC" %
are seen as ``\kl(LMC){runs}'' and each "LC@@UNTC" "instance@@LMC" $\mkstate{(\fmlset[1])}{p}{\ynmulstruc}{\ainter}{\bag}$ is abstracted as a ``"state@@LMC"'' ${\fmlset[1]}_{\ainter}$ with "priority" $p$.\footnote{As everywhere else in this section, we disregard minor technicalities; for instance, the "2APTA" would actually require some auxiliary "states@@2APTA"%
\ifthenelse{\boolean{arXiv}}{ ("cf" \Cref{definition: 2APTA UNFO,definition: 2APTA UNTC}).}{.}
}
One can then define \emph{"accepting@@LMC"} "derivation trees@@LMC" in the same way as "accepting" "runs@@LMC" for "2APTA", "ie", requiring that every infinite branch has infinitely many negative "instances@@LMC".
With these semantics, %
the "local (model) checker@local checker@UNTC"
is indeed both sound and complete (corollaries of \Cref{theorem: completeness closure UNFO,theorem: completeness closure UNTC}): an "instance@@LMC" is a yes-instance if, and only if, there exists an "accepting@@LMC" "derivation tree@@LMC" that contains it at the "root".

However, the "derivation trees@@LMC" starting with an "instance@@LMC" $\mkstate{(\set{\fml})}{0}{\ynmulstruc}{\ainter}{\eps}$ may in principle yield an automaton $\automaton[1]^{\fml}$ with an infinite number of "states@@2APTA" of the form ${\fmlset[1]}_{\ainter}$. 
To solve this, we show that starting with the "sentence@@UNTC" $\fml$, the number of "states@@LMC" needed in any "accepting" "run@@LMC" is not too many, and can be drawn from a set $\clUNFO(\fml)$ of "states@@LMC" which is of (single) exponential size (\Cref{theorem: completeness closure UNFO,theorem: completeness closure UNTC} and \Cref{proposition: closure UNFO 2,proposition: closure size UNTC 2}).
Crucially, this result relies on devising a \emph{derivation strategy} for the "LC@@UNTC" rules, using only "instances@@LMC" with formulas  from $\clUNFO(\fml)$ which always finds an "accepting@@LMC" "derivation tree@@LMC" whenever there is one -- the intuition behind the strategy will be given in \Cref{section: evaluation strategy UNFO}.
In this way, we obtain an exponential-time reduction from the satisfiability of "UNTC" to the "non-emptiness problem" for "2APTA", which in turn is in "ExpTime" (\Cref{proposition: 2APTA complexity}), thus obtaining a "2ExpTime" procedure (\Cref{corollary: 2-EXPTIME UNTC}).

\subparagraph{Organization of "SAT-"{}"UNTC" proof.}
The next three sections are organized as follows.
In \Cref{section: abstract semantics on tree decompositions} we fix the notations for our "LC@@UNFO".
We then give the "LC@@UNFO" for "UNFO" in \Cref{section: UNFO} and extend it to "UNTC" in \Cref{section: UNTC}.
\section{Abstract Semantics on Tree Decompositions}\label{section: abstract semantics on tree decompositions}
In this section
we introduce an ``abstract'' semantics on \kl{tree decompositions}, an alternative to the standard semantics.
For simplicity, we consider only \kl{structures} of binary relations ("ie" $\arity(R) = 2$ for each $R \in \Rels$), relying on the polynomial-time reduction of \Cref{thm:GNTC-UNTC}, and we consider only the following "structures" whose ``names'' are drawn from the %
\intro*\kl(LMC){fixed names}
$\AP \set{\intro*\fixedvertex_1, \dots, \fixedvertex_{k}}$ %
for \kl{tree decompositions} of \kl{width} at most $k-1$:
\AP\phantomintro{\allatstruc}
\[\reintro*\allatstruc_{k} \;\defeq\; \set{\ynstruc \in \allcountablestruc : \univ{\ynstruc} \subseteq \set{\fixedvertex_1, \dots, \fixedvertex_k}}.\]
Hence, we consider $\allatstruc_{k}$-labeled "binary" \kl(LMC){trees} (henceforth just \intro*\kl(LMC){trees}) to express \kl{tree decompositions} of \kl{width} less than $k$.
For a \kl(LMC){tree} $\ynmulstruc$,
the \AP\intro*\kl{glued structure} %
$\reintro*\glue \ynmulstruc$ is defined as the \kl{structure} obtained from
the \kl(structure){disjoint union} of all \kl{structures} labeled in $\ynmulstruc$
by gluing {elements} having the same name coming from adjacent "bags"\ifthenelse{\boolean{arXiv}}{ (see \Cref{section: definition of gluing} for a formal definition).}{.}
Each {element} of $\glue \ynmulstruc$ is expressed
as a quotient class $\AP\reintro*\quo{\tup{\bag, \vertex}}_{\quorel{\ynmulstruc}}$
"wrt" the ``gluing'' \kl{equivalence relation} {$\reintro*\quorel{\ynmulstruc}$},
where $\tup{\bag, \vertex}$ is a pair of a \kl{bag} $\bag \in \fdom(\ynmulstruc)$ and its {element} $\vertex \in \univ{\ynmulstruc(\bag)}$.
By definition, for every \kl{tree decomposition} $\ynmulstruc[2]$ of any \kl{structure} $\ynstruc$,
the \kl{glued structure} $\glue \ynmulstruc[2]$ is \kl{isomorphic} to $\ynstruc$.

\begin{toappendix}
\subsection{Formal Definition of Gluing}\label{section: definition of gluing}
Formally for an indexed family $\set{\ynstruc_i}_{i \in I}$ of \kl{structures},
the \AP\intro*\kl(structure){disjoint union} $\intro*\strucbigdcup{i \in I} \ynstruc_i$ is the \kl{structure} defined as follows:
\begin{align*}
    \univ{\strucbigdcup{i \in I} \ynstruc_i} &\;\defeq\; \bigcup_{i \in I} \set{\tup{i, \vertex} \mid \vertex \in \univ{\ynstruc_i}},&
    \interpatom{\strucbigdcup{i \in I} \ynstruc_i}{\rsym} &\;\defeq\; \bigcup_{i \in I} \set{\tup{\tup{i, \vertex}, \tup{i, \vertex'}} \mid \tup{\vertex, \vertex'} \in \interpatom{\ynstruc_i}{\rsym}}.
\end{align*}
\AP For a \kl{structure} $\ynstruc$ and an \kl{equivalence relation} $\sim$ on the set $\univ{\ynstruc}$,
the \intro*\kl{quotient structure} $\ynstruc/{\sim}$ is defined as follows:
\begin{align*}
    \univ{\ynstruc/{\sim}} &\;\defeq\; \mbox{``the set of all \kl{equivalence classes} in $\univ{\ynstruc}$ "wrt" $\sim$''},\\
    \interpatom{\ynstruc/{\sim}}{\rsym} &\;\defeq\; \set{\tup{C, C'} \mid \exists \vertex \in C, \exists \vertex' \in C', \tup{\vertex, \vertex'} \in \interpatom{\ynstruc}{\rsym}}.
\end{align*}
Then for a \kl(LMC){tree} $\ynmulstruc$,
the \AP\reintro*\kl{glued structure} $\intro*\glue \ynmulstruc$ is defined as follows:
\[\glue \ynmulstruc \;\defeq\;  (\strucbigdcup{\bag \in \fdom(\ynmulstruc)} \ynmulstruc(\bag)) \; /\; {\quorel{\ynmulstruc}}\]
\AP where $\intro*\quorel{\ynmulstruc}$ is the minimal \kl{equivalence relation} closed under the following rule:
for all adjacent $\bag[1], \bag[2] \in \fdom(\ynmulstruc)$ and all $\vertex \in \univ{\ynmulstruc(\bag)} \cap \univ{\ynmulstruc(\bag[2])}$, $\tup{\bag[1], \vertex} \quorel{\ynmulstruc} \tup{\bag[2], \vertex}$.
We write $\AP\intro*\quo{\tup{\bag, \vertex}}_{\quorel{\ynmulstruc}}$ for the \kl{equivalence class} of $\tup{\bag, \vertex}$ "wrt" $\quorel{\ynmulstruc}$.
\end{toappendix}

For instance, consider the \kl(LMC){tree} $\ynmulstruc$ with a single binary \kl{relation name}, in \Cref{figure: gluing}.
\kl{Structures} are depicted as directed \kl{graphs}, where "node" labels indicate the corresponding (named) \kl{domain elements}.
Then, the \kl{structure} $\glue \ynmulstruc$ has the shape shown in \Cref{figure: gluing},
by taking the quotient "wrt" the \kl{equivalence relation} ${\quorel{\ynmulstruc}}$ expressed by dashed lines.

\begin{figure}[t]
    \centering
\begin{align*}
    \hspace{5em} \ynmulstruc &\;=\; 
        \begin{tikzpicture}[baseline = -.5cm]
            \tikzstyle{vert}=[draw = black, circle, fill = gray!30, inner sep = 2pt, minimum size = 1pt, font = \scriptsize]
            \tikzstyle{edge}=[draw = gray!30]
            \tikzstyle{bag}=[draw=gray, circle]
            \foreach \i/\v/\c in {0/$\fixedvertex_1$/{},1/$\fixedvertex_3$/{},2/$\fixedvertex_2$/{}} {
                \node[vert, \c, xshift = -2.5cm, yshift = -.5cm](b0_\i) at ({270 + 360/3 * (\i - 1)}:.5) {\v};
            }
            \node[bag, ellipse, fit=(b0_0)(b0_1)(b0_2), inner sep = 0pt](b0) {}; 
            \node[above, color = gray, font = \scriptsize] at ($(b0.north)+(0,-.05)$) {\kl{bag} $\eps$};
            \foreach \i/\v/\c in {2/$\fixedvertex_2$/{}} {
                \node[vert, \c, xshift = -1.cm, yshift = .5cm](b1_\i) at ({60 - 360/3 * (\i - 1)}:.4) {\v};
            }
            \node[bag, ellipse, fit=(b1_2), inner sep = 0pt](b1) {}; 
            \node[above, color = gray, font = \scriptsize] at ($(b1.north)+(0,-.05)$) {\kl{bag} $1$};
            \foreach \i/\v/\c in {0/$\fixedvertex_2$/{},1/$\fixedvertex_1$/{}} {
                \node[vert, \c, xshift = 1.cm, yshift = .2cm](b11_\i) at ({0 + 360/2 * (\i - 1)}:.4) {\v};
            }
            \node[bag, ellipse, fit=(b11_0)(b11_1), inner sep = 0.5pt](b11) {}; 
            \node[above, color = gray, font = \scriptsize] at ($(b11.north)+(0,-.05)$) {\kl{bag} $11$};
            \foreach \i/\v/\c in {0/$\fixedvertex_1$/{}} {
                \node[vert, \c, xshift = 3.5cm, yshift = .2cm](b111_\i) at ({0 + 360/2 * (\i - 1)}:.4) {\v};
            }
            \node[bag, ellipse, fit=(b111_0), inner sep = 0.5pt](b111) {}; 
            \node[above, color = gray, font = \scriptsize] at ($(b111.north)+(0,-.05)$) {\kl{bag} $111$};
            \foreach \i/\v/\c in {0/$\fixedvertex_3$/{},1/$\fixedvertex_2$/} {
                \node[vert, \c, xshift = 0cm, yshift = -1.2cm](b2_\i) at ({0 + 360/2 * (\i - 1)}:.4) {\v};
            }
            \node[bag, ellipse, fit=(b2_0)(b2_1), inner sep = 0pt](b2) {};
            \node[above, color = gray, font = \scriptsize] at ($(b2.north)+(0,-.05)$) {\kl{bag} $2$};
            \foreach \i/\v/\c in {0/$\fixedvertex_1$/{},1/$\fixedvertex_3$/{}} {
                \node[vert, \c, xshift = 3.cm, yshift = -1.2cm](b112_\i) at ({0 + 360/2 * (\i - 1)}:.4) {\v};
            }
            \node[bag, ellipse, fit=(b112_0)(b112_1), inner sep = 0pt](b112) {};
            \node[above, color = gray, font = \scriptsize] at ($(b112.north)+(0,-.05)$) {\kl{bag} $112$};
            \graph[use existing nodes, edges={color=black, pos = .5, earrow}, edge quotes={fill=white, inner sep=1pt,font= \scriptsize}]{
                b0 ->["$1$"{auto, font = \scriptsize}, gray, pos = .5] b1;
                b1 ->["$1$"{auto, font = \scriptsize}, gray] b11;
                b11 ->["$1$"{auto, font = \scriptsize}, gray] b111;
                b0 ->["$2$"{auto, font = \scriptsize}, gray] b2;
                b11 ->["$2$"{auto, font = \scriptsize}, gray] b112;
            };
            \graph[use existing nodes, edges={color=black, pos = .5, earrow}, edge quotes={fill=white, inner sep=1pt,font= \scriptsize}]{
                b0_1 -> b0_0; b0_0 -> b0_2;
                b2_1 -> b2_0;
                b11_0 -> b11_1;
                b112_0 -> b112_1;
            };
            \graph[use existing nodes, edges={color=olive, pos = .5, earrow, densely dashed, line width = 1pt,}, edge quotes={draw = gray, inner sep=1pt,font= \scriptsize}]{
                b0_2 --[bend left = 50] b1_2 --[bend left = 50] b11_0; b0_1 --[out = -45, in = -135, looseness = .5] b2_0;
                b0_2 --[out = -15, in = 45, looseness = 1.3] b2_1;
                b11_1 --[bend left = 60] b111_0;
                b11_1 --[bend right = 45] b112_0;
            };
        \end{tikzpicture}\\[-5ex]
        \rotatebox{-90}{$\leadsto$} & \\[-1ex]
        \glue \mul{\ynstruc} &\;=\; \hspace{.5em} \begin{tikzpicture}[baseline = -5.5ex]
            \tikzstyle{vert}=[draw = black, circle, fill = gray!30, inner sep = 1.5pt, minimum size = 1pt, font = \scriptsize]
            \graph[grow right = .6cm, branch down = 3.5ex]{
            {/,2/{$b_1$}[vert, xshift = .5em]} -!-
            {/,/, 1/{$b_3$}[vert]} -!-
            {/, 3/{$b_2$}[vert, xshift = -.5em]} -!-
            {/, 0/{$b_4$}[vert]} -!-
            {/,/,0'/{$b_5$}[vert]}
            };
            \graph[use existing nodes, edges={color=black, pos = .5, earrow}, edge quotes={fill=white, inner sep=1pt,font= \scriptsize}]{
                3 -> 0 -> 0';
                1 -> 2 -> 3-> 1;
            };
            \end{tikzpicture}
             \mbox{\scriptsize \hspace{1.5em} \colorbox{gray!10}{$\begin{aligned}
                b_1 &= \quo{\tup{\eps, \fixedvertex_1}}_{\quorel{\ynmulstruc}} &
                b_4 &= \quo{\tup{11, \fixedvertex_1}}_{\quorel{\ynmulstruc}}\\
                b_2 &= \quo{\tup{\eps, \fixedvertex_2}}_{\quorel{\ynmulstruc}} &
                b_5 &= \quo{\tup{112, \fixedvertex_3}}_{\quorel{\ynmulstruc}}\\
                b_3 &= \quo{\tup{\eps, \fixedvertex_3}}_{\quorel{\ynmulstruc}}
            \end{aligned}$}}
    \end{align*}     \caption{Illustrative example of the \kl{gluing operator} $\glue$.}
    \label{figure: gluing}
\end{figure}

\subsection{The Abstract Semantics Setting}\label{subsection: abstract semantics on tree decompositions}
Inspired by the abstract interpretation from program semantics \cite{cousotAbstractInterpretationUnified1977}, we consider abstracting "nodes" based on \kl{directions} (recall \Cref{sec:prelim}).
\AP The \intro*\kl{abstract domain} $\intro*\AD$ is defined as the following set:  %
\begin{align*}
    \AD &\;\defeq\; \set{\fixedvertex_i \mid i \in \pnat} \dcup \set{\back 2, \back 1, 1, 2}.
\end{align*}
We use $\elem$ to denote an element in $\AD$,
$\vertex$ to denote a \emph{\kl(LMC){fixed name}} in $\set{\fixedvertex_i \mid i \in \pnat}$, and
$\dir$ to denote a \emph{\kl{direction}} in $\set{\back 2, \back 1, 1, 2}$, respectively.

Let $\ynmulstruc$ be a \kl(LMC){tree}.
\AP For each \kl{bag} $\bag \in \fdom(\ynmulstruc)$,
the \intro*\kl{abstraction function} $\intro*\absfunc{\ynmulstruc}{\bag} \colon \univ{{\glue} \ynmulstruc} \to \AD$ is defined as follows:
\begin{align*}
\absfunc{\ynmulstruc}{\bag}(\word)
&\defeq \begin{cases}
    \vertex &\hspace{-0.em} \text{if $\word = \quo{\tup{\bag, \vertex}}_{\quorel{\ynmulstruc}}$},\\
    \dir &\hspace{-0.em} \begin{aligned}[t]
        & \text{else if $\word = \quo{\tup{\bag', \vertex'}}_{\quorel{\ynmulstruc}}$ for some $\bag'$ in the}\\
        & \text{\; \kl{direction} $\dir$ on $\bag$ and some $\vertex' \in \univ{\ynmulstruc(\bag')}$.}
    \end{aligned}
\end{cases}
\end{align*}
For instance, when $\ynmulstruc$ is the \kl(LMC){tree} in \Cref{figure: gluing},
the "elements"
$\quo{\tup{\eps, \fixedvertex_1}}_{\quorel{\ynmulstruc}}$,
$\quo{\tup{\eps, \fixedvertex_2}}_{\quorel{\ynmulstruc}}$,
$\quo{\tup{\eps, \fixedvertex_3}}_{\quorel{\ynmulstruc}}$,
$\quo{\tup{11, \fixedvertex_1}}_{\quorel{\ynmulstruc}}$,
$\quo{\tup{112, \fixedvertex_3}}_{\quorel{\ynmulstruc}}$ are
mapped to $\fixedvertex_1, \fixedvertex_2, \fixedvertex_3, 1, 1$ by $\absfunc{\ynmulstruc}{\eps}$,
mapped to $\back 1, \fixedvertex_2, \back 1, 1, 1$ by $\absfunc{\ynmulstruc}{1}$, and 
mapped to $\back 1, \fixedvertex_2, \back 1, \fixedvertex_1, 2$ by $\absfunc{\ynmulstruc}{11}$.
\Cref{figure: abstraction of vertices} is an illustration of $\absfunc{\ynmulstruc}{11}$.
\begin{figure}[t]
    \centering
\begin{tikzpicture}[baseline = -.5cm]
    \tikzstyle{vert}=[draw = black, circle, fill = gray!30, inner sep = 2pt, minimum size = 1pt, font = \scriptsize]
    \tikzstyle{local}=[draw = blue, line width = 1.pt]
    \tikzstyle{edge}=[draw = gray!30]
    \tikzstyle{zone}=[rounded corners, fill = blue!30, draw = blue, fill opacity=.3, line width = 1.pt]
    \tikzstyle{bag}=[draw=gray, circle]
    \foreach \i/\v/\c in {0//{},1//{},2/$\fixedvertex_2$/{local}} {
        \node[vert, \c, xshift = -2.5cm, yshift = -.5cm](b0_\i) at ({270 + 360/3 * (\i - 1)}:.5) {\v};
    }
    \node[bag, ellipse, fit=(b0_0)(b0_1)(b0_2), inner sep = 0pt](b0) {}; 
    \node[above, color = gray, font = \scriptsize] at (b0.north) {};
    \foreach \i/\v/\c in {2/$\fixedvertex_2$/{local}} {
        \node[vert, \c, xshift = -1.cm, yshift = .5cm](b1_\i) at ({60 - 360/3 * (\i - 1)}:.4) {\v};
    }
    \node[bag, ellipse, fit=(b1_2), inner sep = 0pt](b1) {}; 
    \node[above, color = gray, font = \scriptsize] at (b1.north) {};
    \foreach \i/\v/\c in {0/$\fixedvertex_2$/{local},1/$\fixedvertex_1$/{local}} {
        \node[vert, \c, xshift = 1.cm, yshift = .2cm](b11_\i) at ({0 + 360/2 * (\i - 1)}:.4) {\v};
    }
    \node[bag, ellipse, fit=(b11_0)(b11_1), inner sep = 0.5pt](b11) {}; 
    \node[above, color = gray, font = \scriptsize] at ($(b11.north)+(0,-.05)$) {\textbf{\kl{bag}} $11$};
    \foreach \i/\v/\c in {0/$\fixedvertex_1$/{local}} {
        \node[vert, \c, xshift = 3.5cm, yshift = .2cm](b111_\i) at ({0 + 360/2 * (\i - 1)}:.4) {\v};
    }
    \node[bag, ellipse, fit=(b111_0), inner sep = 0.5pt](b111) {}; 
    \node[above, color = gray, font = \scriptsize] at (b111.north) {};
    \foreach \i/\v/\c in {0//{},1/$\fixedvertex_2$/{local}} {
        \node[vert, \c, xshift = 0cm, yshift = -1.2cm](b2_\i) at ({0 + 360/2 * (\i - 1)}:.4) {\v};
    }
    \node[bag, ellipse, fit=(b2_0)(b2_1), inner sep = 0pt](b2) {};
    \node[above, color = gray, font = \scriptsize] at (b2.north) {};
    \foreach \i/\v/\c in {0/$\fixedvertex_1$/{local}, 1//} {
        \node[vert, \c, xshift = 3.cm, yshift = -1.2cm](b112_\i) at ({0 + 360/2 * (\i - 1)}:.4) {\v};
    }
    \node[bag, ellipse, fit=(b112_0)(b112_1), inner sep = 0pt](b112) {};
    \node[above, color = gray, font = \scriptsize] at (b112.north) {};
    \graph[use existing nodes, edges={color=black, pos = .5, earrow}, edge quotes={fill=white, inner sep=1pt,font= \scriptsize}]{
        b0 ->[{auto, font = \scriptsize}, gray, pos = .5] b1;
        b1 ->["1"{auto, font = \scriptsize}, gray] b11;
        b11 ->["1"{auto, font = \scriptsize}, gray] b111;
        b0 ->[""{auto, font = \scriptsize}, gray] b2;
        b11 ->["2"{auto, font = \scriptsize}, gray] b112;
    };
    \graph[use existing nodes, edges={color=black, pos = .5, earrow}, edge quotes={fill=white, inner sep=1pt,font= \scriptsize}]{
    };
    \graph[use existing nodes, edges={color=olive, pos = .5, earrow, densely dashed, line width = 1pt,}, edge quotes={draw = gray, inner sep=1pt,font= \scriptsize}]{
        b0_2 --[bend left = 50] b1_2 --[bend left = 50] b11_0; b0_1 --[out = -45, in = -135, looseness = .5] b2_0;
        b0_2 --[out = -15, in = 45, looseness = 1.3] b2_1;
        b11_1 --[bend left = 65] b111_0;
        b11_1 --[bend right = 45] b112_0;
    };
    \draw[zone]
    ($(b0_0) + (-.4,.2)$) --
        ($(b0_0) + (-.4,-1.45)$) --
            ($(b2_0) + (.3,-.5)$) -- ($(b2_0) + (.3,.25)$) --
        ($(b0_1) + (1.4,-.25)$) -- ($(b0_1) + (.5,-.1)$)  --
    ($(b0_0) + (.3,.2)$) -- cycle;
    \node[font= \footnotesize] at ($(b0_0) + (0.4, -1.25)$){\kl{direction} $\back 1$};
    \draw[zone]
    ($(b112_1) + (-.3,.2)$) --
        ($(b112_1) + (-.3,-.2)$) -- ($(b112_1) + (1.9,-.5)$) --
    ($(b112_1) + (1.9,.5)$) -- cycle;
    \node[font= \footnotesize] at ($(b112_1) + (1.15, -0.15)$){\kl{direction} $2$};
\end{tikzpicture}
     \caption{Illustration of $\absfunc{\ynmulstruc}{11}$, with $\ynmulstruc$ taken from \Cref{figure: gluing}.}
    \label{figure: abstraction of vertices}
\end{figure}

\AP For each \kl{bag} $\bag \in \fdom(\ynmulstruc)$,
the \intro*\kl{concretization function} $\intro*\conc{\ynmulstruc}{\bag} \colon \AD \to \pset{\univ{\glue \ynmulstruc}}$ is defined as the inverse image function of $\absfunc{\ynmulstruc}{\bag}$.
By definition, for each "element" $\vertex$,
the set $\conc{\ynmulstruc}{\bag}(\vertex)$ is the singleton $\set{[\tup{\bag, \vertex}]_{\quorel{\ynmulstruc}}}$ if
$\vertex$ is in the "direction" $0$ ("ie", in the "bag" $\bag$),
and the empty set $\emptyset$ otherwise;
for each "direction" $\dir$,
the set $\conc{\ynmulstruc}{\bag}(\dir)$ consists of "elements" in the "direction" $\dir$ but not in the "direction" $0$.
Note that $\conc{\ynmulstruc}{\bag}(\dir)$ is possibly empty even if the "bag" $\bag \series \dir$ exists in $\ynmulstruc$.
Additionally, we lift the \kl(function){domain} of $\conc{\ynmulstruc}{\bag}$ from $\AD$ to its powerset $\pset{\AD}$ by $\conc{\ynmulstruc}{\bag}(X) \defeq \bigcup_{\elem \in X} \conc{\ynmulstruc}{\bag}(\elem)$.

\AP
The "abstract domain" $\intro*\ADB{\ynmulstruc}{\bag}$ of $\ynmulstruc$ on $\bag$ is defined as $\ADB{\ynmulstruc}{\bag} \defeq \set{\elem \in \AD \mid \conc{\ynmulstruc}{\bag}(\elem) \neq \emptyset}$.
The family $\set{\conc{\ynmulstruc}{\bag}(\elem)}_{\elem \in \ADB{\ynmulstruc}{\bag}}$ is a partition of $\univ{\glue \ynmulstruc}$.

\AP A (powerset-lifted) \intro*\kl{abstract interpretation} on a \kl{bag} $\bag$ is a partial function $\ainter \colon \Vars \pto \pset{\ADB{\ynmulstruc}{\bag}}$.
For an \kl{interpretation} $\inter \colon \Vars \pto \univ{\glue{\ynmulstruc}}$ and an \kl{abstract interpretation} $\ainter \colon \Vars \pto \pset{\ADB{\ynmulstruc}{\bag}}$ on a \kl{bag} $\bag$ with $\fdom(\ainter) = \fdom(\inter)$,
we say that $\ainter$ is an \AP\intro*\kl{abstraction} of $\inter$ (or, $\inter$ is a \AP\intro*\kl{concretization} of $\ainter$) on $\bag$ if
$\inter(x) \in \conc{\ynmulstruc}{\bag}(\ainter(x))$ for all $x \in \fdom(\inter)$.
Additionally, we will use functions $\ainter$ by lifting the \kl(function){domain} from $\Vars$ to its powerset $\pset{\Vars}$ by $\ainter(X) \defeq \bigcup_{x \in X} \ainter(x)$.

\begin{definition}\label{definition: states}
For a class $\fmlsetclass$ of finite sets of \kl(UNTC){formulas},
the \intro*\kl(LMC){state} set $\intro*\allstates_{\fmlsetclass}$
is defined as the set of $\mkstate{\fmlset}{\pri}{\ynmulstruc}{\ainter}{\bag}$, where
\begin{itemize}
    \item $\fmlset \in \fmlsetclass$,
    \item $\pri \in \set{\prizero, \prione}$ is a \intro*\kl(LMC){priority},
    \item $\ynmulstruc$ is a \kl(LMC){tree},
    \item $\bag \in \fdom(\ynmulstruc)$ is a \kl{bag}, and
    \item $\ainter$ is an \kl{abstract interpretation} on $\bag$ "st" $\FV(\fmlset) \subseteq \fdom(\ainter)$.\lipicsEnd
\end{itemize}
\end{definition}
We use $\mystate[1],\mystate[2],\mystate[3],\dotsc$ to denote \kl(LMC){states}.
\AP The \reintro*\kl(LMC){priority} $\intro*\statePri(\mystate)$ of a \kl(LMC){state} $\mystate = \mkstate{\fmlset}{\pri}{\ynmulstruc}{\ainter}{\bag}$ is defined by $\statePri(\mystate) \defeq \pri$.

We write $\allstates_{\mathrm{\kl{UNFO}}}$ ("resp" $\allstates_{\mathrm{\kl{UNTC}}}$) for the set $\allstates_{\fmlsetclass}$ where $\fmlsetclass$ is the class of all finite sets of \kl{UNFO} ("resp" \kl{UNTC}) \kl(TC){formulas}.
For $k \in \pnat$, let \[\allstates_{\fmlsetclass}^{(k)} \defeq \set{\mkstate{\fmlset}{\pri}{\ynmulstruc}{\ainter}{\bag} \in \allstates_{\fmlsetclass}
\mid \text{$\ynmulstruc$ is an $\allatstruc_{k}$-labeled \kl(LMC){tree}}}.\]

We now define the abstract semantics on \kl{tree decompositions}.
\begin{definition}\label{definition: semantics on tree decomposition}
\AP For $\mkstate{\fmlset[1]}{\pri}{\ynmulstruc}{\ainter}{\bag} \in \allstates_{\fmlsetclass}$,
    we define $\intro*\absmodels \mkstate{\fmlset[1]}{\pri}{\ynmulstruc}{\ainter}{\bag}$ as follows:
    \begin{align*}
        \absmodels \mkstate{\fmlset[1]}{\prione}{\ynmulstruc}{\ainter}{\bag} &\;\defiff\; \text{$\glue \ynmulstruc \modelsass{\inter} \bigwedge \fmlset[1]$ for a \kl{concretization} $\inter$ of $\ainter$ on $\bag$}, &
        \absmodels \mkstate{\fmlset[1]}{\prizero}{\ynmulstruc}{\ainter}{\bag} &\;\defiff\; \not\absmodels \mkstate{\fmlset[1]}{1}{\ynmulstruc}{\ainter}{\bag}.
    \end{align*}
    \lipicsEnd
\end{definition}
In particular, when $\fmlset[1]$ is the singleton of a \kl(UNTC){sentence} $\fml$ and $\pri = \prione$, it coincides with the standard semantics by definition:
\begin{proposition}\label{proposition: standard semantics and semantics on tree decompositions}
    For every \emph{\kl(UNTC){sentence}} $\fml$, we have:
    \[\glue \ynmulstruc \modelsnonass \fml \quad\Longleftrightarrow\quad {} \absmodels \set{\fml}^{1,\ynmulstruc}_{\emptyset, \bag}.\]
    (Hence, this equivalence holds particularly when $\bag = \eps$.)
\end{proposition}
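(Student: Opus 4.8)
The plan is to specialize \Cref{definition: semantics on tree decomposition} to the parameters $\Gamma = \set{\fml}$, $p = 1$, $\mathscr{I} = \emptyset$, and $\bag = \eps$, and to check that every quantifier and aggregation occurring there degenerates in this case. First note that $\set{\fml}^{1, \mul{\struc}}_{\emptyset, \eps}$ is a well-formed element of $\mathcal{Q}_{\mathrm{\kl{GNTC}}}$: the root $\eps$ lies in $\dom(\mul{\struc})$ since $\mul{\struc}$ is non-empty and its domain is prefix-closed, and $\FV(\fml) = \emptyset \subseteq \dom(\emptyset)$ because $\fml$ is a \kl{sentence}, so the empty \kl{abstract interpretation} is legitimately an \kl{abstract interpretation} on $\eps$. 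Since $p = 1$ is odd, the defining clause reads: $\models \set{\fml}^{1, \mul{\struc}}_{\emptyset, \eps}$ holds iff $\odot \mul{\struc}, I \models \bigwedge \set{\fml}$ for \emph{some} \kl{concretion} $I$ of $\emptyset$ on $\eps$.

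I would then discharge this by three elementary observations. First, $\bigwedge \set{\fml}$ is just $\fml$, so the singleton conjunction carries no content. Second, by the definition of \kl{concretion}, any \kl{concretion} $I$ of $\mathscr{I}$ satisfies $\dom(I) = \dom(\mathscr{I})$; for $\mathscr{I} = \emptyset$ this forces $\dom(I) = \emptyset$, so the empty \kl{interpretation} is the unique \kl{concretion} of $\emptyset$ on $\eps$. Hence the existential ``for some \kl{concretion}'' ranges over exactly one candidate and can be dropped. Third, as $\fml$ is a \kl{sentence} we have $\FV(\fml) = \emptyset$, so $\odot \mul{\struc}, I \models \fml$ does not depend on $I$; by the usual convention relating $\struc \models \fml$ with $\struc, I \models \fml$ for \kl{sentences}, this is precisely $\odot \mul{\struc} \models \fml$. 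Chaining these equivalences gives $\models \set{\fml}^{1, \mul{\struc}}_{\emptyset, \eps}$ iff $\odot \mul{\struc} \models \fml$.

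There is essentially no obstacle here: the statement is a direct reading of the abstract \kl{semantics} in the degenerate case of the empty \kl{abstract interpretation} at the root, and the only things to verify are that the single-element existential and the singleton conjunction behave as expected. The substance of the proposition lies not in its difficulty but in its purpose, as it certifies that \Cref{definition: semantics on tree decomposition} is a conservative reformulation of the standard \kl{satisfaction relation} $\odot \mul{\struc} \models \fml$, which is exactly the bridge needed so that the soundness and completeness of the \kl{local model checker} developed later transfer back to the original \kl{satisfiability problem}.
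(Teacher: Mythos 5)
Your proof is correct and takes the same route as the paper: the paper's own proof is simply ``By definition. Note that $\fml$ is a \kl{sentence}.'', and your argument is exactly that definitional unfolding made explicit (odd priority $p=1$, the singleton conjunction, the empty map being the unique \kl{concretion} of $\emptyset$, and sentence-hood making the \kl{interpretation} irrelevant).
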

\begin{example}\label{example: abstract semantics}
    Recall the \kl(LMC){tree} $\ynmulstruc$  in \Cref{figure: gluing}.
    We use $\intro*\Erel$ for the unique binary \kl{relation name}.
    Let $\fml \defeq \lnot \exists y \Erel x y$.
    Then,
    $\glue \ynmulstruc \modelsass{\inter} \fml$ "iff"
    $\inter(x)$ has no successor "iff"
    $\inter(x) = b_5$.
    In the abstract semantics, for instance, we have $\absmodels \mkstate{\set{\fml}}{1}{\ynmulstruc}{x \mapsto \set{2}}{11}$ by $b_5 \in \conc{\ynmulstruc}{11}(2)$
    and we have $\not\absmodels \mkstate{\set{\fml}}{1}{\ynmulstruc}{x \mapsto \set{\fixedvertex_1, \fixedvertex_2, \back 1}}{11}$ by $b_5 \not\in \conc{\ynmulstruc}{11}(\set{\fixedvertex_1, \fixedvertex_2, \back 1})$.
\end{example}
In the sequel, we use this abstract semantics as an alternative to the standard semantics.

\AP
Additionally, in \Cref{definition: semantics on tree decomposition},
we canonically lift $\allstates_{\fmlsetclass}$ to the set $\PBallfml(\allstates_{\fmlsetclass})$ of "(positive) Boolean formulas@positive Boolean formulas",
where, for example, ${} \absmodels \doublefml[1] \doublewedge \doublefml[2]$ "iff" ${} \absmodels \doublefml[1] \text{ and } {} \absmodels \doublefml[2]$. 
\AP
Here, the set $\intro*\PBallfml(X)$ of \intro*\kl{positive Boolean formulas} over a set $X$ of \intro*\kl{Boolean variables}
is generated by the 
grammar:
\begin{align*}
    \doublefml[1], \doublefml[2] &\;\Coloneqq\; p \mid \doublefalse \mid \doubletrue \mid \doublefml[1] \doublevee \doublefml[2] \mid \doublefml[1] \doublewedge \doublefml[2], \tag{$p \in X$}
\end{align*}
where the dot notation is used to distinguish from \kl(TC){formulas} in "UNTC" / "GNTC".
\AP
\phantomintro{semantical equivalence relation}
\phantomintro{semantical entailment relation}
We denote by ${\intro*\lequiv}$ ("resp" ${\intro*\lleqq}$\phantomintro\lgeqq) the \kl{semantical equivalence relation} ("resp" "entailment relation@semantical entailment relation").
For $\mathit{op} \in \set{\doublefalse, \doubletrue, \doublevee, \doublewedge}$,
let $\mathit{op}^{\pri}$ be $\mathit{op}$ itself if $\pri = \prione$ and be the ``""dual""'' of $\mathit{op}$ if $\pri = \prizero$.

\subsection{Moving on Bags in the Abstract Semantics}\label{section: mov}

For each \kl{direction} $\dir \in \range{\back 2}{2}$,
the ""neighbouring set"" $\intro*\ND{\ynmulstruc}{\bag}{\dir} \subseteq \ADB{\ynmulstruc}{\bag}$ is defined so that its "concretization" is the set of all "elements" in "bags" in the direction $\dir$ on $\bag$:
\[\conc{\ynmulstruc}{\bag}(\ND{\ynmulstruc}{\bag}{\dir}) \;=\; \bigcup_{\bag' \in \yndirdom{\bag}{\dir}(\ynmulstruc)} \conc{\ynmulstruc}{\bag'}(\univ{\ynmulstruc(\bag')}).\]
More concretely,
$\ND{\ynmulstruc}{\bag}{\dir} \;\defeq\; (\univ{\ynmulstruc(\bag \series \dir)} \cup \set{\dir}) \cap \ADB{\ynmulstruc}{\bag}$ if $\bag \series \dir$ is defined
and $\ND{\ynmulstruc}{\bag}{\dir} \defeq \emptyset$ otherwise.
In particular, $\ND{\ynmulstruc}{\bag}{0} = \univ{\ynmulstruc(\bag)}$.
For instance,
when $\ynmulstruc$ is the \kl(LMC){tree} in \Cref{figure: gluing} and $\bag = 11$,
we have
$\ND{\ynmulstruc}{\bag}{0} = \set{\fixedvertex_1, \fixedvertex_2}$, 
$\ND{\ynmulstruc}{\bag}{1} = \set{\fixedvertex_1}$, 
$\ND{\ynmulstruc}{\bag}{2} = \set{\fixedvertex_1, 2}$, 
$\ND{\ynmulstruc}{\bag}{\back 1} = \set{\fixedvertex_2, \back 1}$, and
$\ND{\ynmulstruc}{\bag}{\back 2} = \emptyset$.%

When $\ainter(\FV(\fmlset[1])) \subseteq \ND{\ynmulstruc}{\bag}{\dir}$ and $\dir \neq 0$,
we can move from $\bag$ to the adjacent \kl{bag} $\bag \series \dir$.
Before defining the "abstract interpretation" obtained from moving towards $\dir$ (which we shall define below as $\ainter_{\dir}$), we need to define the "abstract domain" on which it lives.

For each \kl{direction} $\dir \in \set{\back 2, \back 1, 1, 2}$,
the ""moving set"" $\intro*\MD{\ynmulstruc}{\bag}{\dir} \subseteq \ADB{\ynmulstruc}{\bag \series \dir}$ is defined so that 
its "concretization" on $\bag \series \dir$ coincides with "that@concretization" of $\dir$ on $\bag$:
\[\conc{\ynmulstruc}{\bag \series \dir}(\MD{\ynmulstruc}{\bag}{\dir}) \;=\; \conc{\ynmulstruc}{\bag}(\dir).\]
More concretely, $\MD{\ynmulstruc}{\bag}{\dir} \;\defeq\; \AD^{\ynmulstruc}_{\bag \series \dir} \setminus (\univ{\ynmulstruc(\bag)} \cup \set{\back \dir})$.
For instance, for the \kl(LMC){tree} $\ynmulstruc$ in \Cref{figure: gluing,figure: abstraction of vertices},
$\MD{\ynmulstruc}{1}{1} = \set{\fixedvertex_1, 2}$,
$\MD{\ynmulstruc}{11}{2} = \set{\fixedvertex_3}$, 
and $\MD{\ynmulstruc}{112}{\back 2} = \set{\fixedvertex_2, \back 1}$.
\Cref{figure: mov} depicts $\MD{\ynmulstruc}{1}{1}$.

\begin{figure}[t]
    \centering
\begin{tikzpicture}
    \tikzstyle{vert}=[draw =  black, circle, fill = gray!30, inner sep = 2pt, minimum size = 1pt, font = \scriptsize]
    \tikzstyle{local}=[draw = blue, line width = 1.pt]
    \tikzstyle{edge}=[draw = gray!30]
    \tikzstyle{zone}=[rounded corners, fill = blue!30, draw = blue, fill opacity=.3, line width = 1.pt]
    \tikzstyle{bag}=[draw=gray, circle]
    \foreach \i/\v/\style in {2/$\fixedvertex_2$/{local, line width = .3pt},1//{white}} {
        \node[vert, xshift = -2.cm, yshift = .3cm, apply style/.expand once = \style](b-1_\i) at ({60 - 360/3 * (\i - 1)}:.4) {\v};
    }
    \node[bag, ellipse, fit=(b-1_2), inner sep = 0pt](b-1) {}; 
    \node[above, color = gray, font = \scriptsize] at (b-1.north) {\kl{bag} $1$};
    \foreach \i/\v/\style in {0/$\fixedvertex_2$/{local, line width = .3pt},1/$\fixedvertex_1$/local} {
        \node[vert, xshift = 0cm, yshift = 0cm, apply style/.expand once = \style](b_\i) at ({0 - 360/2 * (\i - 1)}:.35) {\v};
    }
    \node[bag, ellipse, fit=(b_0)(b_1), inner sep = 0pt](b) {}; 
    \node[above, color = gray, font = \scriptsize] at (b.north) {\kl{bag} $1 1$};
    \foreach \i/\v/\style in {0/$\fixedvertex_1$/local} {
        \node[vert, xshift = 3.85cm, yshift = .37cm, apply style/.expand once = \style](b1_\i) at ({-60 + 360/3 * (\i - 1)}:.35) {\v};
    }
    \node[bag, ellipse, fit=(b1_0), inner sep = 0pt](b1) {}; 
    \foreach \i/\v/\style in {0/$\fixedvertex_1$/local,1//{}} {
        \node[vert, xshift = 2.9cm, yshift = -.3cm, apply style/.expand once = \style](b2_\i) at ({0 + 360/2 * (\i - 1)}:.35) {\v};
    }
    \node[bag, ellipse, fit=(b2_0)(b2_1), inner sep = 0pt](b2) {};
    \draw[zone] ($(b2_0) + (3,.5)$) -- ($(b2_0) + (0.38,.15)$) -- ($(b2_0) + (0.38,-.15)$) -- ($(b2_0) + (3,-.5)$) -- cycle;
    \node[font= \footnotesize, align = right] at ($(b2_0) + (2.2, -.01)$){\kl{direction} $2$\\ on \kl{bag}  $1 1$};
    \graph[use existing nodes, edges={color=black, pos = .5, earrow}, edge quotes={fill=white, inner sep=1pt,font= \scriptsize}]{
        (b-1) ->["$1$"{auto, font = \scriptsize}, gray, pos = .5] b;
        b ->["$1$"{auto, font = \scriptsize}, gray] b1;
        b ->["$2$"{auto, font = \scriptsize}, gray, pos = .4] b2;
    };
    \draw[zone, fill opacity = 0, line width = .3pt] ($(b_0) + (.35, .4)$) -- ($(b_0) + (.35,-.4)$) -- ($(b_0) + (6.2,-1.)$) -- ($(b_0) + (6.2,1.)$)  -- cycle;
    \node[font= \footnotesize, align = right] at ($(b1_0) + (1.6, .19)$){\kl{direction} $1$\\on \kl{bag} $1$};
    \draw[color = red, bend right= 40, line width = 1pt, ->] ($(b-1.south)+(-.2,-0.1)$) to["\eqref{rule: mov}"{font=\footnotesize}] ($(b.south)+(-.2,-0.1)$);
\end{tikzpicture}     \caption{Illustration of $\MD{\ynmulstruc}{1}{1}$, with $\ynmulstruc$ taken from \Cref{figure: gluing}.
    The \kl{direction} $1$ on \kl{bag} $1$ has the same \kl{concretization} as $\set{\const{\vertex_1}, 2}$ on \kl{bag} $11$.}
    \label{figure: mov}
\end{figure}

When $\ainter(\FV(\fmlset[1])) \subseteq \ND{\ynmulstruc}{\bag}{\dir}$ and $\dir \in \set{\back 2,\back 1, 1, 2}$,
the \kl{abstract interpretation} $\ainter_{\dir} \colon \Vars \pto \pset{\AD^{\ynmulstruc}_{\bag \series \dir}}$ on $\bag \series \dir$ is defined as follows:
\[\ainter_{\dir}(x) \;\defeq\; \begin{cases}
    (\ainter(x) \setminus \set{\dir}) \cup \MD{\ynmulstruc}{\bag}{\dir} & \text{if $\dir \in \ainter(x)$,}\\
    \ainter(x) & \text{otherwise}.
\end{cases}\]
By construction, $\conc{\ynmulstruc}{\bag \series \dir}(\ainter_{\dir}(x)) = \conc{\ynmulstruc}{\bag}(\ainter(x))$.
Thus, we have:
\[{} \absmodels \mkstate{\fmlset[1]}{\pri}{\ynmulstruc}{\ainter}{\bag} \quad\iff\quad {} \absmodels \mkstate{\fmlset[1]}{\pri}{\ynmulstruc}{\ainter_{\dir}}{\bag \series \dir}.\]

Such an $\ainter_{\dir}$ can be taken thanks to $\ainter(\FV(\fmlset[1])) \subseteq \ND{\ynmulstruc}{\bag}{\dir}$;
in general, there is no $\ainter'$ such that $\conc{\ynmulstruc}{\bag \series \dir}(\ainter'(x)) = \conc{\ynmulstruc}{\bag}(\ainter(x))$.
Nevertheless, by an appropriate evaluation (\Cref{section: evaluation strategy UNFO}),
we can always reach a \kl(LMC){state} $\mkstate{\fmlset[1]}{\pri}{\ynmulstruc}{\ainter}{\bag}$ satisfying $\ainter(\FV(\fmlset[1])) \subseteq \ND{\ynmulstruc}{\bag}{\dir}$ for some $\dir \in \set{\back 2, \back 1, 1, 2}$.

\begin{scope}\knowledgeimport{UNFO}\knowledgeimport{LMC}

\section{Local Checker for \texorpdfstring{"SAT-""UNFO"}{SAT-UNFO}}\label{section: UNFO}

The \kl{satisfiability problem} for \kl{UNFO} is known to be in "2ExpTime" \cite[Theorem 4.5]{segoufinUnaryNegation2013}.
In \cite{segoufinUnaryNegation2013}, this is shown via 
an exponential reduction to the \kl{satisfiability problem} for two-way modal $\mu$-calculus. 
In this section, we present a different "2ExpTime" algorithm by a direct \kl{2APTA} construction based on a "local checker" on \kl{tree decompositions}.
We will extend this algorithm to \kl{UNTC} in the next section.

\subsection{A \texorpdfstring{"Local Checker"}{Local Checker}}\label{section: local model checker UNFO}
\AP %
We next define the ""local checker"" (\reintro*\kl{LC}) for \kl{UNFO}, based on \kl{2APTAs}.
For convenience in writing examples, given
$\fmlset[1]$ in ``$\mkstate{\fmlset[1]}{\pri}{\ynmulstruc}{\ainter}{\bag}$'',
we may put $\ainter(x)$ in the superscript of some occurrences $x$,
may just write $\dir$ for $\set{\dir}$ in the superscript,
and may use the set notations as in the sequent calculus. For instance,
$\mkstate{(\rsym[1] x^{\set{\back 1, 1}} y, \exists x \rsym[2] x y^{2})}{\pri}{\ynmulstruc}{\ainter}{\bag}$
denotes the \kl{state}
$\mkstate{\set{\rsym[1] x y, \exists x \rsym[2] x y}}{\pri}{\ynmulstruc}{\ainter}{\bag}$,
with $\ainter(x) = \set{\back 1, 1}$ and $\ainter(y) = \set{2}$.

\AP For two sequences $\mul{x} = x_1 \dots x_k$ and $\mul{\vertex} = \vertex_1 \dots \vertex_k$ where the elements of $\mul{x}$ are pairwise distinct,
we write $\ainter\intro*\aintersubst{\mul{x}}{\mul{\vertex}}$ for the $\ainter$ in which each $\ainter(x_i)$ has been substituted with $\vertex_i$ for each $i \in \1{k}$.

\AP The binary relation $(\intro*\leadstoUNFO) \subseteq \allstates_{\mathrm{\kl{UNFO}}} \times \PBallfml(\allstates_{\mathrm{\kl{UNFO}}})$
is defined as the smallest relation closed under the ""rules"" in \Cref{figure: rules UNFO}.
\begin{figure}[t]%
    \centering
\begin{tcolorbox}[standard jigsaw, opacityback = 0.8, colframe=black!80, boxrule=.3mm, left=.2mm, right=.2mm]
    \vspace{-4ex}
    \begin{align*}
    \mkstate{()}{\pri}{\ynmulstruc}{\ainter}{\bag} &\;\leadstoUNFO\; \doubletrue^{\pri} \tag{emp}\label{rule: emp}\\
    \mkstate{(\fmlset[1], \fml[1] \land \fml[2])}{\pri}{\ynmulstruc}{\ainter}{\bag} &\;\leadstoUNFO\;  \mkstate{(\fmlset[1], \fml[1], \fml[2])}{\pri}{\ynmulstruc}{\ainter}{\bag} \tag{$\land$} \label{rule: and}\\
    \mkstate{(\fmlset[1], \fml[1] \lor \fml[2])}{\pri}{\ynmulstruc}{\ainter}{\bag} &\;\leadstoUNFO\;  \mkstate{(\fmlset[1], \fml[1])}{\pri}{\ynmulstruc}{\ainter}{\bag} \doublevee^{\pri} \mkstate{(\fmlset[1], \fml[2])}{\pri}{\ynmulstruc}{\ainter}{\bag} \tag{$\lor$}\label{rule: or}\\
    \mkstate{(\lnot \fml)}{\pri}{\ynmulstruc}{\ainter}{\bag} &\;\leadstoUNFO\;  \mkstate{(\fml)}{1 - \pri}{\ynmulstruc}{\ainter}{\bag}
    \quad \text{ if 
    $\begin{cases}
    \text{$\ainter(x) \subseteq \univ{\ynmulstruc(\bag)}$
    and 
    $\card {\ainter(x)} = 1$
    }\\
    \text{for each $x \in \FV(\fml)$}    
    \end{cases}$} \tag{$\lnot$} \label{rule: UN}\\
    \mkstate{\fmlset[1]}{\pri}{\ynmulstruc}{\ainter}{\bag} &\;\leadstoUNFO\;  \bigdoublevee_{\elem \in \ainter(x)}^{\pri} \mkstate{\fmlset[1]}{\pri}{\ynmulstruc}{\ainter\aintersubst{x}{\set{\elem}}}{\bag} \tag{conc}\label{rule: concrete}\\
    \mkstate{(\afml)}{\pri}{\ynmulstruc}{\ainter}{\bag} &\;\leadstoUNFO\; \doubletrue^{\pri} \hspace{.5em} \text{ if 
    $\begin{cases}
    \text{for each $x \in \FV(\afml)$, there is a $\vertex \in \univ{\ynmulstruc(\bag)}$ \kl{st}}\\
    \text{$\ainter(x) = \set{\vertex}$, and $\ynmulstruc(\bag) \modelsass{\set{x \mapsto \vertex \mid x \mapsto \set{\vertex} \in \ainter}} \afml$}    
    \end{cases}$} \tag{$\afml$}\label{rule: a1}\\
    \mkstate{(\fmlset[1], \exists x \fml)}{\pri}{\ynmulstruc}{\ainter}{\bag} &\;\leadstoUNFO\;  \mkstate{(\fmlset[1], \fml[1][z/x])}{\pri}{\ynmulstruc}{\ainter\aintersubst{z}{\AD^{\ynmulstruc}_{\bag}}}{\bag} \tag{$\exists$} \label{rule: exists}\\
    \mkstate{(\fmlset[1], \fmlset[2])}{\pri}{\ynmulstruc}{\ainter}{\bag} &\;\leadstoUNFO\;  \mkstate{\fmlset[1]}{\pri}{\ynmulstruc}{\ainter}{\bag} \doublewedge^{\pri} \mkstate{\fmlset[2]}{\pri}{\ynmulstruc}{\ainter}{\bag} \quad 
\text{ if 
    $\begin{cases}
    \text{$\ainter(x) \subseteq \univ{\ynmulstruc(\bag)}$
    and 
    $\card {\ainter(x)} = 1$
    }\\
    \text{for each $x \in \FV(\fmlset[1]) \cap \FV(\fmlset[2])$}    
    \end{cases}$} \tag{split} \label{rule: split}\\
    \mkstate{\fmlset[1]}{\pri}{\ynmulstruc}{\ainter}{\bag} &\;\leadstoUNFO\; \mkstate{\fmlset[1]}{\pri}{\ynmulstruc}{\ainter_{\dir}}{\bag \series \dir}
    \quad \text{ if $\ainter(\FV(\fmlset[1])) \subseteq \ND{\ynmulstruc}{\bag}{\dir}$} \tag{move}\label{rule: mov}
    \end{align*}
\end{tcolorbox}
    \caption{\reintro*\kl{Rules} of the \kl{LC} for \kl{UNFO}. Here, $z$ is used as a \kl{fresh} \kl{variable}.}
    \label{figure: rules UNFO}
\end{figure}
We lift this relation to $(\reintro*\leadstoUNFO) \subseteq \PBallfml(\allstates_{\mathrm{\kl{UNFO}}}) \times \PBallfml(\allstates_{\mathrm{\kl{UNFO}}})$,
as the smallest relation that contains the original $(\leadstoUNFO)$ and is single-step compatible with the operators in \kl{positive Boolean formulas} ("eg", if $\mkstate{\fmlset[1]}{\pri}{\ynmulstruc}{\ainter}{\bag} \leadstoUNFO \doublefml[2]$, then $\mkstate{\fmlset[1]}{\pri}{\ynmulstruc}{\ainter}{\bag} \doublevee \doublefml[3] \leadstoUNFO \doublefml[2] \doublevee \doublefml[3]$).

\AP A \intro*\kl(LMC){run} starting from $\mystate$ is a $\allstates_{\mathrm{\kl{UNFO}}}$-labeled \kl{tree} $\trace$ where $\trace(\varepsilon) = \mystate$ and such that for every $\word \in \fdom(\trace)$ we have that for\\
$\left\{
\begin{aligned}
&\text{some $\trace(\word) \leadstoUNFO \doublefml[2]$} && \text{if $\statePri(\trace(\word)) = \prione$} \\
&\text{every $\trace(\word) \leadstoUNFO \doublefml[2]$} && \text{if $\statePri(\trace(\word)) = \prizero$}
\end{aligned}
\right\}$
the \kl{positive Boolean formula} resulting from replacing in $\doublefml[2]$ each child "state"  of $\word$ ("ie", each element $\trace(\word a)$) with $\doubletrue$ is "semantically equivalent" to $\doubletrue$.

\AP A \kl(LMC){run} $\trace$ is \intro*\kl(LMC){accepting} if for every infinite path $a_1 a_2 \dots$ in $\trace$, there are infinitely many $n$ with $\statePri(\trace(a_1 a_2 \dots a_n)) = \prizero$ (in which case we say that the path has priority $\prizero$).
\AP We write $\intro*\vdashUNFO \mystate \phantomintro\nvdashUNFO$ if there is an \kl(LMC){accepting} \kl(LMC){run} starting from $\mystate$.
The definition of "runs@@LMC" is almost the same as that of "runs@@2APTA" for alternating automata.
For example, consider $\trace(\word)$ with odd "priority" $\statePri(\trace(\word)) = \prione$ and let $\mystate[1], \mystate[2], \mystate[3] \in \allstates_{\mathrm{\kl{UNFO}}}$.
If there is a "rule" $\trace(\word) \leadstoUNFO \mystate[1] \doublewedge \mystate[2]$,
the condition is passed at the position $\word$ when both $\mystate[1]$ and $\mystate[2]$ occur in the "children" of $\word$.
Similarly, if there is a "rule" $\trace(\word) \leadstoUNFO \mystate[1] \doublevee \mystate[2]$,
the condition is passed at $\word$ when either $\mystate[1]$ or $\mystate[2]$ occurs.
If multiple "rules" exist for $\trace(\word)$, "eg", $\trace(\word) \leadstoUNFO \mystate[1] \doublewedge \mystate[2]$ and $\trace(\word) \leadstoUNFO \mystate[3]$, %
then they can be combined into the single "rule" $\trace(\word) \leadstoUNFO (\mystate[1] \doublewedge \mystate[2]) \doublevee \mystate[3]$.
When $\trace(\word) = \mkstate{\fmlset}{\pri}{\ynmulstruc}{\ainter}{\bag}$ has "priority" $\pri = \prizero$,
the definition is given so that the merged %
"formula@positive Boolean formula" for $\mkstate{\fmlset}{\prizero}{\ynmulstruc}{\ainter}{\bag}$ is the ``"dual"'' of "that@positive Boolean formula" for $\mkstate{\fmlset}{1}{\ynmulstruc}{\ainter}{\bag}$.
Each "rule" $\mystate[1] \leadstoUNFO \doublefml[2]$ in \Cref{figure: rules UNFO} is defined so that
${} \absmodels \mystate[1] \quad\iff\quad {} \absmodels \doublefml[2]$.

The "LC" is sound and complete "wrt" the semantics on \kl{tree decompositions} (\Cref{theorem: completeness closure UNFO}).
Below, we give a toy example%
\ifthenelse{\boolean{arXiv}}{  (see \Cref{section: additional example UNFO} for an additional example for even priority).}{.}%
\begin{example}\label{example: model checker UNFO 1}
    Let $\ynmulstruc$ be the \kl{tree} 
    in which $\ynmulstruc(1)$, $\ynmulstruc(\eps)$, and $\ynmulstruc(2)$ are
    given as follows, and $\ynmulstruc(\bag)$ is undefined for all other $\bag$:
    $\left(\begin{tikzpicture}[baseline = -.5ex]
    \graph[grow right = 1.cm, branch down = 2.5ex]{
    {s1/{$\fixedvertex_1$}[vert]} -!- {t1/{$\fixedvertex_2$}[vert]}
    };
    \graph[use existing nodes, edges={color=black, pos = .5, earrow}, edge quotes={fill=white, inner sep=1pt,font= \scriptsize}]{
        s1 ->["$\rsym[1]$", pos = .4] t1;
    };
\end{tikzpicture} \right)$,
$\left(\begin{tikzpicture}[baseline = -.5ex]
    \graph[grow right = 1.cm, branch down = 2.5ex]{
    {s1/{$\fixedvertex_2$}[vert]}
    };
    \graph[use existing nodes, edges={color=black, pos = .5, earrow}, edge quotes={fill=white, inner sep=1pt,font= \scriptsize}]{
    };
\end{tikzpicture} \right)$,
$\left(\begin{tikzpicture}[baseline = -.5ex]
    \graph[grow right = 1.cm, branch down = 2.5ex]{
    {s1/{$\fixedvertex_2$}[vert]} -!- {t1/{$\fixedvertex_3$}[vert]}
    };
    \graph[use existing nodes, edges={color=black, pos = .5, earrow}, edge quotes={fill=white, inner sep=1pt,font= \scriptsize}]{
        s1 ->["$\rsym[2]$", pos = .4] t1;
    };
\end{tikzpicture} \right)$.
    Then,
    $\glue \ynmulstruc = \left(\begin{tikzpicture}[baseline = -.5ex]
        \graph[grow right = .8cm, branch down = 4.ex]{
        {1/{}[vert]} -!- {2/{}[vert]} -!- {3/{}[vert]}
        };
        \graph[use existing nodes, edges={color=black, pos = .5, earrow}, edge quotes={fill=white, inner sep=1pt,font= \scriptsize}]{
            1 ->["$\rsym[1]$"] 2 ->["$\rsym[2]$"] 3;
        };
    \end{tikzpicture}\right)$.
    Let $\fml$ be %
    $\exists x \exists y \exists z (\rsym[1] x z \land \rsym[2] z y)$.
    Then, $\absmodels \mkstate{(\fml)}{\prione}{\ynmulstruc}{\emptyset}{\eps}$ holds.
    In our "LC", we can also construct an \kl(LMC){accepting} \kl(LMC){run}, as follows,
    where we abbreviate \kl{abstract interpretations} $\emptyset[\dots]$ to $-$:
    \begin{align*}
        \mkstate{\exists x \exists y \exists z (\rsym[1] x z \land \rsym[2] z y)}{\prione}{\ynmulstruc}{\emptyset}{\eps}
        &~\leadstoUNFO_{\eqref{rule: exists}\eqref{rule: concrete}}^{*}~ \bigdoublevee_{\elem_1 \in \set{\fixedvertex_2, 1, 2}} \mkstate{(\exists y \exists z  (\rsym[1] x^{\elem_1} z \land \rsym[2] z y))}{\prione}{\ynmulstruc}{\bl}{\eps} \\
        &~\leadstoUNFO_{\eqref{rule: exists}\eqref{rule: concrete}\eqref{rule: and}}^{*}~ \bigdoublevee_{\elem_1, \elem_2, \elem_3 \in \set{\fixedvertex_2, 1, 2}} \mkstate{(\rsym[1] x^{\elem_1} z^{\elem_3}, \rsym[2] z^{\elem_3} y^{\elem_2})}{\prione}{\ynmulstruc}{\bl}{\eps}\\
        &~\lgeqq~ \mkstate{(\rsym[1] x^{1} z^{\fixedvertex_2}, \rsym[2] z^{\fixedvertex_2} y^{2})}{\prione}{\ynmulstruc}{\bl}{\eps}
        ~~\leadstoUNFO_{\eqref{rule: split}}~~ \mkstate{(\rsym[1] x^{1} z^{\fixedvertex_2})}{\prione}{\ynmulstruc}{\bl}{\eps} \doublewedge \mkstate{(\rsym[2] z^{\fixedvertex_2} y^{2})}{\prione}{\ynmulstruc}{\bl}{\eps}\\
        &~\leadstoUNFO_{\eqref{rule: mov}}^{*}~ \mkstate{(\rsym[1] x^{\fixedvertex_1} z^{\fixedvertex_2})}{\prione}{\ynmulstruc}{\bl}{1} \doublewedge \mkstate{(\rsym[2] z^{\fixedvertex_2} y^{\fixedvertex_3})}{\prione}{\ynmulstruc}{\bl}{2}
        ~~\leadstoUNFO_{\eqref{rule: a1}}^{*}\lgeqq~~ \doubletrue.
    \end{align*}
    Thus, there is a (finite) \kl(LMC){accepting} \kl(LMC){run}, and hence $\vdashUNFO \mkstate{(\fml)}{\prione}{\ynmulstruc}{\emptyset}{\eps}$.
    \lipicsEnd
\end{example}

\begin{toappendix}
\begin{scope}\knowledgeimport{UNFO}\knowledgeimport{LMC}
\subsection{Additional Example}\label{section: additional example UNFO}
\begin{example}[On infinite tree]\label{example: model checker UNFO 2}
    Let $\ynmulstruc$ be the \kl{tree} given by
    $\ynmulstruc(1^{2m}) = \left(\begin{tikzpicture}[baseline = -.5ex]
        \graph[grow right = 1.cm, branch down = 2.5ex]{
        {s1/{$\fixedvertex_1$}[vert]}
        };
        \path (s1) edge [color=black, pos = .5, earrow, out = 150, in = 210, looseness = 4, ->] node[fill=white, inner sep=1pt,font= \scriptsize] {$\rsym[1]$} (s1);
    \end{tikzpicture} \right)$ and
    $\ynmulstruc(1^{2m+1}) = \left(\begin{tikzpicture}[baseline = -.5ex]
        \graph[grow right = 1.cm, branch down = 2.5ex]{
        {s1/{$\fixedvertex_1$}[vert]} -!- {t1/{$\fixedvertex_2$}[vert]}
        };
        \path (s1) edge [color=black, pos = .4, earrow, ->] node[fill=white, inner sep=1pt,font= \scriptsize] {$\rsym[1]$} (t1);
    \end{tikzpicture} \right)$ where $m \ge 0$, %
    and $\ynmulstruc(\bag)$ is undefined otherwise;
    then,
    \[\glue \ynmulstruc = \left(\begin{tikzpicture}[baseline = -2.5ex]
        \graph[grow right = 1.cm, branch down = 4.ex]{
        {/,1/{}[vert]} -!- {0/{}[vert],2/{}[vert, xshift = -.2cm]} -!- {/,3/{}[vert, xshift = -.2cm]} -!- {/[xshift = -.6cm], /{$\dots$}[xshift = -.6cm]}
        };
        \path (0) edge [color=black, pos = .4, earrow, ->] node[fill=white, inner sep=1pt,font= \scriptsize] {$\rsym[1]$} (1);
        \path (0) edge [color=black, pos = .4, earrow, ->] node[fill=white, inner sep=1pt,font= \scriptsize] {$\rsym[1]$} (2);
        \path (0) edge [color=black, pos = .4, earrow, ->] node[fill=white, inner sep=1pt,font= \scriptsize] {$\rsym[1]$} (3);
        \path (0) edge [color=black, pos = .4, earrow, out = 120, in = 180, looseness = 15, ->] node[fill=white, inner sep=1pt,font= \scriptsize] {$\rsym[1]$} (0);
    \end{tikzpicture}\right).\]
    Let $\fml$ be the \kl{UNFO sentence} $\lnot \exists x \lnot \exists y \rsym[1] y x$,
    which is "semantically equivalent" to ``$\forall x \exists y \rsym[1] y x$''.
    Then $\absmodels \mkstate{(\fml)}{\prione}{\ynmulstruc}{\emptyset}{\eps}$ holds.
    In the "LC", we construct an \kl(LMC){accepting} \kl(LMC){run}, as follows.
    First, we reach to an even "priority" "state" as follows:
    \begin{align*}
        \mkstate{(\lnot \exists x \lnot \exists y \rsym[1] y x)}{\prione}{\ynmulstruc}{\emptyset}{\eps}
        &\;\leadstoUNFO_{\eqref{rule: UN}}\; \mkstate{(\exists x \lnot \exists y \rsym[1] y x)}{\prizero}{\ynmulstruc}{\emptyset}{\eps}.
    \end{align*}
    For even "priority", we consider all applications of rules (namely, we consider a demonic choice).
    As the applicable rules are \eqref{rule: exists}, \eqref{rule: split}, and \eqref{rule: mov}, by letting $\fml[2] \defeq \exists x \lnot \exists y \rsym[1] y x$ (for short), we consider each conjunct of the "positive Boolean formula" below:
    \[\underbrace{\mkstate{(\lnot \exists y \rsym[1] y x^{\set{\fixedvertex_1, 1}})}{\prizero}{\ynmulstruc}{\bl}{\eps}}_{\eqref{rule: exists}}
    \;\doublewedge\;\;
    \underbrace{\mkstate{\fml[2]}{\prizero}{\ynmulstruc}{\emptyset}{1}}_{\eqref{rule: mov}}
    \;\doublewedge\;
    \underbrace{(\mkstate{()}{\prizero}{\ynmulstruc}{\emptyset}{\eps} \doublevee \mkstate{\fml[2]}{\prizero}{\ynmulstruc}{\emptyset}{\eps})}_{\eqref{rule: split}}.\]
    The only crucial part is for \eqref{rule: exists}.
    Depending on whether the "abstract interpretation" of $x$ is "concretized" to $\set{\fixedvertex_1}$ or $\set{1}$, the following two conjuncts are crucial:
    \[\underbrace{\mkstate{(\exists y \rsym[1] y x^{\fixedvertex_1})}{\prione}{\ynmulstruc}{\bl}{\eps}}_{\subalign{&\eqref{rule: concrete}\eqref{rule: UN}\\ &x \mapsto \set{\fixedvertex_1}}}
    \;\doublewedge\;
    \underbrace{\mkstate{(\lnot \exists y \rsym[1] y x^{\set{\fixedvertex_2, 1}})}{\prizero}{\ynmulstruc}{\bl}{1}}_{\subalign{&\eqref{rule: concrete}\eqref{rule: mov}\\ &x \mapsto \set{1}}}
    \;\doublewedge\; \dots\]
    \proofcasethin{Case $x \mapsto \set{\fixedvertex_1}$} We then have:
    \begin{align*}
        \mkstate{(\exists y \rsym[1] y x^{\fixedvertex_1})}{\prione}{\ynmulstruc}{\bl}{\eps}
        &\leadstoUNFO_{\eqref{rule: exists}\eqref{rule: concrete}}^{*}\lgeqq \mkstate{(\rsym[1] y^{\fixedvertex_1} x^{\fixedvertex_1})}{\prione}{\ynmulstruc}{\bl}{\eps}
        \leadstoUNFO_{\eqref{rule: a1}} \doubletrue,
    \end{align*}
    Thus, we obtain a finite "accepting" "run" for $\mkstate{(\exists y \rsym[1] y x^{\fixedvertex_1})}{\prione}{\ynmulstruc}{\bl}{\eps}$.
    As each path is finite, this case does not affect the acceptance.

    \proofcasethin{Case $x \mapsto \set{1}$}
    We consider 
    $\mkstate{(\lnot \exists y \rsym[1] y x^{\set{\fixedvertex_2, 1}})}{\prizero}{\ynmulstruc}{\bl}{2m+1}$ where $m \ge 0$
    by generalizing $\mkstate{(\lnot \exists y \rsym[1] y x^{\set{\fixedvertex_2, 1}})}{\prizero}{\ynmulstruc}{\bl}{1}$.
    For each $m$, similar to the above, we distinguish the following two cases.

    \subproofcasethin{Sub-Case $x \mapsto \set{\fixedvertex_2}$} 
    Similar to Case $x \mapsto \set{\fixedvertex_1}$,
    after applying \eqref{rule: UN},%
    we can give a finite "accepting" "run" for $\mkstate{(\exists y \rsym[1] y x^{\fixedvertex_2})}{\prione}{\ynmulstruc}{\bl}{1}$.

    \subproofcasethin{Sub-Case $x \mapsto \set{1}$}
    Similar to Case $x \mapsto \set{1}$, by \eqref{rule: concrete}\eqref{rule: mov},
    we have a conjunct $\mkstate{(\lnot \exists y \rsym[1] y x^{\set{\fixedvertex_2, 1}})}{\prizero}{\ynmulstruc}{\bl}{1^{2m+3}}$.
    We then go back to the above case analysis.

    Then in each infinite path,
    $\mkstate{(\bl)}{\prizero}{\ynmulstruc}{\bl}{\bl}$ appears infinitely often.
    Hence, the \kl(LMC){run} obtained from the above construction has \kl(LMC){priority} $\prizero$.
    Thus, the \kl(LMC){run} is \kl(LMC){accepting}, and hence $\vdashUNFO \mkstate{(\fml)}{\prione}{\ynmulstruc}{\emptyset}{\eps}$.
    \lipicsEnd
\end{example}
\end{scope}
\end{toappendix}

\subsection{Evaluation Strategy} \label{section: evaluation strategy UNFO}
We now present a strategy to evaluate $\absmodels \mkstate{\fmlset[1]}{\pri}{\ynmulstruc}{\ainter}{\bag}$ in our "LC".

\proofcasethin{Step 1}\labeltext{evaluation strategy UNFO 1}{}{1}
We eliminate the outermost $\land$ using the "rule" \eqref{rule: and}, and the outermost $\lor$ and $\exists$ by nondeterministically selecting one disjunct via the "rules" \eqref{rule: or}\eqref{rule: exists}\eqref{rule: concrete}, as much as possible.
We can then assume:
\begin{enumerate}
    \item[a)] \labeltext{cond: evaluation strategy UNFO 1}{}{a} $\card \ainter(x) = 1$ for each $x \in \FV(\fmlset[1])$, and
    \item[b)] \labeltext{cond: evaluation strategy UNFO 2}{}{b} $\fmlset[1]$ is of the form $(\fml[2]_1, \dots, \fml[2]_n)$
where each $\fml[2]_i$ is one of the following forms:
an "atom" $\afml$ %
or a "negation formula" $\lnot \fml[3]$,
\end{enumerate}
from which it follows that
\begin{enumerate}
    \item[c)] \labeltext{cond: evaluation strategy UNFO 3}{}{c} for each $\fml[2]_i$,
    for some $\dir \in \range{\back 2}{2}$,
    $\ainter(\FV(\fml[2]_i)) \subseteq \ND{\ynmulstruc}{\bag}{\dir}$.
\end{enumerate}
Here when $\fml[2]_i$ is of the form $\afml$,
this is derived from $\absmodels \mkstate{\fmlset[1]}{\pri}{\ynmulstruc}{\ainter}{\bag}$.
When $\fml[2]_i$ is of the form $\lnot \fml[3]$,
it immediately follows from the condition of unary negation: $\card\FV(\fml[3]) \le 1$.

\proofcasethin{Step 2}\labeltext{evaluation strategy UNFO 2}{}{2}
If $\fml[2]_i$ satisfies the condition \nameref{cond: evaluation strategy UNFO 3}) for $\dir = 0$,
we can eliminate $\fml[2]_i$ by applying \eqref{rule: split}.
We can then assume that
\begin{enumerate}
    \item[c')] \labeltext{cond: evaluation strategy UNFO 3'}{}{c'} for each $\fml[2]_i$, for a $\dir \in \set{\back 2, \back 1, 1, 2}$, $\dir \in \ainter(\FV(\fml[2]_i)) \subseteq \ND{\ynmulstruc}{\bag}{\dir}$.
\end{enumerate}
Here for the eliminated \kl{state} $\mkstate{(\fml[2]_i)}{\pri}{\ynmulstruc}{\ainter}{\bag}$, 
we apply \eqref{rule: a1}%
\footnote{\label{footnote: a1 and mov}Here, we sometimes need also \eqref{rule: mov}; "eg", in \Cref{figure: gluing}, $\mkstate{(\Erel x^{\fixedvertex_2} y^{\fixedvertex_3})}{\prione}{\ynmulstruc}{\bl}{\eps} \leadstoUNFO_{\eqref{rule: mov}}
\mkstate{(\Erel x^{\fixedvertex_2} y^{\fixedvertex_3})}{\prione}{\ynmulstruc}{\bl}{2} \leadstoUNFO_{\eqref{rule: a1}} \doubletrue$.}
if $\fml[2]_i$ is $\afml$
and apply \eqref{rule: UN} if $\fml[2]_i$ is $\lnot \fml[3]$,
and then we return to Step \nameref{evaluation strategy UNFO 1}.

\proofcasethin{Step 3}\labeltext{evaluation strategy UNFO 3}{}{3}
From the condition \nameref{cond: evaluation strategy UNFO 3'}),
by applying the "rule" \eqref{rule: split},
we can assume that 
\begin{enumerate}
    \item[c'')] \labeltext{cond: evaluation strategy UNFO 3''}{}{c''} for some $\dir \in \set{\back 2, \back 1, 1, 2}$, $\dir \in \ainter(\FV(\fmlset[1])) \subseteq \ND{\ynmulstruc}{\bag}{\dir}$.
\end{enumerate}

\proofcasethin{Step 4}\labeltext{evaluation strategy UNFO 4}{}{4}
By the condition \nameref{cond: evaluation strategy UNFO 3''}),
we apply the "rule" \eqref{rule: mov} based on the argument of \Cref{section: mov};
then, we move from $\bag$ to the adjacent \kl{bag} $\bag \series \dir$.
Finally, we go back to Step \nameref{evaluation strategy UNFO 1}.

\subsection{Closure Property}\label{section: closure property UNFO}
Based on the evaluation strategy above,
we can obtain a stronger completeness with a closure property, as we shall see in \Cref{theorem: completeness closure UNFO}.
\begin{definition}\label{definition: closure UNFO}
    For a \kl{UNFO} \kl{formula} $\fml$,
    the \intro*\kl(UNFO){closure} $\intro*\clUNFO(\fml)$ is the \kl{set} of \kl{UNFO} \kl{formula} \kl{sets} defined as follows:
    \begin{align*}
        \clUNFO(\afml) &\;\defeq\; \set{(\afml), ()},&
        \clUNFO(\fml[1] \lor \fml[2]) &\;\defeq\; \set{(\fml[1] \lor \fml[2])} \cup \clUNFO(\fml[1]) \cup \clUNFO(\fml[2]),\\
        \clUNFO(\fml[1] \land \fml[2]) &\;\defeq\; \set{(\fml[1] \land \fml[2])} \cup \set{ (\fmlset[1], \fmlset[2]) \mid \fmlset[1] \in \clUNFO(\fml[1]), \fmlset[2] \in \clUNFO(\fml[2]), \FV(\fmlset[1]) \cap \FV(\fmlset[2]) \subseteq \FV(\fml[1]) \cap \FV(\fml[2]) }, \span\span\\
        \clUNFO(\exists x \fml[1]) &\;\defeq\; \set{(\exists x \fml[1])} \cup \bigcup_{\text{$z$ is "fresh"}} \clUNFO(\fml[1][z/x]),&
        \clUNFO(\lnot \fml[1]) &\;\defeq\; \set{(\lnot \fml[1])} \cup \clUNFO(\fml[1]).
        \tag*{\lipicsEnd}
    \end{align*}
\end{definition}
By straightforward induction on $\fml$, we can show the following monotonicity: $(\fmlset[1], \fmlset[2]) \in \clUNFO(\fml)$ implies $\fmlset[1] \in \clUNFO(\fml)$.

For a class $\fmlsetclass$ of finite \kl{formula} sets,
$\fmlsetclass$-""runs@clruns@LMC"" are defined in the same way as "runs",
where for each "rule" $\mystate \leadstoUNFO \doublefml[2]$ in the "LC" (\Cref{figure: rules UNFO}),
each "state" $\mystate[2]$ occurring in $\doublefml[2]$ has been
replaced with $\left\{\begin{aligned}
    & \doublefalse && \text{if $\pri = \prione$}\\
    & \doubletrue && \text{if $\pri = \prizero$}
\end{aligned}\right\}$ if $\mystate[2] \not\in \allstates_{\fmlsetclass}$.
For a "state" $\mkstate{\fmlset}{\pri}{\ynmulstruc}{\ainter}{\bag} \in \allstates_{\fmlsetclass}$,
we write $\intro*\vdashUNFOsub_{\fmlsetclass} \mystate \phantomintro\nvdashUNFOsub$ 
if there is an "accepting@@LMC" $\fmlsetclass$-"run@clrun@@LMC" starting from $\mystate$.

We then have the following soundness and completeness theorem with a closure property, based on the strategy of \Cref{section: evaluation strategy UNFO}.
\ifthenelse{\boolean{arXiv}}{\begin{theorem}[\Cref{section: theorem: completeness UNFO}]}{\begin{theorem}}\AP\label{theorem: completeness closure UNFO}
    \gdef\completenessclosureUNFO{%
    Let $\fml$ be a \kl{UNFO} \kl{formula}.
    For every $\mystate \in \allstates_{\clUNFO(\fml)}$,
    we have:
    \[\absmodels \mystate \quad\Longleftrightarrow\quad {} \vdashUNFOsub_{\clUNFO(\fml)} \mystate.\]
    }%
    \completenessclosureUNFO
\end{theorem}
On the size of the \kl(UNFO){closure}, we have the following.
\begin{proposition}\label{proposition: closure size UNFO}
    For all \kl{UNFO} \kl{formulas} $\fml$,
    the \kl{cardinality} of $\clUNFO(\fml)$, up to renaming \kl{free variables}, is at most $(2 \fmllen{\fml})^{2 \fmllen{\fml}}$.
\end{proposition}
\begin{proof}
    By easy induction on $\fml$%
    \ifthenelse{\boolean{arXiv}}{  (\Cref{section: proposition: closure size UNFO}).}{.}
\end{proof}
Hence, we can give an exponential bound on the number of "states", up to an appropriate equivalence, as follows.
\begin{proposition}\label{proposition: closure UNFO 2}
    For \kl{UNFO} \kl{formulas} $\fml$,
    the number of $\mkstate{\fmlset[1]}{\pri}{\ynmulstruc}{\ainter}{\bag} \in \allstates_{\clUNFO(\fml)}^{(k)}$
    is $2^{\mathcal{O}(\fmllen{\fml} (\log \fmllen{\fml} + k))}$,
    up to forgetting $\ainter(x)$ for $x \not\in \FV(\fmlset[1])$, $\ynmulstruc$, and $\bag$ ("ie", ``$\fmlset[1]^{\pri}_{\ainter \restriction \FV(\fmlset[1])}$'')
    and up to renaming \kl{free variables}.
\end{proposition}
\begin{proof}
    By $(2 \fmllen{\fml})^{2 \fmllen{\fml}}$ (the number of $\fmlset[1] \in \clUNFO(\fml)$; \Cref{proposition: closure size UNFO})
    $\times$ $2$ (the number of $\pri$)
    $\times$ $\mathcal{O}((2^{(k+4)})^{\fmllen{\fml}})$ (the number of $\ainter$; note that $\card \FV(\fmlset[1]) \le \fmllen{\fml}$ and $\card \AD^{\ynmulstruc} \le k + 4$)
    $\le$ $2^{\mathcal{O}(\fmllen{\fml} (\log \fmllen{\fml} + k))}$.
\end{proof}

\subsection{Reduction to \kl{2APTAs}}\label{section: reducing to 2APTAs UNFO}
Using the "LC",
we can naturally reduce the \kl{satisfiability problem} over \kl{structures} of \kl{treewidth} at most $k-1$ to the \kl{non-emptiness problem} for \kl{2APTAs}.\footnote{A minor difference of the "LC" from \kl{2APTAs} is that some auxiliary predicates ("eg" $\ADB{\ynmulstruc}{\bag}$, $\ND{\ynmulstruc}{\bag}{\dir}$, and $\MD{\ynmulstruc}{\bag}{\dir}$) are used in the "rules" of the "LC" (\Cref{figure: rules UNFO}), but we can easily encode them inside \kl{2APTAs}%
 \ifthenelse{\boolean{arXiv}}{  (see \Cref{section: 2APTA construction UNFO} for a precise construction of \kl{2APTAs}).}{.}}
Using \Cref{proposition: closure UNFO 2}, we can show that the \kl(2APTA){size} of the \kl{2APTA} is $2^{\poly(\|\fml\|,k)}$.
Because "UNFO" has the \kl{bounded treewidth model property} \cite[Theorem 3.4]{segoufinUnaryNegation2013} (\Cref{proposition: tw}),
we can give an exponential-time reduction from the \kl{satisfiability problem} for \kl{UNFO} to the \kl{non-emptiness problem} for \kl{2APTAs}.
Because the \kl{non-emptiness problem} for \kl{2APTAs} is in "ExpTime" \cite{vardiReasoningTwowayAutomata1998} (\Cref{proposition: 2APTA complexity}),
we have the following known result.
\begin{corollary}[\textnormal{also shown in {\cite[Theorem 4.5]{segoufinUnaryNegation2013}}, even for "UNFP"}]\label{corollary: 2-EXPTIME UNFO}
     "SAT-""UNFO" is in "2ExpTime".
\end{corollary}

\end{scope}

\begin{toappendix}
\begin{scope}\knowledgeimport{UNFO}\knowledgeimport{LMC}
\subsection{Proof of {\Cref{theorem: completeness closure UNFO}}:
Soundness and Completeness}\label{section: theorem: completeness UNFO}
We recall the \kl(UNFO){LC} given in \Cref{section: local model checker UNFO}.
In \Cref{section: preservation property UNFO,section: duality UNFO,section: consistency UNFO}, we prepare some properties of the \kl(UNFO){LC}.
In \Cref{section: theorem: completeness UNFO proof}, we prove \Cref{theorem: completeness closure UNFO}.
In \Cref{section: lemma: parameter UNFO even,section: lemma: parameter UNFO odd}, we prove the two lemmas used in \Cref{section: theorem: completeness UNFO proof} for proving \Cref{theorem: completeness closure UNFO}.

\subsubsection{Preservation property}\label{section: preservation property UNFO}
\AP
For each "state" $\mystate$,
the \intro*\kl(LMC){transition formula} $\intro*\transitionUNFO(\mystate)$ is defined as the
$\left\{\begin{aligned}
    &\text{disjunction} && \text{if $\statePri(\mystate) = \prione$}\\
    &\text{conjunction} && \text{if $\statePri(\mystate) = \prizero$}
\end{aligned}\right\}$ of all $\doublefml[2]$ "st" $\mystate[1] \leadstoUNFO \doublefml[2]$.
For a class $\fmlsetclass$ of finite "formula" sets,
we define $\reintro*\transitionUNFO_{\fmlsetclass}(\mystate)$ in the same way as above,
where the "rules" have been changed to \kl[rules]{those} for $\fmlsetclass$-\kl[clruns]{runs}.
Note that $\transitionUNFO(\mystate)$ (resp.\ $\transitionUNFO_{\fmlsetclass}(\mystate)$) has infinite disjunction/conjunction, because the "rule" \eqref{rule: exists} has infinite patterns according to the name of $z$.
Below for the notational convenience, we canonically extend "positive Boolean formulas" with infinite disjunction and infinite conjunction and extend the notions "wrt" "positive Boolean formulas" ("eg", ${\absmodels}$ and $\leadstoUNFO$).

We first note that each "rule" preserves the truth.
\begin{proposition}[preservation property]\label{proposition: sem equiv UNFO}
    For each "rule" $\mystate[1] \leadstoUNFO \doublefml[2]$ in the "local checker" for "UNFO" (\Cref{figure: rules UNFO}),
    we have     that
    ${} \absmodels \mystate[1]$ "iff" ${} \absmodels \doublefml[2]$.
    Hence, we have:
    \[{} \absmodels \mystate[1] \quad\iff\quad {} \absmodels \transitionUNFO(\mystate[1]).\]
\end{proposition}
\begin{proof}
    By a routine verification.
\end{proof}

\subsubsection{Duality}\label{section: duality UNFO}
\AP
For $\mystate[1] = \mkstate{\fmlset[1]}{\pri[1]}{\ynmulstruc}{\ainter}{\bag[1]} \in \allstatesUNFO$,
we write $\intro*\addPri{\mystate[1]}$ for $\mkstate{\fmlset[1]}{(1 - \pri[1])}{\ynmulstruc}{\ainter}{\bag[1]}$.
Moreover, for each "positive Boolean formula" $\doublefml$ over $\allstatesUNFO$,
we write $\reintro*\addPri{\doublefml}$ for the "dual" of $\doublefml$ in which
each $\mystate[2] \in \allstatesUNFO$ has been replaced with $\addPri{\mystate[2]}$.
We then have the following.
\begin{proposition}[Duality]\label{proposition: dual UNFO}
    For all $\mystate[1] \in \allstatesUNFO$,
    $\transitionUNFO(\addPri{\mystate[1]}) = \addPri{\transitionUNFO(\mystate[1])}$.
\end{proposition}
\begin{proof}
By the form of \Cref{figure: rules UNFO},
for each "rule" $\mystate[1] \leadstoUNFO \doublefml[2]$,
we have that $\addPri{\mystate[1]} \leadstoUNFO \addPri{\doublefml[2]}$ is also a "rule".
\end{proof}

\subsubsection{Consistency}\label{section: consistency UNFO}
The ""negation depth"" of a "formula" $\fml$ is defined as the maximum nesting depth of negations $\lnot$ in the syntax tree of $\fml$.
\begin{proposition}[Weak alternating]\label{proposition: weak alternating UNFO}
    Let $\fml$ be a "UNFO" \kl{formula}.
    Let $\mystate = \mkstate{\fmlset}{\pri}{\ynmulstruc}{\ainter}{\bag}$ be a "state".
    For every infinite \kl{path} $a_1 a_2 \dots$ of $\clUNFO(\fml)$-\kl[clruns]{runs} $\trace$ starting from $\mystate$,
    the number of alternations of "priority" is finite (more precisely, at most the "negation depth" of $\fmlset$).
    Hence, for sufficiently large $n_0 \ge 0$, either one of the following holds:
\begin{itemize}
    \item for all $n \ge n_0$, $\statePri(\trace(a_1 \dots a_n)) = \prizero$;
    \item for all $n \ge n_0$, $\statePri(\trace(a_1 \dots a_n)) = \prione$.
\end{itemize}
\end{proposition}
\begin{proof}
    Observe that the "rule" \eqref{rule: UN} only changes the "priority".
    After we apply \eqref{rule: UN},
    the nesting depth of $\lnot$ decreases by one.
\end{proof}

\begin{proposition}[Consistency]\label{proposition: vdashUNFO con}
    Let $\fml$ be a "UNFO" \kl{formula}.
    For every $\mystate \in \allstates_{\clUNFO(\fml)}$, we have:
    \[\vdashUNFOsub_{\clUNFO(\fml)} \mystate \quad\Longrightarrow\quad \nvdashUNFOsub_{\clUNFO(\fml)} \addPri{\mystate}.\]
\end{proposition}
\begin{proof}
    Towards a contradiction, assume that $\vdashUNFOsub_{\clUNFO(\fml)} \addPri{\mystate}$.
    Let $\trace[1]$ and $\trace[2]$ be "accepting" $\clUNFO(\fml)$-\kl[clrun]{runs} starting from $\mystate$
    and from $\addPri{\mystate}$, respectively.
    \begin{claim}
        There are
        an infinite "path" $a_1 a_2 \dots$ on $\trace[1]$ and
        an infinite "path" $b_1 b_2 \dots$ on $\trace[2]$ such that for all $n \in \nat$,
        \[\trace[1](a_1 \dots a_n) = \addPri{\trace[2](b_1 \dots b_n)}.\]
    \end{claim}
    \begin{proof}
        Suppose that $\word \in \fdom(\trace[1])$ and $\word' \in \fdom(\trace[2])$ such that
        \[\trace[1](\word) = \addPri{\trace[2](\word')}.\]
        Let $\transitionUNFO_{\clUNFO(\fml)}(\trace[1](\word)) \lequiv \bigdoublevee_{l} \bigdoublewedge_{k} {\mystate[2]_{l, k}}$.
        As $\trace[1]$ is a $\clUNFO(\fml)$-\kl[clrun]{run}, we have:
        \begin{itemize}
            \item for some $l$, for all $k$, $\mystate[2]_{l, k}$ occurs on a \kl{child} of $\trace(\word)$.
        \end{itemize}
        Also, 
        we have
        $\transitionUNFO_{\clUNFO(\fml)}(\addPri{\trace[2](\word')}) = \addPri{\transitionUNFO_{\clUNFO(\fml)}(\trace[2](\word'))} \lequiv 
        \bigdoublewedge_{l} \bigdoublevee_{k} \addPri{\mystate[2]_{l, k}}$ by \Cref{proposition: dual UNFO}.
        As $\trace[2]$ is a $\clUNFO(\fml)$-\kl[clrun]{run}, we have:
        \begin{itemize}
            \item for all $l$, for some $k$, $\addPri{\mystate[2]_{l, k}}$ occurs on a \kl{child} of $\trace[2](\word')$.
        \end{itemize}
        Thus, by choosing $l$ and $k$ appropriately,
        we have that there is some $\mystate[2]$ such that 
        $\mystate[2]$ occurs on a \kl{child} of $\trace(\word)$ and 
        $\addPri{\mystate[2]}$ occurs on a \kl{child} of $\trace(\word')$.
        Hence, this completes the proof.
    \end{proof}
    Thus, we have $\Pri_{\trace}(a_1 a_2 \dots) \neq \Pri_{\trace[2]}(b_1 b_2 \dots)$ (note that one is $\dots 0^{\omega}$ and the other is $\dots 1^{\omega}$ for "priority" by \Cref{proposition: weak alternating UNFO}),
    which contradicts that both $\trace[1]$ and $\trace[2]$ are "accepting".
\end{proof}

\subsubsection{Proof of \Cref{theorem: completeness closure UNFO}}\label{section: theorem: completeness UNFO proof}
\AP
A \intro*\kl(LMC){subrun} starting from $\mystate$ is a $\allstates_{\mathrm{\kl{UNFO}}}$-labeled 
\kl{tree} $\trace$ with $\trace(\varepsilon) = \mystate$ such that
for each $\word \in \fdom(\trace)$ "st" $\word = \eps$ or $\word$ is not a "leaf" of $\trace$,
we have that for $\left\{\begin{aligned}
    &\text{some $\doublefml[2]$} && \text{if $\statePri(\trace(\word)) = \prione$} \\
    &\text{every $\doublefml[2]$} && \text{if $\statePri(\trace(\word)) = \prizero$}
\end{aligned}\right\}$ "st" $\trace(\word) \leadstoUNFO \doublefml[2]$,
the \kl{positive Boolean formula} resulting from
replacing in $\doublefml[2]$
each "child" "state" of $\word$ ("ie", each element $\trace(\word a)$) with $\doubletrue$ is "semantically equivalent" to $\doubletrue$.
That is, \kl(LMC){subruns} are \kl(LMC){runs} where the condition for "leaves" (except the "root") is disregarded.
\AP
Similarly, for a class $\fmlsetclass$ of finite "formula" sets, $\fmlsetclass$-\intro*\kl[subclruns]{subruns} are defined in the same way.

To prove \Cref{theorem: completeness closure UNFO},
it suffices to prove the following two lemmas.
Here, $\paramUNFO(\mystate)$ is a well-founded parameter such that if $\paramUNFO(\mystate[1]) \ge \paramUNFO(\mystate[2])$ and $\absmodels \mystate[1]$, then $\absmodels \mystate[2]$, which will be defined (\Cref{label: paramater UNFO}).
\begin{lemma}[Even case, \Cref{section: lemma: parameter UNFO even}]\label{lemma: parameter UNFO even}
    Let $\fml$ be a "UNFO" \kl{formula}.
    For every $\mystate[1] \in \allstates_{\clUNFO(\fml)}$ "st" $\statePri(\mystate[1]) = \prizero$,
    if $\absmodels \mystate[1]$,
    then there is a $\clUNFO(\fml)$-\kl[subclruns]{subrun} $\trace$ of finite "height" starting from $\mystate[1]$ such that 
    $\absmodels \trace(\word)$ for each "leaf" $\word$ of $\trace$.
\end{lemma}
\begin{lemma}[Odd case, \Cref{section: lemma: parameter UNFO odd}]\label{lemma: parameter UNFO odd}
    Let $\fml$ be a "UNFO" \kl{formula}.
    For every $\mystate[1] \in \allstates_{\clUNFO(\fml)}$ "st" $\statePri(\mystate[1]) = \prione$,
    if $\absmodels \mystate[1]$,
    then there is a finite $\clUNFO(\fml)$-\kl[subclruns]{subrun} $\trace$ starting from $\mystate[1]$ such that 
    $\paramUNFO(\mystate[1]) > \paramUNFO(\trace(\word))$ (so, $\absmodels \trace(\word)$)
    for each "leaf" $\word$ of $\trace$.
\end{lemma}
Using them, we can prove \Cref{theorem: completeness closure UNFO} as follows.
\begin{theorem*}[Restatement of \Cref{theorem: completeness closure UNFO}]
    \completenessclosureUNFO
\end{theorem*}
\begin{proof}
    \proofcasethin{($\Longrightarrow$)}
    Let $\trace$ be the $\clUNFO(\fml)$-\kl[clrun]{run},
    obtained from the singleton \kl{tree} with $\trace(\eps) = \mystate$
    by extending each \kl{leaf} with the $\clUNFO(\fml)$-\kl[subclruns]{subrun} of \Cref{lemma: parameter UNFO odd,lemma: parameter UNFO even}, iteratively.
    We then have:
    \begin{claim}
        For all infinite \kl{paths} $a_1 a_2 \dots$ on $\trace$,
        $\Pri_{\trace}(a_1 a_2 \dots) = \prizero$.
    \end{claim}
    \begin{proof}
        By construction (of \Cref{lemma: parameter UNFO odd}),
        for every $n \ge 0$,
        if $\statePri(\trace(a_1 \dots a_n)) = \prione$,
        then we have $\paramUNFO(\trace(a_1 \dots a_n)) > \paramUNFO(\trace(a_1 \dots a_m))$ for some $m > n$.
        Because the ordering is well-founded,
        the "priority" $\statePri(\trace(a_1 \dots a_m))$ is eventually changed to the even "priority" $\prizero$ for some $m > n$.
        Hence, $\statePri_{\trace}(a_1 a_2 \dots) = \prizero$.
    \end{proof}
    Hence, $\trace$ is an \kl{accepting} $\clUNFO(\fml)$-\kl[clrun]{run}, whereby $\vdashUNFOsub_{\clUNFO(\fml)} \mystate$.

    \proofcasethin{($\Longleftarrow$)}
    We have:
    \begin{align*}
        {} \vdashUNFOsub_{\clUNFO(\fml)} \mystate
        &\quad\Longrightarrow\quad {} \nvdashUNFOsub_{\clUNFO(\fml)} \addPri{\mystate} \tag*{(By \Cref{proposition: vdashUNFO con})}\\
        &\quad\Longrightarrow\quad {} \not\absmodels \addPri{\mystate} \tag*{(By the direction ($\Longrightarrow$))}\\
        &\quad\Longrightarrow\quad {} \absmodels \mystate. \tag*{(By \Cref{definition: states})}
    \end{align*}
    Hence, this completes the proof.
\end{proof}
Below, we prove the remaining parts (\Cref{lemma: parameter UNFO odd,lemma: parameter UNFO even}).
\subsubsection{Proof of \Cref{lemma: parameter UNFO even} (even case)}\label{section: lemma: parameter UNFO even}
\begin{proof}
    By \Cref{proposition: sem equiv UNFO}, we have $\absmodels \transitionUNFO(\mystate[1])$.
    For each \kl{state} $\mystate[2]$ with odd "priority" in $\transitionUNFO(\mystate[1])$,
    the rule \eqref{rule: UN} is always applied from $\mystate[1]$, since the "rule" \eqref{rule: UN} only changes the "priority".
    Thus $\mystate[2] \in \allstates_{\clUNFO(\fml)}$.
    For each \kl{state} $\mystate[2]$ with even "priority" in $\transitionUNFO(\mystate[1])$, it is not replaced or replaced with $\doubletrue$.
    Thus, we have $\transitionUNFO(\mystate[1]) \lleqq \transitionUNFO_{\clUNFO(\fml)}(\mystate[1])$,
    and hence $\absmodels \transitionUNFO_{\clUNFO(\fml)}(\mystate[1])$.
    Let $\transitionUNFO_{\clUNFO(\fml)}(\trace[1](\word)) \lequiv \bigdoublevee_{l} \bigdoublewedge_{k} {\mystate[2]_{l, k}}$.
    Then $\absmodels \bigdoublewedge_{k} {\mystate[2]_{l, k}}$ for some $l$.
    Hence, by taking the $\clUNFO(\fml)$-\kl[subclruns]{subrun} (of "height" 1) having the "leaves" of $\mystate[2]_{l, k}$, this completes the proof.
\end{proof}

\subsubsection{Proof of \Cref{lemma: parameter UNFO odd} (odd case)}\label{section: lemma: parameter UNFO odd}

\paragraph{A well-founded parameter}\label{label: paramater UNFO}
\AP
For $\mkstate{\fmlset[1]}{\pri}{\ynmulstruc}{\ainter}{\bag[1]} \in \allstatesUNFO$ with odd "priority" $\pri$,
the ""concretization set"" $\intro*\paramconcUNFO(\mkstate{\fmlset[1]}{\pri}{\ynmulstruc}{\ainter}{\bag[1]})$ is defined by:
\begin{align*}
    \paramconcUNFO(\mkstate{\fmlset[1]}{\pri}{\ynmulstruc}{\ainter}{\bag[1]}) &
    \;\defeq\;
    \set*{\concstate{\fmlset[1]}{\glue \ynmulstruc}{\inter} \;\middle|\; \mbox{$\inter$ is a \kl{concretization} of $\ainter$ on $\bag[1]$, and $\glue\ynmulstruc \modelsass{\inter} \bigdoublewedge \fmlset[1]$}}.
\end{align*}
Note that $\concstate{\fmlset[1]}{\ynstruc}{\inter}$ expresses the $3$-tuple of a "formula" set $\fmlset[1]$, a "structure" $\ynstruc$, and an "interpretation" $\inter$ on $\ynstruc$.
By definition (\Cref{definition: semantics on tree decomposition}), for $\mystate[1] \in \allstatesUNFO$ with odd "priority", we have:
\[\absmodels \mystate[1] \quad\Longleftrightarrow\quad \paramconcUNFO(\mystate[1]) \neq \emptyset.\]

For two "bags" $\bag[1], \bag[1]' \in \fdom(\ynmulstruc)$,
we write $\treedistance(\bag[1], \bag[1]')$ for the distance between $\bag[1]$ and $\bag[1]'$ on the \kl{tree} $\ynmulstruc$.
\AP
For a "bag" $\bag \in \fdom(\ynmulstruc)$ and a finite set $C \subseteq \univ{\glue \ynmulstruc}$ of \kl(LMC){fixed names},
the ""distance"" $\intro*\treedistance(\bag, C)$ is defined as follows:
\[\treedistance(\bag[1], C) ~\defeq~
\max_{\vertex \in C} \min_{\substack{\bag[1]' \in \fdom(\ynmulstruc) \text{ "st"}\\
\vertex \in \conc{\ynmulstruc}{\bag[1]'}(\univ{\ynmulstruc(\bag[1]')})
}} \treedistance(\bag[1], \bag[1]').\]
(Here, $\conc{\ynmulstruc}{\bag[1]'}(\univ{\ynmulstruc(\bag[1]')})$ expresses the set of \kl(LMC){fixed names} in $\glue \ynmulstruc$
indicated by \kl(LMC)[fixed names]{those} of $\univ{\ynmulstruc(\bag[1]')}$.)

\AP For a "state" $\mystate[1] = \mkstate{\fmlset[1]}{\pri}{\ynmulstruc}{\ainter}{\bag[1]} \in \allstatesUNFO$,
the ""parameter"" $\intro*\paramUNFO(\mystate[1]) \in \nat^2 \dcup \set{\infty}$ is defined as follows:
\begin{align*}
    \paramUNFO(\mystate[1])
    &~\defeq~ 
    \begin{cases}
    {\displaystyle \min_{\fmlset[2]^{\glue \ynmulstruc}_{\inter} \in \paramconcUNFO(\mystate[1])}
    \tup{\fmllen{\fmlset[2]}, \treedistance(\bag, \inter(\FV(\fmlset[1])))}
    } & \mbox{if $\pri = \prione$ and $\absmodels \mystate[1]$,}\\
    \tup{0,0} & \mbox{if $\pri = \prizero$ and $\absmodels \mystate[1]$,}\\
    \infty & \mbox{otherwise,}
    \end{cases} \span \span \span
\end{align*}
On this "parameter",
we use the lexicographical ordering on $\nat^2$ extended with the maximum element $\infty$,
which is clearly a well-founded ordering.
By definition, we have:
${} \absmodels \mystate[1]$ "iff"
$\paramUNFO(\mystate[1]) \neq \infty$.

\paragraph{Proof of \Cref{lemma: parameter UNFO odd}}
Based on the evaluation strategy presented in \Cref{section: closure property UNFO},
we give a $\clUNFO(\fml)$-\kl[subclrun](LMC){subrun} that decreases the "parameter" $\paramUNFO$.
\begin{proof}
    By induction on $\paramUNFO(\mystate[1])$.
    Let $\mystate[1] = \mkstate{\fmlset[1]}{\prione}{\ynmulstruc}{\ainter}{\bag}$.
    We distinguish the following cases.

    \proofcasethin{Step 1}
    \labeltext{lemma: parameter UNFO concrete}{}{1}
    If $\fmlset[1]$ contains a \kl{free variable} $x$ with $\card\ainter(x) \ge 2$,
    then we apply the "rule" \eqref{rule: concrete}.
    In this application, while the "parameter" may not strictly decrease,
    it suffices to assume the following in the subsequent cases, without loss of generality:
    \begin{enumerate}
        \item[a)] \labeltext{cond: evaluation strategy UNFO proof 1}{}{a} $\card \ainter(x) = 1$ for each $x \in \FV(\fmlset[1])$.
    \end{enumerate}

    \proofcasethin{Step 1'}
    \labeltext{lemma: parameter UNFO or and exists}{}{1'}
    If $\fmlset[1]$ contains a \kl{formula} of the form $\fml[2] \lor \fml[3]$, $\fml[2] \land \fml[3]$, or $\exists x \fml[2]$:
    In this case, by applying the corresponding "rule" \eqref{rule: or}, \eqref{rule: and}, or \eqref{rule: exists} with IH, this case is shown.

    Here, the transformed "state" is indeed in $\allstates_{\clUNFO(\fml)}$ ("ie", the "formula" set is in $\clUNFO(\fml)$).
    For instance, we have that 
    $(\fmlset[2], \fml[2]_1 \lor \fml[2]_2) \in \clUNFO(\fml)$ implies $(\fmlset[2], \fml[2]_i) \in \clUNFO(\fml)$.
    This is shown by straightforwardly transforming the derivation tree of $\clUNFO(\fml)$, as follows:
    \begin{center}
        \scalebox{.9}{\begin{prooftree}[separation = .5em]
            \hypo{\dots}
            \hypo{\mathstrut}
            \infer1{(\fml[2]_1 \lor \fml[2]_2) \in \clUNFO(\fml[2]_1 \lor \fml[2]_2)}
            \hypo{\dots}
            \infer[double]3{(\fmlset[2], \fml[2]_1 \lor \fml[2]_2) \in \clUNFO(\fml)}
        \end{prooftree}}
        \hspace{1em}to\hspace{1em}
        \scalebox{.9}{\begin{prooftree}[separation = .5em]
            \hypo{\dots}
            \hypo{\mathstrut}
            \infer1{(\fml[2]_i) \in \clUNFO(\fml[2]_i)}
            \infer1{(\fml[2]_i) \in \clUNFO(\fml[2]_1 \lor \fml[2]_2)}
            \hypo{\dots}
            \infer[double]3{(\fmlset[2], \fml[2]_i) \in \clUNFO(\fml)}
        \end{prooftree}}.\footnote{We use double lines to indicate that multiple rules are applied.}
    \end{center}
    Here, by using the monotonicity ("ie", $(\fmlset[1], \fmlset[2]) \in \clUNFO(\fml)$ implies $\fmlset[2] \in \clUNFO(\fml)$),
    we can assume that $\fml[2]_1 \lor \fml[2]_2$ only occurs in the path from the \kl{root} to $(\fml[2]_1 \lor \fml[2]_2) \in \clUNFO(\fml[2]_1 \lor \fml[2]_2)$, in the derivation tree in the left-hand side.
    Then by replacing each occurrence $(\bl, \fml[2]_1 \lor \fml[2]_2)$ with $(\bl, \fml[2]_i)$,
    we can obtain the derivation tree in the right-hand side.
    (Similarly for the other cases.)

    This step allows us to assume the following in the subsequent cases.
    \begin{enumerate}
        \item[b)] \labeltext{cond: evaluation strategy UNFO proof 2}{}{b} $\fmlset[1]$ is of the form $(\fml[2]_1, \dots, \fml[2]_n)$
        where each $\fml[2]_i$ is one of the following forms:
        an "atom" $\afml$ or $\lnot \fml[3]$.
    \end{enumerate}

    Moreover, we can assume the following in the subsequent cases.
    \begin{enumerate}
        \item[c)] \labeltext{cond: evaluation strategy UNFO proof 3}{}{c} for each $\fml[2]_i$,
        for some $\dir \in \range{\back 2}{2}$,
        $\ainter(\FV(\fml[2]_i)) \subseteq \ND{\ynmulstruc}{\bag}{\dir}$.
    \end{enumerate}
    If $\fml[2]_i$ is of the form $\afml$,
    then \nameref{cond: evaluation strategy UNFO proof 3}) holds by $\absmodels \mkstate{(\fml[2]_i)}{\prione}{\ynmulstruc}{\ainter}{\bag}$.
    If $\fml[2]_i$ is of the form $\lnot \fml[3]$,
    then \nameref{cond: evaluation strategy UNFO proof 3}) holds by the condition of unary negation: $\card\FV(\fml[3]) \le 1$.

    \proofcasethin{Step 2}\labeltext{lemma: parameter UNFO split}{}{2}
    Else if the following does not hold:
    \begin{enumerate}
        \item[c')] \labeltext{cond: evaluation strategy UNFO proof 3'}{}{c'} for each $\fml[2]_i$, for a $\dir \in \set{\back 2, \back 1, 1, 2}$, $\dir \in \ainter(\FV(\fml[2]_i)) \subseteq \ND{\ynmulstruc}{\bag}{\dir}$.
    \end{enumerate}
    Let $\fml[2]_i$ be a "formula" "st" $\ainter(\FV(\fml[2]_i)) \subseteq \ND{\ynmulstruc}{\bag}{0}$.
    We then distinguish the following cases.

    \subproofcasethin{Case $\card \fmlset[1] \ge 2$}
    Then by applying \eqref{rule: split}, we can eliminate $\fml[2]_i$ from $\fmlset[1]$.
    Then by applying IH, this case is shown.
    
    \subproofcasethin{Case $\card \fmlset[1] = 1$}
    By $\absmodels \mkstate{\set{\fml[2]_i}}{\prione}{\ynmulstruc}{\ainter}{\bag}$ and $\ainter(\FV(\fml[2]_i)) \subseteq \ND{\ynmulstruc}{\bag}{0}$,
    we have that $\ynmulstruc(\bag) \modelsass{\set{x \mapsto \vertex\mid x \mapsto \set{\vertex} \in \ainter}} \fml[2]_i$.
    Then by applying \eqref{rule: a1} or \eqref{rule: UN}, this case is shown.
    Here, in \eqref{rule: a1}, we may also apply \eqref{rule: mov} multiple times; "cf" Footnote~\ref{footnote: a1 and mov}.

    This step allows us to assume \nameref{cond: evaluation strategy UNFO proof 3'}) in the subsequent cases.
    
    \proofcasethin{Step 3}\labeltext{evaluation strategy UNFO fmlset split}{}{3}
    Else if the following condition does not hold:
    \begin{enumerate}
        \item[c'')] \labeltext{cond: evaluation strategy UNFO proof 3''}{}{c''} for some $\dir \in \set{\back 2, \back 1, 1, 2}$, $\dir \in \ainter(\FV(\fmlset[1])) \subseteq \ND{\ynmulstruc}{\bag}{\dir}$.
    \end{enumerate}
    First, if $\card \fmlset[1] = 0$, then by applying the "rule" \eqref{rule: emp}, this case is shown.
    Otherwise ("ie", $\card \fmlset[1] \ge 2$ by \nameref{cond: evaluation strategy UNFO proof 3'})), by applying the "rule" \eqref{rule: split} with IH, this case is shown.

    This step allows us to assume \nameref{cond: evaluation strategy UNFO proof 3''}) in the subsequent cases.

    \proofcasethin{Step 4}\labeltext{evaluation strategy UNFO mov}{}{4}
    Otherwise, by applying the rule \eqref{rule: mov} with IH, this case is shown.
    Note that the "parameter" strictly decreases, because the "distance" strictly decreases by $d \in \ainter(\FV(\fmlset[1]))$.

    Hence, this completes the proof.
\end{proof}

\subsection{Proof of {\Cref{proposition: closure size UNFO}}}\label{section: proposition: closure size UNFO}
\AP
We write $\fmlset[1] \intro*\renameeq \fmlset[2]$ if $\fmlset[1]$ and $\fmlset[2]$ are the same \kl{formula} set up to renaming of \kl{free variables}.
We denote by $\clUNFO(\fml)/{\renameeq}$ the set of \kl{equivalence classes} of $\clUNFO(\fml)$ w.r.t.\ $\renameeq$. 
\begin{proof}
    By easy induction on the \kl(formula){size} $\fmllen{\fml}$.

    \proofcasethin{Case $\fml[1] = \afml$}
    By $\card \clUNFO(\afml)/{\renameeq} = 2$.

    \proofcasethin{Case $\fml[1] = \fml[2] \lor \fml[3]$}
    We have
    \begin{align*}
        \card \clUNFO(\fml[1])/{\renameeq}
        \le 1 + \card \clUNFO(\fml[2])/{\renameeq} + \card \clUNFO(\fml[3])/{\renameeq}
        &\le 1 + (2 \fmllen{\fml[2]})^{2\fmllen{\fml[2]}} + (2 \fmllen{\fml[3]})^{2\fmllen{\fml[3]}} \tag{IH}\\
        &\le (2 (1 + \fmllen{\fml[2]} + \fmllen{\fml[3]}))^{2\max(\fmllen{\fml[2]}, \fmllen{\fml[3]})}\\
        &\le (2 \fmllen{\fml[1]})^{2\fmllen{\fml}}.
    \end{align*}

    \proofcasethin{Case $\fml[1] = \fml[2] \land \fml[3]$}
    We have
    \begin{align*}
        \card \clUNFO(\fml[1])/{\renameeq}
        \le 1 + \card \clUNFO(\fml[2])/{\renameeq} \times \card \clUNFO(\fml[3])/{\renameeq}
        &\le 1 + (2 \fmllen{\fml[2]})^{2\fmllen{\fml[2]}} \times (2 \fmllen{\fml[3]})^{2\fmllen{\fml[3]}} \tag{IH}\\
        &\le (2 (1 + \fmllen{\fml[2]} + \fmllen{\fml[3]}))^{2(\fmllen{\fml[2]} + \fmllen{\fml[3]})}\\
        &\le (2 \fmllen{\fml})^{2\fmllen{\fml}}.
    \end{align*}
    (The first inequality is thanks to the condition ``$\FV(\fmlset[1]) \cap \FV(\fmlset[2]) \subseteq \FV(\fml[1]) \cap \FV(\fml[2])$'' in \Cref{definition: closure UNFO}.)

    \proofcasethin{Case $\fml[1] = \exists x \fml[2]$}
    We have
    \begin{align*}
        \card \clUNFO(\fml[1])/{\renameeq}
        &\le 1 + \card \clUNFO(\fml[2][z/x])/{\renameeq}  \tag{By $\clUNFO(\fml[2][z/x])/{\renameeq} = \clUNFO(\fml[2][z'/x])/{\renameeq}$}\\
        &\le 1 + (2 \fmllen{\fml[2]})^{2\fmllen{\fml[2]}} \tag{IH}\\
        &\le (2 (1 + \fmllen{\fml[2]}))^{2\fmllen{\fml[2]}}
        \le (2 \fmllen{\fml[1]})^{2\fmllen{\fml[1]}}.
    \end{align*}

    \proofcasethin{Case $\fml[1] = \afml \land \lnot \fml[2]$}
    We have
    \begin{align*}
        \card \clUNFO(\fml[1])/{\renameeq}
        \le 1 + \card \clUNFO(\fml[2])/{\renameeq}
        &\le 1 + (2 \fmllen{\fml[2]})^{2 \fmllen{\fml[2]}} \tag{IH}\\
        &\le (2 (1 + \fmllen{\fml[2]}))^{2 \fmllen{\fml[2]}}
        = (2 \fmllen{\fml[1]})^{2 \fmllen{\fml[1]}}. \qedhere
    \end{align*}
\end{proof}

\subsection{Proof of {\Cref{corollary: 2-EXPTIME UNFO}}: 2APTA construction for UNFO}\label{section: 2APTA construction UNFO}

In this section, from the "local checker" for "UNFO", we construct \kl{2APTAs}.

Let \[\AD^{(k)} \defeq \set{\fixedvertex_1, \dots, \fixedvertex_k} \dcup \set{\back 2, \back 1, 1, 2}.\]
As in \Cref{proposition: standard semantics and semantics on tree decompositions},
we use $\elem$ to denote an element in $\AD^{(k)}$,
we use $\fixedvertex_1, \dots, \fixedvertex_k$ for indicating "nodes", and
we use each $\dir \in \set{\back 2, \back 1, 1, 2}$ for indicating a \kl{direction} on \kl{tree decompositions}.

\AP
Let \[\intro*\allsymbols \defeq \set{\ADBsymbol} \dcup \set{\NDsymbol_{\dir}, \MDsymbol_{\dir} \mid \dir \in \set{\back 2, \back 1, 1, 2}}\]
be the set of symbols.
Intuitively, we use them as follows, respectively:
\begin{itemize}
    \item the symbol $\AP\intro*\ADBsymbol$ indicates the set $\ADB{\ynmulstruc}{\bag}$ (recall \Cref{section: abstract semantics on tree decompositions}),
    \item the symbol $\AP\intro*\NDsymbol_{\dir}$ indicates the set $\ND{\ynmulstruc}{\bag}{\dir}$, and
    \item the symbol $\AP\intro*\MDsymbol_{\dir}$ indicates the set $\MD{\ynmulstruc}{\bag}{\dir}$.
\end{itemize}

\begin{definition}\label{definition: 2APTA UNFO}
    For $k \in \pnat$ and a "UNFO" \kl{sentence} $\fml$,
    the \kl{2APTA} $\automaton[1]_{k}^{\fml} = \tup{Q, \transitionUNFO, \Pri, q_0}$ over $\allatstruc_{k}$ is defined as follows:
    \begin{itemize}
        \item $q^{\pri} \in Q$ consists of the following:

        \begin{itemize}
            \item ${\fmlset[1]}^{\pri}_{\ainter}$ modulo $\renameeq$ where
            \begin{itemize}
                \item $\fmlset[1] \in \clUNFO(\fml)$,
                \item $\pri \in \set{\prizero, \prione}$, and
                \item $\ainter \colon \FV(\fmlset[1]) \to \pset{\AD^{(k)}}$,
            \end{itemize}
            (We write
            ${\fmlset[1]}^{\pri}_{\ainter} \renameeq {\fmlset[2]}^{\pri'}_{\ainter'}$ if they coincide up to renaming of \kl{variables} in $\fdom(\ainter)$.)

            \item ${\fmlset[1]}^{\pri}_{\ainter, \chi}$ modulo $\renameeq$ where
                \begin{itemize}
                    \item $\fmlset[1] \in \clUNFO(\fml)$,
                    \item $\pri \in \set{\prizero, \prione}$, 
                    \item $\ainter \colon \FV(\fmlset[1]) \to \pset{\AD^{(k)}}$, and
                    \item $\chi \colon \allsymbols \to \pset{\AD^{(k)}}$.
                \end{itemize}
            (We write
            ${\fmlset[1]}^{\pri}_{\ainter, \chi} \renameeq {\fmlset[2]}^{\pri'}_{\ainter', \chi'}$ if they coincide up to renaming of \kl{variables} in $\fdom(\ainter)$.)
            \item $(\bullet = \elemset)^{\prione}$ where
            \begin{itemize}
                \item $\bullet \in \allsymbols$, and
                \item $\elemset \subseteq \AD^{(k)}$,
            \end{itemize}
 
            \item $\elem^{\pri}$ where
            \begin{itemize}
                \item $\elem \in \AD^{(k)}$, and
                \item $\pri \in \set{\prizero, \prione}$.
            \end{itemize}
        \end{itemize}
        
    \item The relation $(\leadsto) \subseteq (Q \times \allatstruc_{k}) \times \PBallfml(Q \times \range{\back 2}{2})$
        is defined as the minimal binary relation closed under the rules in \Cref{figure: rules UNFO'}.
        Here, 
        we use the Iverson bracket notation \cite{knuthTwoNotesNotation1992}, given by
        \[\intro*\iverson{P} \defeq \begin{cases}
            1 & \mbox{if $P$ holds,}\\
            0 & \mbox{otherwise.}
        \end{cases}\]
        We then define $\transitionUNFO(q^{\pri}, \ynstruc) \defeq \bigdoublevee^{\pri} \set{ \doublefml[2] \mid \tup{q^{\pri}, \ynstruc} \leadsto \doublefml[2]}$,
        \item $\Pri(q^{\pri}) = \pri$, and
        \item $q_0 = (\fml)^{\prione}_{\emptyset}$. \lipicsEnd
    \end{itemize}
\end{definition}

\begin{figure*}[t]
    \centering
\begin{tcolorbox}[standard jigsaw, opacityback = 0.8, colframe=black!80, boxrule=.3mm, left=.2mm, right=.2mm]
    \vspace{-4ex}
    \begin{align*}
    \tup{()^{\pri}_{\ainter,\chi}, \ynstruc} &\ \leadstoUNFO\ \doubletrue^{\pri}, \tag{emp}\\
    \tup{(\afml)^{\pri}_{\ainter,\chi}, \ynstruc} &\ \leadstoUNFO\ 
    ()^{\pri}_{\ainter,\chi} \text{ if 
    $\begin{cases}
    \text{for each $x \in \FV(\afml)$, there is}\\
    \text{\quad some $\vertex \in \univ{\ynstruc}$ \kl{st} $\ainter(x) = \set{\vertex}$,}\\
    \text{and $\ynstruc \modelsass{\set{x \mapsto \vertex \mid x \mapsto \set{\vertex} \in \ainter}} \afml$,}    
    \end{cases}$} \tag{$\afml$}\\
    \tup{(\fmlset[1], \fml[1] \lor \fml[2])^{\pri}_{\ainter,\chi}, \ynstruc} &\ \leadstoUNFO\  \tup{(\fmlset[1], \fml[1])^{\pri}_{\ainter,\chi}, 0} \doublevee^{\pri} \tup{(\fmlset[1], \fml[2])^{\pri}_{\ainter,\chi}, 0}, \tag*{($\lor$)}\\
    \tup{(\fmlset[1], \fml[1] \land \fml[2])^{\pri}_{\ainter,\chi}, \ynstruc} &\ \leadstoUNFO\  \tup{(\fmlset[1], \fml[1], \fml[2])^{\pri}_{\ainter,\chi}, 0}, \tag{$\land$} \\
    \tup{(\fmlset[1], \exists x \fml)^{\pri}_{\ainter,\chi}, \ynstruc} &\ \leadstoUNFO\
    \tup{(\fmlset[1], \fml[1]\fmlsubst{x}{z})^{\pri}_{\ainter[\chi(\ADBsymbol)/z], \chi}, 0} \quad\text{ if $z$ is \kl{fresh}}, \tag{$\exists$} \\
    \tup{(\fmlset[1])^{\pri}_{\ainter, \chi}, \ynstruc} &\ \leadstoUNFO\  \bigdoublevee_{\elem \in \ainter(x)}^{\pri} \tup{(\fmlset[1])^{\pri}_{\ainter\aintersubst{x}{\elem},\chi}, 0} \quad\text{ if $x \in \FV(\fmlset[1])$,} \tag{conc}\\
    \tup{(\lnot \fml)^{\pri}_{\ainter,\chi}, \ynstruc} &\ \leadstoUNFO\ 
    \tup{(\fml)^{(1 - \pri)}_{\ainter,\chi}, 0} \quad\text{ if 
    $\begin{cases}
    \text{$\ainter(x) \subseteq \univ{\ynstruc}$
    and 
    $\card {\ainter(x)} = 1$
    }\\
    \text{for each $x \in \FV(\fml)$,}    
    \end{cases}$} \quad\tag{$\lnot$}\\
    \tup{(\fmlset[1], \fmlset[2])^{\pri}_{\ainter,\chi}, \ynstruc} &\ \leadstoUNFO\  \tup{(\fmlset[1])^{\pri}_{\ainter,\chi},0} \doublewedge^{\pri} \tup{\fmlset[2]^{\pri}_{\ainter,\chi},0} \text{ if 
    $\begin{cases}
    \text{$\ainter(x) \subseteq \univ{\ynstruc}$
    and 
    $\card {\ainter(x)} = 1$
    }\\
    \text{for each $x \in \FV(\fmlset[1]) \cap \FV(\fmlset[2])$,}    
    \end{cases}$}
    \tag{split}\\
    \tup{{\fmlset[1]}^{\pri}_{\ainter,\chi}, \ynstruc} &\ \leadstoUNFO\
    \tup{{\fmlset[1]}^{\pri}_{\ainter_{\dir}}, \dir} \quad\text{ if $\ainter(\FV(\fmlset[1])) \subseteq \chi(\NDsymbol_{\dir})$,}\\
    & \hspace{0em}\text{where $\ainter_{\dir}(x) \defeq \begin{cases}
        (\ainter(x) \setminus \set{\dir}) \cup \chi(\MDsymbol_{\dir}) & \text{if $\dir \in \ainter(x)$}\\
        \ainter(x) & \text{otherwise},
    \end{cases}$}
    \tag{move} \label{rule: UNFO' mov}\\
    \tikz \draw[dotted] (0,0)--(5,0); \text{\footnotesize {} ($\downarrow$ rules for encoding ``$\ADB{\ynmulstruc}{\bag}$'', ``$\ND{\ynmulstruc}{\bag}{\dir}$'', ``$\MD{\ynmulstruc}{\bag}{\dir}$'') } \tikz \draw[dotted] (0,0)--(1,0); \span \\
    \tup{{\fmlset[1]}^{\pri}_{\ainter}, \ynstruc} &\ \leadstoUNFO\ 
    \bigdoublevee_{\chi \colon \allsymbols \to \pset{\AD^{(k)}}}
    (\bigdoublewedge_{\bullet \in \allsymbols} 
    \tup{(\bullet = \chi(\bullet))^{\prione}, 0} \doublewedge \tup{{\fmlset[1]}^{\pri}_{\ainter, \chi}, 0}),
    \tag{move'} \label{rule: UNFO' mov'}\\[1.ex]
    \tup{(\ADBsymbol = \elemset)^{\prione}, \ynstruc} & \ \leadstoUNFO\ 
    \bigdoublewedge_{\elem \in \AD^{(k)}} \tup{\elem^{\iverson{\elem \in \elemset}}, 0},\\
    \tup{(\NDsymbol_{\dir} = \elemset)^{\prione}, \ynstruc} & \ \leadstoUNFO\ 
    \bigdoublewedge_{\elem \in \univ{\ynstruc}} \tup{\elem^{\iverson{\elem \in \elemset}}, \dir} \doublewedge \tup{\dir^{[\dir \in \elemset]}, 0}
    \text{ if $\elemset \subseteq \univ{\ynstruc} \dcup \set{\dir}$,}\\
    \tup{(\MDsymbol_{\dir} = \elemset)^{\prione}, \ynstruc} & \ \leadstoUNFO\ 
    \bigdoublewedge_{\elem \in \AD^{(k)} \setminus (\univ{\ynstruc} \dcup \set{\back \dir})} \tup{\elem^{\iverson{\elem \in \elemset}}, \dir} \text{ if $\elemset \subseteq \AD^{(k)} \setminus (\univ{\ynstruc} \dcup \set{\back \dir})$,}\\
    \tup{\vertex_i^{\pri}, \ynstruc} & \ \leadstoUNFO\ \doubletrue^{\pri} \text{ if $\vertex_i \in \univ{\ynstruc}$,}\\
    \tup{\vertex_i^{\pri}, \ynstruc} & \ \leadstoUNFO\ \doublefalse^{\pri} \text{ if $\vertex_i \not\in \univ{\ynstruc}$,}\\
    \tup{\dir^{\pri}, \ynstruc} & \ \leadstoUNFO\ 
    \bigdoublevee_{\elem \in \AD^{(k)} \setminus \set{\back \dir}}^{\pri} \tup{\elem^{\pri}, \dir} \text{ for $\dir \in \set{\back 2, \back 1, 1, 2}$}.
    \end{align*}
\end{tcolorbox}     \caption{\kl{2APTA} transition rules from the "local checker" for "UNFO" (\Cref{figure: rules UNFO}).
    Here, we just write $\fmlset[2]^{\pri}_{\ainter}$ for $\fmlset[2]^{\pri}_{\ainter\restriction \FV(\fmlset[2])}$, for short.}
    \label{figure: rules UNFO'}
\end{figure*}

Each rule above the dotted line has almost the same shape as the corresponding rule in \Cref{figure: rules UNFO}.
Here, the map $\chi \colon \allsymbols \to \pset{\AD^{(k)}}$ is introduced for expressing the unary predicates ``$\ADB{\ynmulstruc}{\bag}$'', ``$\ND{\ynmulstruc}{\bag}{\dir}$'', ``$\MD{\ynmulstruc}{\bag}{\dir}$'' in the \kl{2APTA} construction.
After we move "nodes" by applying the rule \eqref{rule: UNFO' mov}, this map is reset.
Then, we can only apply the rule \eqref{rule: UNFO' mov'}.
Using the rule \eqref{rule: UNFO' mov'}, we newly set the map $\chi$ so that $\chi$ expresses these unary predicates correctly.

Each rule below the dotted line is to define the map $\chi$ (in the rule \eqref{rule: UNFO' mov'}).
The states $(\bullet = \elemset)$ are used to assert that $\elemset = \bullet^{\ynmulstruc}_{\bag}$.
The states $\elem^{\pri}$ are used to assert that $\conc{\ynmulstruc}{\bag}(\elem) \neq \emptyset$ when $\pri = \prione$
and that $\conc{\ynmulstruc}{\bag}(\elem) = \emptyset$ when $\pri = \prizero$.
The rules for $\fixedvertex_i$ are trivially given.
In the rule for states $\dir \in \set{\back 2, \back 1, 1, 2}$,
we search whether there exists some "node" in the direction $\dir$ from the current position, by moving "nodes" nondeterministically.
The rules for states $(\ADBsymbol = \elemset)$, $(\NDsymbol_{\dir} = \elemset)$, and $(\MDsymbol_{\dir} = \elemset)$,
are induced from the definition of $\ADB{\ynmulstruc}{\bag}$, $\ND{\ynmulstruc}{\bag}{\dir}$, and $\MD{\ynmulstruc}{\bag}{\dir}$ (given in \Cref{section: abstract semantics on tree decompositions}), respectively.

From this, the \kl{2APTA} $\automaton[1]_{k}^{\fml}$ satisfies the following.
\begin{proposition}\label{proposition: 2APTA UNFO}
    Let $k \in \pnat$ and $\fml$ be a "UNFO" \kl{sentence}.
    For every $\allatstruc_{k}$-labeled \kl{binary} \kl{tree} $\ynmulstruc$,
    we have:
    \[\vdashUNFOsub_{\clUNFO(\fml)} \mkstate{(\fml)}{\prione}{\ynmulstruc}{\emptyset}{\eps} \quad\Longleftrightarrow\quad \ynmulstruc \in \automatonlang(\automaton[1]_{k}^{\fml}).\]
\end{proposition}
\begin{proof}[Proof Sketch]
    Let $\automatonlang(\automaton, \tup{q^{\pri}, \bag})$ be the set of $\allatstruc_{k}$-labeled \kl{binary} \kl{trees} $\ynmulstruc$
    such that 
    there is an \kl{accepting} \kl{run} of $\automaton$ starting from $\tup{q^{\pri}, \bag}$.
    By construction, we have:
    \begin{itemize}
        \item $\ynmulstruc \in \automatonlang(\automaton[1]_{k}^{\fml}, \tup{\elem^{\pri}, \bag})$
        iff 
        $\begin{cases}
        \elem \in \ADB{\ynmulstruc}{\bag} & (\mbox{if $\pri = \prione$}),\\   
        \elem \not\in \ADB{\ynmulstruc}{\bag} & (\mbox{if $\pri = \prizero$}).
        \end{cases}$

        \item $\ynmulstruc \in \automatonlang(\automaton[1]_{k}^{\fml}, \tup{(\ADBsymbol = \elemset)^{\prione}, \bag})$
        iff 
        $\ADB{\ynmulstruc}{\bag} = \elemset$.

        \item $\ynmulstruc \in \automatonlang(\automaton[1]_{k}^{\fml}, \tup{(\NDsymbol_{\dir} = \elemset)^{\prione}, \bag})$
        iff 
        $\ND{\ynmulstruc}{\bag}{\dir} = \elemset$.

        \item $\ynmulstruc \in \automatonlang(\automaton[1]_{k}^{\fml}, \tup{(\MDsymbol_{\dir} = \elemset)^{\prione}, \bag})$
        iff 
        $\MD{\ynmulstruc}{\bag}{\dir} = \elemset$.
    \end{itemize}
    From them with that the rules for ${\fmlset[1]}^{\pri}_{\ainter}$ in \Cref{figure: rules UNFO'} are the same as those in \Cref{figure: rules UNFO},
    we can convert an \kl{accepting} \kl{run} of $\vdashUNFOsub_{\clUNFO(\fml)} \mkstate{(\fml)}{\prione}{\ynmulstruc}{\emptyset}{\eps}$
    into an \kl{accepting} \kl{run} witnessing $\ynmulstruc \in \automatonlang(\automaton[1]_{k}^{\fml})$, and vice versa.
\end{proof}

\begin{proof}[Proof of \Cref{corollary: 2-EXPTIME UNFO}]
    For all "UNFO" \kl{sentences} $\fml$,
    we have:
    \begin{align*}
    &\text{$\fml$ is \kl{satisfiable}}\\
    &~\Longleftrightarrow~
    \text{$\fml$ is \kl{satisfiable} in a \kl{structure} of \kl{treewidth} at most $\fmllen{\fml}-1$} \tag{By \Cref{proposition: tw}}\\
    &~\Longleftrightarrow~
    \text{$\glue \ynmulstruc \modelsnonass \fml$ for some $\allatstruc_{\fmllen{\fml}}$-labeled \kl{binary} \kl{tree} $\ynmulstruc$} \tag{By \Cref{section: abstract semantics on tree decompositions}}\\
    &~\Longleftrightarrow~
    \text{$\absmodels \mkstate{(\fml)}{\prione}{\ynmulstruc}{\emptyset}{\eps}$ for some $\allatstruc_{\fmllen{\fml}}$-labeled \kl{binary} \kl{tree} $\ynmulstruc$} \tag{By \Cref{proposition: standard semantics and semantics on tree decompositions}}\\
    &~\Longleftrightarrow~
    \text{$\vdashUNFOsub_{\clUNFO(\fml)} \mkstate{(\fml)}{\prione}{\ynmulstruc}{\emptyset}{\eps}$  for some $\allatstruc_{\fmllen{\fml}}$-labeled \kl{binary} \kl{tree} $\ynmulstruc$}
    \tag{By \Cref{theorem: completeness closure UNFO}}\\
    &~\Longleftrightarrow~
    \text{$\automatonlang(\automaton[1]^{\fml}_{\fmllen{\fml}}) \neq \emptyset$}.
    \tag{By \Cref{proposition: 2APTA UNFO}}
    \end{align*}
    On the \kl(2APTA){size}, by restricting the set of \kl{relation names} $\Rels$ to those occurring in $\fml$,
    we have:
    \begin{itemize}
        \item the alphabet size $\card \allatstruc_{\fmllen{\fml}}$ is $2^{\mathcal{O}(\fmllen{\fml} \log \fmllen{\fml})}$,

        \item the number of \kl{states} is $2^{\poly(\fmllen{\fml})}$, and
        
        \item (from the two above) the transition size is $2^{\poly(\fmllen{\fml})}$.
    \end{itemize}
    Hence by \Cref{proposition: 2APTA complexity}, the \kl{satisfiability problem} for "UNFO" is in "2EXPTIME".
\end{proof}

\end{scope} %
\end{toappendix}
\begin{scope}\knowledgeimport{UNTC}\knowledgeimport{LMC}

\section{Local Checker for \texorpdfstring{"SAT-""UNTC"}{SAT-UNTC}}\label{section: UNTC}

In this section, we extend the "local checker@@UNFO" from "UNFO" to "UNTC".

\subsection{A \texorpdfstring{"Local Checker"}{Local Checker}}\label{section: local model checker UNTC}
\AP \phantomintro*"local checker"
Recall the \kl{state} set $\allstates_{\mathrm{\kl{UNTC}}}$ (\Cref{definition: semantics on tree decomposition}).
\AP
We now define the relation $(\intro*\leadstoUNTC) \subseteq \allstates_{\mathrm{\kl{UNTC}}} \times \PBallfml(\allstates_{\mathrm{\kl{UNTC}}})$
as the minimal binary relation closed under the ""rules"" in \Cref{figure: rules UNTC}.
\begin{figure}[t]
    \centering
\begin{tcolorbox}[standard jigsaw, opacityback = 0.8, colframe=black!80, boxrule=.3mm, left=.2mm, right=.2mm]
    \vspace{-4ex}
    \begin{align*}
    \dots&\;\leadstoUNTC\; \dots  \tag{all the \kl{rules} for \kl{UNFO} given in \Cref{figure: rules UNFO}}\\
    \mkstate{(\fmlset[1], \TC{\fml[2]}{u v} {x} {y})}{\pri}{\ynmulstruc}{\ainter}{\bag}
    &\;\leadstoUNTC\;
    \mkstate{(\fmlset[1], {x} \EQ {y})}{\pri}{\ynmulstruc}{\ainter}{\bag}
    \doublevee^{\pri}
    \mkstate{(\fmlset[1], \fml[2]\fmlsubst{u v}{{z} {z'}}, \TC{\fml[2]}{u v} {z'} {y})}{\pri}{\ynmulstruc}{\ainter\aintersubst{z}{\set{\vertex}}\aintersubst{z'}{\ADB{\ynmulstruc}{\bag}}}{\bag}\\
    & \hspace{.5em}\mbox{\small if $\ainter({x}) = \set{\vertex} \subseteq \univ{\ynmulstruc(\bag)}$,}  \label{rule: tc-l}\tag{TC-l}\\
    \mkstate{(\fmlset[1], \TC{\fml[2]}{u v} {x} {y})}{\pri}{\ynmulstruc}{\ainter}{\bag}
    &\;\leadstoUNTC\;
    \mkstate{(\fmlset[1], {x} \EQ {y})}{\pri}{\ynmulstruc}{\ainter}{\bag}
    \doublevee^{\pri} \mkstate{(\fmlset[1], \TC{\fml[2]}{u v} {x} {z}, \fml[2]\fmlsubst{u v}{z z'})}{\pri}{\ynmulstruc}{\ainter\aintersubst{z}{\ADB{\ynmulstruc}{\bag}}\aintersubst{z'}{\set{\vertex}}}{\bag} \\
    & \hspace{.5em}\mbox{\small if $\ainter({y}) = \set{\vertex} \subseteq \univ{\ynmulstruc(\bag)}$,} \label{rule: tc-r}\tag{TC-r}\\
    \mkstate{(\fmlset[1], \TC{\fml[2]}{u v} {x} {y})}{\pri}{\ynmulstruc}{\ainter}{\bag}
    &\;\leadstoUNTC\;
    \smashoperator{\sbigdoublevee_{\hspace{2em}\dir' \in \range{\back 2}{2} \setminus \set{\dir}}^{\pri}}
    \mkstate{(\fmlset[1], \TC{\fml[2]}{u v} {x} {z}, \fml[2]\fmlsubst{u v}{z z'}, \TC{\fml[2]}{u v} {z}' {y})}{\pri}{\ynmulstruc}{\ainter\aintersubst{z}{\ND{\ynmulstruc}{\bag}{\dir}}\aintersubst{z'}{\ND{\ynmulstruc}{\bag}{\dir'}}}{\bag}\\
    & \hspace{.5em}\mbox{\small if $\ainter({x}) = \set{\dir} \subseteq \set{\back 2, \back 1, 1, 2}$,
    $\ainter({y}) = \set{\elem}\not\subseteq \ND{\ynmulstruc}{\bag}{\dir}$,} \label{rule: tc-split-l}\tag{TC-sp-l}\\
    \mkstate{(\fmlset[1], \TC{\fml[2]}{u v} {x} {y})}{\pri}{\ynmulstruc}{\ainter}{\bag}
    &\;\leadstoUNTC\;
    \smashoperator{\sbigdoublevee_{\hspace{2em}\dir' \in \range{\back 2}{2} \setminus \set{\dir}}^{\pri}}
    \mkstate{(\fmlset[1], \TC{\fml[2]}{u v} {x} {z}, \fml[2]\fmlsubst{u v}{z z'}, \TC{\fml[2]}{u v} {z}' {y})}{\pri}{\ynmulstruc}{\ainter\aintersubst{z}{\ND{\ynmulstruc}{\bag}{\dir'}}\aintersubst{z'}{\ND{\ynmulstruc}{\bag}{\dir}}}{\bag}\\
    & \hspace{.5em}\mbox{\small if $\ainter({y}) = \set{\dir} \subseteq \set{\back 2, \back 1, 1, 2}$, 
    $\ainter({x}) = \set{\elem} \not\subseteq \ND{\ynmulstruc}{\bag}{\dir}$.} \label{rule: tc-split-r}\tag{TC-sp-r}
    \end{align*}
\end{tcolorbox}
     \caption{\reintro*\kl{Rules} of the \kl(UNTC){LC} for \kl{UNTC}. Here, $z$ and $z'$ are used as \kl{fresh} \kl{variables}.}
    \label{figure: rules UNTC}
\end{figure}
We lift this relation to $(\leadstoUNTC) \subseteq \PBallfml(\allstates_{\mathrm{\kl{UNTC}}}) \times \PBallfml(\allstates_{\mathrm{\kl{UNTC}}})$ following \Cref{section: local model checker UNFO}.
All notions of \kl(LMC){runs}, \kl(LMC){accepting} \kl(LMC){runs},
$\intro*\vdashUNTC \mkstate{\fmlset[1]}{\pri}{\ynmulstruc}{\ainter}{\bag} \phantomintro\nvdashUNTC$,
and $\intro*\vdashUNTCsub_{\fmlsetclass} \mkstate{\fmlset[1]}{\pri}{\ynmulstruc}{\ainter}{\bag} \phantomintro\nvdashUNTCsub$
are defined exactly as before, except that there are some extra "rules" for "UNTC", which we discuss next.
The four "TC" "rules" above are introduced for obtaining a closure property (\Cref{example: completeness UNTC,section: closure property UNTC}).

\proofcasethin{"Rules" \eqref{rule: tc-l} and \eqref{rule: tc-r}}
These "rules" are obtained by left-unfolding and right-unfolding a "TC-formula" in a standard way.

\proofcasethin{"Rule" \eqref{rule: tc-split-l}}
This "rule" is also obtained by unfolding. 
We confirm the property ``$\absmodels \mystate[1]$ "iff" $\absmodels \doublefml[2]$'' (where $\mystate[1] \leadstoUNTC \doublefml[2]$)
for odd $\pri$ (the even case is similar).
\proofcasethin{($\Leftarrow$)} Trivial.
\proofcasethin{($\Rightarrow$)}
There are some $n \ge 1$ and $\dir_1, \dots, \dir_{n-1} \in \range{\back 2}{2}$ "st" $\absmodels \mkstate{(\fmlset[1], \bigwedge_{i = 1}^{n}\fml[2]\fmlsubst{u v}{z_{i-1} z_{i}})}{\pri}{\ynmulstruc}{\ainter\aintersubst{z_1 \dots z_{n-1}}{\ND{\ynmulstruc}{\bag}{\dir_{1}}, \dots, \ND{\ynmulstruc}{\bag}{\dir_{n-1}}}}{\bag}$, where $z_0 = x$, $z_n = y$, and $z_1, \dots, z_{n-1}$ are "fresh".
Let $\dir_0 = \dir$ and let $\dir_n$ be such that $\elem \in \ND{\ynmulstruc}{\bag}{\dir_n}$.
By $\dir_0 \neq \dir_n$, there is some $m \in \1{n}$ such that $\dir_{m-1} = \dir$ and $\dir_m \neq \dir$.
By letting $\dir' = \dir_m$, we can show it.

\proofcasethin{"Rule" \eqref{rule: tc-split-r}}
Similar to the above.

The "LC@@UNTC" is sound and complete "wrt" the semantics on \kl{tree decompositions} (\Cref{theorem: completeness closure UNTC}).
Below, we give a toy example.
\begin{example}\label{example: completeness UNTC}
    Let $\ynmulstruc$ be the \kl(LMC){tree} in which
    $\ynmulstruc(1)$, $\ynmulstruc(\eps)$, and $\ynmulstruc(2)$ are given as follows, and $\ynmulstruc(\bag)$ is undefined for all other $\bag$:
    \[\left(\begin{tikzpicture}[baseline = -5.ex]
        \graph[grow right = 1.cm, branch down = 4.5ex]{
        {/, s5/{$\fixedvertex_5$}[vert], s4/{$\fixedvertex_4$}[vert]} -!-
        {s0/{$\fixedvertex_0$}[vert], s6/{$\fixedvertex_6$}[vert], s3/{$\fixedvertex_3$}[vert]}
        };
        \graph[use existing nodes, edges={color=black, pos = .5, earrow}, edge quotes={fill=white, inner sep=1pt,font= \scriptsize}]{
            s5 ->["$\rsym[2]$", pos = .4] s6;
            s5 ->["$\rsym[1]$", out = 150, in = -150, looseness = 4] s5;
            s4 ->["$\rsym[2]$", pos = .35, bend right = 60] s5;
            s4 ->["$\rsym[1]$", out = 150, in = -150, looseness = 4] s4;
            s3 ->["$\rsym[2]$", pos = .4] s4;
        };
    \end{tikzpicture} \right),
    \left(\;\begin{tikzpicture}[baseline = -5.ex]
        \graph[grow right = 1.cm, branch down = 4.5ex]{
        {s0/{$\fixedvertex_0$}[vert], s6/{$\fixedvertex_6$}[vert], s3/{$\fixedvertex_3$}[vert]}
        };
        \graph[use existing nodes, edges={color=black, pos = .5, earrow}, edge quotes={fill=white, inner sep=1pt,font= \scriptsize}]{
        };
    \end{tikzpicture}\; \right), \mbox{ and }
    \left(\begin{tikzpicture}[baseline = -5.ex]
        \graph[grow right = 1.cm, branch down = 4.5ex]{
        {s0/{$\fixedvertex_0$}[vert], s6/{$\fixedvertex_6$}[vert], s3/{$\fixedvertex_3$}[vert]} -!-
        {s1/{$\fixedvertex_1$}[vert], /, s2/{$\fixedvertex_2$}[vert]}
        };
        \graph[use existing nodes, edges={color=black, pos = .5, earrow}, edge quotes={fill=white, inner sep=1pt,font= \scriptsize}]{
            s0 ->["$\rsym[1]$", pos = .4] s1;
            s1 ->["$\rsym[2]$", out = 30, in = -30, looseness = 4] s1;
            s1 ->["$\rsym[1]$", pos = .4] s2;
            s2 ->["$\rsym[2]$", out = 30, in = -30, looseness = 4] s2;
            s2 ->["$\rsym[1]$", pos = .4] s3;
        };
    \end{tikzpicture} \right),\]
    Let $\fml$ be the \kl{UNTC} \kl{formula} $\TC{\exists z(\rsym[1] x z \land \rsym[2] z y)}{x y} x y$ and let $\fml[2] \defeq \exists z(\rsym[1] x z \land \rsym[2] z y)$.
    Intuitively, $\fml$ means that there is a $(\rsym[1] \rsym[2])^*$-path from $x$ to $y$.
    Let $\ainter$ be such that $\ainter(x) = \set{\fixedvertex_0}$ and $\ainter(y) = \set{\fixedvertex_6}$.
    Then $\absmodels \mkstate{(\fml)}{\prione}{\ynmulstruc}{\ainter}{\eps}$ holds, by the form of $\glue \ynmulstruc$.
    In the "LC@@UNTC",
    we construct an \kl(LMC){accepting} \kl(LMC){run}, as follows.
    First, we have:
    \begin{align*}
    \mkstate{(\fml)}{\prione}{\ynmulstruc}{\ainter}{\eps}
    & \leadstoUNTC_{\eqref{rule: tc-l}\eqref{rule: concrete}}^{*}\lgeqq
        \mkstate{(\fml[2]\fmlsubst{xy}{z_0^{\fixedvertex_0}z_1^{2}}, \TC{\fml[2]}{x y} z_1^{2} y^{\fixedvertex_6})}{1}{ \ynmulstruc}{\bl}{\eps}\\
    & \leadstoUNTC_{\eqref{rule: tc-r}\eqref{rule: concrete}}^{*}\lgeqq
    \mkstate{\fml[2]\fmlsubst{xy}{z_0^{\fixedvertex_0}z_1^{2}}, \TC{\fml[2]}{x y} z_1^{2} z_5^{1}, \fml[2]\fmlsubst{xy}{z_5^{1} z_6^{\fixedvertex_6}}}{\prione}{\ynmulstruc}{\bl}{\eps}.
    \end{align*}
    We then split the "formula" $\TC{\fml[2]}{x y} z_1 z_5$ as follows:
    \begin{tcolorbox}[ams align*, standard jigsaw, opacityback = 0, boxrule=0pt, boxsep=0pt, top=0pt, bottom=0pt, left=.5mm, right=0pt, frame hidden, enhanced, sharpish corners, borderline west = {.4mm}{0mm}{black}, left skip=1.7em,
        overlay={
            \node[xshift = -1em] at (frame.west) {\labeltext{example: completeness UNTC split}{$\bigstar$}{$\bigstar$}};}
        ]
        &\mkstate{(\fml[2]\fmlsubst{xy}{z_0^{\fixedvertex_0}z_1^{2}},\  \uline{\TC{\fml[2]}{x y} z_1^{2} z_5^{1}},\  \fml[2]\fmlsubst{xy}{z_5^{1} z_6^{\fixedvertex_6}})}{\prione}{\ynmulstruc}{\bl}{\eps} \span \\[0.3ex]
        &\leadstoUNTC_{\eqref{rule: tc-split-l}\eqref{rule: concrete}}^{*} \lgeqq
        (\fml[2]\fmlsubst{xy}{z_0^{\fixedvertex_0}z_1^{2}}, \TC{\fml[2]}{x y} z_1^{2} z_2^{2},\  \uline{\fml[2]\fmlsubst{xy}{z_2^{2} z_4^{1}}}, \  \TC{\fml[2]}{x y} z_4^{1} z_5^{1}, \fml[2]\fmlsubst{xy}{z_5^{1} z_6^{\fixedvertex_6}})^{1,\ynmulstruc}_{-,\eps}\\[0.3ex]
        &\leadstoUNTC_{\eqref{rule: and}\eqref{rule: exists}\eqref{rule: concrete}}^{*}\lgeqq
        (\fml[2]\fmlsubst{xy}{z_0^{\fixedvertex_0}z_1^{2}}, \TC{\fml[2]}{x y} z_1^{2} z_2^{2},\  \rsym[1] z_2^{2} z_3^{\fixedvertex_3},\  \rsym[2] z_3^{\fixedvertex_3} z_4^{1}, \  \TC{\fml[2]}{x y} z_4^{1} z_5^{1}, \fml[2]\fmlsubst{xy}{z_5^{1} z_6^{\fixedvertex_6}})^{1, \ynmulstruc}_{-, \eps}\\[0.3ex]
        &\leadstoUNTC_{\eqref{rule: split}}
        \mkstate{(\fml[2]\fmlsubst{xy}{z_0^{\fixedvertex_0}z_1^{2}}, \TC{\fml[2]}{x y} z_1^{2} z_2^{2}, \rsym[1] z_2^{2} z_3^{\fixedvertex_3})}{\prione}{\ynmulstruc}{\bl}{\eps} ~\doublewedge~ \mkstate{(\rsym[2] z_3^{\fixedvertex_3} z_4^{1}, \TC{\fml[2]}{x y} z_4^{1} z_5^{1}, \fml[2]\fmlsubst{xy}{z_5^{1} z_6^{\fixedvertex_6}})}{\prione}{\ynmulstruc}{\bl}{\eps}. \tag*{}
    \end{tcolorbox}
    Here, in the step using \eqref{rule: tc-split-l}, the assignment $z_4 \mapsto \set{1}$ is guessed,
    which indicates a "fixed name" that escapes from "direction" $2$ during the iterations of "TC".
    
    \proofcasethin{Left-hand side}
    We have:
    \begin{align*}
        \mkstate{(\fml[2]\fmlsubst{xy}{z_0^{\fixedvertex_0}z_1^{2}}, \TC{\fml[2]}{x y} z_1^{2} z_2^{2}, \rsym[1] z_2^{2} z_3^{\fixedvertex_3})}{\prione}{\ynmulstruc}{\bl}{\eps}
        &\leadstoUNTC_{\eqref{rule: mov}\eqref{rule: concrete}} \lgeqq
        \mkstate{(\fml[2]\fmlsubst{xy}{z_0^{\fixedvertex_0}z_1^{\fixedvertex_1}}, \TC{\fml[2]}{x y} z_1^{\fixedvertex_1} z_2^{\fixedvertex_2}, \rsym[1] z_2^{\fixedvertex_2} z_3^{\fixedvertex_3})}{\prione}{\ynmulstruc}{\bl}{2} \\
        &\leadstoUNTC_{\eqref{rule: split}\eqref{rule: exists}\eqref{rule: concrete}\eqref{rule: and}\eqref{rule: a1}}^{*}\lgeqq
        \mkstate{(\TC{\fml[2]}{x y} z_1^{\fixedvertex_1} z_2^{\fixedvertex_2})}{\prione}{\ynmulstruc}{\bl}{2} \\
        &\leadstoUNTC_{\eqref{rule: tc-l}\eqref{rule: tc-l}\eqref{rule: concrete}}^{*}\lgeqq
        \mkstate{(\fml[2][z_1^{\fixedvertex_1} z^{\fixedvertex_2}/x y], z^{\fixedvertex_2} \EQ z_2^{\fixedvertex_2})}{1} {\ynmulstruc}{\bl}{2}\\
        &\leadstoUNTC^{*}\lgeqq \const{true}.
    \end{align*}

    \proofcasethin{Right-hand side}
    Similarly, we have:
    \begin{align*}
        \mkstate{(\rsym[2] z_3^{\fixedvertex_3} z_4^{1}, \TC{\fml[2]}{x y} z_4^{1} z_5^{1}, \fml[2]\fmlsubst{xy}{z_5^{1}z_6^{\fixedvertex_6}})}{\prione}{\ynmulstruc}{\bl}{\eps}
        &\leadstoUNTC_{\eqref{rule: mov}\eqref{rule: concrete}}^{*} \mkstate{(\rsym[2] z_3^{\fixedvertex_3} z_4^{\fixedvertex_4}, \TC{\fml[2]}{x y} z_4^{\fixedvertex_4} z_5^{\fixedvertex_5}, \fml[2]\fmlsubst{xy}{z_5^{\fixedvertex_5}z_6^{\fixedvertex_6}})}{\prione}{\ynmulstruc}{\bl}{1}\\
        &\leadstoUNTC^{*}\lgeqq \const{true}.
    \end{align*}
    Hence, we have $\vdashUNTC \mkstate{(\fml)}{\prione}{\ynmulstruc}{\ainter}{\eps}$.
    \lipicsEnd
\end{example}

\subsection{Evaluation Strategy} \label{section: evaluation strategy UNTC}
We now present a strategy to evaluate $\absmodels \mkstate{\fmlset[1]}{\pri}{\ynmulstruc}{\ainter}{\bag}$ in the "LC@@UNTC".
Step \nameref{evaluation strategy UNTC 2} is crucial for the "2EXPTIME" upper bound.

\proofcasethin{Step 1}\labeltext{evaluation strategy UNTC 1}{}{1}
By the same strategy as Steps \nameref{evaluation strategy UNFO 1} and \nameref{evaluation strategy UNFO 2} for "UNFO" (\Cref{section: evaluation strategy UNFO}),
we can assume that
\begin{enumerate}
    \item[a)] \labeltext{cond: evaluation strategy UNTC 1}{}{a} $\card \ainter(x) = 1$ for each $x \in \FV(\fmlset[1])$.

    \item[b)] \labeltext{cond: evaluation strategy UNTC 2}{}{b} $\fmlset[1]$ is of the form $(\fml[2]_1, \dots, \fml[2]_n)$,
where each $\fml[2]_i$ is one of the following forms:
    an "atom" $\afml$, $\lnot \fml[3]$, or $\TC{\fml[3]}{u v} {x} {y}$.

    \item[c)] \labeltext{cond: evaluation strategy UNTC 3}{}{c} for each $\fml[2]_i$, for a $\dir \in \set{\back 2, \back 1, 1, 2}$, 
    $\dir \in \ainter(\FV(\fml[2]_i)) \subseteq \ND{\ynmulstruc}{\bag}{\dir}$,
        except for the case when $\fml[2]_i$ is a \kl{TC-formula}.
\end{enumerate}

\proofcasethin{Step 2}\labeltext{evaluation strategy UNTC 2}{}{2}
    When $\fml[2]_i$ is a \kl{TC-formula} $\TC{\fml[3]}{u v} {x} {y}$,
    we divide $\fmlset[1] = (\fmlset[1]', \fmlset[1]'')$ into the following four cases,
    according to whether $\fml[2]_i$ is left- and/or right-unfolded:
    \begin{enumerate}[i)]
        \item \labeltext{state: biclosure UNTC}{}{i} $\fmlset[1]'' = (\phantom{\fmlset[2], }\, \TC{\fml[3]}{u v} {x} {y})$
        (observe that $\FV(\fmlset[1]') \cap \FV(\fmlset[1]'') \subseteq \set{x, y}$),
        \item \labeltext{state: closure UNTC l}{}{ii} $\fmlset[1]'' = (\fmlset[2], \TC{\fml[3]}{u v} {x} {y})$
        where $\FV(\fmlset[1]') \cap \FV(\fmlset[1]'') \subseteq \set{y}$,
        \item \labeltext{state: closure UNTC r}{}{iii} $\fmlset[1]'' = (\phantom{\fmlset[2], }\, \TC{\fml[3]}{u v} {x} {y}, \fmlset[3])$
        where $\FV(\fmlset[1]') \cap \FV(\fmlset[1]'') \subseteq \set{x}$,
        \item \labeltext{state: closure UNTC}{}{iv}  $\fmlset[1]'' = (\fmlset[2], \TC{\fml[3]}{u v} {x} {y}, \fmlset[3])$
        where $\FV(\fmlset[1]') \cap \FV(\fmlset[1]'') = \emptyset$,
    \end{enumerate}
    where $\fmlset[2]$ and $\fmlset[3]$ are \kl{formula} sets obtained by left-unfolding and right-unfolding some $\TC{\fml[3]}{u v}{\bl}{\bl}$, respectively.
    Note that $\FV(\fmlset[2]) \cap \FV(\TC{\fml[3]}{u v} {x} {y}) \subseteq \set{x}$ and $\FV(\TC{\fml[3]}{u v} {x} {y}) \cap \FV(\fmlset[3]) \subseteq \set{y}$ also hold, as 
    free \kl{variables} are always replaced with \kl{fresh} \kl{variables} in unfolded \kl{formulas} in the "rules" for "TC".

    If $\ainter(x) \subseteq \ND{\ynmulstruc}{\bag}{0}$,
    we apply the "rules" \eqref{rule: split} and \eqref{rule: tc-l}.
    For instance, for the case \nameref{state: closure UNTC l}), we have:
    \begin{align*}
    \mkstate{(\underbrace{\fmlset[1]',\  \fmlset[2], \TC{\fml[3]}{u v} {x} {y}}_{\text{Case \nameref{state: closure UNTC l})}})}{\prione}{\ynmulstruc}{\ainter}{\bag}
    &~\leadstoUNTC_{\eqref{rule: split}}~ \mkstate{(\fmlset[2])}{\prione}{\ynmulstruc}{\bl}{\bag} \doublewedge
    \mkstate{(\fmlset[1]',\  \TC{\fml[3]}{u v} {x} {y})}{\prione}{\ynmulstruc}{\bl}{\bag} \\
    &~\leadstoUNTC_{\eqref{rule: tc-l}}~ \mkstate{(\fmlset[2])}{\prione}{\ynmulstruc}{\bl}{\bag} \doublewedge
    \mkstate{(\underbrace{\fmlset[1]',\ \fml[3]\fmlsubst{u v}{xz^{\ADB{\ynmulstruc}{\bag}}}, \TC{\fml[3]}{u v} {z} {y}}_{\text{Case \nameref{state: closure UNTC l})}})}{\prione}{ \ynmulstruc}{\bl}{\bag}.
    \end{align*}
    The transformed \kl{formula} set is still in the case \nameref{state: closure UNTC l}).

    If $\ainter(y) \subseteq \ND{\ynmulstruc}{\bag}{0}$,
    we apply the "rules" \eqref{rule: split} and \eqref{rule: tc-r}, similarly.
    After that, we can assume that 
    \begin{itemize}
        \item for some $\dir, \dir' \in \set{\back 2, \back 1, 1, 2}$, $\ainter({x}) = \set{\dir}$ and $\ainter({y}) = \set{\dir'}$.
    \end{itemize}

    When $\dir \neq \dir'$, we apply the "rule" \eqref{rule: tc-split-l} or \eqref{rule: tc-split-r}.
    For instance, for the case \nameref{state: closure UNTC l}), we have:
     \begin{tcolorbox}[ams align*, standard jigsaw, opacityback = 0, boxrule=0pt, boxsep=0pt, top=0pt, bottom=0pt, left=0pt, right=0pt, frame hidden, sharp corners, enhanced, borderline west = {.4mm}{0mm}{black}]
    & \ \mkstate{(\underbrace{\fmlset[1]',\; \fmlset[2], \TC{\fml[3]}{u v} {x} {y}}_{\text{Case \nameref{state: closure UNTC l})}})}{\prione}{\ynmulstruc}{\ainter}{\bag}
    ~\leadstoUNTC_{\eqref{rule: tc-split-l}}\lgeqq~ \mkstate{(\fmlset[1]',\; \fmlset[2], \TC{\fml[3]}{u v} {x} {z},\, \uline{\fml[3]\fmlsubst{u v}{zz'}} ,\, \TC{\fml[3]}{u v} {z'} {y})}{\prione}{\ynmulstruc}{\ainter}{\bag}.
    \end{tcolorbox}
    After applying this argument recursively to the \kl{formula} $\fml[3]\fmlsubst{u v}{zz'}$,
    we eventually reach two sets $\fmlset[2]'$ and $\fmlset[3]'$ as follows:
    \begin{tcolorbox}[ams align*, standard jigsaw, opacityback = 0, boxrule=0pt, boxsep=0pt, top=0pt, bottom=0pt, left=0pt, right=0pt, frame hidden, sharp corners, enhanced, borderline west = {.4mm}{0mm}{black}]
    \dots & ~\leadstoUNTC^{*}\lgeqq~ \mkstate{(\fmlset[1]',\ \fmlset[2], \TC{\fml[3]}{u v} {x} {z},\fmlset[2]', \; \fmlset[3]', \TC{\fml[3]}{u v} {z'} {y})}{\prione}{\ynmulstruc}{\ainter}{\bag},
    \end{tcolorbox}
    \noindent where $\fmlset[2]'$ is obtained by left-unfolding some $\TC{\fml[3]}{u v} {x} {\bl}$ and
    $\fmlset[3]'$ is obtained by right-unfolding some $\TC{\fml[3]}{u v} {\bl} {y}$, "cf", the lines (\nameref{example: completeness UNTC split}) in \Cref{example: completeness UNTC};
    the "formula" $\TC{\fml[2]}{x y} z_1^{2} z_5^{1}$
    is split to $\fmlset[2]' = (\TC{\fml[2]}{x y} z_1^{2} z_2^{2}, \rsym[1] z_2^{2} z_3^{\fixedvertex_3})$
    and $\fmlset[3]' = (\rsym[2] z_3^{\fixedvertex_3} z_4^{1}, \TC{\fml[2]}{x y} z_4^{1} z_5^{1})$.
    We can then apply \eqref{rule: split}, as follows:
     \begin{tcolorbox}[ams align*, standard jigsaw, opacityback = 0, boxrule=0pt, boxsep=0pt, top=0pt, bottom=0pt, left=0pt, right=0pt, frame hidden, sharp corners, enhanced, borderline west = {.4mm}{0mm}{black}]
    \dots & ~\leadstoUNTC_{\eqref{rule: split}}~
    \mkstate{(\underbrace{\fmlset[2], \TC{\fml[3]}{u v} {x} {z}, \fmlset[2]'}_{\text{Case \nameref{state: closure UNTC})}})}{\prione}{\ynmulstruc}{\ainter}{\bag} \doublewedge
    \mkstate{(\underbrace{\fmlset[1]', \fmlset[3]', \TC{\fml[3]}{u v} {z'} {y}}_{\text{Case \nameref{state: closure UNTC l})}})}{\prione}{\ynmulstruc}{\ainter}{\bag}.
    \end{tcolorbox}
    Both the transformed \kl{formula} sets are still in the cases \nameref{state: closure UNTC}) and \nameref{state: closure UNTC l}), respectively.
    After the step, we can assume that
    \begin{enumerate}
        \item[c')] \labeltext{cond: evaluation strategy UNTC 3'}{}{c'} for each $\fml[2]_i$, for a $\dir \in \set{\back 2, \back 1, 1, 2}$, $\dir \in \ainter(\FV(\fml[2]_i)) \subseteq \ND{\ynmulstruc}{\bag}{\dir}$.
    \end{enumerate}

\proofcasethin{Step 3}\labeltext{evaluation strategy UNTC 3}{}{3}
Finally, by the condition \nameref{cond: evaluation strategy UNTC 3'}),
we apply \eqref{rule: split}\eqref{rule: mov} in the same way as Steps \nameref{evaluation strategy UNFO 3} and \nameref{evaluation strategy UNFO 4} for "UNFO", and then we go back to Step \nameref{evaluation strategy UNTC 1}.

\subsection{Closure Property}\label{section: closure property UNTC}
Based on the evaluation strategy above,
we can obtain a completeness with a closure property (\Cref{theorem: completeness closure UNTC}).
The following \kl{closure} is inspired by \kl[closure]{that} for \intro*\kl{derivatives} in \intro*\kl{regular expressions} \cite{brzozowskiDerivativesRegularExpressions1964,antimirovPartialDerivativesRegular1996} and
in the \kl{positive calculus of relations with transitive closure} \cite{nakamuraPartialDerivativesGraphs2017,nakamuraDerivativesGraphsPositive2025}
and by \intro*\kl{Fischer--Ladner closure} in \kl{PDL} \cite{fischerPropositionalDynamicLogic1979}.
The main difference from them is that our \kl{closure} has \emph{two} unfoldings:
left-unfolding and right-unfolding, based on the "rules" for "TC".
\begin{definition}\label{definition: closure UNTC}
    \AP For a \kl{UNTC} \kl{formula} $\fml$,
    the \intro*\kl{uni-closure} $\intro*\preclUNTC(\fml)$ is the \kl{set} of \kl{UNTC} \kl{formula} \kl{sets} defined as follows:
    \begin{align*}
        &\preclUNTC(\TC{\fml}{u v} {x} {y}) ~\defeq~ 
        \smashoperator{\bigcup_{\substack{\text{$z_1$ is $x$ or "fresh"}\\\text{$z_2$ is $y$ or "fresh"}}}} \preclUNTC(z_1 \EQ z_2)
        \cup  \smashoperator{\bigcup_{\substack{\text{$z_1, z_2$ are "fresh"}}}} \preclUNTC(\fml[1]\fmlsubst{u v}{z_1z_2}) \quad \cup \quad \set{(\TC{\fml}{u v} {x} {y})} \tag*{i)} \label{definition: closure UNTC not unfolded}\\
        &\cup \set*{
        \begin{aligned}
        &(\fmlset[2], \TC{\fml[1]}{u v}  z_2 z_3) \mid\; \text{$\fmlset[2] \in \preclUNTC(\fml[1]\fmlsubst{u v}{z_1z_2})$, where $z_1, z_2$ are "fresh",}\\
        & \text{$z_3$ is $y$ or "fresh", and $\FV(\fmlset[2]) \cap \set{z_2, z_3} \subseteq \set{z_2}$}
        \end{aligned}
        } \tag*{ii)}\label{definition: closure UNTC l}\\
        &\cup \set*{
        \begin{aligned}
        &(\TC{\fml[1]}{u v} z_1 z_2, \fmlset[3]) \mid\; \text{$\fmlset[3] \in \preclUNTC(\fml[1]\fmlsubst{u v}{z_2 z_3})$, where $z_2, z_3$ are "fresh",}\\
        & \text{ $z_1$ is $x$ or "fresh", and $\set{z_1,z_2} \cap \FV(\fmlset[3]) \subseteq \set{z_2}$}
        \end{aligned}
        } \tag*{iii)}\label{definition: closure UNTC r}
    \end{align*}
    The other definitions are given based on the $\clUNFO$ of \Cref{definition: closure UNFO}. \lipicsEnd
\end{definition}
\begin{definition}\label{definition: biclosure UNTC}
    \AP For a \kl{UNTC} \kl{formula} $\fml$,
    the \intro*\kl{bi-closure} $\intro*\prebiclUNTC(\fml)$ is defined as follows:
    \begin{align*}
        &\prebiclUNTC(\fml) \defeq \preclUNTC(\fml) \cup {}\\
        &\hspace{-.7em}\set*{
        \begin{aligned}
        & (\fmlset[2], \TC{\fml[2]}{u v} z_2 z_3, \fmlset[3])\mid\; \text{$\TC{\fml[2]}{u v} x y$ is a subformula of $\fml$, $\fmlset[2] \in \preclUNTC(\fml[2]\fmlsubst{u v}{z_1 z_2})$, and}\\
        &\text{$\fmlset[3] \in \preclUNTC(\fml[2]\fmlsubst{u v}{z_3 z_4})$, where $z_1, z_2, z_3, z_4$ are "fresh", $\FV(\fmlset[2]) \cap \set{z_2, z_3} \subseteq \set{z_2}$,}\\
        &\text{$\set{z_2, z_3} \cap \FV(\fmlset[3]) \subseteq \set{z_3}$, and $\FV(\fmlset[2]) \cap \FV(\fmlset[3]) = \emptyset$}
        \end{aligned}} \tag*{iv)} \label{definition: closure UNTC bi}
    \end{align*}
\end{definition}

We recall the case \nameref{state: closure UNTC}) in Step \nameref{evaluation strategy UNTC 2} of the evaluation strategy presented in \Cref{section: local model checker UNTC}.
In this case, by applying \eqref{rule: split}, we can split to $\fmlset[1]'$ and $(\fmlset[2], \TC{\fml[3]}{u v} {x} {y}, \fmlset[3])$,
and thus we can assume $\fmlset[1]' = \emptyset$.
Namely, it suffices to consider bi-unfolded \kl{formula} sets at the \emph{outermost}.
The definition of $\prebiclUNTC$ (\Cref{definition: biclosure UNTC}) is based on this fact.
We observe that this extension causes only a quadratic, not an exponential blowup,
which is crucial to obtain the "2EXPTIME" upper bound.\footnote{%
It causes an exponential blowup if $\preclUNTC(\TC{\fml}{u v} {x} {y})$ is defined so that it contains all bi-unfolded \kl{formulas} $(\fmlset[2], \TC{\fml[1]}{u v} {z}{z'}, \fmlset[3])$ (and if the nesting of \kl{TC-formulas} is unbounded).}
Also, note that \emph{two} \kl{formula} sets in $\prebiclUNTC(\fml)$ may simultaneously appear between \eqref{rule: tc-split-l} and \eqref{rule: split} ("cf", the lines of \Cref{example: completeness UNTC} (\nameref{example: completeness UNTC split})).
Based on this, we define the \intro*\kl{closure} set $\clUNTC(\fml)$, as follows:
\[\intro*\clUNTC(\fml) ~\defeq~ \set{\fmlset[1] \cup \fmlset[2] \mid \fmlset[1], \fmlset[2] \in \prebiclUNTC(\fml)}.\]
By the strategy presented in \Cref{section: local model checker UNTC},
we have that the "LC@@UNTC" is sound and complete "wrt" the semantics on \kl{tree decompositions} under a \kl{closure} property, as follows.
\ifthenelse{\boolean{arXiv}}{\begin{theorem}[{\Cref{section: theorem: completeness UNTC}}]}{\begin{theorem}}\label{theorem: completeness closure UNTC}
    \gdef\completenessclosureUNTC{%
    Let $\fml$ be a \kl{UNTC} \kl{formula}.
    For all $\mystate \in \allstates_{\prebiclUNTC(\fml)}$,
    we have:
    \[{} \absmodels \mystate \quad\Longleftrightarrow\quad \vdashUNTCsub_{\clUNTC(\fml)} \mystate.\]
    }%
    \completenessclosureUNTC
\end{theorem}

For the size of the \kl{closure} set, we have the following.
\begin{proposition}\label{proposition: closure size UNTC}
    For all \kl{UNTC} \kl{formulas} $\fml$,
    the \kl{cardinality} of $\preclUNTC(\fml)$, up to renaming \kl{free variables}, is at most $(2 \fmllen{\fml})^{2 \fmllen{\fml}}$.
\end{proposition}
\begin{proof}
    By easy induction on $\fml$%
    \ifthenelse{\boolean{arXiv}}{ (\Cref{section: proposition: closure size UNTC}).}{.}
\end{proof}

\begin{proposition}\label{proposition: closure size UNTC 2}
    For \kl{UNTC} \kl{formulas} $\fml$,
    the number of $\mkstate{\fmlset[1]}{\pri}{\ynmulstruc}{\ainter}{\bag} \in \allstates_{\clUNTC(\fml)}^{(k)}$
    is $2^{\mathcal{O}(\fmllen{\fml} (\log \fmllen{\fml} + k))}$,
    up to forgetting $\ainter(x)$ for $x \not\in \FV(\fmlset[1])$, $\ynmulstruc$, and $\bag$ ("ie", ``$\fmlset[1]^{\pri}_{\ainter \restriction \FV(\fmlset[1])}$'')
    and up to renaming \kl{free variables}.
\end{proposition}
\begin{proof}
    Similar to \Cref{proposition: closure UNFO 2}, using \Cref{proposition: closure size UNTC}.
\end{proof}

\subsection{Reduction to \kl{2APTAs}}\label{section: reducing to 2APTAs UNTC}
Similar to \Cref{section: reducing to 2APTAs UNFO}, using the "LC@@UNTC",
we can give an exponential-time reduction from "SAT-""UNTC" to the \kl{non-emptiness problem} for \kl{2APTAs} using the "LC@@UNTC" for \kl{UNTC}%
\ifthenelse{\boolean{arXiv}}{  (see \Cref{section: 2APTA construction UNTC} for a precise construction).}{.}
Using \Cref{proposition: closure size UNTC 2}, we can show that the \kl(2APTA){size} of the \kl{2APTA} is $2^{\poly(\|\fml\|, k)}$.
Thus, in the same vein as \Cref{section: reducing to 2APTAs UNFO}, we have the following result.
\begin{corollary}[%
\textnormal{"cf" \cite[Theorem 8.5]{figueiraCommonAncestorPDL2025}}]\label{corollary: 2-EXPTIME UNTC}
    "SAT-""UNTC" is in "2EXPTIME".
\end{corollary}
\noindent Combining this with \Cref{thm:GNTC-UNTC} yields the following main result.
\begin{corollary}\label{corollary: 2-EXPTIME GNTC}
    "SAT-""GNTC" is in "2EXPTIME".
\end{corollary}

\end{scope}

\begin{toappendix}
\begin{scope}\knowledgeimport{UNTC}\knowledgeimport{LMC}

\subsection{Proof of {\Cref{theorem: completeness closure UNTC}}:
Soundness and Completeness}\label{section: theorem: completeness UNTC}
We recall the \kl(UNTC){LC} for \kl{UNTC} (\Cref{section: local model checker UNTC}).
The proof is closely based on \Cref{section: theorem: completeness UNFO}, but some details are extended.
We define the \reintro*\kl(LMC){transition formula} $\intro*\transitionUNTC(\mystate)$ and $\reintro*\transitionUNTC_{\fmlsetclass}(\mystate)$ in the same way as in \Cref{section: theorem: completeness UNFO}, where the \kl(UNTC){rules} have been changed for \kl{UNTC}.
We first observe that the following properties hold also for the \kl(UNTC){LC} for \kl{UNTC}.
\begin{proposition}[Preservation property]\label{proposition: sem equiv UNTC}
    For each "rule" $\mystate[1] \leadstoUNTC \doublefml[2]$ in the "LC" for "UNTC" (\Cref{figure: rules UNTC}),
    ${} \absmodels \mystate[1]$ "iff" ${} \absmodels \doublefml[2]$.
    Hence, we have:
    \[{} \absmodels \mystate[1] \quad\iff\quad {} \absmodels \transitionUNTC(\mystate[1]).\]
\end{proposition}
\begin{proposition}[Duality]\label{proposition: dual UNTC}
    For all $\mystate[1] \in \allstatesUNTC$,
    $\transitionUNTC(\addPri{\mystate[1]}) = \addPri{\transitionUNTC(\mystate[1])}$.
\end{proposition}

\begin{proposition}[Consistency]\label{proposition: vdashUNTC con}
    Let $\fml$ be a "UNTC" \kl{formula}.
    For every $\mystate \in \allstates_{\clUNTC(\fml)}$, we have:
    \[\vdashUNTCsub_{\clUNTC(\fml)} \mystate \quad\Longrightarrow\quad \nvdashUNTCsub_{\clUNTC(\fml)} \addPri{\mystate}.\]
\end{proposition}

\subsubsection{Proof of \Cref{theorem: completeness closure UNTC}}\label{section: proof theorem: completeness closure UNTC}
Below, we use the notion of \emph{\kl[subclruns]{subruns}} (\Cref{section: theorem: completeness UNFO proof}) also for "UNTC".
To prove \Cref{theorem: completeness closure UNTC},
we use the following two lemmas.
Here, $\paramUNTC(\mystate)$ is a well-founded \kl{parameter} such that if $\paramUNTC(\mystate[1]) \ge \paramUNTC(\mystate[2])$ and $\absmodels \mystate[1]$, then $\absmodels \mystate[2]$,
which is modified from the \kl{parameter} given in \Cref{section: lemma: parameter UNFO odd} and
will be defined in \Cref{section: proof lemma: parameter UNTC odd}.
Notably, we only consider $\allstates_{\prebiclUNTC(\fml)}$ for odd priority (\Cref{lemma: parameter UNTC even}) but this is sufficient,
because when we return to odd priority from even priority by applying the rule \eqref{rule: UN} in $\allstates_{\clUNTC(\fml)}$, we can always return to a \kl(LMC){state} in $\allstates_{\prebiclUNTC(\fml)}$ (\Cref{proposition: negation cl to ucl}).
\begin{lemma}[Even case, \Cref{section: proof lemma: parameter UNTC even}]\label{lemma: parameter UNTC even}
    Let $\fml$ be a "UNTC" \kl{formula}.
    For every $\mystate[1] \in \allstates_{\clUNTC(\fml)}$
    "st" $\statePri(\mystate[1]) = \prizero$,
    if $\absmodels \mystate[1]$,
    then there is some depth-finite $\clUNTC(\fml)$-\kl[subclrun]{subrun} $\trace$ starting from $\mystate[1]$ such that for each "leaf" $\word$ of $\trace$ with $\trace(\word) = \mkstate{\fmlset[2]}{\pri}{\ynmulstruc}{\ainter'}{\bag[1]'}$, we have:
    \begin{itemize}
        \item $\absmodels \trace(\word)$, and
        \item $\fmlset[2] \in \prebiclUNTC(\fml)$ if $\pri = \prione$.
    \end{itemize}
\end{lemma}
\begin{lemma}[Odd case, \Cref{section: proof lemma: parameter UNTC odd}]\label{lemma: parameter UNTC odd}
    Let $\fml$ be a \kl{UNTC} \kl{formula}.
    For every $\mystate[1] \in \allstates_{\prebiclUNTC(\fml)}$
    "st" $\statePri(\mystate[1]) = \prione$,
    if $\absmodels \mystate[1]$,
    then there is some finite $\clUNTC(\fml)$-\kl[subclrun]{subrun} $\trace$ starting from $\mystate[1]$ such that for each "leaf" $\word$ of $\trace$ with $\trace(\word) = \mkstate{\fmlset[2]}{\pri}{\ynmulstruc}{\ainter'}{\bag[1]'}$, we have:
    \begin{itemize}
        \item $\paramUNTC(\mystate[1]) > \paramUNTC(\trace(\word))$
        (so, $\absmodels \trace(\word)$), and
        \item $\fmlset[2] \in \prebiclUNTC(\fml)$.
    \end{itemize}
\end{lemma}
Using them, we can prove \Cref{theorem: completeness closure UNTC}, as follows.
\begin{theorem*}[Restatement of \Cref{theorem: completeness closure UNTC}]
    \completenessclosureUNTC
\end{theorem*}
The proof is almost the same as \Cref{theorem: completeness closure UNFO}.
The remarkable difference is that we show only for "states" $\mystate$ in $\allstates_{\prebiclUNTC(\fml)}$,
whereas "states" in $\allstates_{\clUNTC(\fml)}$ and not in $\allstates_{\prebiclUNTC(\fml)}$ may appear in the "LC" for "UNTC".
\begin{proof}
    \proofcasethin{($\Longrightarrow$)}
    Let $\trace$ be the $\clUNTC(\fml)$-\kl[clrun]{run},
    obtained from the singleton \kl{tree} with $\trace(\eps) = \mystate$
    by extending each \kl{leaf} in $\allstates_{\prebiclUNTC(\fml)}$ with the $\clUNTC(\fml)$-\kl[subclrun]{subrun} of \Cref{lemma: parameter UNTC odd,lemma: parameter UNTC even}, iteratively.
    Here, because each "leaf" with odd "priority" $1$ in the \kl[subclruns]{subruns} above is always in $\allstates_{\prebiclUNTC(\fml)}$ (by the conditions of \Cref{lemma: parameter UNTC odd,lemma: parameter UNTC even}), we can indeed construct such a $\trace$.
    For the resulting $\trace$,
    we can show that $\trace$ is an \kl{accepting} $\clUNTC(\fml)$-\kl[clrun]{run} in the same way as the proof of \Cref{theorem: completeness closure UNFO}, whereby $\vdashUNTCsub_{\clUNTC(\fml)} \mystate$.

    \proofcasethin{($\Longleftarrow$)}
    Similar to the direction ($\Longleftarrow$) of \Cref{theorem: completeness closure UNFO},
    this part is shown by the direction ($\Longrightarrow$) with \Cref{proposition: vdashUNTC con}.

    Hence, this completes the proof.
\end{proof}

Below, we prove the remaining parts (\Cref{lemma: parameter UNTC even,lemma: parameter UNTC odd}).

\subsubsection{Proof of \Cref{lemma: parameter UNTC even} (even case)}\label{section: proof lemma: parameter UNTC even}
For negated "formulas",
we first observe the following property.
\begin{proposition}\label{proposition: negation cl to ucl}
    Let $\fml$ be a "UNTC" \kl{formula}.
    For every "UNTC" \kl{formula} of the form $\lnot \fml[2]$,
    if $\lnot \fml[2] \in \clUNTC(\fml)$,
    then $\lnot \fml[2] \in \preclUNTC(\fml)$.
\end{proposition}
\begin{proof}
    Clear, by the definitions of $\clUNTC$ and $\prebiclUNTC$.
\end{proof}
Using this, we can prove \Cref{lemma: parameter UNTC even}, as follows.
\begin{proof} (Similar to \Cref{lemma: parameter UNFO even}.)
    By \Cref{proposition: sem equiv UNTC}, we have $\absmodels \transitionUNTC(\mystate[1])$.
    For each \kl{state} $\mystate[2]$ with odd "priority" in $\transitionUNTC(\mystate[1])$,
    the rule \eqref{rule: UN} is always applied from $\mystate[1]$, since the "rule" \eqref{rule: UN} only changes the "priority".
    Hence $\mystate[2] \in \allstates_{\preclUNTC(\fml)}$ (by \Cref{proposition: negation cl to ucl}).
    Also, for each \kl{state} $\mystate[2]$ with even "priority" in $\transitionUNTC(\mystate[1])$, it is not replaced or replaced with $\doubletrue$ (if $\mystate[2] \not\in \allstates_{\clUNTC(\fml)}$).
    Thus, we have $\transitionUNTC(\mystate[1]) \lleqq \transitionUNTC_{\clUNTC(\fml)}(\mystate[1])$,
    and hence $\absmodels \transitionUNTC_{\clUNTC(\fml)}(\mystate[1])$.
    Let $\transitionUNTC_{\clUNTC(\fml)}(\trace[1](\word)) \lequiv \bigdoublevee_{l} \bigdoublewedge_{k} {\mystate[2]_{l, k}}$.
    Then $\absmodels \bigdoublewedge_{k} {\mystate[2]_{l, k}}$ for some $l$.
    Hence, by taking the $\clUNTC(\fml)$-\kl[subclruns]{subrun} having the "leaves" of $\mystate[2]_{l, k}$, this completes the proof.
\end{proof}

\subsubsection{Proof of \Cref{lemma: parameter UNTC odd} (odd case)}\label{section: proof lemma: parameter UNTC odd}
For $n \in \nat$,
the "UNTC" "formula" $\intro*\kthTC{\fml[2]}{n}{u v} {x}{y}$ is defined as follows:
\begin{align*}
    \kthTC{\fml[2]}{0}{u v} {x}{y} &\;\defeq\; {x} \EQ {y},&
    \kthTC{\fml[2]}{n+1}{u v} {x}{y} &\;\defeq\; \exists {z} (\fml[2]\fmlsubst{u v}{x z} \land \kthTC{\fml[2]}{n}{u v}{z}{y}) \tag*{where $z$ is \kl{fresh}.}
\end{align*}
For a \kl{UNTC} \kl{formula} $\fml$, we write $\intro*\UNF(\fml)$ for the set of \kl{UNTC} \kl{formulas}
obtained from $\fml$ by unfolding each \emph{non-$\lnot$-scoped occurrence} of $\TC{\fml[2]}{u v}{x}{y}$ into a \kl{formula} $\kthTC{\fml[2]}{n}{u v}{x}{y}$ for some $n \ge 0$.
More precisely, $\UNF(\fml)$ is inductively defined as follows:
\begin{align*}
    \UNF(\afml) &\defeq \set{\afml},&
    \UNF(\fml[2] \land \fml[3]) &\defeq \set{\fml[2]' \land \fml[3]' \mid \fml[2]' \in \UNF(\fml[2]) \mbox{ and } \fml[3]' \in \UNF(\fml[3]) },\\
    \UNF(\lnot \fml[2]) &\defeq \set{\lnot \fml[2]},&
    \UNF(\fml[2] \lor \fml[3]) &\defeq \set{\fml[2]' \lor \fml[3]' \mid \fml[2]' \in \UNF(\fml[2]) \mbox{ and } \fml[3]' \in \UNF(\fml[3]) },\\
    \UNF(\exists x \fml[2]) &\defeq \set{\exists x \fml[2]' \mid \fml[2]' \in \UNF(\fml[2])},&
    \UNF(\TC{\fml[2]}{u v}{x}{y}) &\defeq \bigcup_{n \ge 0} \UNF(\kthTC{\fml[2]}{n}{u v}{x}{y}).
\end{align*}
For a \kl{UNTC} \kl{formula} \kl{set} $\fmlset[1]$,
we write $\UNF(\fmlset[1])$ for the \kl{set} of \kl{UNTC} \kl{formula} \kl{sets}
obtained from $\fmlset[1]$ by replacing each $\fml \in \fmlset[1]$ with some $\fml' \in \UNF(\fml)$.
We write $\UNF(\mkstate{\fmlset[1]}{\pri}{\ynmulstruc}{\ainter}{\bag})$ for the \kl{set} $\set{\mkstate{\fmlset[2]}{\pri}{\ynmulstruc}{\ainter}{\bag} \mid \fmlset[2] \in \UNF(\fmlset[1])}$.

\paragraph{A well-founded parameter}
For $\mystate[1] = \mkstate{\fmlset[1]}{\pri}{\ynmulstruc}{\ainter}{\bag} \in \allstates_{\mathrm{\kl{UNTC}}}$ with odd priority $\pri$,
the \intro*\kl(UNTC){concretization set} $\intro*\paramconcUNTC(\mystate[1])$ is defined by:
\begin{align*}
    \paramconcUNTC(\mystate[1])
    &~\defeq~
    \set*{\fmlset[2]^{\glue \ynmulstruc}_{\inter} \;\middle|\;
    \mbox{$\inter$ is a \kl{concretization} of $\ainter$ on $\bag$,
    $\fmlset[2] \in \UNF(\fmlset[1])$, and $\glue\ynmulstruc \modelsass{\inter} \bigdoublewedge \fmlset[2]$}}.
\end{align*}
By definition, we have:
\[\absmodels \dot{\fmlset[1]} \quad\Longleftrightarrow\quad \paramconcUNTC(\dot{\fmlset[1]}) \neq \emptyset.\]

For a "state" $\mystate[1] = \mkstate{\fmlset[1]}{\pri}{\ynmulstruc}{\ainter}{\bag[1]} \in \allstatesUNTC$,
the \intro*\kl(UNTC){parameter} $\intro*\paramUNTC(\mystate[1]) \in \nat^2 \dcup \set{\infty}$ is defined as follows:
\begin{align*}
    \paramUNTC(\mystate[1])
    &~\defeq~
    \begin{cases}
    {\displaystyle \min_{\fmlset[2]^{\glue \ynmulstruc}_{\inter} \in \paramconcUNTC(\mystate[1])}
    \tup{\fmllen{\fmlset[2]}, \treedistance(\bag, \inter(\FV(\fmlset[1])))}
    } & \mbox{if $\pri = \prione$ and $\absmodels \mystate[1]$,}\\
    \tup{0,0} & \mbox{if $\pri = \prizero$ and $\absmodels \mystate[1]$,}\\
    \infty & \mbox{otherwise,}
    \end{cases} 
\end{align*}
(This definition is the same as \Cref{section: lemma: parameter UNFO odd} except the definition of $\paramconcUNTC$.)

\paragraph{Splitting lemma}
The completeness is shown based on the evaluation strategy presented in \Cref{section: evaluation strategy UNTC}.
We first show the following lemma,
which corresponds to the argument given in the lines of (\nameref{example: completeness UNTC split}).

\begin{lemma}[Splitting lemma]\label{lemma: split UNTC}
    Let $\fml$ be a "UNTC" "formula".
    Let $\dir \in \set{\back 2, \back 1, 1, 2}$.
    For every $\mystate = \mkstate{(\fmlset[1]',\ \fml')}{\prione}{\ynmulstruc}{\ainter}{\bag} \in \allstates_{\mathrm{\prebiclUNTC(\fml)}}$
    where $\fmlset[1]'$ is a \kl{UNTC} \kl{formula} \kl{set} and $\fml'$ is a \kl{UNTC} \kl{formula},
    if $\absmodels \mystate$,
    then there is some finite $\clUNTC(\fml)$-\kl[subclrun]{subrun} $\trace$
    with one leaf $\word$ starting from $\mystate$ such that
    \begin{itemize}
        \item $\paramUNTC(\mystate) \ge \paramUNTC(\trace(\word))$;

        \item $\trace(\word) = \mkstate{(\fmlset[1]',\  \fmlset[2], \fmlset[3])}{\prione}{\ynmulstruc}{\ainter'}{\bag}$ where $\fmlset[2], \fmlset[3] \in \preclUNTC(\fml)$ and $\FV(\fmlset[1]') \cap \FV((\fmlset[2], \fmlset[3])) \subseteq \FV(\fml)$;

        \item $\ainter'(\FV(\fmlset[2])) \subseteq \ND{\ynmulstruc}{\bag}{\dir}$ and $\dir \not\in \ainter'(\FV(\fmlset[3]))$ (so, $\ainter'(\FV(\fmlset[2]) \cap \FV(\fmlset[3])) \subseteq \ND{\ynmulstruc}{\bag}{0}$).
    \end{itemize}
\end{lemma}
\begin{proof}
    By induction on the \kl(formula){size} of $\fml'$.
    When $\ainter(\FV(\fml')) \subseteq \ND{\ynmulstruc}{\bag}{\dir}$,
    by letting
    $\tup{\fmlset[2], \fmlset[3]} = \tup{(\fml'), ()}$,
    this case has been proved.
    When $\dir \not\in \ainter(\FV(\fml'))$,
    by letting $\tup{\fmlset[2], \fmlset[3]} = \tup{(), (\fml')}$,
    this case has been proved.
    Otherwise, $\dir \in \ainter(\FV(\fml')) \not\subseteq \ND{\ynmulstruc}{\bag}{\dir}$.
    Also, by applying \eqref{rule: concrete}, "wlog", it suffices to show when $\card\ainter(x) = 1$ for every $x \in \FV(\fml')$.
    We distinguish the following cases:

    \proofcasethin{Case $\fml'$ is $\afml$}
    By $\dir \in \ainter(\FV(\afml)) \not\subseteq \ND{\ynmulstruc}{\bag}{\dir}$,
    it contradicts $\models \mkstate{(\fmlset[1]',\ \afml)}{\prione}{\ynmulstruc}{\ainter}{\bag}$.
    Hence, this case has already been passed.
   
    \proofcasethin{Case $\fml'$ is $\lnot \fml[2]$}
    By $\dir \in \ainter(\FV(\lnot \fml[2])) \not\subseteq \ND{\ynmulstruc}{\bag}{\dir}$,
    it contradicts the condition of "unary negation": $\card(\FV(\fml[2])) \le 1$.
    Hence, this case has already been passed.
    
    \proofcasethin{Case $\fml'$ is $\fml[2] \lor \fml[3]$}
    We have:
    \begin{align*}
        \mkstate{(\fmlset[1]',\  \fml[2] \lor \fml[3])}{\prione}{\ynmulstruc}{\ainter}{\bag}
        &\;\leadstoUNTC_{\eqref{rule: or}}\;
        \mkstate{(\fmlset[1]',\  \fml[2])}{\prione}{\ynmulstruc}{\ainter}{\bag}
        \doublevee
        \mkstate{(\fmlset[1]',\  \fml[3])}{\prione}{\ynmulstruc}{\ainter}{\bag}.
    \end{align*}
    Thus by IH, this case has been shown.

    \proofcasethin{Case $\fml'$ is $\exists x \fml[2]$}
    We have:
    \begin{align*}
        \mkstate{(\fmlset[1]',\ \exists x \fml[2])}{\prione}{\ynmulstruc}{\ainter}{\bag}
        &\;\leadstoUNTC_{\eqref{rule: exists}}\; \mkstate{(\fmlset[1]',\  \fml[2]\fmlsubst{x}{z})}{\prione}{\ynmulstruc}{\ainter\aintersubst{z}{\ADB{\ynmulstruc}{\bag}}}{\bag},
    \end{align*}
    where $z$ is \kl{fresh}.
    Thus by IH, this case has been shown.
 
    \proofcasethin{Case $\fml'$ is $\fml[2] \land \fml[3]$}
    We have:
    \begin{align*}
        &\mkstate{(\fmlset[1]',\ \fml[2] \land \fml[3])}{\prione}{\ynmulstruc}{\ainter}{\bag}
        \;\leadstoUNTC_{\eqref{rule: and}}\;
        \mkstate{(\fmlset[1]',\ \fml[2], \fml[3])}{\prione}{\ynmulstruc}{\ainter}{\bag}
        \;=\;
        \mkstate{(\fmlset[1]' \cup \set{\fml[3]},\  \fml[2])}{\prione}{\ynmulstruc}{\bl}{\bag}\\
        &\;\leadstoUNTC^*\lgeqq\;
        \mkstate{(\fmlset[1]' \cup \set{\fml[3]},\  \fmlset[2], \fmlset[3])}{\prione}{\ynmulstruc}{\bl}{\bag}
        \;=\;
        \mkstate{(\fmlset[1]' \cup \fmlset[2] \cup \fmlset[3],\  \fml[3])}{\prione}{\ynmulstruc}{\bl}{\bag}
        \tag{By IH "wrt" $\fml[2]$}\\
        &\;\leadstoUNTC^*\lgeqq\;
        \mkstate{(\fmlset[1]' \cup \fmlset[2] \cup \fmlset[3],\  \fmlset[2]', \fmlset[3]')}{\prione}{\ynmulstruc}{\bl}{\bag}
        \;=\;
        \mkstate{(\fmlset[1]',\ (\fmlset[2] \cup \fmlset[2]'), (\fmlset[3] \cup \fmlset[3]'))}{\prione}{\ynmulstruc}{\bl}{\bag}.
        \tag{By IH "wrt" $\fml[3]$}
    \end{align*}
    Hence, we have obtained the desired $\clUNTC(\fml)$-\kl[subclrun]{subrun}.

    \proofcasethin{Case $\fml'$ is $\TC{\fml[2]}{u v} {x} {y}$}
    By $\dir \in \ainter(\FV(\TC{\fml[2]}{u v} {x} {y}))$,
    either $\ainter(x) = \set{\dir}$ or $\ainter(y) = \set{\dir}$.
    We distinguish the following cases:

    \subproofcasethin{Case $\ainter(x) = \set{\dir}$}
    By $\ainter(y) \not\subseteq \ND{\ynmulstruc}{\bag}{\dir}$,
    we have:
    \begin{align*}
        &\mkstate{(\fmlset[1]',\  \TC{\fml[2]}{u v} {x} {y})}{\prione}{\ynmulstruc}{\ainter}{\bag}
        \leadstoUNTC_{\eqref{rule: tc-split-l}\eqref{rule: concrete}}^*\lgeqq\\
        &\mkstate{(\fmlset[1]',\  \TC{\fml[2]}{u v} {x^{\dir}} {z_1^{\ND{\ynmulstruc}{\bag}{\dir}}}, \fml[2]\fmlsubst{u v}{z_1 z_2}, \TC{\fml[2]}{u v} z_2^{\ND{\ynmulstruc}{\bag}{\dir'}} y)}{\prione}{\ynmulstruc}{\bl}{\bag}
        \tag*{(where $\dir' \in \range{\back 2}{2} \setminus \set{\dir}$ is chosen so as to preserve $\absmodels$ and $z_1$ and $z_2$ are "fresh")}\\
        &=
        \mkstate{(\fmlset[1]' \cup \set{\TC{\fml[2]}{u v} {x} {z_1}, \TC{\fml[2]}{u v} {z_2} {y}}, \fml[2]\fmlsubst{u v}{z_1 z_2})}{\prione}{\ynmulstruc}{\bl}{\bag}\\
        &\leadstoUNTC^*\lgeqq
        \mkstate{(\fmlset[1]' \cup \set{\TC{\fml[2]}{u v} {x} {z_1}, \TC{\fml[2]}{u v} {z_2} {y}}, \fmlset[2], \fmlset[3])}{\prione}{\ynmulstruc}{\bl}{\bag} \tag{$\fmlset[2]$ and $\fmlset[3]$ are given by IH "wrt" $\fml[2]\fmlsubst{u v}{z_1 z_2}$}\\
        &=
        \mkstate{(\fmlset[1]', \set{\TC{\fml[2]}{u v} {x} {z_1}} \cup \fmlset[2], \fmlset[3] \cup \set{\TC{\fml[2]}{u v} z_2 {y}})}{\prione}{\ynmulstruc}{\bl}{\bag}.
    \end{align*}
    Hence, we have obtained the desired $\clUNTC(\fml)$-\kl[subclrun]{subrun}.
    
    \subproofcasethin{Case $\ainter(y) = \set{\dir}$}
    Similarly,
    by $\ainter(x) \not\subseteq \ND{\ynmulstruc}{\bag}{\dir}$,
    we have:
    \begin{align*}
        &\mkstate{(\fmlset[1]',\  \TC{\fml[2]}{u v} {x} {y})}{\prione}{\ynmulstruc}{\ainter}{\bag}
        \leadstoUNTC_{\eqref{rule: tc-split-r}\eqref{rule: concrete}}^*\lgeqq\\
        &\mkstate{(\fmlset[1]',\ \TC{\fml[2]}{u v} {x} {z_1^{\ND{\ynmulstruc}{\bag}{\dir'}}}, \fml[2]\fmlsubst{u v}{z_1 z_2}, \TC{\fml[2]}{u v} z_2^{\ND{\ynmulstruc}{\bag}{\dir}} y^{\dir})}{\prione}{\ynmulstruc}{\bl}{\bag}
        \tag*{(where $\dir' \in \range{\back 2}{2} \setminus \set{\dir}$ is chosen so as to preserve $\absmodels$
        and $z_1$ and $z_2$ are "fresh")}\\
        &=
        \mkstate{(\fmlset[1]' \cup \set{\TC{\fml[2]}{u v} {x} {z_1}, \TC{\fml[2]}{u v} z_2 {y}}, \fml[2]\fmlsubst{u v}{z_1 z_2})}{\prione}{\ynmulstruc}{\bl}{\bag}\\
        &\leadstoUNTC^*\lgeqq
        \mkstate{(\fmlset[1]' \cup \set{\TC{\fml[2]}{u v} {x} {z_1}, \TC{\fml[2]}{u v} z_2 {y}}, \fmlset[2], \fmlset[3])}{\prione}{\ynmulstruc}{\bl}{\bag} \tag{$\fmlset[2]$ and $\fmlset[3]$ are given by IH "wrt" $\fml[2]\fmlsubst{u v}{z_1 z_2}$}\\
        &=
        \mkstate{(\fmlset[1]', \set{\TC{\fml[2]}{u v} {x} {z_1}} \cup \fmlset[3], \fmlset[2] \cup \set{\TC{\fml[2]}{u v} z_2 {y}})}{\prione}{\ynmulstruc}{\bl}{\bag}.
        \qedhere
    \end{align*}
\end{proof}

\paragraph{Proof of \Cref{lemma: parameter UNTC odd}}
We now construct a finite $\clUNTC(\fml)$-\kl[subclrun]{subrun}, as follows.

\begin{proof}
    By induction on the \kl{parameter} $\paramUNTC(\mystate)$.
    Let $\mystate = \mkstate{\fmlset[1]}{\prione}{\ynmulstruc}{\ainter}{\bag[1]}$.
    We distinguish the following cases.

    \proofcasethin{Steps 1 and 1'}
    \labeltext{lemma: parameter UNTC concrete}{}{1}\labeltext{lemma: parameter UNTC or and exists}{}{1'}
    We apply the same arguments as Steps \nameref{lemma: parameter UNFO concrete} and \nameref{lemma: parameter UNFO or and exists} in \Cref{lemma: parameter UNFO odd}.
    They allow us to assume the following in the subsequent cases.
    \begin{enumerate}
        \item[a)] \labeltext{cond: evaluation strategy UNTC proof 1}{}{a} $\card \ainter(x) = 1$ for each $x \in \FV(\fmlset[1])$.
        \item[b)] \labeltext{cond: evaluation strategy UNTC proof 2}{}{b} $\fmlset[1]$ is of the form $(\fml[2]_1, \dots, \fml[2]_n)$
        where each $\fml[2]_i$ is one of the following forms:
        an "atom" $\afml$, $\lnot \fml[3]$, or $\TC{\fml[3]}{u v} x y$.
    \end{enumerate}

    \proofcasethin{Step 2}\labeltext{lemma: parameter UNTC split}{}{2}
    Else if the following condition does not hold:
    \begin{enumerate}
        \item[c')] \labeltext{cond: evaluation strategy UNTC proof 3}{}{c'}
        For each $\fml[2]_i$,
        for a $\dir \in \set{\back 2, \back 1, 1, 2}$,
        $\ainter(\FV(\fml[2]_i)) \subseteq \ND{\ynmulstruc}{\bag}{\dir}$.
    \end{enumerate}
    Then, except the case when $\fml[2]_i$ is a "TC-formula",
    we can show \nameref{cond: evaluation strategy UNTC proof 3}) in the same way as in \Cref{lemma: parameter UNFO odd}.
    Let $\fml[2]_i$ be a "TC-formula".
    Suppose $\ainter(\FV(\fml[2]_i)) \not\subseteq \ND{\ynmulstruc}{\bag}{\dir}$ for all $\dir \in \set{\back 2, \back 1, 1, 2}$.
    By the four cases \ref{definition: closure UNTC not unfolded}\ref{definition: closure UNTC l}\ref{definition: closure UNTC r}\ref{definition: closure UNTC bi} in the definition of $\prebiclUNTC$ (\Cref{definition: biclosure UNTC}) with $\fmlset[1] \in \prebiclUNTC(\fml)$,
    there is a "subformula" $\TC{\fml[3]}{u v} x y$ of $\fml$ such that 
    $\fmlset[1]$ satisfies either one of the following conditions:

    \begin{enumerate}[i)]
        \item \labeltext{state: tc self appendix}{}{i} $\fmlset[1] = (\fmlset[1]',\ \phantom{\fmlset[2], }\, \TC{\fml[3]}{u v} x y)$, where
        \begin{itemize}
        \item $\fml[2]_i = \TC{\fml[3]}{u v} x y$, and
        \item $(\TC{\fml[3]}{u v} x y) \in \preclUNTC(\TC{\fml[3]}{u v} x y)$ occurs
        in the derivation tree of $\fmlset[1] \in \prebiclUNTC(\fml)$.
        \end{itemize}

        \item \labeltext{state: tc l appendix}{}{ii} $\fmlset[1] = (\fmlset[1]',\ \fmlset[2], \TC{\fml[3]}{u v} z_2 z_3)$, where
        \begin{itemize}
            \item $\fml[2]_i = \TC{\fml[3]}{u v} z_2 z_3$,
            \item $\FV(\fmlset[1]') \cap \FV(\fmlset[2]) = \emptyset$,
            $\FV(\fmlset[1]') \cap \set{z_2, z_3} \subseteq \set{z_3}$,
            $\FV(\fmlset[2]) \cap \set{z_2, z_3} \subseteq \set{z_2}$,
            $z_1, z_2, z_3$ are pairwise distinct, and 
            \item 
            in the derivation tree of $\fmlset[1] \in \prebiclUNTC(\fml)$,
            both $(\fmlset[2], \TC{\fml[3]}{u v} z_2 z_3) \in \preclUNTC(\TC{\fml[3]}{u v} x y)$ and $\fmlset[2] \in \preclUNTC(\fml[3]\fmlsubst{u v}{z_1 z_2})$ occur.
        \end{itemize}

        \item \labeltext{state: tc r appendix}{}{iii} $\fmlset[1] = (\fmlset[1]',\ \phantom{\fmlset[2], }\, \TC{\fml[3]}{u v} z_1 z_2, \fmlset[3])$, where
        \begin{itemize}
            \item $\fml[2]_i = \TC{\fml[3]}{u v} z_1 z_2$,
            \item $\FV(\fmlset[1]') \cap \set{z_1, z_2} \subseteq \set{z_1}$,
            $\FV(\fmlset[1]') \cap \FV(\fmlset[3]) = \emptyset$, 
            $\set{z_1, z_2} \cap \FV(\fmlset[3]) \subseteq \set{z_2}$,
            $z_1, z_2, z_3$ are pairwise distinct, and 
            \item
            in the derivation tree of $\fmlset[1] \in \prebiclUNTC(\fml)$,
            both $(\TC{\fml[3]}{u v} z_1 z_2, \fmlset[3]) \in \preclUNTC(\TC{\fml[3]}{u v} x y)$ and $\fmlset[3] \in \preclUNTC(\fml[3]\fmlsubst{u v}{z_2 z_3})$ occur.
        \end{itemize}

        \item \labeltext{state: tc bi appendix}{}{iv}  $\fmlset[1] = (\phantom{\fmlset[1]',\ }\, \fmlset[2], \TC{\fml[3]}{u v} z_2 z_3, \fmlset[3])$, where
        \begin{itemize}
            \item $\fml[2]_i = \TC{\fml[3]}{u v} z_2 z_3$,
            \item $\FV(\fmlset[2]) \cap \set{z_2, z_3} \subseteq \set{z_2}$, $\set{z_2, z_3} \cap \FV(\fmlset[3]) \subseteq \set{z_3}$,
            $\FV(\fmlset[3]) \cap \FV(\fmlset[2]) = \emptyset$,
            $z_1,z_2,z_3,z_4$ are pairwise distinct, and
            \item
            in the derivation tree of $\fmlset[1] \in \prebiclUNTC(\fml)$,
            all $(\fmlset[2], \TC{\fml[3]}{u v} z_2 z_3, \fmlset[3]) \in \prebiclUNTC(\TC{\fml[3]}{u v} x y)$, $\fmlset[2] \in \preclUNTC(\fml[3]\fmlsubst{u v}{z_1 z_2})$, $\fmlset[3] \in \preclUNTC(\fml[3]\fmlsubst{u v}{z_3 z_4})$ occur.
        \end{itemize}
    \end{enumerate}
    For convenience, we use $\fmlset[1]_0$, $x_0$ and $y_0$ such that
    $\fmlset[1] = (\fmlset[1]_0, \fml[2]_i)$ and
    $\fml[2]_i = \TC{\fml[3]}{u v} x_0 y_0$.
    We then distinguish the following cases.
 
    \proofcasethin{Case: $\ainter(x_0) \subseteq \ND{\ynmulstruc}{\bag}{0}$}
    If $\absmodels \mkstate{(\fmlset[1]_0, \kthTC{\fml[3]}{0}{u v} x_0 y_0)}{\prione}{\ynmulstruc}{\ainter}{\bag}$,
    we have:
    \begin{align*}
    & \mkstate{(\fmlset[1]_0, \ \TC{\fml[3]}{u v} x_0 y_0)}{\prione}{\ynmulstruc}{\ainter}{\bag}
    \;\leadstoUNTC_{\eqref{rule: tc-l}}\lgeqq\;
    \mkstate{(\fmlset[1]_0,\ {x_0 \EQ y_0})}{\prione}{\ynmulstruc}{\ainter}{\bag}
    \;\leadstoUNTC_{\eqref{rule: a1}}\;
    \mkstate{(\fmlset[1]_0)}{\prione}{\ynmulstruc}{\ainter}{\bag}.
    \end{align*}
    Thus by IH, this case has been shown.
    Otherwise, we distinguish the following four cases according to the above.

    \subproofcasethin{Case \nameref{state: tc self appendix})}
    We have:
    \begin{align*}
    & \mkstate{(\fmlset[1]',\ \TC{\fml[3]}{u v} {x} {y})}{\prione}{\ynmulstruc}{\ainter}{\bag} 
    \;\leadstoUNTC_{\eqref{rule: tc-l}}\lgeqq\;
    \underbrace{\mkstate{(\fmlset[1]',\ \fml[3]\fmlsubst{u v}{x z}, \TC{\fml[3]}{u v} {z} {y})}{\prione}{\ynmulstruc}{\bl}{\bag}}_{\text{Case \nameref{state: tc l appendix})}}.
    \end{align*}

    \subproofcasethin{Case \nameref{state: tc l appendix})}
    We have:
    \begin{align*}
    \mkstate{(\fmlset[1]',\ \fmlset[2], \TC{\fml[3]}{u v} {z_2} {z_3})}{\prione}{\ynmulstruc}{\ainter}{\bag}
    &\leadstoUNTC_{\eqref{rule: split}\eqref{rule: tc-l}}\lgeqq\; \mkstate{\fmlset[2]}{\prione}{\ynmulstruc}{\ainter}{\bag}
    \doublewedge^{}
    \underbrace{\mkstate{(\fmlset[1]',\ \fml[3]\fmlsubst{u v}{z_2 z}, \TC{\fml[3]}{u v} {z} {z_3})}{\prione}{\ynmulstruc}{\bl}{\bag}}_{\text{Case \nameref{state: tc l appendix})}}.
    \end{align*}

    \subproofcasethin{Case \nameref{state: tc r appendix})}
    We have:
    \begin{align*}
    \mkstate{(\fmlset[1]',\ \TC{\fml[3]}{u v} {z_1} {z_2}, \fmlset[3])}{\prione}{\ynmulstruc}{\ainter}{\bag}
    & \;\leadstoUNTC_{\eqref{rule: split}\eqref{rule: tc-l}}\lgeqq\;
    \mkstate{(\fmlset[1]')}{\prione}{\ynmulstruc}{\ainter}{\bag} \doublewedge^{} \underbrace{\mkstate{(\fml[3]\fmlsubst{u v}{z_1 z}, \TC{\fml[3]}{u v} {z} {z_2}, \fmlset[3])}{\prione}{\ynmulstruc}{\bl}{\bag}}_{\text{Case \nameref{state: tc bi appendix})}}.
    \end{align*}

    \subproofcasethin{Case \nameref{state: tc bi appendix})}
    We have:
    \begin{align*}
    \mkstate{(\fmlset[2], \TC{\fml[3]}{u v} {z_2} {z_3}, \fmlset[3])}{\prione}{\ynmulstruc}{\ainter}{\bag}
    & \;\leadstoUNTC_{\eqref{rule: split}\eqref{rule: tc-l}}\lgeqq\;
    \mkstate{(\fmlset[2])}{\prione}{\ynmulstruc}{\ainter}{\bag}
    \doublewedge
    \underbrace{\mkstate{(\fml[3]\fmlsubst{u v}{z_2 z}, \TC{\fml[3]}{u v}{z} {z_3}, \fmlset[3])}{\prione}{\ynmulstruc}{\bl}{\bag}}_{\text{Case \nameref{state: tc bi appendix})}}.
    \end{align*}

    In each case, the \kl(UNTC){parameter} strictly decreases, as the "TC-formula" is unfolded.
    Thus by IH, this case has been shown.
    
    \proofcasethin{Case: $\ainter(y_0) \subseteq \ND{\ynmulstruc}{\bag}{0}$}
    Similar to Case: $\ainter(x_0) \subseteq \ND{\ynmulstruc}{\bag}{0}$.

    \proofcasethin{Otherwise}
    Then there exist distinct $\dir, \dir' \in \set{\back 2, \back 1, 1, 2}$ such that
    $\ainter(x_0) = \set{\dir}$ and $\ainter(y_0) = \set{\dir'}$.
    We distinguish the following three cases except Case \nameref{state: tc self appendix}).
    Case \nameref{state: tc self appendix}) is considered in the last.

    \subproofcasethin{Case \nameref{state: tc l appendix})}
    We have:
    \begin{align*}
        \mkstate{(\fmlset[1]',\  \fmlset[2], \TC{\fml[3]}{u v} z_2 z_3)}{\prione}{\ynmulstruc}{\ainter}{\bag}
        \leadstoUNTC_{\eqref{rule: tc-split-l}}
        &~\mkstate{(\fmlset[1]',\ \underbrace{\fmlset[2], \TC{\fml[3]}{u v} z_2 z_2', \fml[3]\fmlsubst{u v}{z_2' z_3'}}_{\text{$\in \prebiclUNTC(\TC{\fml[3]}{u v} x y)$}},
        \underbrace{\TC{\fml[3]}{u v} z_3' z_3}_{\text{$\in \preclUNTC(\TC{\fml[3]}{u v} x y)$}})}{\prione}{\ynmulstruc}{\bl}{\bag}\\
        \leadstoUNTC^*
        &~\mkstate{(\fmlset[1]',\ \underbrace{\fmlset[2], \TC{\fml[3]}{u v} z_2 z_2', \fmlset[2]'}_{\text{$\in \prebiclUNTC(\TC{\fml[3]}{u v} x y)$}},\  \underbrace{\fmlset[3]', \TC{\fml[3]}{u v} z_3' z_3}_{\text{$\in \preclUNTC(\TC{\fml[3]}{u v} x y)$}})}{\prione}{\ynmulstruc}{\bl}{\bag} \tag{\Cref{lemma: split UNTC}}\\
        \leadstoUNTC_{\eqref{rule: split}}
        &~\underbrace{\mkstate{(\fmlset[2], \TC{\fml[3]}{u v} z_2 z_2', \fmlset[2]')}{\prione}{\ynmulstruc}{\bl}{\bag}}_{\text{Case \nameref{state: tc bi appendix})}}
        \;\doublewedge^{}\;
        \underbrace{\mkstate{(\fmlset[1]',\  \fmlset[3]', \TC{\fml[3]}{u v} z_3' z_3)}{\prione}{\ynmulstruc}{\bl}{\bag}}_{\text{Case \nameref{state: tc l appendix})}}
    \end{align*}
    where $z_2'$ and $z_3'$ are "fresh" and $\fmlset[2]', \fmlset[3]' \in \preclUNTC(\fml[3]\fmlsubst{u v}{z_2' z_3'})$.
    Both the resulting \kl{formula} \kl{sets} above are in the set $\prebiclUNTC(\fml)$,
    by the definition of "bi-closure" (\Cref{definition: biclosure UNTC}).
    Thus, by IH, this case has been shown.
    Observe that in the steps above (from \eqref{rule: tc-split-l} to \eqref{rule: split}),
    it is necessary to temporarily use "states" in $\clUNTC(\fml)$ (not in $\prebiclUNTC(\fml)$).

    For Cases \nameref{state: tc r appendix}) and \nameref{state: tc bi appendix}),
    we can show in the same way as this case.
    Below, we only write the application of the "rules".

    \subproofcasethin{Case \nameref{state: tc r appendix})}
    We have:
    \begin{align*}
        \mkstate{(\fmlset[1]',\  \TC{\fml[3]}{u v} z_1 z_2, \fmlset[3])}{\prione}{\ynmulstruc}{\ainter}{\bag}
        \leadstoUNTC_{\eqref{rule: tc-split-l}}
        &~\mkstate{(\fmlset[1]',\ \underbrace{\TC{\fml[3]}{u v} z_1 z_1'}_{\text{$\in \preclUNTC(\TC{\fml[3]}{u v} x y)$}}, 
        \underbrace{\fml[3]\fmlsubst{u v}{z_1' z_2'}, \TC{\fml[3]}{u v} z_2' z_2, \fmlset[3]}_{\text{$\in \prebiclUNTC(\TC{\fml[3]}{u v} x y)$}})}{\prione}{\ynmulstruc}{\bl}{\bag}\\
        \leadstoUNTC^*
        &~\mkstate{(\fmlset[1]',\ \underbrace{\TC{\fml[3]}{u v} z_1 z_1', \fmlset[2]'}_{\text{$\in \preclUNTC(\TC{\fml[3]}{u v} x y)$}},\  \underbrace{\fmlset[3]', \TC{\fml[3]}{u v} z_2' z_2, \fmlset[3]}_{\text{$\in \prebiclUNTC(\TC{\fml[3]}{u v} x y)$}})}{\prione}{\ynmulstruc}{\bl}{\bag} \tag{\Cref{lemma: split UNTC}}\\
        \leadstoUNTC_{\eqref{rule: split}}
        &~\underbrace{\mkstate{(\fmlset[1]',\  \TC{\fml[3]}{u v} z_1 z_1', \fmlset[2]')}{\prione}{\ynmulstruc}{\bl}{\bag}}_{\text{Case \nameref{state: tc r appendix})}}
        \;\doublewedge^{}\;
        \underbrace{\mkstate{(\fmlset[3]', \TC{\fml[3]}{u v} z_2' z_2, \fmlset[3])}{\prione}{\ynmulstruc}{\bl}{\bag}}_{\text{Case \nameref{state: tc bi appendix})}}
    \end{align*}

    \subproofcasethin{Case \nameref{state: tc bi appendix})}
    We have:
    \begin{align*}
        \mkstate{(\fmlset[2], \TC{\fml[3]}{u v} z_2 z_3, \fmlset[3])}{\prione}{\ynmulstruc}{\ainter}{\bag}
        \leadstoUNTC_{\eqref{rule: tc-split-l}}
        &~\mkstate{(\underbrace{\fmlset[2], \TC{\fml[3]}{u v} z_2 z_2', \fml[3]\fmlsubst{u v}{z_2' z_3'}}_{\text{$\in \prebiclUNTC(\TC{\fml[3]}{u v} x y)$}},
        \underbrace{\TC{\fml[3]}{u v} z_3' z_3, \fmlset[3]}_{\text{$\in \preclUNTC(\TC{\fml[3]}{u v} x y)$}})}{\prione}{\ynmulstruc}{\bl}{\bag}\\
        \leadstoUNTC^*
        &~\mkstate{(\underbrace{\fmlset[2], \TC{\fml[3]}{u v} z_2 z_2', \fmlset[2]'}_{\text{$\in \prebiclUNTC(\TC{\fml[3]}{u v} x y)$}},\  \underbrace{\fmlset[3]', \TC{\fml[3]}{u v} z_3' z_3, \fmlset[3]}_{\text{$\in \prebiclUNTC(\TC{\fml[3]}{u v} x y)$}})}{\prione}{\ynmulstruc}{\bl}{\bag} \tag{\Cref{lemma: split UNTC}}\\
        \leadstoUNTC_{\eqref{rule: split}}
        &~\underbrace{\mkstate{(\fmlset[2], \TC{\fml[3]}{u v} z_2 z_2', \fmlset[2]')}{\prione}{\ynmulstruc}{\bl}{\bag}}_{\text{Case \nameref{state: tc bi appendix})}}
        \;\doublewedge^{}\;
        \underbrace{\mkstate{(\fmlset[3]', \TC{\fml[3]}{u v} z_3' z_3, \fmlset[3])}{\prione}{\ynmulstruc}{\bl}{\bag}}_{\text{Case \nameref{state: tc bi appendix})}}
    \end{align*}

    \subproofcasethin{Case \nameref{state: tc self appendix})}
    Assuming that the above cases are passed for each "TC-formula",
    we can let 
    \[\fmlset[1] =
    (\set{\fmlset[1]_{\dir}'}_{\dir \in \set{\back 2, \back 1, 1, 2}},\     \set{\TC{\fml[3]_i}{u_i v_i} x_i y_i}_{i}),\]
    where for each $\dir \in \set{\back 2, \back 1, 1, 2}$, $\ainter(\FV(\fmlset[1]_{\dir}')) \subseteq \ND{\ynmulstruc}{\bag}{\dir}$, and for each $i$, we have:
    \begin{itemize}
        \item $\ainter(x_i) = \set{\dir}$ and $\ainter(y_i) = \set{\dir'}$ for some pairwise distinct $\dir, \dir' \in \set{\back 2, \back 1, 1, 2}$,
        \item $\TC{\fml[3]_i}{u_i v_i} x_i y_i \in \preclUNTC(\TC{\fml[3]_i}{u_i v_i} x_i y_i)$ occurs
        in the derivation tree of $\fmlset[1] \in \prebiclUNTC(\fml)$.
    \end{itemize}

    We then have:
    \begin{align*}
        &\mkstate{(\set{\fmlset[1]_{\dir}'}_{\dir \in \set{\back 2, \back 1, 1, 2}},\
        \set{\TC{\fml[3]_i}{u_i v_i} x_i y_i}_{i})}{\prione}{\ynmulstruc}{\ainter}{\bag}
        \\
        \leadstoUNTC_{\eqref{rule: tc-split-l}}^*
        &~(\set{\fmlset[1]_{\dir}'}_{\dir \in \set{\back 2, \back 1, 1, 2}}, \set{\underbrace{\TC{\fml[3]_i}{u_i v_i} x_i z_{i,1}, \fml[3]_i\fmlsubst{u_i v_i}{z_{i,1} z_{i,2}}}_{\text{$\in \preclUNTC(\TC{\fml[3]_i}{u_i v_i} x_i y_i)$}},
        \underbrace{\TC{\fml[3]_i}{u_i v_i} z_{i,2} y_i}_{\text{$\in \preclUNTC(\TC{\fml[3]_i}{u_i v_i} x_i y_i)$}}}_{i})\mkstate{}{\prione}{\ynmulstruc}{\bl}{\bag}\\
        \leadstoUNTC^*
        &~(\set{\fmlset[1]_{\dir}'}_{\dir \in \set{\back 2, \back 1, 1, 2}}, \set{\underbrace{\TC{\fml[3]_i}{u_i v_i} x_i z_{i,1}, \fmlset[2]_{i}'}_{\text{$\in \preclUNTC(\TC{\fml[3]_i}{u_i v_i} x_i y_i)$}},\ 
        \underbrace{\fmlset[3]_{i}', \TC{\fml[3]_i}{u_i v_i} z_{i,2} y_i}_{\text{$\in \preclUNTC(\TC{\fml[3]_i}{u_i v_i} x_i y_i)$}}}_{i})\mkstate{}{\prione}{\ynmulstruc}{\bl}{\bag} \tag{\Cref{lemma: split UNTC}}\\
        & \tag*{where $z_{i,1}$ and $z_{i,2}$ are "fresh" and $\fmlset[2]_i', \fmlset[3]_i' \in \preclUNTC(\fml[3]_i\fmlsubst{u_i v_i}{z_{i,1} z_{i,2}})$}\\
        \leadstoUNTC_{\eqref{rule: split}}^*
        &~\bigdoublewedge_{\dir \in \set{\back 2, \back 1, 1, 2}} \mkstate{(\fmlset[1]_{\dir}', \set{\fmlset[2]_{i, \dir}'', \fmlset[3]_{i, \dir}''}_{i})}{\prione}{\ynmulstruc}{\bl}{\bag}.
    \end{align*}
    Here,
    \begin{align*}
    \fmlset[2]_{i, \dir}''
    &\defeq \begin{cases}
        \set{\TC{\fml[3]_i}{u_i v_i} x_i z_{i,1}, \fmlset[2]_{i}'} & \text{if $\ainter(x_i) = \set{\dir}$}\\
        \emptyset & \text{otherwise},
        \end{cases}\\
    \fmlset[3]_{i, \dir}''
    &\defeq \begin{cases}
        \set{\fmlset[3]_{i}', \TC{\fml[3]_i}{u_i v_i} z_{i,2} y_i} & \text{if $\ainter(y_i) = \set{\dir}$}\\
        \emptyset & \text{otherwise}.
        \end{cases}
    \end{align*}
    Because $\ainter(x_i) \neq \ainter(y_i)$,
    either $\fmlset[2]_{i, \dir}''$ or $\fmlset[3]_{i, \dir}''$ is the empty set for every $i$ and $\dir$.
    Thus, $(\fmlset[1]_{\dir}', \set{\fmlset[2]_{i, \dir}'', \fmlset[3]_{i, \dir}''}_{i}) \in \prebiclUNTC(\fml)$.
    Hence, by IH, this case has been shown.
    After that, we can assume \nameref{cond: evaluation strategy UNTC proof 3}).

    \proofcasethin{Step 3}
    Finally, by the condition \nameref{cond: evaluation strategy UNTC proof 3}),
    we can show the remaining case in the same way as Steps \nameref{evaluation strategy UNFO fmlset split} and \nameref{evaluation strategy UNFO mov} in \Cref{lemma: parameter UNFO odd}.
    
    Hence, this completes the proof.
\end{proof}

\begin{figure*}[t]
    \centering
\begin{tcolorbox}[standard jigsaw, opacityback = 0.8, colframe=black!80, boxrule=.3mm, left=.2mm, right=.2mm]
    \vspace{-4ex}
    \begin{align*}
    \dots&\leadstoUNTC \dots \tag*{(all the rules for \kl{UNFO}, given in \Cref{figure: rules UNFO'})}\\
    \tup{(\fmlset[1], \TC{\fml[2]}{u v} {x} {y})^{\pri}_{\ainter, \chi}, \ynstruc}
    &\leadstoUNTC
    (\tup{(\fmlset[1], {x} \EQ {y})^{\pri}_{\ainter, \chi}, 0} \doublevee^{\pri}\\
    &\hspace{-7em} \tup{(\fmlset[1], \fml[2]\fmlsubst{u v}{z z'}, \TC{\fml[2]}{u v}{z' y})^{\pri}_{\ainter\aintersubst{z}{\set{\vertex}}\aintersubst{z'}{\chi(\ADBsymbol)}, \chi}, 0})
    \text{ if $\ainter(x) = \set{\vertex} \subseteq \univ{\ynstruc}$},
    \tag*{(TC-l)}\\
    \tup{(\fmlset[1], \TC{\fml[2]}{u v} {x} {y})^{\pri}_{\ainter, \chi}, \ynstruc}
    &\leadstoUNTC
    (\tup{(\fmlset[1], {x} \EQ {y})^{\pri}_{\ainter, \chi}, 0} \doublevee^{\pri}\\
    &\hspace{-7em} \tup{(\fmlset[1], \TC{\fml[2]}{u v} {x} {z}, \fml[2]\fmlsubst{u v}{z z'})^{\pri}_{\ainter\aintersubst{z}{\chi(\ADBsymbol)}\aintersubst{z'}{\set{\vertex}}, \chi}, 0})
    \text{ if $\ainter(y) = \set{\vertex} \subseteq \univ{\ynstruc}$},
    \tag*{(TC-r)}\\
    \tup{(\fmlset[1], \TC{\fml[2]}{u v} {x} {y})^{\pri}_{\ainter, \chi}, \ynstruc}
    &\leadstoUNTC\\
    &\hspace{-7em} \bigdoublevee_{\dir' \in \range{\back 2}{2} \setminus \set{\dir}}^{\pri}
    \tup{(\fmlset[1], \TC{\fml[2]}{u v} {x} {z}, \fml[2]\fmlsubst{u v}{z z'}, \TC{\fml[2]}{u v} {z}' {y})^{\pri}_{\ainter\aintersubst{z}{\chi(\NDsymbol_{\dir})}\aintersubst{z'}{\chi(\NDsymbol_{\dir'})}, \chi}, 0} \\
    & \hspace{-4em} \text{if $\ainter({x}) = \set{\dir} \subseteq \set{\back 2, \back 1, 1, 2}$ and
    $\ainter(y) = \set{\elem} \not\subseteq \chi(\NDsymbol_{\dir})$,}\tag*{(TC-split-l)}\\
    \tup{(\fmlset[1], \TC{\fml[2]}{u v} {x} {y})^{\pri}_{\ainter, \chi}, \ynstruc}
    &\leadstoUNTC\\
    &\hspace{-7em} \bigdoublevee_{\dir' \in \range{\back 2}{2} \setminus \set{\dir}}^{\pri}
    \tup{(\fmlset[1], \TC{\fml[2]}{u v} {x} {z}, \fml[2]\fmlsubst{u v}{z z'}, \TC{\fml[2]}{u v} {z}' {y})^{\pri}_{\ainter\aintersubst{z}{\chi(\NDsymbol_{\dir'})}\aintersubst{z'}{\chi(\NDsymbol_{\dir})}, \chi}, 0} \\
    & \hspace{-4em} \text{if $\ainter({y}) = \set{\dir} \subseteq \set{\back 2, \back 1, 1, 2}$ and
    $\ainter(x) = \set{\elem} \not\subseteq \chi(\NDsymbol_{\dir})$.}\tag*{(TC-split-r)}
    \end{align*}
\end{tcolorbox}     \caption{\kl{2APTA} transition rules for the "local checker" for \kl{UNTC} (\Cref{figure: rules UNTC}). Here, $z$ and $z'$ are "fresh" "variables".}
    \label{figure: rules UNTC'}
\end{figure*}

\subsection{Proof of {\Cref{proposition: closure size UNTC}}}\label{section: proposition: closure size UNTC}
\begin{proof}
    By easy induction on the \kl(formula){size} $\fmllen{\fml}$, similar to \Cref{proposition: closure size UNFO}.
    We write only the case of \kl{transitive closure formulas}.
    
    \proofcasethin{Case $\fml = [\fml[2]]_{u v}^* {x} {y}$}
    We have
    \begin{align*}
        \card\preclUNTC(\fml)/{\renameeq}
        &\le 1 + \card \preclUNTC({x} \EQ {y})/{\renameeq} + \card \preclUNTC({x} \EQ {z})/{\renameeq}
        + \card \preclUNTC({z} \EQ {y})/{\renameeq} +  \card \preclUNTC({z} \EQ {z}')/{\renameeq} \\
        &\hspace{2em} + 5 \card\preclUNTC(\fml[2][{z}_1{z}_2/u v])/{\renameeq}\\
        &\le 1 + 4 (2 \fmllen{{x} = {y}})^{2 \fmllen{{x} = {y}}} + 5 (2 \fmllen{\fml[2]})^{2 \fmllen{\fml[2]}} \tag{IH}\\ 
        &\le 1 + (2 \fmllen{{x} = {y}})^{2 \fmllen{\fml[1]}} + (2 \fmllen{\fml[2]})^{2 \fmllen{\fml[1]}}\\
        &\le (2 (1 + \fmllen{{x} = {y}} + \fmllen{\fml[2]}))^{2 \fmllen{\fml[1]}}
        \le (2 \fmllen{\fml})^{2 \fmllen{\fml}}. \qedhere
    \end{align*}
\end{proof}

\subsection{Proof of {\Cref{corollary: 2-EXPTIME UNTC}}: 2APTA construction for UNTC}\label{section: 2APTA construction UNTC}

In this section, from the "local checker" for \kl{UNTC}, we construct \kl{2APTAs}.
\begin{definition}\label{definition: 2APTA UNTC}
    For $k \in \pnat$ and a \kl{UNTC} \kl{sentence} $\fml$,
    the \kl{2APTA} $\automaton[1]_{k}^{\fml} = \tup{Q, \delta, \Pri, q_0}$ over $\allatstruc_{k}$
    is defined in the same way as the definition of \Cref{definition: 2APTA UNFO} where
    the relation $(\leadsto) \subseteq (Q \times \allatstruc_{k}) \times \PBallfml(Q \times \range{\back 2}{2})$
    is defined as the minimal binary relation closed under the rules in \Cref{figure: rules UNTC'}.
\end{definition}
This \kl{2APTA} satisfies the following.
\begin{proposition}\label{proposition: 2APTA UNTC}
    Let $k \in \pnat$ and $\fml$ be a \kl{UNTC} \kl{sentence}.
    For every $\allatstruc_{k}$-labeled \kl{binary} \kl{tree} $\ynmulstruc$,
    we have:
    \[\vdashUNTCsub_{\clUNTC(\fml)} (\fml)^{1, \ynmulstruc}_{\emptyset, \eps} \quad\Longleftrightarrow\quad \ynmulstruc \in \automatonlang(\automaton[1]_{k}^{\fml}).\]
\end{proposition}
\begin{proof}[Proof Sketch]
    Similar to \Cref{section: 2APTA construction UNFO}.
    The rules in \Cref{figure: rules UNTC'} are the same as those in \Cref{figure: rules UNTC}.
    Thus, we can convert an \kl{accepting} \kl{run} of $\vdashUNTCsub_{\clUNTC(\fml)} (\fml)^{1, \ynmulstruc}_{\emptyset, \eps}$
    into an \kl{accepting} \kl{run} witnessing $\ynmulstruc \in \automatonlang(\automaton[1]_{k}^{\fml})$, and vice versa.
\end{proof}
\begin{proof}[Proof of \Cref{corollary: 2-EXPTIME UNTC}]
    For all \kl{UNTC} \kl{sentences} $\fml$, we have:
    \begin{align*}
    &\text{$\fml$ is \kl{satisfiable}}\\
    &~\Longleftrightarrow~
    \text{$\fml$ is \kl{satisfiable} in a \kl{structure} of \kl{treewidth} at most $\fmllen{\fml}-1$} \tag{By \Cref{proposition: tw}}\\
    &~\Longleftrightarrow~
    \text{$\glue \ynmulstruc \modelsnonass \fml$ for some $\allatstruc_{\fmllen{\fml}}$-labeled \kl{binary} \kl{tree} $\ynmulstruc$} \tag{By \Cref{section: abstract semantics on tree decompositions}}\\
    &~\Longleftrightarrow~ \text{$\absmodels \mkstate{(\fml)}{\prione}{\ynmulstruc}{\emptyset}{\eps}$ for some $\allatstruc_{\fmllen{\fml}}$-labeled \kl{binary} \kl{tree} $\ynmulstruc$} \tag{By \Cref{proposition: standard semantics and semantics on tree decompositions}}\\
    &~\Longleftrightarrow~ \text{$\vdashUNTCsub_{\clUNTC(\fml)} \mkstate{(\fml)}{\prione}{\ynmulstruc}{\emptyset}{\eps}$ for some $\allatstruc_{\fmllen{\fml}}$-labeled \kl{binary} \kl{tree} $\ynmulstruc$} \tag{By \Cref{theorem: completeness closure UNTC}}\\
    &~\Longleftrightarrow~ \text{$\automatonlang(\automaton[1]^{\fml}_{\fmllen{\fml}}) \neq \emptyset$} \tag{By \Cref{proposition: 2APTA UNTC}}.
    \end{align*}

    On the \kl(2APTA){size}, by restricting $\Rels$ to those occurring in $\fml$,
    \begin{itemize}
        \item the alphabet size $\card \allatstruc_{\fmllen{\fml}}$ is $2^{\mathcal{O}(\fmllen{\fml} \log \fmllen{\fml})}$,

        \item the number of \kl{states} is $2^{\poly(\fmllen{\fml})}$,
        
        \item (from the two above) the transition size is $2^{\poly(\fmllen{\fml})}$.
    \end{itemize}
    Hence, by \Cref{proposition: 2APTA complexity}, the \kl{satisfiability problem} for \kl{UNTC} is in "2ExpTime".
\end{proof}

\end{scope}
\end{toappendix}
\section{Model Checking}\label{section: model checking}
In this section, we study the combined complexity of the ""model checking problem"":
given a "sentence@@TC" $\fml$ and a "structure" $\ynstruc$, to decide whether $\ynstruc \modelsnonass \fml$.
Here, the size of a "structure" $\ynstruc$ is defined as the sum of the number of "domain elements" and the number of "tuples".
For "UNFO" and "GNFO", the following complexity results are known.
Here, ""PNPlog2"" is the class of decision problems solvable by a $\mathsf{P}$ machine with $\mathcal{O}(\log^2 n)$ accesses to an $\mathsf{NP}$ oracle, where $n$ is the input length.
\begin{proposition}[\cite{segoufinUnaryNegation2013,baranyGuardedNegation2015}]\label{proposition: model checking UNFO GNFO}
    The "model checking problem" is "PNPlog2"-complete
    for both "UNFO" \cite{segoufinUnaryNegation2013} and "GNFO" \cite{baranyGuardedNegation2015}.
\end{proposition}
However, the precise complexity was left open for "UNTC" (and thus also for "GNTC") \cite[Open Problem 2 and Footnote 16 of version v1]{figueiraCommonAncestorPDL2025}.
Nevertheless, the above complexity bounds continue to hold even if we extend the logics with guarded "TC-formulas".%
\footnote{"Cf", "PNP"-complete for "LFP" and even the \intro*\kl{alternation-free} fragment of \intro*\kl{UNFP} \cite[Theorem 5.5]{segoufinUnaryNegation2013}.}

The basic idea is the reduction \cite{segoufinUnaryNegation2013} from "UNFO" to ``Tree Block Satisfaction'' \intro*\TBSAT{1}.
The problem \TBSAT{q} is in "PNPlog2" for every fixed $q \ge 1$ \cite[Corollary 3.5]{schnoebelenOracleCircuitsBranchingTime2003}.
"Transitive closure@transitive closure" can still be handled by using \TBSAT{2} instead of \TBSAT{1}.
Building on this idea, we can extend the reduction to "UNTC".
For "GNTC", we can reduce the "model checking problem" to "that@model checking problem" of "UNTC" by adding a new "domain element" for each tuple, "cf", \cite[Theorem 5.1]{baranyGuardedNegation2015} for "GNFO" and "GNFP".\footnote{%
Recall that the size of the "structure" is defined as the number of "domain elements" and the number of "tuples".
Note that the "domain@@structure" of the resulting "structure" is exponentially larger than "that@domain@structure" of the original "structure" when the maximal "arity" of "atoms" is unbounded.
In contrast, for "UNTC" (and also $\WHERE{q, k}$), the "model checking problem" is "PNPlog2"-complete even when the size is defined as the number of "domain elements".
}
\ifthenelse{\boolean{arXiv}}{\begin{theorem}[{\Cref{section: theorem: model checking}}]}{\begin{theorem}}\label{theorem: model checking}
\gdef\modelchecking{
The "model checking problem" is
"PNPlog2"-complete for both "UNTC" and "GNTC".
}
\modelchecking
\end{theorem}
As a corollary,
the "model checking problem" for "UNFOreg" is also "PNPlog2"-complete,
which provides a revised proof of \cite[Theorem 18]{jungQueryingUnaryNegation2018}; see also \cite[Footnote 16 of version v1]{figueiraCommonAncestorPDL2025} for a report on the error of the proof presented in \cite[Theorem 18]{jungQueryingUnaryNegation2018}.

In our proof, we introduce a useful logic $\WHERE{q, k}$.
The syntax of $\intro*\WHERE{q, k}$ \intro*\kl(where){formulas} is given by the following grammar:
\begin{align*}
    \hat{\fml} &~\Coloneqq~ \fml[2] ~\textsf{where}~ (X_1 x_{1, 1} \dots x_{1,\breve{k}_1} \leftarrow  \hat{\fml}_1,\;  \dots,\; X_m x_{m, 1} \dots x_{m,\breve{k}_m} \leftarrow  \hat{\fml}_m),
\end{align*}
where
\begin{itemize}
    \item $\fml[2]$ is an \reintro*\kl{EFO formula} (\reintro*\kl{existential FO formula}) over $\Rels \dcup \set{X_1, \dots, X_m}$ such that
    \begin{itemize}
        \item \emph{$X_j$ occurs at most $q$ times} for each $j \in \1{m}$,
    \end{itemize}
    \item $m \ge 0$, $X_1, \dots, X_m$ are pairwise distinct, and for each $j \in \1{m}$,
    \begin{itemize}
        \item $\breve{k}_j \le k$, and
        \item %
        $\FV(\hat{\fml}_j) \subseteq \set{x_{j,1}, \dots, x_{j, \breve{k}_{j}}}$ 
        and the "variables" $x_{j,1}, \dots, x_{j, \breve{k}_j}$ are pairwise distinct,
    where $\hat{\fml}_j$ is a \WHERE{q, k} \kl(where){formula}.
    \end{itemize}
\end{itemize}
The "arity" of $X_i$ is bounded by $k$,
while the "arity" of $\rsym \in \Rels$ is not bounded.

Each \WHERE{q, k} \kl(where){formula} $\hat{\fml}$ naturally expresses an \reintro*\kl{FO formula} $\WHEREtoFO{\hat{\fml}}$,
more precisely,
\AP
$\intro*\WHEREtoFO{\hat{\fml}}$ is defined as the \kl(where){formula} $\fml[2]$ in which each occurrence $X_j \mul{z}$ has been replaced with $\WHEREtoFO{\hat{\fml[1]}_j}\fmlsubst{\mul{y}_j}{\mul{z}}$ (possibly with renaming bound "variables" in $\WHEREtoFO{\hat{\fml[1]}_j}$ to avoid naming conflicts).

For instance, if $\hat{\fml}$ is the following \WHERE{2, 3} \kl(where){formula}
\begin{align*}
&(\exists z~X x z \land X z y) \land Y w~\textsf{where}~(\\
&\quad X x y \leftarrow x = y \lor E x y ~\textsf{where}~(),\\
&\quad Y x \leftarrow \exists w \lnot Z x x w ~\textsf{where}~(Z x y z \leftarrow E y z)\\
&),
\end{align*}
then $\WHEREtoFO{\hat{\fml}}$ is $(\exists z ((x = z \lor Exz) \land (z = y \lor Ezy))) \land \exists w' \lnot E w w'$.
The satisfaction relation $\ynstruc \modelsass{\inter} \hat{\fml}$ is defined as $\ynstruc \modelsass{\inter} \WHEREtoFO{\hat{\fml}}$, using the standard "FO" satisfaction relation.

The "model checking problem" for $\WHERE{q, k}$ can be naturally encoded by \TBSAT{q},
and hence we can show the "problem@model checking problem" is "PNPlog2"-complete.
\ifthenelse{\boolean{arXiv}}{\begin{lemma}[{\Cref{section: lemma: WHERE}}]}{\begin{lemma}}
\label{lemma: WHERE}
\gdef\modelcheckingWHERE{%
For each fixed integer $q, k \ge 1$,
the "model checking problem" for $\WHERE{q, k}$ is "PNPlog2"-complete.
}
\modelcheckingWHERE
\end{lemma}
The "model checking problem" for "UNTC" can be reduced to the "model checking problem" for $\WHERE{2, 2}$ in polynomial time, and thus is "PNPlog2"-complete.

Our approach also shows that, in \Cref{theorem: model checking},
the condition of "unary negation" can be generalized to $\kappa$-ary negation for any fixed $\kappa$ ("cf", \Cref{footnote: binary negation}).
\AP
\phantomintro{genUNTC formula}\phantomintro{genUNTC sentence}
We say that a "formula@@TC" $\fml$ (in "TC") is $\tup{\kappa, \lambda, \xi}$-""genUNTC"" if
(i)
each subformula of the form $\lnot \fml[2]$ has at most $\kappa$ free variables: $\card\FV(\fml[2]) \le \kappa$, 
(ii)
for each subformula of the form $\TC{\fml[2]}{\mul{u}} \mul{x}$,
the "arity" is at most $\lambda$: $\card \toSet{\mul{u}} \le 2 \lambda$, and
(iii)
each subformula of the form $\TC{\fml[2]}{\mul{u}} \mul{x}$
has at most $\xi$ parameters: $\card(\FV(\fml[2]) \setminus \toSet{\mul{u}}) \le \xi$.

\AP
\phantomintro{genGNTC formula}\phantomintro{genGNTC sentence}
We say that a "formula@@TC" $\fml$ (in "TC") is $\tup{\kappa, \lambda, \xi}$-""genGNTC"" if 
(i) each negation appears in the form of $(\bigwedge_{i=1}^{\kappa} \afml_{i}) \land \lnot \fml[2]$ where $\FV(\fml[2]) \subseteq \FV(\bigwedge_{i=1}^{\kappa} \afml_{i})$,
(ii) each TC appears in the form of $[(\bigwedge_{i=1}^{\lambda} \afml[1]_{i}) \land (\bigwedge_{i=1}^{\lambda} \afml[2]_{i}) \land \fml[2]]^*_{\mul{u}, \mul{v}} \mul{x} \mul{y}$
where $\toSet{\mul{u}} \subseteq \FV(\bigwedge_{i=1}^{\lambda} \afml[1]_{i})$ and $\toSet{\mul{v}} \subseteq \FV(\bigwedge_{i=1}^{\lambda} \afml[2]_{i})$, and
(iii)
each subformula of the form $\TC{\fml[2]}{\mul{u}} \mul{x}$
has at most $\xi$ parameters: $\card(\FV(\fml[2]) \setminus \toSet{\mul{u}}) \le \xi$.
Note that 
$\tup{1, 1, 0}$-"genUNTC" coincides with "UNTC" and
$\tup{1, 1, 0}$-"genGNTC" coincides with "GNTC", respectively.
Additionally, $\tup{\kappa, \lambda, \xi}$-"genGNTC" has at least as much expressive power as $\tup{\kappa, \lambda, \xi}$-"genUNTC" for every $\kappa, \lambda, \xi \ge 0$.

Similar to \Cref{theorem: model checking},
the "model checking problem" for $\tup{\kappa, \lambda, \xi}$-"genUNTC" can be reduced to "that@model checking problem" for $\WHERE{2, \max(\kappa, 2\lambda + \xi)}$ in polynomial time, and hence we can show the problem is "PNPlog2"-complete.
\ifthenelse{\boolean{arXiv}}{\begin{theorem}[\Cref{section: theorem: model checking kFOTC kGNTC}]}{\begin{theorem}}\label{theorem: model checking kFOTC kGNTC}
\gdef\modelcheckingkFOTCkGNTC{%
For every fixed integer $\kappa \ge 1$ and $\lambda, \xi \ge 0$,
the "model checking problem" is "PNPlog2"-complete
for both $\tup{\kappa, \lambda, \xi}$-"genUNTC" and  $\tup{\kappa, \lambda, \xi}$-"genGNTC".
}
\modelcheckingkFOTCkGNTC
\end{theorem}

\begin{toappendix}

\subsection{Preliminaries for \Cref{section: lemma: WHERE}: Tree Block Satisfaction}
In this section, we recall the definition of ``Tree Block Satisfaction'' \TBSAT{q} \cite{schnoebelenOracleCircuitsBranchingTime2003}.

For $w \ge 1$ and $n \in \range{0}{2^{w} - 1}$,
we write $\intro*\mkbin{w}{n}$ to denote the binary encoding of $n$ as a sequence over $\set{\const{0}, \const{1}}$ of length $w$ ("eg", $\mkbin{4}{5} = \const{0}\const{1}\const{0}\const{1}$).

For $q \ge 1$ and $k \ge 1$,
a \intro*\kl{TB-tree} (without inputs) of type $q \times \mathrm{M}$ and \intro*\kl(TB-tree){width} $k$ is a \kl{tree} consisting of \intro*\kl(TB-tree){blocks},
where each \kl(TB-tree){block} is a kind of Boolean circuit having $k$ output gates and having $k$ input gates for each of its children, see \Cref{figure: TB}.

\begin{figure}[h]
\centering
\begin{tikzpicture}[every node/.style={font=\small}, node distance=1cm]
\draw[thick, fill=blue!10] (0,0) -- (8.5,0) -- (8.0,2.5) -- (0.5,2.5) -- cycle;
\node[above] (z1) at (1.9,2.7) {$z_0$};
\node[right=.8cm of z1] (z2) {$z_1$};
\node[right=.8cm of z2] (dots_z) {$\cdots$};
\node[right=.8cm of dots_z] (zk) {$z_{k-1}$};
\node[below=.5cm of z1, draw, dashed, fill=white, minimum width=0.8cm, minimum height=0.6cm] (chi1) {$\chi_0$};
\node[below=.5cm of z2, draw, dashed, fill=white, minimum width=0.8cm, minimum height=0.6cm] (chi2) {$\chi_1$};
\node[below=.5cm of dots_z] (dots_chi) {$\cdots$};
\node[below=.5cm of zk, draw, dashed, fill=white, minimum width=0.8cm, minimum height=0.6cm] (chik) {$\chi_{k-1}$};
\draw[thick] ($(z1.south)$) -- (chi1.north);
\draw[thick] ($(z2.south)$) -- (chi2.north);
\draw[thick] ($(zk.south)$) -- (chik.north);

\node (y1) at (1.1, .5) {$y_0^{(1)} \cdots y_{k-1}^{(1)}$};
\node[right=.6cm of y1] (y2) {$y_0^{(2)} \cdots y_{k-1}^{(2)}$};
\node[right=.6cm of y2] (dots_y) {$\dots$};
\node[right=.6cm of dots_y] (yn) {$y_0^{(m)} \cdots y_{k-1}^{(m)}$};

\draw[fill = blue!10] ($(y1.south) + (-0.5,-.5)$) -- ($(y1.south) + (-0.8,-1.)$) -- ($(y1.south) + (0.8,-1.)$) -- ($(y1.south) + (0.5,-.5)$) -- cycle;
\draw[fill = blue!10] ($(y2.south) + (-0.5,-.5)$) -- ($(y2.south) + (-0.8,-1.)$) -- ($(y2.south) + (0.8,-1.)$) -- ($(y2.south) + (0.5,-.5)$) -- cycle;
\draw[fill = blue!10] ($(yn.south) + (-0.5,-.5)$) -- ($(yn.south) + (-0.8,-1.)$) -- ($(yn.south) + (0.8,-1.)$) -- ($(yn.south) + (0.5,-.5)$) -- cycle;
\draw[thick] ($(y1.south)+ (-.4,0)$) -- ($(y1.south) + (-.4,-.5)$);
\draw[thick] ($(y1.south)+ (-.2,0)$) -- ($(y1.south) + (-.2,-.5)$);
\node[below right = 0.15cm and -.2cm of y1.south] {$\cdots$};
\draw[thick] ($(y1.south)+ (.4,0)$) -- ($(y1.south) + (.4,-.5)$);
\draw[thick] ($(y2.south)+ (-.4,0)$) -- ($(y2.south) + (-.4,-.5)$);
\draw[thick] ($(y2.south)+ (-.2,0)$) -- ($(y2.south) + (-.2,-.5)$);
\node[below right = 0.15cm and -.2cm of y2.south] {$\cdots$};
\draw[thick] ($(y2.south)+ (.4,0)$) -- ($(y2.south) + (.4,-.5)$);
\draw[thick] ($(yn.south)+ (-.4,0)$) -- ($(yn.south) + (-.4,-.5)$);
\draw[thick] ($(yn.south)+ (-.2,0)$) -- ($(yn.south) + (-.2,-.5)$);
\node[below right = 0.15cm and -.2cm of yn.south] {$\cdots$};
\draw[thick] ($(yn.south)+ (.4,0)$) -- ($(yn.south) + (.4,-.5)$);
\node[below=0.6cm of dots_y.south] {$\cdots$};
\end{tikzpicture}
 \caption{A \kl(TB-tree){block} with $m$ children in a \kl{TB-tree} of \kl(TB-tree){width} $k$.}
\label{figure: TB}
\end{figure}

The $i$-th output ($z_i$ in \Cref{figure: TB}) of a \kl(TB-tree){block} is the \intro*\kl{truth value} ($\mathtt{0}$/$\mathtt{false}$ or $\mathtt{1}$/$\mathtt{true}$) defined in terms of the input gates by means of an existentially quantified Boolean formula $\chi_i$ of the form:
\[\exists \mul{b}_1 c_1 \dots \mul{b}_M c_M \mul{d}\left(\bigwedge_{\ell = 1}^{M} c_{\ell} = \const{in}^{(i_{\ell})}(\mul{b}_{\ell}) \land \fml[2]\right)\]
where
\begin{itemize}
\item for each $\ell \in \1{M}$,
\begin{itemize}
    \newcommand{\enc}{\const{enc}}
    \item $i_{\ell} \in \1{m}$,
    \item $\mul{b}_{\ell}$ is a \kl{Boolean variable} sequence of length $\ceil{\log_2 k}$,
    \item $\const{in}^{(i_{\ell})}(\mul{b}_{\ell})$ represents the \kl{truth value} of the $(\enc(\mul{b}_{\ell}))$-th bit of the $i_{\ell}$-th child \kl(TB-tree){block} ($y_{\enc(\mul{b}_{\ell})}^{(i_{\ell})}$ in \Cref{figure: TB})
    where $\enc(\mul{b}_{\ell})$ is the integer such that $\mul{b}_{\ell} = \mkbin{\ceil{\log_2 k}}{\enc(\mul{b}_{\ell})}$,
\end{itemize}
\item $\mul{d}$ is a \kl{Boolean variable} sequence of arbitrary length,
\item $\fml[2]$ is a \intro*\kl{Boolean formula} using any of the existentially quantified \kl{Boolean variables}, and
\item $\chi_i$ only uses $q$ bits from each input vector: $\card \set{\ell \in \1{M} \mid j = i_\ell} \le q$ for each $j \in \1{m}$.
\end{itemize}

\TBSAT{q} is the following problem: given a \kl{TB-tree} of type $q \times \mathrm{M}$ (where $q$ is fixed and the \kl(TB-tree){width} $k$ is arbitrary), does the first ($0$-th) output of the root \kl(TB-tree){block} have a $\mathtt{1}$?

\begin{proposition}[{\cite[Corollary 3.5]{schnoebelenOracleCircuitsBranchingTime2003}}]\label{proposition: TBSAT}
    For each fixed $q \ge 1$,
    \TBSAT{q} is "PNPlog2"-complete.
\end{proposition}

\subsection{Proof of \Cref{lemma: WHERE} %
}\label{section: lemma: WHERE}
In this section, we prove \Cref{lemma: WHERE}.
\begin{theorem*}[Restatement of \Cref{lemma: WHERE}]
\modelcheckingWHERE
\end{theorem*}

\begin{proof}[Proof of \Cref{lemma: WHERE}]
\proofcasethin{(Lower bound)}
By the lower bound for "UNFO" \cite{segoufinUnaryNegation2013}.
(We can reduce the "model checking problem" for "UNFO" to \kl[model checking problem]{that} for $\WHERE{1, 1}$.)

\proofcasethin{(Upper bound)}
We give a polynomial-time reduction from this problem to \TBSAT{q}.

Let $\ynstruc$ be a \kl{structure}.
Let $N \defeq \card\univ{\ynstruc}$ be the \kl{cardinality} of the \kl(structure){domain} of $\ynstruc$.
"Wlog", we can assume $\univ{\ynstruc} = \range{0}{N-1}$, by taking an \kl{isomorphic} \kl{structure}.
"Wlog", we can assume $N = 2^{L}$ for some $L \ge 0$,
by padding $\ynstruc$ with "fresh" \kl{elements},
extending $\ynstruc$ with a \kl{fresh} unary \kl{relation name} $U$ such that $\interpatom{\ynstruc}{U}$ is the set of all original \kl{elements}, and
replacing each "subformula" $\exists x \fml[2]$ with $\exists x (Ux \land \fml[2])$.
This transformation is done in polynomial-time.
We thus also let $L \defeq \log_2 N$.

For each $\WHERE{q, k}$ \kl(where){formula} $\hat{\fml}$ and $k$ distinct \kl{variables} $x_1, \dots, x_{k}$ "st" $\FV(\hat{\fml}) \subseteq \set{x_1, \dots, x_{k}}$,
we construct a \kl{TB-tree} $T^{\ynstruc}_{\hat{\fml}, x_1, \dots, x_{k}}$ of \kl(TB-tree){width} $N^{k}$,
so that for every $i_1, \dots, i_{k} \in \univ{\ynstruc}$,
\begin{align*}
&\ynstruc \modelsass{x_1 \dots x_{k} \mapsto i_1 \dots i_{k}} \hat{\fml} \quad\text{"iff"}\quad\text{the $(\sum_{j=1}^{k} i_j N^{j-1})$-th output of $T^{\ynstruc}_{\hat{\fml}, x_1, \dots, x_{k}}$ is $\mathtt{1}$.}
\end{align*}

By renaming "variables",
"wlog", using a sequence $x_1, \dots, x_k$ of pairwise distinct "variables",
we can assume that $\hat{\fml}$ is of the following form:
\[\fml[2] ~\textsf{where}~(X_1 x_1 \dots x_{k_1} \leftarrow  \hat{\fml}_1,\;  \dots,\; X_m x_{1} \dots x_{k_m} \leftarrow  \hat{\fml}_m).\]

By taking its prenex normal form, "wlog",
we can assume that $\fml[2]$ is of the form: $\exists z_1 z_2 \dots z_{\breve{n}} \fml[2]'$,%
where $\fml[2]'$ is built from \kl{atoms} over $\Rels \dcup \set{X_1, \dots, X_m}$ using $\land, \lor, \lnot$ (without $\exists$)
and $x_1, \dots, x_{k}, z_1, \dots, z_{\breve{n}}$ are pairwise distinct.
Let $M$ be the number of "atoms" over the set $\set{X_1, \dots, X_m}$ occurring in $\fml[2]$
and let $\fml[3]_{\ell} = X_{I_{\ell}} \bl$ be the $\ell$-th "atom" over $\set{X_1, \dots, X_m}$ for each $\ell \in \1{M}$.
By introducing a \kl{fresh} \kl{variable} $z'$ and replacing $\fml[3]_\ell$ with $\fml[3]_\ell\fmlsubst{z}{z'} \land z \EQ z'$,
"wlog", we can assume that
$\FV(\fml[3]_{\ell}) \subseteq \set{z'_{k(\ell-1)+1}, \dots, z'_{k \ell}}$ for each $\ell \in \1{M}$ and 
$z'_1, \dots, z'_{k M}$ are the first $k M$ \kl{variables} from the sequence $z_1 \dots z_{\breve{n}}$.

We construct the "TB-tree" $T^{\ynstruc}_{\hat{\fml}, x_1 \dots, x_{k}}$ %
from
the "TB-trees" $\set{T^{\ynstruc}_{\hat{\fml[1]}_j, x_{1} \dots x_{k}}}_{j = 1}^{m}$. %
When $m = 0$, we are in the base case of the construction.
Otherwise, it is defined by adding a new \kl{root} \kl(TB-tree){block} whose children are the \kl{roots} of $T^{\ynstruc}_{\hat{\fml[1]}_j, x_{1} \dots x_{k}}$,
and whose $(\sum_{j=1}^{k} i_j N^{j-1})$-th output is defined by the \kl(EFO){formula}:%
\footnote{We use $\mul{b} = \mul{b}'$ to denote the "Boolean formula" $\bigwedge_{i} \mul{b}\ith{i} = \mul{b}'\ith{i}$,
where $b = b'$ denotes the "Boolean formula" $(b \land b') \lor (\lnot b \land \lnot b')$ as usual.}
\begin{align*}
    &\chi_{\sum_{j=1}^{k} i_j N^{j-1}}
    \defeq \exists \mul{b}_1' c_1 \dots \mul{b}_{M}' c_{M} \mul{b}_1 \dots \mul{b}_{\breve{n}}
    \Big(
    \bigwedge_{\ell = 1}^{M} c_\ell = \const{in}^{(I_{\ell})}(\mul{b}_\ell') \land \bigwedge_{\ell = 1}^{M} {\mul{b}'_{\ell} = \mul{b}_{k(\ell-1)+1} \dots \mul{b}_{k\ell}} \land \chi_{\ynstruc} \Big), \span
\end{align*}
where
each $\mul{b}_i$ is of length $L$ and the "Boolean formula" $\chi_{\ynstruc}$ is obtained from $\hat{\fml}$ as follows:
\begin{itemize}
    \item each $\fml[3]_\ell$ is replaced with $c_{\ell}$,

    \item each \kl{atom} $\rsym v_{1} \dots v_{\breve{k}}$, where $\rsym \in \Rels$, is
    replaced with a \kl{Boolean formula} enumerating all tuples in $\interpatom{\ynstruc}{\rsym}$:
    \begin{align*}
    & \bigvee_{\tup{\breve{i}_1, \dots, \breve{i}_{\breve{k}}} \in \interpatom{\ynstruc}{\rsym}}
    (v_1^{\star} = \mkbin{L}{\breve{i}_1} \land \dots \land v_{\breve{k}}^{\star} = \mkbin{L}{\breve{i}_{\breve{k}}}),
    \text{where $v^{\star} \defeq \begin{cases}
        \mkbin{L}{i_{j}} & \text{if $v = x_{j}$},\\
        \mul{b}_i & \text{if $v = z_i$}.
    \end{cases}$}
    \end{align*}
\end{itemize}
For each $j$, since each $X_j$ occurs at most $q$ times in $\hat{\fml[1]}$, each ``$\const{in}^{(j)}$'' also occurs at most $q$ times.
Hence, the \kl(EFO){formula} only uses $q$ bits from each input vector.
By \Cref{proposition: TBSAT}, this completes the proof.
\end{proof}

\subsection{Proof of {\Cref{theorem: model checking}}}\label{section: theorem: model checking}
In this section, we prove the following theorem.
\begin{theorem*}[Restatement of \Cref{theorem: model checking}]
\modelchecking
\end{theorem*}

\subsubsection{For \texorpdfstring{"UNSQ"}{UNSQ}: \texorpdfstring{"UNFO"}{UNFO} with \texorpdfstring{"squaring"}{squaring}}
We first show that "UNFO" with "squaring" ("UNSQ") is "PNPlog2"-complete.
\AP
\phantomintro{UNSQ formula}\phantomintro{UNSQ sentence}%
""UNSQ"" is given by the following grammar:
\begin{align*}
    \fml[1], \fml[2] &~\Coloneqq~ \rsym \mul{x} \mid \fml[1] \land \fml[2] \mid \fml[1] \lor \fml[2] \mid \exists x \fml
    \mid \lnot \fml[1] \mid \SQ{\fml[1]}{u v}x y.
\end{align*}
Here, 1) $\rsym \in \Rels \dcup \set{=}$, 2) in the form of the "negation formula" $\lnot \fml[1]$, the number of "free variables" is at most one: $\# \FV(\fml[1]) \le 1$,
and 3) in the form of the ""squaring formula"" $\intro*\SQ{\fml[1]}{u v}x y$, 
there are no parameters: $\FV(\fml[1]) \subseteq \set{u, v}$.

$\SQ{\fml[1]}{u v}$ denotes the "squaring" of $[\fml[1]]_{u v}$.
The semantics is precisely given as follows (it can be generalized for $k$-ary):
\begin{align*}
\ynstruc \modelsass{\inter} \SQ{\fml[2]}{\mul{u} \mul{v}} x_1 \dots x_k y_1 \dots y_k \quad\defiff\quad \mbox{there is some $\mul{a}$ s.t.\ } \\
    \ynstruc \modelsass{\inter\intersubst{\mul{u} \mul{v}}{\inter(x_1) \dots \inter(x_k) \mul{a}}} \fml[2] \text{ and }
    \ynstruc \modelsass{\inter\intersubst{\mul{u} \mul{v}}{\mul{a} \inter(y_1) \dots \inter(y_k)}} \fml[2].
\end{align*}

\begin{lemma}\label{lemma: model checking UNSQ}
    For each fixed $n \ge 1$,
    the "model checking problem" for "UNSQ" is "PNPlog2"-complete.
\end{lemma}
\begin{proof}
\proofcasethin{(Lower bound)}
By the lower bound for "UNFO" \cite{segoufinUnaryNegation2013}.

\proofcasethin{(Upper bound)}
We give a polynomial-time reduction to the "model checking problem" for \WHERE{2, 2}.

We define a truth preserving translation from "UNSQ formulas" of at most two free variables to \WHERE{2, 2} \kl(where){formulas}. 
For an "UNSQ formula" $\fml$ and a length-$2$ sequence of distinct "variables" $\mul{z} = z_1 z_2$ such that $\FV(\fml) \subseteq \set{z_1, z_2}$,
the \WHERE{2, 2} \kl(where){formula} $\intro*\UNTCtoWHERE(\fml, \mul{z})$ is inductively defined as follows where $X^{(\fml, \mul{z})}$ is pairwise distinct for each $\fml$ and $\mul{z}$:

\proofcasethin{Case $\fml = \lnot \fml[2]$}
\begin{align*}
&\UNTCtoWHERE(\lnot \fml[2], \mul{z}) \quad\defeq\quad
\lnot X^{(\fml[2], \mul{z})} \mul{z} \mbox{\hspace{.5em}\textsf{where}\hspace{.5em}} (X^{(\fml[2], \mul{z})} \mul{z} \leftarrow \UNTCtoWHERE(\fml[2],\mul{z})).
\end{align*}

\proofcasethin{Case $\fml = \SQ{\fml[2]}{u v} x y$}
\begin{align*}
&\UNTCtoWHERE(\SQ{\fml[2]}{u v} x y, \mul{z}) \quad\defeq\quad \exists z' X^{(\fml[2], u v)} x z' \land X^{(\fml[2], u v)} z' y
\mbox{\hspace{.5em}\textsf{where}\hspace{.5em}} (X^{(\fml[2], u v)} u v \leftarrow \UNTCtoWHERE(\fml[2],u v)),
\end{align*}
where $z'$ is a "fresh" "variable".

\proofcasethin{Otherwise}
$\fml$ is built from \kl{atoms} and \kl(UNSQ){formulas} having at most two \kl{free variables} using $\land$, $\lor$, and $\exists$.
Let $\fml[2]_1, \dots, \fml[2]_M$ be the maximal strict \kl{subformulas} of $\fml$ having at most two \kl{free variables}.
For each $\ell \in \1{M}$,
let $\mul{z}_{\ell}$ be a length-$2$ sequence of distinct "variables" such that $\FV(\fml[2]_{\ell}) \subseteq \toSet{\mul{z}_{\ell}}$.
Then,
\begin{align*}
& \UNTCtoWHERE(\fml, \mul{z}) \quad\defeq\quad \fml' \mbox{\hspace{.5em}\textsf{where}\hspace{.5em}} (X^{(\fml[2]_1,\mul{z}_1)} \mul{z}_1 \leftarrow \UNTCtoWHERE(\fml[2]_1,\mul{z}_1),\; \dots,\; X^{(\fml[2]_M,\mul{z}_M)} \mul{z}_M \leftarrow \UNTCtoWHERE(\fml[2]_M, \mul{z}_M)),
\end{align*}
where $\fml'$ is the \kl(EFO){formula} $\fml$ in which 
each $\fml[2]_{\ell}$ is replaced with $X^{(\fml[2]_{\ell}, \mul{z}_{\ell})} \mul{z}_{\ell}$.

We then have that $\fml[1]$ and $\UNTCtoWHERE(\fml[1], \mul{z})$ are "semantically equivalent",
by easy induction on $\fml[1]$.
By applying the construction of \proofcase{Otherwise} outermost, we can translate even all "UNSQ formulas" possibly with more than two "free variables" into \WHERE{2, 2} \kl(where){formulas}.
Hence, by \Cref{lemma: WHERE}, this completes the proof.
\end{proof}

\begin{remark}\label{remark: model checking UNTC supplement 1}
The proof is essentially based on the reduction for "UNFO" \cite{segoufinUnaryNegation2013}.
For "UNFO", we can reduce the "model checking problem" to \kl[model checking problem]{that} for \WHERE{1, 1}.
To handle the case of $\SQ{\fml[2]}{u v} x y$, we reduce to \WHERE{2, 2} instead of \WHERE{1, 1}, that is,
\begin{itemize}
    \item We reduce to \TBSAT{2} not \TBSAT{1}.
    This is for calculating the "squaring" (note that ``$X^{(\fml[2], u v)}$'' appears twice and only in the case of the "squaring").
    Nevertheless, this difference is not essential because \TBSAT{2} is easily reduced to \TBSAT{1} \cite[Corollary 3.5]{schnoebelenOracleCircuitsBranchingTime2003}.

    \item We use \kl{TB-trees} of \kl(TB-tree){width} $\mathcal{O}(\card\univ{\ynstruc}^2)$ not $\mathcal{O}(\card\univ{\ynstruc})$.
    This is because the "squaring formula" $\SQ{\fml[2]}{u v} x y$ may have two \kl{free variables}.
    Nevertheless, the reduction is still in polynomial time.
\end{itemize}
\end{remark}

\subsubsection{For \texorpdfstring{"UNTC"}{UNTC}}
For an "UNSQ formula" $\fml[2]$ and an integer $\ell \ge 0$,
the "UNSQ formula" $\intro*\SQgen{\fml[2]}{\le 2^{\ell}}{u v} x y$ is defined as follows:
\[\SQgen{\fml[2]}{\le 2^{\ell}}{u v} x y \defeq \begin{cases}
    \SQ{(\SQgen{\fml[2]}{\le 2^{\ell-1}}{u v}u v)}{u v} x y & \text{if $\ell \ge 1$},\\
    x = y \lor \fml[2]\fmlsubst{u v}{xy} & \text{if $\ell =0$}.
\end{cases}\]

\begin{proof}[Proof of \Cref{theorem: model checking} for \texorpdfstring{"UNTC"}{UNTC}]
\proofcasethin{(Lower bound)}
By the lower bound for "UNFO" \cite{segoufinUnaryNegation2013}.

\proofcasethin{(Upper bound)}
We give a polynomial-time reduction to the "model checking problem" for "UNSQ".
Let $\ynstruc$ be a given "structure" and let $L \defeq \ceil{\log_2 \card \univ{\ynstruc}}$.
Let $\fml$ be a given "UNTC sentence".
We then let $\fml'$ be the "UNSQ sentence" obtained from $\fml$
by replacing each subformula of the form $\TC{\fml[2]}{u v} x y$ with the "UNSQ formula" $\SQgen{\fml[2]}{\le 2^{L}}{u v} x y$.
We then have that $\ynstruc \modelsass{\inter} \TC{\fml[2]}{u v} x y$ "iff" $\ynstruc \modelsass{\inter} \SQgen{\fml[2]}{\le 2^{L}}{u v} x y$,
because $\card \univ{\ynstruc} \le 2^{L}$.
We thus have $\ynstruc \modelsnonass \fml$ "iff" $\ynstruc \modelsnonass \fml'$.
Hence, by \Cref{lemma: model checking UNSQ}, this completes the proof.
\end{proof}

\begin{remark}\label{remark: model checking UNTC supplement 2}
    If the maximum nesting of \kl{transitive closure formulas} is bounded, 
    there is a naive polynomial-time reduction from the \kl{model checking problem} for "UNTC" into \kl[model checking problem]{that} for "UNFO" by unfolding each \kl{transitive closure formula} \cite[Proposition 9.4]{figueiraCommonAncestorPDL2025}.
    Thus, the "PNPlog2" upper bound is obtained from \kl[model checking problem]{that} for "UNFO", in this case.
    (This situation is the same for "UNSQ".)
    However, the general case was left open \cite[Open question 2]{figueiraCommonAncestorPDL2025}.
    \Cref{theorem: model checking} settles this problem, positively.

    Also, \cite[Footnote 16 of version v1]{figueiraCommonAncestorPDL2025} pointed out that the reduction for \kl{UNFOreg} given in \cite[Theorem 18]{jungQueryingUnaryNegation2018} has an error, because the reduction violates that formulas can use $1$ bit from one input vector in \TBSAT{1}
    (moreover, we cannot reduce to \TBSAT{q} for any fixed $q$, as the number of used bits cannot be bounded).
    In our reduction, we bound the number of used bits to $2$.
\end{remark}

\subsubsection{For \texorpdfstring{"GNTC"}{GNTC}}

\begin{lemma}\label{lemma: GNTC to UNTC}
    There is a polynomial-time reduction from the \kl{model checking problem} for \kl{GNTC} to the \kl{model checking problem} for "UNTC".
\end{lemma}
\begin{proof}
    By applying the same argument as the reduction \cite[Theorem 5.1]{baranyGuardedNegation2015} from the \kl{model checking problem} of "GNFO" to \kl[model checking problem]{that} of "UNFO".
    For each "guard" $\afml[1](\mul{v})$,
    let $P_{[\afml[1]]_{\mul{v}}}$ and $E_{[\afml[1]]_{\mul{v}}, j}$ be the \kl{fresh} unary \kl{relation name} and \kl{fresh} binary \kl{relation names}
    for encoding the relation $\set{\mul{j} \mid \ynstruc \modelsass{\mul{v} \mapsto \mul{j}} \afml}$,
    from the construction of \cite[Theorem 5.1]{baranyGuardedNegation2015}, where $\ynstruc$ is the given "structure".
    We then transform the given "GNTC formula" to an "UNTC formula" in the same way as \cite[Theorem 5.1]{baranyGuardedNegation2015}.
    For "transitive closure formulas" $\TC{\afml[1]_1(\mul{v}_1) \land \afml[1]_2(\mul{v}_2) \land \fml[2]}{\mul{z}} \mul{x}$ (where $\toSet{\mul{v}_1 \mul{v}_2} \subseteq \toSet{\mul{z}}$),
    we transform them as follows:
    \begin{align*}
    &\Big(\exists t_1t_2\ \bigwedge_{i=1}^{2} P_{[\afml[1]_i]_{\mul{v}_i}}\, t_i \land (\bigwedge_{i=1}^{2}\bigwedge_{j} E_{[\afml[1]_i]_{\mul{v}_i}, j}\, t_i \mul{v}_i\ith{j})\fmlsubst{\mul{z}}{\mul{x}} \\
    &{} \land \Big[\bigwedge_{i=1}^{2} P_{[\afml[1]_i]_{\mul{v}_i}}\,  t_i
    \land \exists \mul{z}~ (\bigwedge_{i=1}^{2}\bigwedge_{j} E_{[\afml[1]_i]_{\mul{v}_i}, j}\, t_i \mul{v}_i\ith{j}) \land \fml[2] \Big]_{t_1t_2}^* t_1t_2 \Big) {}\\
    &\lor \bigwedge_{i = 1}^{\ell} \mul{x}\ith{i} \EQ \mul{x}\ith{\ell+i},
    \end{align*}
    where
    \begin{itemize}
        \item $\ell$ is such that the length of $\mul{x}$ is $2\ell$, and
        \item $t_1$ and $t_2$ are \kl{fresh} \kl{variables}. \qedhere
    \end{itemize}
\end{proof}

\begin{proof}[Proof of \Cref{theorem: model checking} for "GNTC"]
    \proofcasethin{(Lower bound)}
    By the lower bound for "UNFO" \cite{segoufinUnaryNegation2013}.

    \proofcasethin{(Upper bound)}
    By the reduction above (\Cref{lemma: GNTC to UNTC}) with \Cref{theorem: model checking} for "UNTC".
\end{proof}

\subsection{Proof of {\Cref{theorem: model checking kFOTC kGNTC}}}\label{section: theorem: model checking kFOTC kGNTC}
In this section, we prove the following theorem.
\begin{theorem*}[Restatement of \Cref{theorem: model checking kFOTC kGNTC}]
\modelcheckingkFOTCkGNTC
\end{theorem*}
The proof proceeds almost in the same way as \Cref{section: theorem: model checking},
where we reduce to \WHERE{2, k} instead of \WHERE{2, 2}.

\subsubsection{For \texorpdfstring{$\tup{\kappa,\lambda,\xi}$-"genUNSQ"}{⟨𝜅,𝜆,𝜉⟩-genUNSQ}}
\AP
\phantomintro{genUNSQ formula}\phantomintro{genUNSQ sentence}
Similar to $\tup{\kappa, \lambda, \xi}$-"genUNTC",
we say that an "FO" with "squaring formula" $\fml$ is $\tup{\kappa, \lambda, \xi}$-""genUNSQ"" if
(i)
each subformula of the form $\lnot \fml[2]$ has at most $\kappa$ free variables: $\card\FV(\fml[2]) \le \kappa$, and
(ii)
for each subformula of the form $\SQ{\fml[2]}{\mul{u}} \mul{x}$,
its "arity" is at most $\lambda$: $\card \set{u : u \in \toSet{\mul{u}}} \le 2 \lambda$,
(iii)
each subformula of the form $\SQ{\fml[2]}{\mul{u}} \mul{x}$
has at most $\xi$ parameters: $\card(\FV(\fml[2]) \setminus \toSet{\mul{u}}) \le \xi$.

\begin{lemma}\label{lemma: model checking kUNSQ}
For every fixed integer $\kappa \ge 1$ and $\lambda, \xi \ge 0$,
the "model checking problem" is "PNPlog2"-complete
for $\tup{\kappa, \lambda, \xi}$-"genUNSQ".
\end{lemma}

\begin{proof}\leavevmode
\proofcasethin{(Lower bound)}
By the lower bound for "UNFO" \cite{segoufinUnaryNegation2013}.

\proofcasethin{(Upper bound)}
Let $k \defeq \max(\kappa, 2\lambda + \xi)$. 
We give a polynomial-time reduction to the "model checking problem" for \WHERE{2, k}.

For a $\tup{\kappa, \lambda, \xi}$-"genUNSQ formula" $\fml$ and a length-$k$ sequence of distinct "variables" $\mul{z} = z_1 \dots z_{k}$ such that $\FV(\fml) \subseteq \set{z_1, \dots, z_{k}}$,
the \WHERE{2, k} \kl(where){formula} $\reintro*\UNTCtoWHERE(\fml, \mul{z})$ is inductively defined as follows,
where $X^{(\fml, \mul{z})}$ is pairwise distinct for each $\fml$ and $\mul{z}$:

\proofcasethin{Case $\fml = \lnot \fml[2]$}
\begin{align*}
&\UNTCtoWHERE(\lnot \fml[2], \mul{z}) \quad\defeq\quad
\lnot X^{(\fml[2], \mul{z})} \mul{z} \mbox{\hspace{.5em}\textsf{where}\hspace{.5em}}  (X^{(\fml[2], \mul{z})} \mul{z} \leftarrow \UNTCtoWHERE(\fml[2],\mul{z})).
\end{align*}

\proofcasethin{Case $\fml = \SQ{\fml[2]}{\mul{u} \mul{v}} \mul{x} \mul{y}$}
\begin{align*}
&\UNTCtoWHERE(\SQ{\fml[2]}{\mul{u} \mul{v}} \mul{x} \mul{y}, \mul{z}) \quad\defeq\quad \exists \mul{z}' X^{(\fml[2], \mul{u} \mul{v} \mul{w})} \mul{x} \mul{z}' \mul{w} \land X^{(\fml[2], \mul{u} \mul{v} \mul{w})} \mul{z}' \mul{y} \mul{w}
\mbox{\hspace{.5em}\textsf{where}\hspace{.5em}}  (X^{(\fml[2], \mul{u} \mul{v} \mul{w})} \mul{u} \mul{v} \mul{w} \leftarrow \UNTCtoWHERE(\fml[2], \mul{u} \mul{v} \mul{w})),
\end{align*}
where $\mul{z}'$ is a sequence of "fresh" "variables", $\mul{w}$ is a sequence of "variables" that satisfies $\mul{w} \supseteq \FV(\fml[2]) \setminus \mul{u} \mul{v}$, and $\mul{u}\mul{v}\mul{w}$ is a pairwise distinct sequence of length $k$.

\proofcasethin{Otherwise}
$\fml$ is built from \kl{atoms} and \kl(genUNSQ){formulas} having at most $k$ \kl{free variables} using $\land$, $\lor$, and $\exists$.
Let $\fml[2]_1, \dots, \fml[2]_M$ be the maximal strict \kl{subformulas} of $\fml$ having at most $k$ \kl{free variables}.
For each $\ell \in \1{M}$,
let $\mul{z}_{\ell}$ be a length-$k$ sequence of distinct "variables" such that $\FV(\fml[2]_{\ell}) \subseteq \toSet{\mul{z}_{\ell}}$.
Then,
\begin{align*}
    & \UNTCtoWHERE(\fml, \mul{z}) \quad\defeq\quad \fml' \mbox{\hspace{.5em}\textsf{where}\hspace{.5em}} (X^{(\fml[2]_1,\mul{z}_1)} \mul{z}_1 \leftarrow \UNTCtoWHERE(\fml[2]_1,\mul{z}_1),\; \dots,\; X^{(\fml[2]_M,\mul{z}_M)} \mul{z}_M \leftarrow \UNTCtoWHERE(\fml[2]_M,\mul{z}_M)),
\end{align*}
where $\fml'$ is the \kl(EFO){formula} $\fml$ in which 
each $\fml[2]_{\ell}$ is replaced with $X^{(\fml[2]_{\ell}, \mul{z}_{\ell})} \mul{z}_{\ell}$.

We then have that $\fml[1]$ and $\UNTCtoWHERE(\fml[1], \mul{z})$ are "semantically equivalent",
by easy induction on $\fml[1]$.
Hence, by \Cref{lemma: WHERE}, this completes the proof.
\end{proof}

\subsubsection{For \texorpdfstring{$\tup{\kappa,\lambda,\xi}$-"genUNTC"}{ ⟨𝜅,𝜆,𝜉⟩-genUNTC}}

\begin{proof}[Proof of \Cref{theorem: model checking kFOTC kGNTC} for $\tup{\kappa,\lambda,\xi}$-"genUNTC"]\leavevmode
\proofcasethin{(Lower bound)}
By the lower bound for "UNFO" \cite{segoufinUnaryNegation2013}.

\proofcasethin{(Upper bound)}
We give a polynomial-time reduction to the "model checking problem" for $\tup{\kappa, \lambda, \xi}$-"genUNSQ".
Let $\ynstruc$ be a given "structure" and let $L \defeq \ceil{\log_2 \card \univ{\ynstruc}}$.
Let $\fml$ be a given $\tup{\kappa, \lambda, \xi}$-"genUNTC sentence".
We then let $\fml'$ be the $\tup{\kappa, \lambda, \xi}$-"genUNSQ sentence" obtained from $\fml$
by replacing each "subformula" of the form $\TC{\fml[2]}{\mul{u} \mul{v}} \mul{x} \mul{y}$ with the $\tup{\kappa, \lambda, \xi}$-"genUNSQ formula" $\SQgen{\fml[2]}{\le 2^{\lambda L}}{\mul{u} \mul{v}} \mul{x} \mul{y}$ (defined similarly to "UNSQ").
We then have that $\ynstruc \modelsass{\inter} \TC{\fml[2]}{\mul{u} \mul{v}} \mul{x} \mul{y}$ "iff" $\ynstruc \modelsass{\inter} \SQgen{\fml[2]}{\le 2^{\lambda L}}{\mul{u} \mul{v}} \mul{x} \mul{y}$,
because $\card \univ{\ynstruc}^{\lambda} \le 2^{\lambda L}$.
We thus have $\ynstruc \modelsnonass \fml$ "iff" $\ynstruc \modelsnonass \fml'$.
Hence, by \Cref{lemma: model checking kUNSQ}, this completes the proof.
\end{proof}

\subsubsection{For \texorpdfstring{$\tup{\kappa,\lambda,\xi}$-"genGNTC"}{⟨𝜅,𝜆,𝜉⟩-genGNTC}}

\begin{lemma}\label{lemma: kGNTC to kUNTC}
    There is a polynomial-time reduction from the \kl{model checking problem} for $\tup{\kappa,\lambda,\xi}$-"genGNTC" to the \kl{model checking problem} for $\tup{\kappa,\lambda,\xi}$-"genUNTC".
\end{lemma}
\begin{proof}
    Similar to the reduction for \Cref{lemma: GNTC to UNTC}.
    For each "atom" $\afml[1](\mul{v})$ with "free variables" $\mul{v}$,
    let $P_{[\afml[1]]_{\mul{v}}}$ and $E_{[\afml[1]]_{\mul{v}}, j}$ be the \kl{fresh} unary \kl{relation name} and \kl{fresh} binary \kl{relation names}
    for encoding the relation $\set{\mul{j} \mid \ynstruc \modelsass{\mul{v} \mapsto \mul{j}} \afml}$,
    from the construction of \cite[Theorem 5.1]{baranyGuardedNegation2015}, where $\ynstruc$ is the given "structure".
    We transform the given $\tup{\kappa,\lambda,\xi}$-"genGNTC formula" to a $\tup{\kappa,\lambda,\xi}$-"genUNTC formula" in the same way as \cite[Theorem 5.1]{baranyGuardedNegation2015}.
    Here, we transform
    $\TC{\bigwedge_{i = 1}^{\lambda} \afml[1]_i(\mul{v}_i) \land \bigwedge_{i = 1}^{\lambda} \afml[1]_{\lambda+i}(\mul{v}_{\lambda+i}) \land \fml[2]}{\mul{z}} \mul{x}$,
    where $\toSet{\mul{v}_i} \subseteq \toSet{\mul{z}}$,
    as follows:
    \begin{align*}
    &\Big(\exists \mul{t}\ \bigwedge_{i=1}^{2\lambda} P_{[\afml[1]_i]_{\mul{v}_i}} \mul{t}\ith{i} \land (\bigwedge_{i=1}^{2\lambda}\bigwedge_{j} E_{[\afml[1]_i]_{\mul{v}_i}, j}\; \mul{t}\ith{i} \mul{v}_i\ith{j})\fmlsubst{\mul{z}}{\mul{x}}\hspace{3em} \\
    &{} \land \Big[\bigwedge_{i=1}^{2\lambda} P_{[\afml[1]_i]_{\mul{v}_i}} \mul{t}\ith{i}
    \land \exists \mul{z}~ \bigwedge_{i=1}^{2\lambda}\bigwedge_{j} E_{[\afml[1]_i]_{\mul{v}_i}, j}\; \mul{t}\ith{i} \mul{v}_i\ith{j} \land \fml[2] \Big]_{\mul{t}}^* \mul{t} \Big)\\
    &{} \lor \bigwedge_{i = 1}^{\ell} \mul{x}\ith{i} \EQ \mul{x}\ith{\ell+i},
    \end{align*}
    where
    \begin{itemize}
        \item $\ell$ is the number satisfying that the length of $\mul{x}$ is $2\ell$, and
        \item $\mul{t}$ is a \kl{fresh} \kl{variable} sequence of length $2\lambda$.
    \end{itemize}
    We also transform
    $\bigwedge_{i = 1}^{\kappa} \afml[1]_i(\mul{v}_i) \land \lnot \fml[2](\mul{x})$,
    where $\toSet{\mul{v}_i} \subseteq \toSet{\mul{x}}$,
    as follows:
    \begin{align*}
    \exists \mul{t}\ \left( \bigwedge_{i=1}^{\kappa} P_{[\afml[1]_i]_{\mul{v}_i}} \mul{t}\ith{i} \land \bigwedge_{i=1}^{\kappa}\bigwedge_{j} E_{[\afml[1]_i]_{\mul{v}_i}, j}\; \mul{t}\ith{i} \mul{v}_i\ith{j} 
    \land \lnot \exists \mul{x}~ \bigwedge_{i=1}^{\kappa}\bigwedge_{j} E_{[\afml[1]_i]_{\mul{v}_i}, j}\; \mul{t}\ith{i} \mul{v}_i\ith{j} \land \fml[2](\mul{x}) \right),
    \end{align*}
    where
    \begin{itemize}
        \item $\mul{t}$ is a \kl{fresh} \kl{variable} sequence of length $\kappa$. \qedhere
    \end{itemize}
\end{proof}

\begin{proof}[Proof of \Cref{theorem: model checking kFOTC kGNTC} for $\tup{\kappa,\lambda,\xi}$-"genGNTC"]\leavevmode

    \proofcasethin{(Lower bound)}
    Because the "model checking problem" is already "PNPlog2"-hard for "UNFO" \cite{segoufinUnaryNegation2013}.

    \proofcasethin{(Upper bound)}
    By the reduction above (\Cref{lemma: kGNTC to kUNTC}) with \Cref{theorem: model checking kFOTC kGNTC} for $\tup{\kappa,\lambda,\xi}$-"genUNTC".
\end{proof}

\subsection{Note: Necessity of the conditions}\label{section: note: necessity of the conditions}

In this section, we note that the three fixed parameters $\kappa$, $\lambda$, and $\xi$ are necessary for the "PNPlog2" upper bound in \Cref{theorem: model checking kFOTC kGNTC}.

First, we note that the \kl{model checking problem} for full "TC" is "PSPACE"-complete,
which is a well-known result \cite{vardiComplexityRelationalQuery1982,berwangerFixedPointLogicsSolitaire2004}.
\begin{theorem}[{\cite{vardiComplexityRelationalQuery1982}}]\label{theorem: model checking TC}
    The \kl{model checking problem} for \kl{TC} is "PSPACE"-complete.
\end{theorem}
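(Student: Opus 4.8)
The plan is to prove the two bounds separately. The \cc{PSPACE} lower bound is immediate: \kl{TC} contains \kl{FO} as a syntactic fragment, so the \cc{PSPACE}-hardness of the \kl{model checking problem} for \kl{FO} \cite{vardiComplexityRelationalQuery1982} transfers verbatim. All the work lies in the \cc{PSPACE} upper bound, which I would establish by a recursive evaluation procedure $\mathrm{MC}(\fml, I)$ that, for the fixed input \kl{structure} $\struc$, decides $\struc, I \models \fml$ while keeping on its stack only the restriction of $I$ to $\FV(\fml)$, of size $O(\len{\fml} \log \card \univ{\struc})$.

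The recursion follows the definition of the \kl{satisfaction relation}. Atomic \kl{formulas} are decided by table lookup; the boolean cases recurse; negation recurses and flips the answer; and for $\exists x \fml$ the procedure iterates over $a \in \univ{\struc}$, calling $\mathrm{MC}(\fml, I[a/x])$ and \emph{reusing} the same space across iterations. The only genuinely new case is a \kl{transitive closure formula} $[\fml]_{\mul{v}\mul{w}}^* \mul{x}\mul{y}$ of \kl{arity} $k$, which I would treat as a reachability question in the implicit, exponentially large graph on $k$-tuples whose edge relation is defined by $\fml$. Here I would invoke a Savitch-style divide-and-conquer subroutine $\mathrm{REACH}(\mul{a}, \mul{b}, \ell)$ asserting the existence of a path of length at most $2^{\ell}$ from $\mul{a}$ to $\mul{b}$, with base case declaring $\mathrm{REACH}(\mul{a}, \mul{b}, 0)$ true exactly when $\mul{a} = \mul{b}$ or $\mathrm{MC}(\fml, I[\mul{a}\mul{b}/\mul{v}\mul{w}])$ returns true, and inductive step guessing a midpoint $\mul{c}$ and checking $\mathrm{REACH}(\mul{a}, \mul{c}, \ell-1)$ and $\mathrm{REACH}(\mul{c}, \mul{b}, \ell-1)$. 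Since every realizable path has length at most $\card \univ{\struc}^{k}$, the top-level call uses $\ell = \lceil k \log_2 \card \univ{\struc} \rceil$, so the recursion depth of a single $\mathrm{REACH}$ evaluation is $O(k \log \card \univ{\struc})$ and each frame stores one midpoint $\mul{c}$ of size $O(k \log \card \univ{\struc})$, giving a per-operator stack of $O(k^2 \log^2 \card \univ{\struc})$.

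The main obstacle, and the point that distinguishes combined complexity from the fixed-formula case, is the space accounting when \kl{transitive closure operators} are \emph{nested}, since each base case of $\mathrm{REACH}$ re-enters $\mathrm{MC}$, which may in turn open a fresh $\mathrm{REACH}$ recursion for an inner operator. The key observation I would formalize is that along any single branch of the combined call tree the number of simultaneously live $\mathrm{REACH}$ stacks is bounded by the nesting depth of \kl{transitive closure operators}, hence by $\len{\fml}$, while between them the formula-structure recursion has depth at most $\len{\fml}$ and stores only interpretations. Summing the at most $\len{\fml}$ live Savitch stacks, each of size $O(\len{\fml}^2 \log^2 \card \univ{\struc})$ (using $k \le \len{\fml}$), together with the $O(\len{\fml}^2 \log \card \univ{\struc})$ spent on stored interpretations, bounds the total working space by $O(\len{\fml}^3 \log^2 \card \univ{\struc})$, which is polynomial in the size of the input $\tuple{\struc, \fml}$. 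This yields membership in \cc{PSPACE} and, with the lower bound, \cc{PSPACE}-completeness.
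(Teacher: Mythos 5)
Your proposal is correct and takes essentially the same route as the paper: the paper cites Vardi for \cc{PSPACE}-hardness (already for \kl{FO}) and, in the remark following the theorem, sketches exactly your upper-bound argument --- treating a $k$-adic \kl{transitive closure formula} as reachability over $k$-tuples along paths of length at most $\card \univ{\struc}^{k}$, solved by the Savitch doubling trick when $k$ is unbounded. Your explicit space accounting for nested \kl{transitive closure operators} is a sound elaboration of what the paper leaves implicit.
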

\begin{remark}
    In \Cref{theorem: model checking TC},
    to determine whether $\ynstruc \modelsass{\inter} \TC{\fml[2]}{\mul{u} \mul{v}} \mul{x} \mul{y}$ for a $k$-ary \kl{transitive closure formula},
    we consider finding a \kl{path} of length at most $\card\univ{\ynstruc}^{k}$ from $\inter(\mul{x})$ to $\inter(\mul{y})$.
    When $k$ is bounded ($k = 1$ in \cite[Theorem 4]{vardiComplexityRelationalQuery1982}),
    as $\card\univ{\ynstruc}^{k}$ is in polynomial,
    by an exhaustive search, we can check there exists a \kl{path} $\mul{a}_0, \dots, \mul{a}_{\ell}$ such that
    $\mul{a}_0 = \inter(\mul{x})$, $\mul{a}_{\ell} = \inter(\mul{y})$, and
    $\ynstruc \modelsass{\inter\intersubst{\mul{u} \mul{v}}{\mul{a}_{i-1}\mul{a}_{i}}} \fml[2]$ for $i \in \1{\ell}$.
    Even when $k$ is unbounded,
    we can still give a "PSPACE" algorithm using the doubling trick, like Savitch's theorem \cite{savitchRelationshipsNondeterministicDeterministic1970}.
\end{remark}

\subsubsection{Case \texorpdfstring{$\kappa = \omega$}{𝜅=𝜔}}
When $\kappa = \omega$, even after fixing the parameters $\lambda$ and $\xi$,
the "model checking problem" remains "PSPACE"-hard as follows:
\begin{proposition}\label{proposition: model checking kFOTC kGNTC omega _ _}
The "model checking problem" is "PSPACE"-complete
for both $\tup{\omega, 0, 0}$-"genUNTC" and $\tup{\omega, 0, 0}$-"genGNTC".
\end{proposition}%
\begin{proof}
The upper bound follows from \Cref{theorem: model checking TC}.
For the lower bound, by \Cref{lemma: kGNTC to kUNTC},
it suffices to show that the "model checking problem" for $\tup{\omega, 0, 0}$-"genUNTC" is "PSPACE"-hard.
Since $\tup{\omega, 0, 0}$-"genUNTC" coincides with first-order logic ("FO") syntactically (as we do not introduce 'nullary' "TC"),
the problem is "PSPACE"-hard, "eg", by encoding ""QBF"" (\reintro*\kl{quantified Boolean formula problem}) in "FO" \cite[Theorem 2]{vardiComplexityRelationalQuery1982}.
\end{proof}

\subsubsection{Case \texorpdfstring{$\lambda = \omega$}{𝜆=𝜔}}
When $\lambda = \omega$, even after fixing the parameters $\kappa$ and $\xi$,
the "model checking problem" remains "PSPACE"-hard as follows:
\begin{proposition}\label{proposition: model checking kFOTC kGNTC _ omega _}
The "model checking problem" is "PSPACE"-complete
for both $\tup{0, \omega, 0}$-"genUNTC" and $\tup{0, \omega, 0}$-"genGNTC".
\end{proposition}
\begin{proof}
The upper bound follows from \Cref{theorem: model checking TC}.
For the lower bound, by \Cref{lemma: kGNTC to kUNTC},
it suffices to show that the "model checking problem" for $\tup{0, \omega, 0}$-"genUNTC" is "PSPACE"-hard.
We give a reduction from the \AP\intro*\kl{intersection non-emptiness problem} for \AP\intro*\kl{deterministic finite automata} (\kl{DFAs}) \nointro(DFA){states}:
    given $k \in \nat$ and \kl{DFAs} $\automaton[1]_1, \dots, \automaton[1]_k$ over a finite set $\Sigma$ of \kl{letters},
     decide whether there exists a \kl{word} $\word \in \Sigma^*$ such that all $\automaton[1]_1, \dots, \automaton[1]_k$ accept $\word$.
    This problem is "PSPACE"-complete \cite[Lemma 3.2.3]{kozenLowerBoundsNatural1977}.
    For \kl{DFAs} $\automaton_1, \dots, \automaton_k$ ("Wlog", we can assume that the sets of \kl(DFA){states} are disjoint),
    we consider the \kl{structure} $\ynstruc_{\automaton_1, \dots, \automaton_k}$ where
    \begin{itemize}
        \item the \kl(structure){domain} is the union of \kl(DFA){states} of $\automaton_1, \dots, \automaton_k$,
        \item each binary \kl{relation name} $c \in \Sigma$ 
        expresses the union of the transition functions of $\automaton_i$ w.r.t.\ the \kl{letter} $c$,
        \item each unary \kl{relation name} $S_i$ expresses the singleton set indicating the initial \kl(DFA){state} of $\automaton_i$,
        \item each unary \kl{relation name} $A_i$ expresses the set indicating the acceptance \kl(DFA){states} of $\automaton_i$.
    \end{itemize}
    Then, all the automata $\automaton_1, \dots, \automaton_k$ accept a common \kl{word} iff
    \begin{align*}
    \ynstruc_{\automaton_1, \dots, \automaton_k}
    &~\modelsnonass~
    \exists \mul{x} \mul{y} ~ ( (\bigwedge_{i = 1}^{k} (S_i(\mul{x}\ith{i}) \land A_i(\mul{y}\ith{i}))) \land \left[\bigvee_{c \in \Sigma} \bigwedge_{i = 1}^{k} c(\mul{u}\ith{i},\mul{v}\ith{i}) \right]_{\mul{u}\mul{v}}^* \mul{x} \mul{y}).
    \end{align*}
    Hence, this completes the proof.
\end{proof}

\subsubsection{Case \texorpdfstring{$\xi = \omega$}{𝜉=𝜔}}
When $\xi = \omega$, even after fixing the parameters $\lambda$ and $\kappa$,
the "model checking problem" remains "PSPACE"-hard as follows:
\begin{proposition}\label{proposition: model checking kFOTC kGNTC _ _ omega}
The "model checking problem" is "PSPACE"-complete
for both $\tup{0, 1, \omega}$-"genUNTC" and $\tup{0, 1, \omega}$-"genGNTC".
\end{proposition}%

\begin{proof}
The upper bound follows from \Cref{theorem: model checking TC}.
For the lower bound,
by \Cref{lemma: kGNTC to kUNTC},
it suffices to show that the "model checking problem" for $\tup{0, 1, \omega}$-"genUNTC" is "PSPACE"-hard.
We reduce from "QBF" to this problem, similar to \Cref{proposition: model checking kFOTC kGNTC omega _ _}.

Let $\ynstruc$ be the fixed structure
$\left(\begin{tikzpicture}[baseline = -0.5ex]
    \graph[grow right = 1.cm, branch down = 2.5ex]{
    {1/{$\const{T}$}[vert]} -!- {2/{$\const{F}$}[vert]} -!- {3/{$\const{X}$}[vert]} 
    };
    \path (1) edge[earrow, pos = .4, ->] node[fill = white, inner sep = 1pt, font = \scriptsize] {$E$} (2);
    \path (2) edge[earrow, pos = .4, ->] node[fill = white, inner sep = 1pt, font = \scriptsize] {$E$} (3);
\end{tikzpicture} \right)$
with three unary relations
$\interpatom{\ynstruc}{\rsym_{\const{T}}} = \set{\const{T}},
\interpatom{\ynstruc}{\rsym_{\const{F}}} = \set{\const{F}},
\interpatom{\ynstruc}{\rsym_{\const{X}}} = \set{\const{X}}$ defined as in the figure.
Let $z_{\const{T}}, z_{\const{F}}, z_{\const{X}}$ be pairwise distinct "variables".
For a ""quantified Boolean formula"" $Q$ in negation normal form without "variables" $z_{\const{T}}$, $z_{\const{F}}$, $z_{\const{X}}$, we inductively define the $\tup{0, 1, \omega}$-"genUNTC formula" $Q^{\dagger}$ as follows:
\begin{align*}
    x^{\dagger} & ~\defeq~ x \EQ z_{\const{T}},&
    (\lnot x)^{\dagger} & ~\defeq~ x \EQ z_{\const{F}},\\
        (Q_1 \land Q_2)^{\dagger} & ~\defeq~ Q_1^{\dagger} \land Q_2^{\dagger},&
        (Q_1 \lor Q_2)^{\dagger} & ~\defeq~ Q_1^{\dagger} \lor Q_2^{\dagger},\\
    (\exists x Q)^{\dagger} & ~\defeq~ \exists x~ (x = z_{\const{T}} \lor x = z_{\const{F}}) \land Q^{\dagger},&
    (\forall x Q)^{\dagger} & ~\defeq~ \TC{Q^{\dagger} \land E(x, z)}{x,z} z_{\const{T}} z_{\const{X}},
\end{align*}
where $z_{\const{T}}, z_{\const{F}}, z_{\const{X}}$ are fresh variables not occurring in $Q$.
We then have that the "quantified Boolean formula" $Q$ is true
"iff" $\ynstruc \modelsass{z_{\const{T}}z_{\const{F}}z_{\const{X}} \mapsto \const{T} \const{F} \const{X}} Q^{\dagger}$ %
"iff" $\ynstruc \modelsnonass \exists z_{\const{T}}\exists z_{\const{F}}\exists z_{\const{X}}~(\rsym_{\const{T}}(z_{\const{T}}) \land \rsym_{\const{F}}(z_{\const{F}}) \land \rsym_{\const{X}}(z_{\const{X}}) \land Q^{\dagger})$.
\end{proof}

\paragraph*{Note on ``nullary negation'' first-order logic (Case $\kappa = \lambda = 0$)}
Exceptionally for $\tup{0, 0, \omega}$-\kl{genUNTC} (hence, also for $\tup{0, 0, \omega}$-\kl{genGNTC} via \Cref{lemma: kGNTC to kUNTC}),
the "model checking problem" is in "PNPlog2" as a corollary of \Cref{theorem: model checking kFOTC kGNTC}, because
$\tup{0, 0, \omega}$-\kl{genUNTC} coincides with $\tup{0, 0, 0}$-\kl{genUNTC}.
Moreover, we can view $\tup{0, 0, 0}$-\kl{genUNTC} also as the ``nullary negation'' first-order logic ("NNFO").
\AP
\phantomintro{NNFO}\phantomintro{NNFO sentence}%
We say that a "UNFO formula" $\fml$ is an ""NNFO formula"" if each negated "subformula" $\lnot \fml[2]$ of $\fml$ has no "free variables".
The "model checking problem" for "NNFO" is more precisely "PNPlog"-complete, from the result for "UNFO" of "negation depth" 1 \cite[Theorem 5.2]{segoufinUnaryNegation2013}, as follows.
\begin{proposition}\label{proposition: model checking kFOTC kGNTC _ 0 omega}
The "model checking problem" for both $\tup{0, 0, \omega}$-"genUNTC" ("ie", "NNFO") and $\tup{0, 0, \omega}$-"genGNTC" is "PNPlog"-complete.
\end{proposition}
\begin{proof}
\proofcasethin{(Lower bound)}
By the same reduction as \cite[Theorem 5.2]{segoufinUnaryNegation2013} (for "UNFO" of "negation depth" 1).
The sentence used in \cite[Theorem 5.2]{segoufinUnaryNegation2013} is a Boolean combination of conjunctive query sentences, hence in "NNFO".
Additionally, it is also "PNPlog"-hard for $\tup{0, 0, \omega}$-"genGNTC" by \Cref{lemma: kGNTC to kUNTC}.

\smallskip

\proofcasethin{(Upper bound)}
By \Cref{lemma: kGNTC to kUNTC}, it suffices to consider "NNFO".
We give a polynomial-time reduction to "UNFO" of "negation depth" 1.
Intuitively, we unfold the nesting of negations by replacing each negated "subformula" (\kl[UNFO sentence]{sentence}) $\fml[2]$ with a ``propositional variable'' represented as the "atom" $\const{T}(z_{\fml[2]})$ (as in the Tseitin transformation).

Let $\ynstruc$ be a "structure".
When $\card\univ{\ynstruc} = 1$, this case is easily solvable in polynomial time.
Below, we assume $\card\univ{\ynstruc} \ge 2$.
Let $\ynstruc'$ be the "structure" $\ynstruc$
in which a "fresh" unary "relation name" $\const{T}$ satisfying
$\interpatom{\ynstruc'}{\const{T}} \neq \emptyset$ and $\interpatom{\ynstruc'}{\const{T}} \neq \univ{\ynstruc'}$
("ie", $\interpatom{\ynstruc'}{\const{T}}$ is non-trivial).
For an "NNFO formula" $\fml$ (without the "relation name" $\const{T}$),
we inductively define $\fml^{\dagger} = \tup{\fml^{\dagger_1}, \fml^{\dagger_2}}$,
where $\fml^{\dagger_1}$ is a "UNFO formula" without negations
and $\fml^{\dagger_2}$ is a map from "variables" to "UNFO sentences" of "negation depth" 1,
as follows:
\begin{align*}
    (\rsym \mul{x})^{\dagger} &\defeq \tup{\rsym \mul{x}, \emptyset},&
    (\fml[2] \land \fml[3])^{\dagger} &\defeq \tup{\fml[2]^{\dagger_1} \land \fml[3]^{\dagger_1}, \fml[2]^{\dagger_2} \cup \fml[3]^{\dagger_2}},\\
    (\exists x \fml[2])^{\dagger} &\defeq \tup{\exists x \fml[2]^{\dagger_1}, \fml[2]^{\dagger_2}}, &
    (\fml[2] \lor \fml[3])^{\dagger} &\defeq \tup{\fml[2]^{\dagger_1} \lor \fml[3]^{\dagger_1}, \fml[2]^{\dagger_2} \cup \fml[3]^{\dagger_2}},\\
    (\lnot \fml[2])^{\dagger} &\defeq \tup{\const{T}(z_{\lnot \fml[2]}), \set{z_{\lnot \fml[2]} \mapsto \lnot \fml[2]^{\dagger_1}} \cup \fml[2]^{\dagger_2}},
\end{align*}
where $z_{\lnot \fml[2]}$ is a "fresh" "variable" for each "subformula" $\lnot \fml[2]$.
Then for each $\inter$, we have
$\ynstruc' \modelsass{\inter} \fml \leftrightarrow \exists z_1 \dots z_n (\fml^{\dagger_1} \land \bigwedge_{i =1}^n (\const{T}(z_i) \leftrightarrow \fml^{\dagger_2}_{z_i}))$ where $\fdom(\fml^{\dagger_2}) = \set{z_1, \dots, z_n}$,
$z_1, \dots, z_n$ are pairwise distinct, and $\fml^{\dagger_2}_{z_i}$ denotes the \kl[UNFO sentence]{sentence} $\fml^{\dagger_2}(z_i)$.
Hence, this completes the proof.
\end{proof}
\end{toappendix}
\section{Conclusion}
\label{sec:conclusion}
We have given a polynomial-time %
translation from \kl{GNTC} to \kl{UNTC} preserving  "satisfiability@satisfiable" and "finite-satisfiability@finitely-satisfiable"
(\Cref{thm:GNTC-UNTC}),
and hence the "satisfiability problem" for "GNTC" is in "2ExpTime" (\Cref{corollary: 2-EXPTIME GNTC}).
We have also given a "local checker@@UNTC" yielding a direct single exponential-time reduction from
"SAT-""UNTC" to the "non-emptiness problem" for "2APTAs" (\Cref{corollary: 2-EXPTIME UNTC}).
Furthermore, we have shown that the "model checking problems" for "UNTC", "GNTC", $\tup{\kappa, \lambda, \xi}$-"genUNTC", and $\tup{\kappa, \lambda, \xi}$-"genGNTC" are "PNPlog2"-complete (\Cref{section: model checking}).
A natural direction would be to extend "GNTC" while preserving the "2ExpTime" "satisfiability problem". For instance, we would like to explore the possibility of extending "GNTC" to the ""clique-guarded negation fragment"" \cite[Section 7]{baranyGuardedNegation2015}
or to 
unparameterized guarded "LFP" operators \cite{baranyGuardedNegation2015}, possibly by extending the "local checker@@UNTC".
Regarding the second extension,
it would also be interesting to find a fragment of "GNFP-UP" including both "GNTC" and "GNFP".

\newcommand{\pdepth}{\textit{pdepth}}%
As noted in \cite[p.~5]{benediktStepExpressivenessDecidable2016}, the canonical translation of "GNTC" into "GNFP-UP" yields formulas with unbounded parameter depth ($\pdepth$) -- a syntactic measure of the maximum number of nested parameter changes within a formula. However, we currently lack the model-theoretic tools to show that "GNTC" is not captured by any fragment of "GNFP-UP" of bounded $\pdepth$.%
\footnote{%
In particular, the strict $\pdepth$ hierarchy of \cite{benediktStepExpressivenessDecidable2016} is established via a complexity-based argument, separating the elementary satisfiability of bounded $\pdepth$ from the non-elementary satisfiability of unbounded $\pdepth$. Since the satisfiability problem for "GNTC" is already elementary, this argument cannot be applied. Further, $\text{GN}^k$-bisimulation games do not track the parameter nesting necessary to prove inexpressibility at a given $\pdepth$ level.%
}

We leave open the decidability of %
the "finite-satisfiability problem" of "GNTC". %
 While %
the "finite-satisfiability problem" of "GNTC"
is not harder than %
the "finite-satisfiability problem" of "UNTC"
(\Cref{thm:GNTC-UNTC}),
decidability is open for "UNTC" \cite[Open Question 1]{figueiraCommonAncestorPDL2025}
and even for loop-"PDL" \cite{daneckiPropositionalDynamicLogic1984} (see also \cite{gollerPDLIntersectionConverse2009,figueiraCommonAncestorPDL2025}).
However, the problem is known to be decidable, "eg", for "GNFP" \cite{baranyGuardedNegation2015} (see also \cite{segoufinUnaryNegation2013,bojanczykTwoWayAlternatingAutomata2002,baranyFiniteSatisfiabilityGuarded2012}) and "UNFO" with transitive relations \cite{danielskiFiniteSatisfiabilityUnary2019}.

\bibliography{main}

\end{document}